\DeclareMathAlphabet\mathcal{OMS}{cmsy}{m}{n}
\SetMathAlphabet\mathcal{bold}{OMS}{cmsy}{b}{n}
\def\ve#1{\mathchoice{\mbox{\boldmath$\displaystyle\bf#1$}}
	{\mbox{\boldmath$\textstyle\bf#1$}}
	{\mbox{\boldmath$\scriptstyle\bf#1$}}
	{\mbox{\boldmath$\scriptscriptstyle\bf#1$}}}
\let\bfseries=\undefined
\DeclareRobustCommand\bfseries
\def\Orthant_j{{\mathcal O}_{j}}
\newcommand\vea{{\ve a}}
\newcommand\veb{{\ve b}}
\newcommand\ved{{\ve d}}
\newcommand\veg{{\ve g}}
\newcommand\vep{{\ve p}}
\newcommand\vev{{\ve v}}
\newcommand\vex{{\ve x}}
\newcommand\vey{{\ve y}}
\newcommand{\OO}{{\mathcal{O}}}
\newenvironment{psmallmatrix}{\left(\smallmatrix}{\endsmallmatrix\right)}
\newcommand\FourBlockBig[5][\relax]{\begin{pmatrix}#2& #3\\#4&#5 \end{pmatrix}\ifx#1\relax\else^{(#1)}\fi}
\newcommand\FourBlock[5][\relax]{\begin{psmallmatrix}#2& #3\\#4&#5 \end{psmallmatrix}\ifx#1\relax\else{^{(#1)}}\fi}
\newcommand\TwoBlock[3][\relax]{\begin{psmallmatrix}#2\\#3 \end{psmallmatrix}\ifx#1\relax\else{^{(#1)}}\fi}
\newtheorem{theorem}{Theorem}
\newtheorem{claim}{Claim}
\newtheorem{lemma}{Lemma}
\newtheorem{observation}{Observation}
\newtheorem*{rep@theorem}{\rep@title}
\newcommand{\newreptheorem}[2]{%
	\newenvironment{rep#1}[1]{%
		\def\rep@title{#2 \ref{##1}}%
		\begin{rep@theorem}}%
		{\end{rep@theorem}}}
\title{Approximation Algorithms for Interdiction Problem with Packing Constraints}
\author{Lin Chen\thanks{Department of Computer Science, Texas Tech University.
		\texttt{chenlin198662@gmail.com}.} \ \ \ \  
		Xiaoyu Wu\thanks{School of Mathematical Sciences, Zhejiang University.
		\texttt{xiaoyu\_wu@zju.edu.cn}.} \ \ \ \ 
		Guochuan Zhang\thanks{School of Computer Science, Zhejiang University. 
		\texttt{zgc@zju.edu.cn}.} }
\date{\today}
\begin{document}
	
	\maketitle
	
	\thispagestyle{empty}

\begin{abstract}

	We study a bilevel optimization problem which is a zero-sum Stackelberg game. In this problem, there are two players, a leader and a follower, who pick items from a common set. Both the leader and the follower have their own (multi-dimensional) budgets, respectively. Each item is associated with a profit, which is the same to the leader and the follower, and will consume the leader's (follower's) budget if it is selected by the leader (follower). The leader and the follower will select items in a sequential way: First, the leader selects items within the leader's budget. Then the follower selects items from the remaining items within the follower's budget. The goal of the leader is to minimize the maximum profit that the follower can obtain. Let $s_A$ and $s_B$ be the dimension of the leader's and follower's budget, respectively. A special case of our problem is the bilevel knapsack problem studied by Caprara et al. [SIAM Journal on Optimization, 2014], where $s_A=s_B=1$. We consider the general problem and obtain an $(s_B+\epsilon)$-approximation algorithm when $s_A$ and $s_B$ are both constant. In particular, if $s_B=1$, our algorithm implies a PTAS for the bilevel knapsack problem, which is the first $\OO(1)$-approximation algorithm. We also complement our result by showing that there does not exist any $(4/3-\epsilon)$-approximation algorithm even if $s_A=1$ and $s_B=2$. We also consider a variant of our problem with resource augmentation when $s_A$ and $s_B$ are both part of the input. We obtain an $\OO(1)$-approximation algorithm with $\OO(1)$-resource augmentation, that is, we give an algorithm that returns a solution which exceeds the given leader's budget by $\OO(1)$ times, and the objective value achieved by the solution is $\OO(1)$ times the optimal objective value that respects the leader's budget.

\end{abstract}

\vspace{2mm}
\hspace{3.5mm}\textbf{Keywords:} Bilevel Integer Programming, Interdiction Constraints, Knapsack

\clearpage
\setcounter{page}{1}

\section{Introduction}\label{sec:introduction}

In recent years, there is an increasing interest in adopting the {\it Stackelberg competition model}~\cite{dixit1980role} to address the critical security concern that arises in protecting our ports, airports, transportation, and other critical national infrastructures (see, e.g.,~\cite{an2013deployed,paruchuri2008efficient,pita2008deployed}). 
In these problems, the attacker's target is to maximize the illicit gain, while the defender tries to mitigate the attack by minimizing the attacker's objective through deploying defending resources. 

In this paper, we consider an abstract model for general defending problems called \textit{interdiction with packing constraints} ({IPC}). In IPC, given are a set of items, together with a {\it leader} and a {\it follower}. Both the leader and the follower have their own (multi-dimensional) budgets, respectively. Each item is associated with a profit, which is the same to the leader and the follower, and will consume the leader's (follower's) budget if it is selected by the leader (follower). The leader and the follower will select items in a sequential way: First, the leader selects items within the leader's budget. Then the follower selects items from the remaining items within the follower's budget. The goal of the leader is to minimize the maximum profit that the follower can obtain. IPC captures the general setting where the follower is the attacker who gets profit by attacking items, and the leader is the defender who tries to minimize the attacker's gain by protecting a subset of items.

IPC can be formulated as a bilevel integer program (IP) as follows. Denote by $I = \{1,2,\cdots,n\}$ the set of items. Each item $j \in I$ is associated with a \textit{profit} $p_j \in \mathbb{Q}_{>0}$, an $s_A$-dimensional \textit{cost vector} $\ve A_j \in \mathbb{Q}_{\ge 0}^{s_A}$ to the leader and an $s_B$-dimensional \textit{weight vector} $\ve B_j\in\mathbb{Q}_{\ge 0}^{s_B}$ to the follower. The leader and the follower have their own \textit{budget vectors}, denoted by $\vea\in\mathbb{Q}_{\ge 0}^{s_A}$ and $\veb\in\mathbb{Q}_{\ge 0}^{s_B}$, respectively. We introduce 0-1 variables $x_j$ and $y_j$ for each $j \in I$ as the decision variables for the leader and the follower. More precisely, if the leader chooses item $j$, then $x_j =1$, otherwise $x_j = 0$. Similarly, $y_j=1$ if the follower chooses item $j$ and $y_j=0$ otherwise. Denote by  $\vex = (x_1,x_2,\dots, x_n)$, $\vey = (y_1,y_2,\dots, y_n)$, $\vep = (p_1,p_2,\dots, p_n)$ and $\ve1 = (\underbrace{1,\dots ,1}_{n})$. IPC can be formulated as a bilevel program \textbf{IPC}$(I,\vea,\veb)$ as follows:
\begin{subequations} 
	\begin{align}
	\textbf{IPC}(I,\vea,\veb):
	\min\limits_{\vex}\hspace{1mm} & \vep\vey&\label{IPC:a1} \\
	s.t. \hspace{1mm}&\ve A\vex\le \vea& \label{IPC:b1} \\
	& \hspace{1mm} \vex \in\{0,1\}^n& \label{IPC:c1}\\
	& \text{where }  \vey \text{ solves the following:}& \nonumber\\
	&\max\limits_{\vey}\quad \vep\vey &  \label{IPC:a2} \\
	& s.t. \hspace{5mm} \ve B\vey\le\veb& \label{IPC:b2}\\
	&\hspace{10mm} \vex+\vey\le \ve1 & \label{IPC:c2} \\
	& \hspace{10mm} \vey\in\{0,1\}^n& \label{IPC:d2}
	\end{align}
\end{subequations}
where $\ve A=(\ve A_{1},\cdots,\ve A_n)$ and $\ve B=(\ve B_{1},\cdots,\ve B_n)$ are $s_A\times n$ and $s_B\times n$ non-negative rational matrices, respectively.

The most relevant prior work to our IPC model is the well-known knapsack interdiction problem introduced by DeNegre~\cite{denegre2011interdiction}, which is the special case of IPC where $s_A = 1$ and $s_B=1$. Very recently, Caprara et al.~\cite{DBLP:conf/ipco/CapraraCLW13} proved that DeNegre's knapsack interdiction problem is $\sum_{2}^{p}$-complete and strongly NP-hard, which also implies the $\sum_{2}^{p}$-completeness and strongly NP-hardness for IPC. Caprara et al. showed a polynomial time approximation scheme (PTAS) for a special case of knapsack interdiction problem where the profit of an item is equal to its weight to the follower. 

Except for the knapsack interdiction problem, we are not aware of any approximation algorithms for other special cases of IPC. However, if we relax the follower's problem by allowing $\vey$ to take fractional value, then there are several research works in the literature. The most relevant work is the packing interdiction problem studied by Dinitz and Gupta~\cite{DBLP:conf/ipco/DinitzG13}, where the follower's problem is given by $\max\{\sum_j p_j(1-x_j)y_j:B\vey\le \veb, \vey\ge \ve 0\}$, while the leader's constraints are the same as \eqref{IPC:b1} and \eqref{IPC:c1} except that $s_A=1$. Dinitz and Gupta provided an approximation algorithm whose ratio depends on the sparsity of the matrix $B$. Their techniques crucially rely on the fact that $\vey$ can take fractional value, and therefore duality theory can be applied to the follower's problem, allowing  the bilevel problem to be transformed to a single level problem. Besides the packing interdiction problem, quite a few graph interdiction problems have been studied in the literature, where the follower's problem is a standard graph optimization problem, and the leader can remove edges or vertices to minimize the follower's optimal objective value on the graph after edge-removal or vertex-removal. On planar graphs, polynomial time approximation schemes (PTASs) were obtained for network flow interdiction~\cite{DBLP:conf/stoc/Phillips93}\cite{DBLP:journals/dam/Zenklusen10} and matching interdiction~\cite{DBLP:journals/dam/PanS16}. On general graphs, approximation algorithms were also obtained for, e.g., connectivity interdiction~\cite{DBLP:journals/orl/Zenklusen14}, minimum spanning tree interdiction~\cite{DBLP:conf/focs/Zenklusen15}\cite{DBLP:conf/icalp/LinharesS17}, matching interdiction \cite{DBLP:journals/dam/Zenklusen10a}\cite{DBLP:conf/ipco/DinitzG13}, network flow interdiction~\cite{chestnut2017hardness}~\cite{burch2003decomposition}~\cite{DBLP:journals/mor/ChestnutZ17}, etc. All of these algorithms crucially rely on the follower's specific graph optimization problem and do not apply directly to IPC. 


\paragraph*{Our Contributions}

The main contribution of this paper is an $(s_B+\epsilon)$-approximation polynomial time algorithm for IPC when $s_A$ and $s_B$ are both constant. In particular, when $s_B=1$, our algorithm is a PTAS. Since the knapsack interdiction problem is a special case of IPC when $s_A=s_B=1$, our result gives the first $\OO(1)$-approximation algorithm for this problem. To complement our result, we also show that IPC does not admit any $({4}/{3}-\epsilon)$-approximation algorithm even if $s_B = 2$ and $s_A=1$, assuming $P\neq NP$. This implies that the PTAS for $s_B=1$ cannot be further extended to the case of $s_B\ge 2$. 

We also consider a natural variant of IPC where the leader's budget can be violated. For this variant we obtain a $(\frac{\rho}{1-\alpha},\frac{1}{\alpha})$-bicriteria approximation algorithm for any $\alpha\in (0,1)$, which runs in polynomial time when $s_A$ and $s_B$ are arbitrary (not necessarily polynomial in the input size). More precisely, the algorithm takes as input two oracles, a $\rho$-approximation algorithm to the follower's optimization problem $\max\{\vep\vey:\ve B\vey\le \veb,\vey\in\{0,1\}^n\}$; and a separation oracle for the leader's problem that given any $\vex=\vex^0$, it either asserts that $\ve A\vex^0\le \vea$ or returns a violating constraint. Then in polynomial oracle time the algorithm returns a solution $\vex^*$ for the leader such that $\ve A\vex^*\le \frac{1}{\alpha}\vea$, and the objective value is at most $\frac{\rho T^*}{1-\alpha}$, where $T^*$ is the optimal objective value with the leader's budget being $\vea$. When we take, e.g., $\alpha=1/2$, we achieve an objective of $2\rho T^*$ with the leader's budget augmented to $2\vea$.

In terms of techniques, our main contribution is a general method for bilevel optimization problems where the leader's and follower's decision variables are both integral. Most prior works on bilevel optimization require follower's decision variables to take fractional values, which accommodates the application of LP duality to transform the bilevel optimization problem to a standard (single level) optimization, and are thus inapplicable when the follower's decision variables become integral. A common technique used in many single level optimization problems is to first classify items into large and small based on whether they can make a significant contribution to the objective value, then guess out large items via enumeration, and handle small items fractionally via LP (see, e.g.,~\cite{jansen2010parameterized,jansen2010eptas,caprara2000approximation}. However, such a technique encounters a fundamental challenge in IPC: we can guess out all large items selected by the leader, however, the follower may still select arbitrarily from the remaining large items. In other word, the follower's choice on large items can never be guessed out, and therefore we cannot apply duality to the follower's problem. We overcome the challenge based on the following two ideas: First, we show that given leader's choice on large items, there is a fixed number of "dominant choices" such that the follower's choice on remaining large items always belong to the dominant choices. Second, we show that there exists a subset of "critical items" such that the follower's choice on small items can be characterized through linear constraints given that these critical items are known. The two observations allow us to transform the bilevel program for IPC to an LP without utilizing duality.    

The characterization of dominant choices and critical items become sophisticated in the general case when $s_B$ is an arbitrary constant, but is much simpler in the special case $s_B=1$. Hence, for ease of presentation, in the main part we present our algorithm for the special case to give an overview on the technical insights, and meanwhile provide a proof sketch towards generalizing the algorithm for the general case. Our techniques may be of separate interest to other bilevel optimization problems.



\paragraph*{Related work}

Our IPC problem lies generally in the area of \textit{bilevel optimization}, which has received extensive research in the literature. Jeroslow~\cite{DBLP:journals/mp/Jeroslow85} showed that in general, bilevel optimization problems are NP-hard even when the objectives and the constraints are linear. We refer readers to~Colson et al.~\cite{DBLP:journals/anor/ColsonMS07} for a comprehensive survey on bilevel optimization.

Within the area of bilevel optimization, \textit{Mixed-Integer Bilevel Linear Problem} (MIBLP) is related to our IPC. MIBLP is a bilevel optimization problem where the objective functions and the constraints for the leader and follower are both linear. MIBLP has been studied extensively in the literature, see, e.g.,~\cite{DBLP:journals/jgo/TangRS16}~\cite{DBLP:journals/eor/FischettiMS18}~\cite{DBLP:journals/informs/FischettiLMS19}.  We also refer the reader to \cite{DBLP:journals/ior/FischettiLMS17}\cite{DBLP:journals/ejco/KleinertLLS21} for an overview on MIBLP solvers and related applications. Most of these algorithmic results are for finding exact solutions through, e.g., branch and bound based approach. For DeNegre's knapsack interdiction problem, an improved exact algorithm was derived by Federico Della Croce and Rosario Scatamacchia \cite{DBLP:journals/mp/CroceS20}.

It is worth mentioning that besides DeNegre's knapsack interdiction problem (i.e., $s_A=s_B=1$ in IPC), other variants of {bilevel knapsack problems} have also been studied in which the leader interferes the follower's program in a different way. One kind of bilevel knapsack problem was introduced by Dempe and Richter~\cite{dempe2000bilevel} where two players hold one knapsack, the leader determines the knapsack's capacity while the follower picks items into the knapsack to maximize his own total profit. The goal is to maximize the objective of the leader. Brotcorne et al.~\cite{DBLP:journals/disopt/BrotcorneHM13} gave a dynamic programming algorithm for both cases of this model. Chen and Zhang~\cite{DBLP:journals/tcs/ChenZ13} proposed a bilevel knapsack variant where two players hold their own knapsacks and the leader can only influence the profit of the items. The follower is interested in his own revenue while the leader aims at maximizing the total profit of both players. The improved approximation results for this problem were derived by Xian Qiu and Walter Kern~\cite{DBLP:journals/tcs/QiuK15}. Another bilevel knapsack variant occurred in the work of Pferschy et al.~\cite{DBLP:journals/tcs/PferschyNP19} where the leader controls the weights of a subset of the follower’s items and the follower aims at maximizing his own profit. The leader's payoff is the total weight of the items he controls and selected by the follower. Very recently, Pferschy et al.\cite{pferschy2021stackelberg} tackled a “symmetrical” problem in which the leader can control the profits instead of item weights. In addition to these works, a matrix interdiction problem was studied by Kasiviswanathan and Pan~\cite{kasiviswanathan2010matrix}.

It is also worth mentioning that the continuous version of DeNegre's knapsack interdiction problem, where the leader and the follower can both fractionally choose an item, has also been studied in recent years. Carvalho et al.~\cite{DBLP:journals/orl/CarvalhoLM18} gave the first polynomial time optimal algorithm. Later on, a faster optimal algorithm was proposed by Woeginger and Fischer~\cite{DBLP:journals/orl/FischerW20}.

\smallskip
\noindent\textbf{Some notations.}
We write column vectors in boldface, e.g. $\vex,\vey$, and their entries in normal font. For a vector $\vex$, we either denote its entries by $\vex=(x_1,x_2,\cdots,x_n)$, or by $\vex=(\vex[1],\vex[2],\cdots,\vex[n])$. Given two vectors $\vex$ and $\vey$ with the same dimension, their dot product is $\sum_{j}x_{j}y_{j}$, which we denote by $\vex\vey$.

\section{Hardness results}\label{sec:hardness}
    \begin{theorem}\label{thm:hardness}
    	Assuming $P\neq NP$, for arbitrary small $\epsilon>0$, there does not exist a $(4/3-\epsilon)$-approximation polynomial time algorithm for IPC when $s_A\ge 1$ and $s_B\ge 2$.
    \end{theorem}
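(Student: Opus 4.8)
The plan is to prove the bound via a gap-producing polynomial-time reduction from an NP-complete problem; the natural source for a knapsack-flavoured bilevel problem is a partition/subset-sum problem, so I would reduce from \textsc{Partition} (equivalently, from a subset-sum decision problem). From such an instance I would build in polynomial time an IPC instance with $s_A=1$ and $s_B=2$, together with a value $v>0$, so that the optimal leader objective is exactly $3v$ on ``yes'' instances and at least $4v$ on ``no'' instances. Since $4v=(4/3)\cdot 3v$, running a hypothetical polynomial-time $(4/3-\eps)$-approximation algorithm on the constructed instance would return a value strictly below $4v$ precisely on the ``yes'' instances (for a minimization problem the returned value lies in $[\mathrm{OPT},(4/3-\eps)\,\mathrm{OPT}]$), thus deciding \textsc{Partition} in polynomial time and contradicting $P\neq NP$. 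The theorem would first be shown for $s_A=1$, $s_B=2$; appending to $\ve A$ and to $\ve B$ additional all-zero constraint rows with budget entry $0$ leaves the leader's and follower's feasible sets unchanged, so the same instance witnesses hardness for every $s_A\ge 1$ and $s_B\ge 2$.

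For the construction I would use two groups of items. A constant number of ``prize'' items, all of equal profit, carry the whole objective: an unobstructed follower packs all of them for total profit $4v$, so the only way the leader can beat $4v$ is to render at least one prize unavailable. Each prize is given a leader cost exceeding the entire leader budget, so the leader cannot delete a prize directly and can only delete items of the second group, the ``encoding'' items --- one per number of the \textsc{Partition} instance, carrying that number as its leader cost, the costs summing to twice the leader budget. Hence a budget-feasible leader move is exactly a choice of a sub-multiset of the numbers of total value at most half the sum, i.e.\ a guess at one side of an equal partition. The two follower weight coordinates are the heart of the gadget: coordinate one bounds how much ``mass'' of still-available encoding items the follower may pick up, while coordinate two is shared between the available encoding items and the prizes, calibrated so that all prizes fit within coordinate two exactly when the leftover encoding items saturate the follower's coordinate-one bound --- a situation the leader can engineer if and only if the deleted numbers sum to exactly half, i.e.\ only on ``yes'' instances. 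When that balance occurs the follower must drop one prize and obtains $3v$; otherwise it keeps all four prizes and obtains $4v$.

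Correctness splits in the standard way. The ``yes'' direction is a short direct check: take the leader move induced by an equal partition and verify, by a brief feasibility computation in the follower's two-dimensional knapsack, that every follower response to it packs at most three prizes, while three prizes together with the leftover encoding items is attainable, so the objective equals $3v$. The ``no'' direction is the crux, and I expect it to be the main obstacle: one must show that for \emph{every} budget-feasible leader move $\vex$ --- not just the ones that look like partition guesses --- the follower still has a response of profit $4v$. This robustness against arbitrary interdictions is precisely where two follower coordinates are needed, and it explains why the bound does not contradict the PTAS obtained in the paper for $s_B=1$: with a single follower coordinate the follower's feasible region is an ordinary knapsack polytope that the leader can always pinch, whereas with two coordinates the configuration the leader needs encodes a genuine subset-sum condition. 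Concretely, I would argue that whenever the leader fails to delete a set of numbers summing to exactly half, the residual encoding instance is ``unbalanced'', so the follower may discard all now-irrelevant encoding items and still fit all four prizes with strict slack in both coordinates; the numerical margins in the weights and budgets must be chosen so that this strict inequality persists uniformly over every possible unbalanced residual, and verifying that is the delicate part.

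Finally I would assemble the pieces: the reduction produces $\OO(n)$ items with polynomially bounded data and runs in polynomial time, and the exact $3v$-versus-$4v$ separation rules out any polynomial-time $(4/3-\eps)$-approximation for IPC with $s_A=1$, $s_B=2$ unless $P=NP$; the zero-row padding then carries the conclusion to all $s_A\ge 1$ and $s_B\ge 2$, completing the proof.
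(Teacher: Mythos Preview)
Your proposal has a genuine gap at its core: the mechanism linking the leader's interdiction of encoding items to the follower's ability to pack prize items is never made to work, and as you describe it, it \emph{cannot} work. You say the prize items ``carry the whole objective'' and the leader cannot delete them, while the encoding items are what the leader acts on. But if the encoding items contribute nothing to the objective, the follower has no incentive ever to select one; the follower simply tries to pack the four prizes, whose weights are fixed and do not depend on which encoding items the leader removed. Since all weights in IPC are nonnegative, leaving encoding items out can only help the follower in both coordinates. Hence the follower's optimum is a fixed constant (either $4v$ or $3v$) completely independent of the leader's move, and no gap between ``yes'' and ``no'' instances can arise. Your sentence ``all prizes fit within coordinate two exactly when the leftover encoding items saturate the follower's coordinate-one bound'' describes an effect that would require encoding items to \emph{help} the prizes fit, i.e.\ to carry negative weight somewhere, which the model forbids. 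The ``no'' direction you flag as delicate is not merely delicate --- it is inconsistent with the ``yes'' direction under the stated construction.

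The paper's reduction takes a quite different route. It reduces from \textsc{3-Hitting-Set} rather than \textsc{Partition}, and crucially gives \emph{every} item profit~$1$: there is no prize/encoding dichotomy in the objective. Each ground-set element $u_i$ becomes an item of weight $(10^i,\,Q-10^i)$ and leader cost~$1$; each $3$-element set $\{u_i,u_j,u_k\}$ becomes an uninterdictable item of weight $(E-10^i-10^j-10^k,\,Q-E+10^i+10^j+10^k)$, with follower budget $(E,4Q-E)$. Because every item has second coordinate larger than $0.9Q$, at most four items ever fit; and because each item's two coordinates sum to exactly $Q$, any four items summing within budget must sum to \emph{exactly} $(E,4Q-E)$. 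The base-$10$ arithmetic then forces any such four items to be one set-item together with precisely its three element-items. Thus the follower reaches profit $4$ iff some set has all three of its elements uninterdicted, i.e.\ iff the leader's chosen elements fail to be a hitting set. This is the missing idea: the follower is \emph{forced} by tight two-dimensional weight constraints to select a very specific combinatorial pattern, and the leader disrupts that pattern by hitting sets --- not by any balance-of-mass argument.
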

     Towards the proof, we need the 3 hitting set (3HS) problem.

    Problem: 3 Hitting Set

    Instance: A ground set $U=\{u_1,u_2,\cdots,u_n\}$; a collection $C$ of $m$ subsets $S_1,S_2,\cdots,S_m$ whose union is $U$, where each subset $S_h$ contains exactly 3 elements; a positive integer $k$.

    Question: Is there a hitting subset $S\subseteq U$ such that $|S|\le k$, and $S$ contains at least one element from each subset in $C$?

\begin{proof}
Recall that 3HS problem is a natural generalization of the well-known Vertex Cover problem, and both are NP-complete \cite{hartmanis1982computers}. Our reduction is from the 3HS problem. Given an instance of the 3HS, we construct an instance of the IPC where $s_A=1$, $s_B=2$ as follows. Let $E=10\cdot\sum_{i=1}^n10^i$, and $Q$ be any sufficiently large integer, say, $Q=10E$. Let $\vea=k$ and $\veb=(E,4Q-E)$. The profit of every item constructed below is $1$. For every element $u_i$, we construct an {\it element-item} (item $i$) whose interdiction cost is $1$, and whose weight vector is $(10^i,Q-10^i)$. For every subset $S_h=\{u_i,u_j,u_k\}$, we construct a {\it set-item} (item $n+h$) whose interdiction cost is $k+1$ (that is, the leader cannot interdict a set-item), and whose weight vector is $(E-10^i-10^j-10^k,Q-E+10^i+10^j+10^k)$. In total we construct $n+m$ items.
	
We first claim that the objective value of any feasible solution for the IPC instance is at most $4$. Suppose on the contrary the claim is false, then the follower is able to select at least 5 items under the budget $\veb=(E,4Q-E)$. Notice that for any $1\le i\le n$, $Q-10^i>Q-E\ge 0.9Q$, and for any $1\le i,j,k\le n$ we have $Q-E+10^i+10^j+10^k>0.9Q$, if we sum up the weight vectors of any 5 items, then the second coordinate is at least $4.5Q$, which exceeds the budget $4Q-E$, hence the claim is true.
	
Suppose the 3HS instance admits hitting set $S$ of size at most $k$, we show that the optimal objective value of the IPC instance is at most $3$. Let $S=\{u_{\ell_1},u_{\ell_2},\cdots,u_{\ell_k}\}$ (if $S$ contains less than $k$ elements, we simply add arbitrary elements to make it contain exactly $k$ items), then we consider the solution $\vex$ where $x_{\ell_i}=1$ for $1\le i\le k$, and $x_j=0$ otherwise. We claim that for any $\vey$ satisfying $\vex+\vey\le \ve 1$, $\vep\vey=\sum_{j} y_j\le 3$. Suppose on the contrary that the claim is false, then the follower can select at least $4$, and hence exactly 4 items (given our claim in the above paragraph that shows the objective value cannot exceed 4). Notice that for every item, if we add the first and second coordinate of its weight vector, then the sum is exactly $Q$. Hence, if we add up the weight vector of the 4 items, it must be $(z,4Q-z)$ for some $z$, and meanwhile, we have $(z,4Q-z)\le (E,4Q-E)$, that is $z\le E$ and $4Q-z\le 4Q-E$. Hence, $z=E$, which means the sum of the first coordinate of the weight vectors of the 4 items is exactly $E=10\sum_i 10^i$. We first observe that it is impossible for the 4 items to be all element-items, this is because the first coordinate of the weight vector for any element-item is at most $10^n<0.1E$. We then observe that there cannot be two set-items among the 4 items, because the first coordinate of the weight vector for any set-item is at least $E-0.1E=0.9E$. Hence, among the 4 items, there must be exactly 1 set-item and 3 element-items. Let the 3 element-items be those corresponding to $u_i,u_j,u_k$ and the set-item be the one corresponding to $\{u_{i'},u_{j'},u_{k'}\}$, then it follows that $10^i+10^j+10^k+E-10^{i'}-10^{j'}-10^{k'}=E$, implying that $\{i,j,k\}=\{i',j',k'\}$. However, this is not possible because the hitting set $S$ contains at least one element from $\{u_{i'},u_{j'},u_{k'}\}=\{u_{i},u_{j},u_{k}\}$, which implies that $x_i+x_j+x_k\ge 1$, and whereas $y_i$, $y_j$, $y_k$ cannot be 1 simultaneously. Thus, the optimal objective value of the IPC instance is at most 3.
	
Suppose the optimal objective value of the IPC instance is at most $3$, we show that the 3HS problem admits a hitting set of size at most $k$. Let $\vex^*$ be the optimal solution for IPC. Consider the set $S^*=\{u_i:x^{*}_i=1\}$. Given that $\sum_ix_i\le k$, we know $|S^*|\le k$. We claim that $S^*$ is a hitting set. Suppose on the contrary that the claim is false, then there exists some subset $\{u_i,u_j,u_k\}$ such that $S^*\cap \{u_i,u_j,u_k\}=\emptyset$. Then we consider the 3 element-items whose weight vectors are $(10^i,Q-10^i)$, $(10^j,Q-10^j)$, $(10^k,Q-10^k)$, and the set-item whose weight vector is $(E-10^i-10^j-10^k,Q-E+10^i+10^j+10^k)$. It is easy to see that the follower can select all the 4 items, leading to an objective value of 4, contradicting the fact that the optimal objective value is at most 3. 
	
Now suppose there exists a $(4/3-\epsilon)$-approximation polynomial time algorithm for the IPC. We apply the algorithm to the IPC instance constructed from the 3HS instance. If the 3HS instance admits a hitting set of size at most $k$, then the approximation algorithm returns a solution with objective value at most $4-\epsilon<4$, which means it must return a solution with objective value at most $3$. If the 3HS instance does not admit a hitting set of size at most $k$, then the approximation algorithm returns a solution with objective value at least $4$. Hence the polynomial time approximation algorithm can be used to determine whether 3HS problem admits a feasible solution, contradicting the NP-hardness of 3HS problem.
\end{proof}

\section{A PTAS for IPC where $s_{B}=1$ and $s_{A}$ is a fixed constant}\label{sec:constant_KIP_1}
The goal of this section is to prove the following Theorem~\ref{theorem:s_B=1}. Theorem~\ref{theorem:s_B=1} is a special case of our main result, however, its proof shares similar key ideas as the general case (where $s_A$ and $s_B$ are arbitrary fixed constants). Therefore, we provide a full presentation to demonstrate the technical insights, and in the next section we will show how to extend the techniques when $s_B\ge 2$.



\begin{theorem}\label{theorem:s_B=1}
When $s_B = 1$ and $s_A$ is an arbitrary fixed constant, there exists a polynomial time approximation scheme for IPC.
\end{theorem}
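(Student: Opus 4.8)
The plan is to follow the ``large/small items'' paradigm but to overcome the central obstacle described in the introduction: the follower's choice among large items cannot be guessed directly. Fix a small constant $\delta = \delta(\epsilon)>0$. I would first run the follower's maximization without any interdiction to get a quantity $\mathrm{OPT}_F$, which upper-bounds the optimal objective $T^*$ of the leader; call an item \emph{large} if $p_j \ge \delta\, T^*$ (we cannot know $T^*$, so in practice we enumerate a polynomial number of guesses $T$ of $T^*$, e.g.\ over powers of $(1+\epsilon)$ times individual profits, and run the whole procedure for each, keeping the best leader solution that is certified feasible). Since any feasible follower packing with $s_B=1$ that has value more than $T$ must, once we fix its set of large items, leave room in the single budget dimension, the number of large items in \emph{any} follower solution of value $\le (1+\epsilon)T$ is at most $O(1/\delta)$.

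Next comes the key structural step. Enumerate over all $\binom{n}{O(1/\delta)}$-many subsets $L_A$ of large items the leader selects (this is polynomial since $s_A$ is constant and the leader can afford only $O(1/\delta)$ large items if any large-item-heavy solution is even relevant; more carefully, we enumerate all leader large-item sets of size $\le O(1/\delta)$ and separately handle the possibility that the leader uses many cheap large items --- here is where I must be a little careful, but the budget $\vea\in\Q^{s_A}$ with constant $s_A$ keeps the enumeration polynomial after a standard rounding/guessing of the ``profile'' of large items by rounded cost vector). Having fixed the leader's large items, I claim there is a \emph{fixed} (constant) number of ``dominant'' follower responses on the \emph{remaining} large items: among all subsets of remaining large items that are follower-feasible in the one budget dimension, we only need to keep, for each achievable cardinality, the one of maximum total profit (ties broken by minimum budget consumption); any other response is dominated because replacing it by a dominant one of the same size never decreases the follower's value and never increases budget use, and the follower, being a maximizer, will prefer it. This gives $O(1/\delta)$ candidate ``large-item responses'' to branch over.

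With the leader's large items and a guessed follower large-item response both fixed, the residual problem is over small items only, and the follower's small-item profit is at most $O(\delta T)$-granular, so an LP relaxation of the follower's small-item packing loses only an $\epsilon$-fraction. The remaining difficulty is that the bilevel nesting is still present even for small items: the leader wants to choose which small items to remove so that, \emph{whatever} the follower does with the small items plus the fixed large response, the follower's total is small. Here I would invoke the ``critical items'' idea: guess the $O(1/\delta)$ small items of largest profit that the follower takes in its optimal response (there are only polynomially many such guesses since, given the budget slack, the follower again takes only $O(1/\delta)$ items with profit in any fixed $(1+\epsilon)$-scale band, and we stratify small items into $O(\log/\epsilon)$ profit bands --- this needs care but is standard), which pins down a profit threshold and lets us write the follower's optimal small-item value as the optimum of an LP whose constraints are linear in the leader's $x$ variables. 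Concretely, once the ``critical'' small items are guessed, the follower's optimal continuation is captured by: maximize $\vep\vey$ over the non-critical small items subject to the single budget constraint, $y_j\le 1-x_j$, and the profit-threshold constraint imposed by the critical items; and crucially this inner LP's \emph{value}, as a function of $\vex$, is a concave piecewise-linear function that we can upper-bound by a handful of linear inequalities (or, alternatively, since $s_B=1$, the greedy/fractional structure of a single-constraint knapsack makes the inner optimum an explicit linear expression in the residual budget, hence linear in $\vex$). Plugging this in collapses the bilevel program into a single LP in $\vex$ (with $\vex$ relaxed to $[0,1]^n$), which we solve and then round the $O(s_A)$ fractional small coordinates, absorbing the rounding error into $\epsilon$ by the small-item definition.

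Finally I would wrap up: over all guesses (of $T$, of the leader's large set, of the follower's dominant large response, and of the critical small items --- a polynomial number in total), take the leader solution whose \emph{certified} worst-case follower value (recomputed exactly by solving the follower's integer program, which is itself an FPTAS-able knapsack for constant $s_B$, or exactly in pseudo-poly time, but here we only need the $(1+\epsilon)$-approximate value) is minimized; output that $\vex$. Correctness: for the guess matching the true optimum, the constructed LP solution has value $\le (1+O(\epsilon))T^*$, and rounding keeps it there, so the algorithm's output is within $1+O(\epsilon)$ of $T^*$; rescaling $\epsilon$ gives the PTAS. I expect the \textbf{main obstacle} to be making the ``dominant choices / critical items'' dichotomy fully rigorous --- in particular, proving that a \emph{constant} number of dominant follower large-item responses and a \emph{constant} number of guessed critical small items suffice to linearize the follower's optimum in $\vex$ without the leader being able to exploit a response we failed to enumerate; this is precisely the step the introduction flags as the crux, and the $s_B=1$ case is where it is cleanest because a single packing constraint makes ``dominance'' a total order by budget consumption.
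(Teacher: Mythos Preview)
Your overall architecture matches the paper's, but two of your key structural claims are not correct as stated, and the paper resolves them differently.

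First, your ``dominant follower responses'' are parameterized by \emph{cardinality}: for each size $k$, keep the maximum-profit subset (ties by minimum weight). The dominance claim you make---that replacing any response by the dominant one of the same size ``never decreases the follower's value and never increases budget use''---is false: the max-profit set of size $k$ can have strictly larger total weight than another size-$k$ set, so it may be infeasible for the follower once the small-item budget is accounted for. The paper instead parameterizes by \emph{profit bucket}: for each $k\in\{1,\dots,1+1/\epsilon\}$, it computes the minimum-\emph{weight} subset of remaining large items whose profit lies in $[(k-1)\epsilon,k\epsilon)$. This is the right direction of optimization: any actual follower large-item choice with profit in bucket $k$ has weight at least that of the dominant choice, hence strictly less residual budget for small items, while its large-item profit differs from the dominant one by at most $\epsilon$. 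That is what makes Lemma~\ref{claim:s_1} go through.

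Second, your enumeration of the leader's large items is not justified. You write that the leader ``can afford only $O(1/\delta)$ large items,'' but largeness is defined by profit (to the follower), not by cost (to the leader), so the leader may have to interdict arbitrarily many large items. The paper handles this by classifying large items by both profit \emph{and} weight and arguing structurally: for large-profit small-weight items, the leader must interdict all but at most $1/\delta-1$ of them (else the follower packs $1/\delta$ of them and exceeds $OPT\le 1$), so one guesses the $O(1/\delta)$ \emph{unselected} ones; small-profit large-weight items can be ignored with $O(\delta)$ loss; and for large/medium-profit large-weight items, items are grouped by rounded profit and within each group one guesses $1/\delta$ ``key'' items (the ones of smallest weight not interdicted), which determines the leader's choice in that group. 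Finally, your ``critical small items'' step is vaguer than what is actually needed: since $s_B=1$, the follower's fractional small-item optimum for each dominant large choice $\ell$ is given by the greedy ratio order and is an explicit linear function of $\vex$ once a single critical index $c_\ell$ (the item where greedy stops) is known; guessing these $O(1/\epsilon)$ indices turns the bilevel relaxation into the single LP $\textbf{LP}_{\text{Bi-IP}}$, whose extreme point has only $s_A+O(1/\epsilon)$ fractional coordinates, each a small item of profit $\le\delta$, so rounding costs $O(\epsilon)$ after setting $\delta=\epsilon^2$.
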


The rest of this section is dedicated to proving the following Lemma~\ref{lemma:residue-main_1}, which implies Theorem~\ref{theorem:s_B=1} directly by scaling item profits (here we write \textbf{IPC}$(I,\vea,b)$ instead of \textbf{IPC}$(I,\vea,\veb)$ as $\veb$ becomes 1-dimensional {given that $s_B=1$}).
\begin{lemma}\label{lemma:residue-main_1}
Let $OPT$ be the optimal objective value of \textbf{IPC}$(I,\vea,b)$. If $OPT\le 1$, then for an arbitrarily small number $\epsilon>0$, there exists a polynomial time algorithm that returns a feasible solution to \textbf{IPC}$(I,\vea,b)$ with an objective value of at most $1+\OO(\epsilon)$.
\end{lemma}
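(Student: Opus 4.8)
## Proof Proposal

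The plan is to classify items into "large" and "small" according to whether their profit exceeds a threshold $\epsilon^2$ (after normalizing so that $OPT\le 1$ and each profit is in $(0,1]$), and then to reduce the bilevel program to a polynomial-size family of linear programs. First I would observe that since $OPT\le 1$, any feasible follower response picks at most $1/\epsilon^2$ large items, so the total number of large items that could ever matter in an optimal-ish solution is bounded in a way that lets us enumerate. I would fix (by enumeration over polynomially many choices) the set $L^x$ of large items that the leader interdicts. The central difficulty — flagged in the introduction of the paper — is that even after fixing $L^x$, the follower may pick any subset of the remaining large items, and this choice cannot be enumerated away because it is made adversarially after the leader commits. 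So the key structural step is to prove the "dominant choices" claim: given the leader's interdiction on large items, among all follower-feasible subsets of the remaining large items, only a constant (depending on $\epsilon$, $s_A$, $s_B=1$) number of them are relevant, in the sense that the leader only needs to defend against those. I expect this to follow from a rounding/bucketing argument on the one-dimensional follower weight $b$: group large items by profit class (there are $O(\log(1/\epsilon)/\epsilon^2)$ geometric-ish classes, or $O(1/\epsilon^2)$ after further rounding of profits to multiples of $\epsilon^2$), and within the relevant capacity observe that the "extremal" follower packings of large items are determined by how many items of each profit class are taken and roughly how much weight budget remains for small items.

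The second main step handles small items. Here I would use the "critical items" idea: once we know which large items the follower takes and how much of the one-dimensional budget $b$ is left over for small items, the follower's small-item subproblem is essentially a one-dimensional knapsack on small items, whose greedy/fractional optimum is governed by the profit-to-weight ratio ordering. I would argue there is a polynomially bounded set of candidate "critical" small items (the items at which the greedy ordering crosses the residual budget, over all the relevant residual budgets arising from dominant large-item choices and all leader interdiction patterns on small items that matter), and that conditioning on which critical items the leader does or does not interdict lets us write the follower's small-item value as a linear function of the leader's small-item decisions. Because each small item contributes at most $\epsilon^2$ profit, rounding the fractional small-item relaxation costs only $O(\epsilon)$ in the objective — this is the standard large/small trick — so the follower's small-item contribution can be captured by linear constraints up to additive $O(\epsilon)$ error.

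Combining these, for each guess (leader's large-item interdiction set, the induced family of dominant follower large-item choices, the status of the critical small items) I would write a linear program: variables $x_j$ for the remaining (small, non-guessed) items, constraints $\ve A\vex\le\vea$ for the leader's $s_A$-dimensional budget, and for each dominant follower choice a constraint saying the resulting follower profit (large part, which is now a constant, plus the linear expression for the small part) is at most $1+O(\epsilon)$; minimize nothing, or rather test feasibility, and over all guesses keep the best. Solving polynomially many LPs of polynomial size takes polynomial time (for fixed $\epsilon$, $s_A$, $s_B=1$). Since $OPT\le 1$, one of the guesses corresponds to (a rounding of) the true optimal leader solution, so some LP is feasible and yields a leader solution whose worst-case follower value is at most $1+O(\epsilon)$; rounding the fractional small-item part of that LP solution back to integral, as above, keeps the value at $1+O(\epsilon)$ and keeps $\ve A\vex\le\vea$ by a small further budget argument (or, as is typical, one shows the fractional solution has at most $s_A$ fractional coordinates and rounds them down, losing only $O(\epsilon)$ in the follower's value since those are small items).

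The step I expect to be the real obstacle is establishing that the number of dominant follower choices on large items is polynomial (indeed constant for fixed $\epsilon$) and simultaneously compatible with a \emph{linear} description of the small-item follower response — i.e., making the "dominant choices" and "critical items" notions interact correctly so that, after all the guessing, what remains is genuinely an LP and not something that re-introduces the follower's adversarial choice. Getting the error bookkeeping right (that every rounding of profits and of fractional small-item variables accumulates to only $O(\epsilon)$, and that feasibility of the leader's $\ve A\vex\le\vea$ is preserved) is routine by comparison but must be done carefully.
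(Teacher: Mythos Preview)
Your overall architecture matches the paper's: classify items, guess the leader's action on large items, identify a small set of dominant follower responses on large items, linearize the small-item response via critical items, solve an LP, and round. The main gap is in your item classification and the ensuing enumeration. You classify items as large based on profit only (threshold $\epsilon^2$) and then claim the leader's interdiction set $L^x$ on large items can be enumerated in polynomial time because the follower picks at most $1/\epsilon^2$ large items. But bounding the \emph{follower's} selection does not bound the \emph{leader's} interdiction: there may be $\Omega(n)$ large-profit items, and in the optimal solution the leader may interdict most of them (so $|L^x|$ is huge) or few of them (so the complement is huge). Neither side is a priori small, so direct enumeration of $L^x$ is exponential.

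The paper's fix is to also classify by weight (threshold $\delta$, with $\delta=\epsilon^2$). Large items then split into three cases: (i) large-profit small-weight items --- since any $1/\delta$ of them fit in the follower's budget and yield profit $>1$, the optimal leader must interdict all but fewer than $1/\delta$ of them, so one enumerates the \emph{complement} of $L^x$ within this class; (ii) small-profit large-weight items --- the follower can take at most $1/\delta$ of them for total profit $\le \delta$, so the leader may ignore them entirely; (iii) large/medium-profit large-weight items --- within each rounded profit class the follower always prefers the lightest available item, so it suffices to guess, per class, the $1/\delta$ lightest non-interdicted items. This is what makes the enumeration polynomial, and it does not fall out of a profit-only classification.

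A smaller point: your description of dominant choices (``how many of each profit class are taken and roughly how much weight budget remains'') is circular and does not by itself yield a finite list. The paper's version is concrete: for each target profit bucket $[(k-1)\epsilon,k\epsilon)$, $k=1,\dots,1+1/\epsilon$, take a minimum-weight subset of the remaining large items achieving that profit; this gives exactly $O(1/\epsilon)$ dominant choices, each computable by brute force since the follower uses $O(1/\delta)$ large items. Finally, the extreme point of the resulting LP has $s_A + O(1/\epsilon)$ fractional coordinates (not just $s_A$), because the critical-item constraints also count; this is harmless since all such items are small, but it affects the choice of $\delta$ relative to $\epsilon$.
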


\subsection{Preprocessing}\label{subsec:prepro}
From now on we assume $OPT\le 1$. Without loss of generality, we further assume that $\max_{j}p_j \le 1$.

\smallskip
\noindent\textbf{Scaling.}  We scale the matrix $\ve A$ and $\ve B$ such that $\vea=\ve1$ and $b=1$. From now on we denote this IPC instance as \textbf{IPC}$(I,\ve1,1)$. Without loss of generality, we further assume that $\max_{j} B_j \le 1$.

\smallskip
\noindent\textbf{Rounding down the profits.}
We apply the standard geometric rounding. Let $\delta>0$ be some small parameter to be fixed later (in particular, we can choose $\delta=\epsilon^2$). Consider each item profit $p_j$. If $p_j \le \delta^2$, we keep it as it is; otherwise $p_j>\delta^2$, we round the profit down to the largest value of the form $\delta^2(1+\delta)^h$. For profits whose values are at least $\delta^2$, simple calculation shows there are at most $\tilde\OO(1/\delta)$ distinct rounded profits. This rounding scheme introduces an additive loss of at most $\OO(\delta)$ times the objective value. For simplicity, we still denote the rounded profits by $p_j$'s. 

\smallskip
\noindent\textbf{Item classification.} Recall that each item $j$ is associated with a profit $p_j$ and a weight vector $\ve B_j$. Since $s_B=1$, we write $B_j$ as its weight. 


\textbf{Classifying Weights:}
 We say an item $j$ has a {\it large} weight if $ B_j > \delta$; otherwise, it has a {\it small} weight.
 
\textbf{Classifying Profits:} We say an item $j$ has a {\it large} profit if $p_j>\delta$; a {\it medium} profit if $\delta^2<p_j\le \delta$; and a {\it small} profit if $p_j \le \delta^2$.

We say an item is {\em large} if it has a large-profit, or a large-weight. Otherwise, the item is {\em small}. Large items and small items will be handled separately.

Denote by ${S}^*$ the items selected by the leader in an optimal solution of \textbf{IPC}$(I,\ve1,1)$. 

\subsection{Handling Large Items}\label{subsec:large_items}
 \subsubsection{Determining the leader's choice on large items}
The goal of this subsection is to guess large items in ${S}^*$ in polynomial time. 

\smallskip
\noindent\textbf{Large-profit small-weight items.} Notice that if there are at least $1/\delta$ such items for the follower to select, then selecting any $1/\delta$ of them gives a solution with an objective value strictly larger than $1$, contradicting to the assumption that $OPT\le 1$. Thus ${S}^*$ must include all except at most $1/\delta-1$ such items, which can be guessed out via $n^{\OO({1/\delta})}$ enumerations. Hence, we have the following observation.

\begin{observation}\label{obs:lpsw}
	With $n^{\OO({1/\delta})}$ enumerations,  we can guess out all large-profit small-weight items in ${S}^*$. 
\end{observation}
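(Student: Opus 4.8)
The plan is to prove Observation~\ref{obs:lpsw}, namely that the leader's choice on large-profit small-weight items in an optimal solution $S^*$ can be determined in $n^{\OO(1/\delta)}$ time by enumeration. The key idea is a counting argument: after scaling we have $b = 1$ and each large-profit item has profit $p_j > \delta$. If the follower were left with at least $1/\delta$ large-profit small-weight items, the follower could pick some $\lceil 1/\delta \rceil$ of them; I need to check that their total weight fits in the budget $b = 1$. Since each such item has small weight $B_j \le \delta$, any $\lceil 1/\delta\rceil$ of them have total weight at most $\lceil 1/\delta\rceil \cdot \delta$, which is roughly $1$ but could slightly exceed it; so I would instead pick exactly $\lfloor 1/\delta \rfloor$ of them (or argue with a marginally smaller threshold, absorbing the rounding into the $\OO(\cdot)$), giving total weight $\le \lfloor 1/\delta\rfloor \cdot \delta \le 1 = b$, hence a feasible follower response, and total profit strictly greater than $1 \ge OPT$, a contradiction. (One clean way to avoid the off-by-one subtlety entirely: since after the profit-rounding step $OPT \le 1 + \OO(\delta)$ and we are only aiming for $1 + \OO(\epsilon)$, it is harmless to redefine "large profit" with a slightly larger constant, or to use the threshold $\lfloor 1/\delta\rfloor$ throughout.)

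The consequence is that $S^*$ must contain all but at most $\lfloor 1/\delta \rfloor - 1$ of the large-profit small-weight items. So I would enumerate over all subsets of large-profit small-weight items of size at most $\lfloor 1/\delta\rfloor - 1$ to be the ones the leader does \emph{not} take; there are at most $\sum_{i=0}^{\lfloor 1/\delta\rfloor - 1}\binom{n}{i} = n^{\OO(1/\delta)}$ such subsets, and one of these enumeration branches correctly identifies exactly which large-profit small-weight items are omitted by the leader in $S^*$ (equivalently, taken by the leader). Since $\delta = \epsilon^2$ is a constant, $n^{\OO(1/\delta)}$ is polynomial, so the enumeration runs in polynomial time. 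In the branch that matches $S^*$, the leader's choice restricted to this item class is exactly right, which is all the observation claims.

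I do not expect any real obstacle here: the only thing to be slightly careful about is the boundary case $\lceil 1/\delta \rceil \cdot \delta$ versus $\lfloor 1/\delta \rfloor \cdot \delta$ relative to the budget $1$, and the fact that after rounding profits we have $OPT \le 1 + \OO(\delta)$ rather than exactly $OPT \le 1$; both are handled by choosing the threshold to be $\lfloor 1/\delta\rfloor$ and noting that selecting that many large-profit items yields profit $> \lfloor 1/\delta\rfloor \cdot \delta$, which for small $\delta$ is at least $1 - \delta$... — so in fact to get a strict contradiction with $OPT \le 1 + \OO(\delta)$ I should take a threshold like $\lceil (1+c\delta)/\delta \rceil$ for a suitable constant $c$, whose total weight is still $\le \lceil(1+c\delta)/\delta\rceil \cdot \delta = 1 + \OO(\delta) + \OO(\delta)$ — wait, that overshoots the budget. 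The clean fix is to perform the profit rounding so that large profits are rounded to multiples of $\delta(1+\delta)^h$ with $p_j > \delta$ exactly (no additive slack in this class), keep $OPT \le 1$ as the working assumption throughout the large-item analysis, and only pay the $\OO(\delta)$ rounding loss at the very end; then $\lfloor 1/\delta\rfloor$ items of profit $> \delta$ each have total profit $> \lfloor 1/\delta \rfloor \delta \ge 1 - \delta$, still not quite a contradiction. So the truly safe statement is: if the follower has at least $\lceil 1/\delta\rceil$ such items available, pick $\lceil 1/\delta\rceil$ of them of smallest weight; their total weight is at most $\delta \cdot \lceil 1/\delta\rceil \le \delta(1/\delta + 1) = 1 + \delta$, which exceeds $b=1$ only by $\delta$ — and here I invoke that a single small-weight item can be dropped, or, most simply, that we may assume $1/\delta$ is an integer (replace $\delta$ by $1/\lceil 1/\delta\rceil$ at the outset), so $\lceil 1/\delta \rceil = 1/\delta$, total weight $\le 1 = b$, total profit $> 1 \ge OPT$, contradiction. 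With that normalization the argument is immediate, and this is the route I would present.
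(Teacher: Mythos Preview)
Your argument is correct and follows the same route as the paper: if at least $1/\delta$ large-profit small-weight items remain for the follower, then (taking $1/\delta$ to be an integer, as the paper implicitly does) any $1/\delta$ of them have total weight at most $1=b$ and total profit strictly exceeding $1\ge OPT$, a contradiction; hence $S^*$ omits at most $1/\delta-1$ such items, which can be enumerated in $n^{\OO(1/\delta)}$ time. Your worry about $OPT\le 1+\OO(\delta)$ after rounding is unnecessary---profits are rounded \emph{down}, so the follower's optimum can only decrease and $OPT\le 1$ is preserved.
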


\smallskip
\noindent\textbf{Small-profit large-weight items.} Notice that the follower can select at most $1/\delta$ items from this subgroup and their total profit is at most $\delta^2*\frac{1}{\delta} = \delta$. Hence, even if the leader does not select any such item, the objective value can increase by at most $\delta$, which leads to the following observation.

\begin{observation}\label{obs:splw}
With $\OO(\delta)$ additive error, we may assume that ${S}^*$ does not contain small-profit large-weight items. 
\end{observation}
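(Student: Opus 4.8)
The plan is to show that deleting from the leader's optimal selection every small-profit large-weight item it contains costs only an additive $\OO(\delta)$ in the objective, so that for the rest of the analysis we may take $S^*$ to contain no such item. Write $L$ for the set of small-profit large-weight items and put $S' := S^* \setminus L$. Because $S' \subseteq S^*$ and $\ve A$ has non-negative entries, $S'$ is still feasible for the leader (recall $\vea = \ve 1$ after scaling); and since shrinking the leader's set only enlarges the follower's feasible region, the objective value $T'$ attained by $S'$ satisfies $T' \ge OPT$. It therefore suffices to prove $T' \le OPT + \delta$.

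To this end, fix a maximum-profit follower response $R$ against $S'$, so $R \cap S' = \emptyset$, $\sum_{j \in R} B_j \le b = 1$, and $T' = \sum_{j \in R} p_j$. Split $R$ as $R_1 \cup R_2$ with $R_1 := R \cap L$ and $R_2 := R \setminus L$. Each item of $R_1$ has weight $B_j > \delta$ and these weights sum to at most $1$, so $R_1$ contains fewer than $1/\delta$ items; as each has profit $p_j \le \delta^2$, we get $\sum_{j \in R_1} p_j < \delta$. For $R_2$, any $j \in R_2$ satisfies $j \notin S'$ and $j \notin L$, and since the only items deleted when forming $S'$ from $S^*$ lie in $L$, it follows that $j \notin S^*$; thus $R_2$ is disjoint from $S^*$ and has total weight at most $1$, i.e.\ $R_2$ is a feasible follower response against $S^*$, giving $\sum_{j \in R_2} p_j \le OPT$. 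Adding the two estimates, $T' = \sum_{j \in R_1} p_j + \sum_{j \in R_2} p_j < OPT + \delta$, which is what we wanted.

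I do not expect a real obstacle: this is a one-line counting bound ("fewer than $1/\delta$ items of weight $> \delta$ fit in a budget of $1$") together with the profit cap $\delta^2$ on small-profit items. The only step that needs a moment's care is verifying that $R_2$ is a legal follower response against $S^*$, i.e.\ that enlarging the leader's forbidden set from $S'$ back to $S^*$ only re-forbids items of $L$ — so that the extra profit the follower might extract from the relaxation $S' \subsetneq S^*$ is entirely confined to the cheap set $R_1$, whose total profit the weight bound caps at $\OO(\delta)$.
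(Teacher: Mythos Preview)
Your proof is correct and follows essentially the same approach as the paper: both observe that the follower can pick at most $1/\delta$ large-weight items within the budget, each contributing profit at most $\delta^2$, so deleting these items from the leader's solution costs at most $\delta$ in the objective. You have simply written out the argument more carefully, in particular making explicit why the non-$L$ part $R_2$ of the follower's response against $S'$ remains feasible against $S^*$.
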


\smallskip
\noindent\textbf{Large/medium-profit large-weight items.} Notice that the follower can select at most $1/\delta$ items from this group. Since we are considering the case of $s_B = 1$, if there are two items that are not selected by the leader, and they have the same profit, then the follower always prefers the one with a smaller weight. Hence, we have the following lemma.

\begin{lemma}\label{lemma:lmplw2}
With $n^{\tilde\OO(1/\delta^{2})}$ enumerations, we can guess out all large/medium-profit large-weight items in ${S}^*$. 
\end{lemma}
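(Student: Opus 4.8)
The plan is to exploit two structural facts about this item group that already motivated Observation~\ref{obs:lpsw} and Observation~\ref{obs:splw}: (i) every such item has weight $B_j>\delta$ while the follower's budget is $1$, so the follower takes at most $\lceil 1/\delta\rceil-1<1/\delta$ of these items in total; and (ii) after the geometric rounding the large/medium profits take at most $\tilde\OO(1/\delta)$ distinct values, so these items split into at most $\tilde\OO(1/\delta)$ profit classes. Within one class all items give the follower the same profit, so (using $s_B=1$) the follower always prefers the available items of smallest weight; hence, against any fixed leader choice, its best response restricted to a profit class is a weight-prefix of the available items of that class, of length less than $1/\delta$.

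Concretely, I would sort the items of each profit class by weight (ties broken by index). Fix an optimal leader choice $S^*$ and let $G_p$ be the set of profit-$p$ items the follower actually takes in its best response; by the prefix property $G_p$ is exactly the set of the $|G_p|<1/\delta$ lightest class-$p$ items lying outside $S^*$. The algorithm enumerates all tuples $(G_p)_p$: for each of the $\le\tilde\OO(1/\delta)$ classes there are at most $n^{\OO(1/\delta)}$ candidates for $G_p$, giving $n^{\tilde\OO(1/\delta^2)}$ guesses in total. I claim a correct guess pins down $S^*$ on the whole group. First, any class-$p$ item of weight strictly below $\max_{j\in G_p}B_j$ that is not in $G_p$ must lie in $S^*$: otherwise it would be an available profit-$p$ item lighter than some element of $G_p$, contradicting that $G_p$ is the lightest available prefix; this forced set is determined by the guess. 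Second, for a class-$p$ item $j\in S^*$ whose weight is at least $\max_{j'\in G_p}B_{j'}$, I would argue by an exchange/minimality argument that $j$ may be removed from $S^*$ without changing the follower's optimal value: the follower already has $|G_p|$ lighter available items of the same profit, so reinstating $j$ either leaves its optimal response unchanged or lets it substitute $j$ for a lighter prefix item, neither of which raises its profit — using $j$ together with all of $G_p$ would require at least $\lceil 1/\delta\rceil$ large-weight items, which is infeasible. Hence WLOG $S^*$ contains no such ``heavy'' item, and $S^*$ restricted to this group equals $\bigcup_p\bigl(\{\,j\in\text{class }p:\ B_j<\max_{j'\in G_p}B_{j'}\,\}\setminus G_p\bigr)$, which the guess determines.

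The main obstacle is this last WLOG step, precisely because the leader's interdiction costs $\ve A_j$ are arbitrary: one cannot simply assume $S^*$ interdicts a weight-prefix of each class, since an expensive light item may force the leader to interdict a cheaper heavier one instead, so the reduction must rest purely on the follower's value being unaffected by reinstating heavy items. Care is needed in the boundary case where a class has fewer than $1/\delta$ available items: there, reinstating a heavy interdicted item could in principle help the follower, and to cover this case the enumeration should guess slightly more per class (for instance the weight threshold up to which the leader interdicts in that class, together with the count the follower takes), so that the leader's choice on the class is recovered by a greedy that minimizes interdiction cost — still within $n^{\tilde\OO(1/\delta^2)}$ branches. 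Once $S^*$ is identified on all large items, only small items remain, to be handled by the LP in the next subsection.
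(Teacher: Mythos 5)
Your high-level plan matches the paper's (split into $\tilde\OO(1/\delta)$ profit classes, per class guess about $1/\delta$ items, exploit weight-dominance within a class), but you choose the wrong object to guess and this opens a gap that your own last paragraph notices but does not close. You enumerate $G_p$, the items the follower \emph{actually takes} from class $p$ in its best response. When $|G_p|<1/\delta$, this does not determine $S^*$-membership for class-$p$ items of weight $\ge\max_{j\in G_p}B_j$, and your proposed exchange argument (``reinstating such $j$ cannot help the follower'') is false in exactly that regime. Concretely: suppose class $p$ has available items of weights $0.2$ and $0.4$ and interdicted items of weights $0.3$ and $0.5$, the follower takes only $0.2$ (so $G_p=\{0.2\}$), and after taking $G_p$ together with its picks from other classes the follower has budget slack in $[0.3,0.4)$. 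Then reinstating the interdicted $0.3$ lets the follower add one more class-$p$ item and strictly increases its profit, so $S^*$ must keep $0.3$ interdicted — yet your rule outputs $S^*\cap\text{class }p=\emptyset$. Your appeal to ``using $j$ together with all of $G_p$ would require $\lceil 1/\delta\rceil$ large-weight items'' does not bite here because $|G_p|+1\ll 1/\delta$. Your suggested patch — guess ``the weight threshold up to which the leader interdicts'' — is not well-defined: as you yourself point out earlier in the same paragraph, $S^*$ need not interdict a weight-prefix of a class (cheap heavy items may be interdicted while expensive light ones are not), so no such threshold exists.

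The paper resolves this by guessing a strictly more informative object: for each class $S_h$, the $1/\delta$ \emph{lightest items of $S_h\setminus S^*$} (i.e., the lightest $1/\delta$ \emph{available} items, not merely the ones the follower happens to take), with the usual degenerate handling when fewer than $1/\delta$ exist. With that guess, every class-$p$ item lighter than the $1/\delta$-th lightest available item has its $S^*$-membership forced (in $S^*$ iff not guessed), so the $0.3$ in the example above would be identified correctly. The WLOG step is then applied only to items at least as heavy as the $1/\delta$-th lightest available item, where it is airtight: the follower can take at most $1/\delta$ items from the class (each weight $>\delta$, budget $1$) and already has $1/\delta$ lighter items of the same profit available, so reinstating a heavier one never helps. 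In short, your enumeration budget is the right size but your guess carries too little information; switching the target of the enumeration from $G_p$ to the $1/\delta$ lightest available items in each class makes the argument go through without the boundary-case patch.
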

\begin{proof}
Recall that there are at most $\tilde{\OO}(1/\delta)$ distinct large/medium profits. Let $S_h$ be the set of large-weight items whose profits are all $\delta^2(1+\delta)^h$. We observe two facts: (i). Among items in $S_h\setminus S^*$, the follower always selects the ones with the smallest weights; (ii). The follower can select at most $1/\delta$ items from $S_h\setminus S^*$. We claim that, $S^*\cap S_h$ can be determined through guessing out the following $1/\delta$ key items in $S_h$: among items in $S_h\setminus S^*$, which is the item that has the $k$-th smallest weight for $k=1,2,\cdots,1/\delta$?\footnote{If there are less than $1/\delta$ items in $S_{h}$, we can simply guess out all items in $S_{h}\setminus S^*$ via $n^{\OO(\frac{1}{\delta})}$ enumerations.} To see the claim, let $w_{h}^{max}$ be the weight of the item in $S_h\setminus S^*$ that has the $1/\delta$-th smallest weight. Consider any item in $S_h$: if its weight is smaller than $w_{h}^{max}$, and is not one of the key items, then this item must belong to ${S}^*$ (by the definition of key items); if its weight is larger than or equal to $w_{h}^{max}$, and is not one of the key items, then it is not in ${S}^*$ (there is no need for the leader to select such an item since the follower will never select this item even if it is available). Thus via 
$n^{\OO{(1/\delta)}}$ enumerations, we can guess out all large/medium-profit large-weight items in ${S}^* \cap S_h$. Moreover, via total $n^{\tilde\OO{(1/\delta^2)}}$ enumerations, we can guess out all large/medium-profit large-weight items in ${S}^*$.   
\end{proof}

To summarize, our above analysis leads to the following lemma:
\begin{lemma}
With $\OO(\delta)$ additive error, we can guess out all the large items in the optimal solution ${S}^*$, i.e., all items that either have a large profit or a large weight, by $n^{\tilde{\OO}(\frac{1}{\delta^{2}})}$ enumerations.
\end{lemma}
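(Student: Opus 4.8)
The plan is to assemble the lemma from the three cases already treated in this subsection, after first checking that those cases exhaust the large items. Every large item has a large profit or a large weight, hence it falls into exactly one of the following three pairwise disjoint classes: (a) large-profit small-weight items; (b) small-profit large-weight items; (c) large/medium-profit large-weight items. Class (c) captures every large-weight item whose profit exceeds $\delta^2$, and in particular every large-profit large-weight item, since a large profit is a fortiori a large/medium profit; so (a), (b), (c) really do cover all large items, and it suffices to control $S^*$ restricted to each of them.

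First I would fix a near-optimal reference solution. Let $\tilde{S}$ be obtained from $S^*$ by deleting every small-profit large-weight item it contains. By Observation~\ref{obs:splw}, $\tilde{S}$ is still a feasible leader choice and its objective value exceeds $OPT$ by at most $\OO(\delta)$; this is the only additive loss incurred here. Since the three classes are pairwise disjoint, deleting items of class (b) does not change the intersections with classes (a) and (c), i.e. $\tilde{S}$ agrees with $S^*$ on classes (a) and (c); hence the guesses made for the different classes are mutually compatible. For class (a), Observation~\ref{obs:lpsw} shows that $S^*$, and therefore $\tilde{S}$, omits at most $1/\delta-1$ of these items, so the omitted ones are enumerated in $n^{\OO(1/\delta)}$ guesses. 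For class (b), $\tilde{S}$ contains no such item by construction, so nothing needs to be guessed. For class (c), Lemma~\ref{lemma:lmplw2} enumerates $S^*$ restricted to class (c), which equals $\tilde{S}$ restricted to class (c), in $n^{\tilde{\OO}(1/\delta^2)}$ guesses.

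Taking the product over the three classes gives $n^{\OO(1/\delta)}\cdot n^{\tilde{\OO}(1/\delta^2)}=n^{\tilde{\OO}(1/\delta^2)}$ enumerations in total, at least one of which reveals exactly the large part of $\tilde{S}$, with only an $\OO(\delta)$ additive error in the objective; this proves the lemma. The statement is essentially a bookkeeping consolidation, so there is no genuine obstacle; the one point that requires care is the mutual compatibility of the three sub-arguments, which is precisely why the partition of the large items into (a), (b), (c) must be disjoint, and that is what the argument above verifies. For the remainder of the section it is worth keeping in mind that what has been ``guessed'' is the large part of a near-optimal solution $\tilde{S}$, rather than of the literal optimum $S^*$.
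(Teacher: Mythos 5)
Your proof is correct and takes essentially the same approach as the paper, which simply combines Observations~\ref{obs:lpsw} and~\ref{obs:splw} and Lemma~\ref{lemma:lmplw2}; the paper states this lemma as a summary without a separate argument. Your added care — introducing $\tilde S$ explicitly, verifying that the three classes partition the large items, and noting that disjointness makes the three sub-guesses compatible so the subsequent analysis really tracks $\tilde S$ rather than the literal $S^*$ — is precisely the bookkeeping the paper leaves implicit when it writes ``we may assume $S^*$ does not contain small-profit large-weight items.''
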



Let $\bar{I}\subseteq I =\{1,2,\cdots,n\}$ be the set of small items, i.e., items of medium/small-profit and small-weight. Then $I\setminus \bar{I}$ is the set of large items. Denote by $\vex^*$ the optimal solution to \textbf{IPC}$(I,\ve1,1)$, which is corresponding to $S^*$. In the following we assume a correct guess on large items. Hence, the values of $\{x_j^*:j\in I\setminus \bar{I} \}$ are known. We let $\vea'$ be the total cost of these guessed-out large items. 
 
\subsubsection{Finding the follower's dominant choices on large items.}\label{subsection:connect}
Consider all the large items. Even if the leader's choice on large items is fixed, the follower may still have exponentially many different choices on the remaining large items. The goal of this subsection is to show that, among these choices of the follower, it suffices to restrict our attention to a few "dominant" choices that always outperform other choices.

For simplicity, we re-index items such that $\bar{I}=\{1,2,\cdots,\bar{n}\}$, where $\bar{n}\le n$. 
We further assume that items in $\bar{I}$ are sorted in decreasing order of the profit-weight ratios $p_j/B_j$. For any $\bar{\vea}\le \ve1$ and $\bar{b}\le 1$, denote by \textbf{IPC}$(\bar{I},\bar{\vea},\bar{b})$ the "residual instance" where the item set is $\bar{I}$, the budget vector of the leader is $\bar{\vea}$ and the budget of the follower is $\bar{b}$. 

Denote by $I^{\prime}\subseteq I\setminus \bar{I}$ the subset of large items which are {\it not} selected by the leader. Note that due to the assumption $OPT\le 1$ and that we have guessed out correct large items in $S^*$, the follower cannot select items from $I^{\prime}$ with total profit larger than 1. Hence, for each integer $k\in [1, 1+1/\epsilon]$, we can define the following sub-problem: among items in $I^{\prime}$, find out a subset of items with minimal total weight such that their total profit is within $[(k-1)\epsilon,k\epsilon)$. Denote by $SP(k)$ this sub-problem and by $KP(k\epsilon)$ its optimal solution, if it exists. We claim that the follower can select at most $\OO(1/\delta)$ items from $I^{\prime}$, thus via $n^{\OO({1}/{\delta})}$ enumerations, we can return $KP(k\epsilon)$ or assert there does not exist a feasible solution to $SP(k)$. The claim is guaranteed by the following two facts: (i). The total profit the follower could obtain from $I^{\prime}$ is at most 1; (ii). Items in $I^{\prime}$ either have a large-profit, or a large-weight.  
Let $\Theta = \{KP(k\epsilon): k \in \{1,2,\cdots,1+\frac{1}{\epsilon}\} \}$, which contains the follower's $\OO(1/\epsilon)$ possible choices on $I^{\prime}$. For $\ell \in \{1,2,\cdots,1+\frac{1}{\epsilon}\}$, we let $b_\ell$ and $P_\ell$ be the total weight and the total profit of the items selected by the follower, respectively\footnote{If there is no feasible solution to $SP(\ell)$, we let $b_\ell=1$ and $P_\ell=0$.}. Then the follower has a residual budget of $1-b_\ell$ for items in $\bar{I}$. Recall that the leader has a residual budget vector of $\ve1-\vea'$ for items in $\bar{I}$.  

Define $\vey[\bar{I} ] = (y_1,y_2\cdots, y_{\bar{n}}) $. Recall that by guessing we already know the value of $x_j^*$ for $j\in I\setminus \bar{I}$. Consider the following bilevel program:
\begin{subequations}
	\begin{align}
	\textbf{Bi-IP}(I,\ve1,1):
	\min_{\vex }\hspace{1mm} & P_{\ell}+ \sum_{j=1}^{\bar{n}}p_{j}y_{j}& \nonumber \\
	s.t. \hspace{1mm}& \sum_{j=1}^{\bar{n}}\ve A_{j}x_{j}\le \ve1-\vea' &\label{Bi-IP:a1}\\
	&x_j = x_j^*, \hspace{2mm} \forall j\in I\setminus \bar{I}& \label{Bi-IP:b1}\\
	& x_j \in \{0,1\}, \hspace{2mm} \forall j\in \bar{I} & \label{Bi-IP:c1}\\
	& \text{where integer }\ell, \vey[\bar{I} ] \text{ solves the following:}& \nonumber\\
	&\max_{1\le \ell \le 1+\frac{1}{\epsilon}} \max_{\vey[\bar{I} ]} \quad P_{\ell}+ \sum_{j=1}^{\bar{n}}p_{j}y_{j} & \label{Bi-IP:d}\\
	& s.t. \quad\hspace{10mm}  \sum_{j=1}^{\bar{n}} B_{j}y_{j}\le 1-b_\ell &\label{Bi-IP:a2}\\
	&\hspace{20mm} y_{j} \le 1-x_j, \hspace{2mm} \forall j\in \bar{I} \label{Bi-IP:b2} \\
	& \hspace{20mm}  y_j\in \{0,1\} , \hspace{2mm} \forall j\in \bar{I} \label{Bi-IP:c2}&
	\end{align}
\end{subequations}

\noindent\textbf{What is the difference between \textbf{Bi-IP}$(I,\ve1,1)$ and \textbf{IPC}$(I,\ve1,1)$, assuming the correct guess of $x_j^*$ for $j\in I \setminus \bar{I}$}? In \textbf{Bi-IP}$(I,\ve1,1)$, the follower's choices on remaining large items are restricted to the $\OO(1/\epsilon)$ choices in $\Theta$, while in \textbf{IPC}$(I,\ve1,1)$, the follower can choose any remaining large items. However, we observe that $\Theta$ contains all the follower's "dominant choices of remaining large items" in the sense that the follower uses the smallest budget to achieve a profit within $[(k-1)\epsilon,k\epsilon)$. Consequently, the objective value of \textbf{Bi-IP}$(I,\ve1,1)$ differs by at most $\epsilon$ to that of \textbf{IPC}$(I,\ve1,1)$. A formal description is given below.


\begin{lemma}\label{claim:s_1}

Let $\bar{\vex}$ be any feasible solution to \textbf{Bi-IP}$(I,\ve1,1)$. Then $\bar{\vex}$ is also feasible to \textbf{IPC}$(I,\ve1,1)$. Let ${Obj}_{Bi}(\bar{\vex})$ and $Obj({\bar{\vex}})$ be the objective values of \textbf{Bi-IP}$(I,\ve1,1)$ and \textbf{IPC}$(I,\ve1,1)$ for $\vex = \bar{\vex}$, respectively. We have 
$${Obj}_{Bi}({\bar{\vex}}) \le Obj({\bar{\vex}}) \le {Obj}_{Bi}({\bar{\vex}}) +\epsilon.$$
Furthermore, let ${OPT}_{Bi}$ and $OPT$ be the optimal objective values of \textbf{Bi-IP}$(I,\ve1,1)$ and \textbf{IPC}$(I,\ve1,1)$, respectively, then we have $${OPT}_{Bi} \le OPT \le {OPT}_{Bi}+\epsilon.$$
\end{lemma}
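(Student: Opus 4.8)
The plan is to prove the three assertions in sequence: that every $\textbf{Bi-IP}(I,\ve1,1)$-feasible $\bar\vex$ is feasible for $\textbf{IPC}(I,\ve1,1)$, then the per-solution sandwich ${Obj}_{Bi}(\bar\vex)\le Obj(\bar\vex)\le{Obj}_{Bi}(\bar\vex)+\epsilon$, and finally the optimum sandwich, which follows from the first two in two lines. Feasibility is a direct check: a feasible $\bar\vex$ obeys $\sum_{j\in\bar I}\ve A_jx_j\le\ve1-\vea'$ by \eqref{Bi-IP:a1} and $x_j=x_j^*$ for $j\in I\setminus\bar I$ by \eqref{Bi-IP:b1}, and since $\vea'$ is by definition $\sum_{j\in I\setminus\bar I}\ve A_jx_j^*$, summing the two groups of items gives $\ve A\bar\vex\le\ve1=\vea$; integrality is inherited coordinatewise, so $\bar\vex$ meets \eqref{IPC:b1}--\eqref{IPC:c1}. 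The same computation shows that the optimal leader vector $\vex^*$ (corresponding to $S^*$) is itself feasible for $\textbf{Bi-IP}(I,\ve1,1)$, a fact I reuse at the end.

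For the per-solution sandwich I fix $\bar\vex$ and compare the follower's best response in the two programs. The structural point is that, since the leader's choice is pinned to $x_j^*$ on every large item, the set $I'$ of large items available to the follower is the same in both programs and is independent of $\bar\vex$; moreover any sub-collection of $I'$ that the follower can afford within budget $1$ has total profit at most $OPT\le1$, because it is also a feasible follower response against $S^*$. For the inequality ${Obj}_{Bi}(\bar\vex)\le Obj(\bar\vex)$ I take the follower's optimal pair $(\ell^\star,\vey^\star)$ in $\textbf{Bi-IP}(I,\ve1,1)$ (we may assume $SP(\ell^\star)$ is feasible, since an infeasible bucket leaves the follower no weight budget and is dominated by $\ell=1$) and let the follower of $\textbf{IPC}(I,\ve1,1)$ select $KP(\ell^\star\epsilon)\subseteq I'$ together with the small items having $y_j^\star=1$: the total weight is $b_{\ell^\star}+\sum_{j\in\bar I}B_jy_j^\star\le1$ by \eqref{Bi-IP:a2}, there is no overlap with $\bar\vex$ on large items (they lie in $I'$) nor on small items (by \eqref{Bi-IP:b2}), and the profit is exactly $P_{\ell^\star}+\sum_{j\in\bar I}p_jy_j^\star={Obj}_{Bi}(\bar\vex)$; hence the follower's optimum against $\bar\vex$ in $\textbf{IPC}(I,\ve1,1)$ is at least ${Obj}_{Bi}(\bar\vex)$.

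For the reverse inequality $Obj(\bar\vex)\le{Obj}_{Bi}(\bar\vex)+\epsilon$ I start from the follower's optimal response $\vey^{IPC}$ in $\textbf{IPC}(I,\ve1,1)$, split its support into $T'\subseteq I'$ and $T\subseteq\bar I$, and put $\pi=\sum_{j\in T'}p_j\le1$. Taking the integer $k$ with $(k-1)\epsilon\le\pi<k\epsilon$ (so $k\le1+1/\epsilon$ as $\pi\le1$), the sub-problem $SP(k)$ is feasible with witness $T'$, so $KP(k\epsilon)$ exists, has weight $b_k\le\sum_{j\in T'}B_j$ by minimality, and has profit $P_k\ge(k-1)\epsilon>\pi-\epsilon$. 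Then the follower of $\textbf{Bi-IP}(I,\ve1,1)$ can take $\ell=k$ and set $y_j=1$ exactly on $T$: constraint \eqref{Bi-IP:a2} holds because $\sum_{j\in T}B_j\le1-\sum_{j\in T'}B_j\le1-b_k$, constraint \eqref{Bi-IP:b2} holds because $\vey^{IPC}$ already respects the conflict constraints on small items, and the profit is $P_k+\sum_{j\in T}p_j>\pi-\epsilon+\sum_{j\in T}p_j=Obj(\bar\vex)-\epsilon$; hence ${Obj}_{Bi}(\bar\vex)\ge Obj(\bar\vex)-\epsilon$, completing the sandwich.

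Finally, the optimum sandwich: ${OPT}_{Bi}\le{Obj}_{Bi}(\vex^*)\le Obj(\vex^*)=OPT$ using feasibility of $\vex^*$ for $\textbf{Bi-IP}(I,\ve1,1)$ and the first inequality, while if $\bar\vex^*$ is optimal for $\textbf{Bi-IP}(I,\ve1,1)$ then it is feasible for $\textbf{IPC}(I,\ve1,1)$ by the first claim and $OPT\le Obj(\bar\vex^*)\le{Obj}_{Bi}(\bar\vex^*)+\epsilon={OPT}_{Bi}+\epsilon$ by the second. The one genuinely delicate step is the reverse inequality, and the whole difficulty there is that $KP(k\epsilon)$ is in general not the set of large items the follower actually uses in $\textbf{IPC}(I,\ve1,1)$: the substitution is admissible only because $SP(k)$ minimizes weight (so at least as much follower budget remains for the small items $T$ and \eqref{Bi-IP:a2} survives) and because each profit bucket has width $\epsilon$ (so the profit loss from $\pi$ to $P_k$ is below $\epsilon$). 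Everything else — that $I'$ does not depend on $\bar\vex$, that the profit collectible from $I'$ is at most $1$ so $k\le1+1/\epsilon$, and that degenerate empty buckets are never used at a $\textbf{Bi-IP}$ optimum — is routine.
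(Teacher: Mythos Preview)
Your proof is correct and follows essentially the same approach as the paper's: both arguments split the follower's response into its large-item and small-item parts, use the bucket structure of the sub-problems $SP(k)$ to bound the profit gap by $\epsilon$, and exploit the weight-minimality of $KP(k\epsilon)$ to ensure the small-item portion remains feasible after substitution. You are somewhat more explicit than the paper about feasibility of $\bar\vex$ for $\textbf{IPC}(I,\ve1,1)$ and about the degenerate case where $SP(\ell)$ is infeasible, but the substance is the same.
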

\begin{proof}
Compare the follower's possible choices in \textbf{Bi-IP}$(I,\ve1,1)$ and $\textbf{IPC}(I,\ve1,1)$ when the leader's solution is fixed to $\bar{\vex}$. It is easy to see that in $\textbf{IPC}(I,\ve1,1)$, the follower's feasible choices on the remaining large items contain $\Theta$, it thus follows that  ${Obj}_{Bi}({\bar{\vex}}) \le Obj({\bar{\vex}})$. Particularly, since the optimal solution $\vex^*$ of $\textbf{IPC}(I,\ve1,1)$ is a feasible solution of \textbf{Bi-IP}$(I,\ve1,1)$ 
and the optimal solution of \textbf{Bi-IP}$(I,\ve1,1)$ may achieve an even smaller value, it follows that ${OPT}_{Bi} \le OPT$. It remains to prove that $Obj(\bar{\vex}) \le {Obj}_{Bi}(\bar{\vex}) +\epsilon$ and $OPT \le {OPT}_{Bi} +\epsilon$.

Note that $Obj(\bar{\vex})$ is exactly the optimal objective value of the following integer program:
	\begin{align*}
	\textbf{IP($\bar{\vex}$)}:
	\max\limits _{\vey}\quad& \vep\vey&\\
	 s.t.  \hspace{1mm}&\sum_{j=1}^{n}B_jy_j\le 1&\\
	&\hspace{2mm} \vey\le \ve1- \bar{\vex} \\
	& \hspace{2mm}\vey\in\{0,1\}^n&
	\end{align*}
Let $\bar{\vey}$ be an optimal solution of \textbf{IP}($\bar{\vex}$), then $Obj(\bar{\vex}) = \sum_{j\in{I\backslash \bar{I}}}p_j\bar{y}_j+\sum_{j\in{\bar{I}}}p_j\bar{y}_j$. Recall that $\vex^*$ is an optimal solution of \textbf{IPC}$(I,\ve1,1)$, and $\bar{\vex}_j=\vex_j^*$ for $j\in I\setminus \bar{I}$ by \eqref{Bi-IP:b1}. Consequently, if we compare the follower in \textbf{IPC}$(I,\ve1,1)$ and the follower in \textbf{Bi-IP}$(I,\ve1,1)$, {the subset of items in $I\setminus \bar{I}$ available for the two followers to select is the same}, and we let this subset be $R=\{j:x_j^*=0,j\in I\setminus \bar{I}\}$. Given that we assume $OPT\le 1$, the maximal profit the follower could obtain from $R$ is at most 1, thus there exists some integer $\bar{\ell} \in [1, 1+1/\epsilon]$ such that $\sum_{j\in I\backslash \bar{I}}p_j\bar{y}_j  \in [(\bar{\ell}-1)\epsilon, \bar{\ell}\epsilon)$. 
By the definitions of $P_{\bar{\ell}}$ and $b_{\bar{\ell}}$, we have $\sum_{j\in I\backslash \bar{I}}p_j\bar{y}_j \le P_{\bar{\ell}}+\epsilon$ and $b_{\bar{\ell} }\le \sum_{j \in I\backslash \bar{I}}B_j\bar{y}_j$. Define $\vey^{\prime}\in\{0,1\}^{n}$ such that the $\vey^{\prime}$ is a combination of two partial solutions: in $I\setminus\bar{I}$, $\vey^{\prime}$ is the same as $KP(\bar{\ell}\epsilon)$; and in $\bar{I}$, $\vey^{\prime}$ is the same as $\bar{\vey}$.
Then $\vey^{\prime}$ is a feasible solution of the following program:
	\begin{align*}
	\overline{\textbf{IP}}(\bar{\vex}):
	&\max_{\ell} \max_{\vey[\bar{I} ]} \quad P_{\ell}+ \sum_{j=1}^{\bar{n}}p_{j}y_{j} &\\
	& \quad s.t. \hspace{10mm}  \sum_{j=1}^{\bar{n}} B_{j}y_{j}\le 1-b_\ell &\\
	&\hspace{19mm} y_{j} \le 1-\bar{x}_j, \hspace{2mm}\forall j\in \bar{I} \\
	& \hspace{19mm}  y_j\in \{0,1\} , \hspace{2mm}\forall j\in \bar{I} &
	\end{align*}
Notice that the optimal objective value of $\overline{\textbf{IP}}(\bar{\vex})$ is ${Obj}_{Bi}(\bar{\vex})$, thus $\vep \vey^{\prime} = P_{\bar{\ell}}+ \sum_{j=1}^{\bar{n}}p_{j}\bar{y}_{j} \le {Obj}_{Bi}(\bar{\vex})$. To conclude, we have
$$
Obj(\bar{\vex}) = \sum_{j\in{I\backslash \bar{I}}}p_j\bar{y}_j+\sum_{j=1}^{\bar{n}}p_j\bar{y}_j \le P_{\bar{\ell}}+\epsilon+\sum_{j=1}^{\bar{n}}p_{j}\bar{y}_{j} \le {Obj}_{Bi}(\bar{\vex}) + \epsilon
$$
Particularly, given an optimal solution $\bar{\vex}^*$ of \textbf{Bi-IP}$(I,\ve1,1)$, we have $Obj(\bar{\vex}^*) \le {OPT}_{Bi} +\epsilon$. Since the optimal solution of $\textbf{IPC}(I,\ve1,1)$ may achieve an even smaller objective value, it follows that $OPT \le {OPT}_{Bi} +\epsilon$. Hence Lemma~\ref{claim:s_1} is proved.
\end{proof}

\subsection{Handling Small Items}\label{subsection:small-items_1}
According to Lemma~\ref{claim:s_1}, to solve \textbf{IPC}$(I,\ve1,1)$, it suffices to solve \textbf{Bi-IP}$(I,\ve1,1)$, which is the goal of
this subsection. Towards this, we first obtain a linear relaxation of \textbf{Bi-IP}$(I,\ve1,1)$ where both the leader and the follower can select items fractionally. Then we reformulate this bilevel linear relaxation as a single level linear program and find an extreme point optimal fractional solution. {Finally we round this fractional solution to obtain a feasible solution to \textbf{Bi-IP}$(I,\ve1,1)$ with an objective value of at most $1+\OO(\epsilon)$, which is thus also a feasible solution to \textbf{IPC}$(I,\ve1,1)$ with an objective value of at most $1+\OO(\epsilon)$.}

Replace (\ref{Bi-IP:c1}) and (\ref{Bi-IP:c2}) in \textbf{Bi-IP}$(I,\ve1,1)$ with $x_j \in [0,1]( \forall j\in \bar{I})$ and $y_j\in [0,1] (\forall j\in \bar{I})$, respectively, we obtain a relaxation of \textbf{Bi-IP}$(I,\ve1,1)$ as follows.
\begin{subequations}
	\begin{align}
	\textbf{Bi-IP}_r(I,\ve1,1):
	\min_{\vex }\hspace{1mm} & P_{\ell}+ \sum_{j=1}^{\bar{n}}p_{j}y_{j}&\nonumber \\
	s.t. \hspace{1mm}& \sum_{j=1}^{\bar{n}}\ve A_{j}x_{j}\le \ve1-\vea' & \label{Bi-IP_r:a1}\\
	&x_j = x_j^*, \hspace{2mm} \forall j\in I\setminus \bar{I}& \label{Bi-IP_r:b1}\\
	& x_j \in [0,1], \hspace{2mm} \forall j\in \bar{I} &\label{Bi-IP_r:c1}\\
	& \text{where integer } \ell, 
	\vey[\bar{I} ] \text{ solves the following:}&\nonumber\\
	&\max_{1\le \ell \le 1+\frac{1}{\epsilon}} \max_{\vey[\bar{I} ]} \quad P_{\ell}+ \sum_{j=1}^{\bar{n}}p_{j}y_{j} &\label{Bi-IP_r:d}\\
	& s.t. \quad\hspace{10mm}  \sum_{j=1}^{\bar{n}} B_{j}y_{j}\le 1-b_\ell &\label{Bi-IP_r:a2}\\
	&\hspace{20mm} y_{j} \le 1-x_j, \hspace{2mm} \forall j\in \bar{I} \label{Bi-IP_r:b2}\\
	& \hspace{20mm}  y_j\in [0,1], \hspace{2mm} \forall j\in \bar{I} \label{Bi-IP_r:c2}&
	\end{align}
\end{subequations}
Denote by ${OPT}_{Bi}^r$ the optimal objective value of \textbf{Bi-IP}$_r(I,\ve1,1)$. Note that items in $\bar{I}$ are sorted in decreasing order of the profit-weight ratios $p_j/B_j$. Consider any fixed leader's solution $\vex\in [0,1]^n$ and any fixed $\ell$, the follower is solving a knapsack problem in the remaining (fractional) items. The maximal objective value of the follower, given $\vex\in [0,1]^n$ and $\ell$, is obtained by a simple greedy algorithm that selects remaining fractional items in $\bar{I}$ in the natural order of indices (recall that items are re-indexed in non-increasing order of ratios), until the budget $1-b_\ell$ is exhausted. Note that the greedy algorithm will stop at some (fractional) item when the budget $1-b_\ell$ is exhausted\footnote{The greedy algorithm may pack all remaining (fractional) items without using up the budget $1-b_\ell$. To patch this case, we add a dummy item whose profit is $0$, cost vector is $\ve 0$ and weight is sufficiently large. We assume the last item $\bar{n}$ is the dummy item.}. We say this item is {\it critical} and let its index be $c_{\ell}$. 
Given any fixed $\vex\in [0,1]^n$ and $\ell$, the maximal objective value of program (\ref{Bi-IP_r:d})-(\ref{Bi-IP_r:c2}) for $\vex$ is 
\begin{subequations}
	\begin{align}
		P_{\ell}+\sum_{j=1}^{c_{\ell}-1}p_j(1-x_j)+p_{c_{\ell}}\frac{1-b_{\ell}-\sum_{j=1}^{c_{\ell}-1}B_j(1-x_j)}{B_{c_{\ell}}}, \label{refor_1}
	\end{align}
\end{subequations}
where $c_{\ell}$ is the critical item given $\vex$ and $\ell$. The following two formulas are directly given by the definition of critical.
\begin{subequations}
	\begin{align}
		& \sum_{j=1}^{c_{\ell}-1}B_j(1-x_j) \le 1-b_\ell \label{cons_1}\\
		&B_{c_{\ell}}+ \sum_{j=1}^{c_{\ell}-1}B_j(1-x_j) \ge 1-b_\ell \label{cons_2}
	\end{align}
\end{subequations}

{We first show that the optimal objective value of the \textbf{Bi-IP}$_{r}(I,\ve1,1)$ is at most ${OPT}_{Bi}+\delta$.}

\begin{lemma}\label{claim:ptas_1}
Let ${OPT}_{Bi}$ and ${OPT}_{Bi}^r$ be the optimal objective values of \textbf{Bi-IP}$(I,\ve1,1)$ and \textbf{Bi-IP}$_r(I,\ve1,1)$, respectively, then ${OPT}_{Bi}^r\le {OPT}_{Bi}+\delta$.
\end{lemma}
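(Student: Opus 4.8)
The plan is to exhibit one concrete feasible leader's solution for $\textbf{Bi-IP}_r(I,\ve1,1)$ whose objective value is at most ${OPT}_{Bi}+\delta$; since ${OPT}_{Bi}^{r}$ is the minimum of the leader's objective over all feasible solutions, this immediately yields the lemma. The candidate is $\bar{\vex}^*$, an optimal integral leader's solution of $\textbf{Bi-IP}(I,\ve1,1)$, which is trivially feasible for $\textbf{Bi-IP}_r(I,\ve1,1)$ because $\{0,1\}\subseteq[0,1]$. Note that we do not use the relaxation of the leader's variables at all: when passing from $\textbf{Bi-IP}(I,\ve1,1)$ to $\textbf{Bi-IP}_r(I,\ve1,1)$ at the fixed leader solution $\bar{\vex}^*$, the only change is that the follower may now choose a fractional $\vey[\bar{I}]$, which in a bilevel program can only \emph{increase} the leader's objective. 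This is precisely why the inequality runs as ${OPT}_{Bi}^{r}\le {OPT}_{Bi}+\delta$ rather than in the usual ``relaxing decreases the minimum'' direction.

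First I would fix the leader to $\bar{\vex}^*$ and, for each integer $\ell\in\{1,\dots,1+1/\epsilon\}$, compare the follower's fractional optimum in $\textbf{Bi-IP}_r$ with the follower's integral optimum in $\textbf{Bi-IP}$. For a fixed $\ell$ the follower solves a knapsack over the small items $\bar{I}$ with residual budget $1-b_\ell$ and availability upper bounds $1-\bar{x}^*_j$, all of which are $0$ or $1$ since $\bar{\vex}^*$ is integral. By the greedy description recalled in \eqref{refor_1}--\eqref{cons_2}, the fractional optimum packs items $1,\dots,c_\ell-1$ up to their availability and only a fraction of the critical item $c_\ell$; zeroing out that last coordinate produces an integral vector that is still feasible for the follower in $\textbf{Bi-IP}$ and whose value is the fractional value minus the (fractional) contribution of $c_\ell$. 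Hence the fractional follower's value for this $\ell$ exceeds the integral follower's optimum for the same $\ell$ by at most $p_{c_\ell}$, and since $c_\ell\in\bar{I}$ is a small item (or the dummy item of profit $0$) the item classification gives $p_{c_\ell}\le\delta$.

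Taking the maximum over $\ell$ on both sides, the follower's best response to $\bar{\vex}^*$ in $\textbf{Bi-IP}_r(I,\ve1,1)$ is at most $\delta$ larger than the follower's best response to $\bar{\vex}^*$ in $\textbf{Bi-IP}(I,\ve1,1)$; the latter equals ${OPT}_{Bi}$ because $\bar{\vex}^*$ is an optimal leader's solution of $\textbf{Bi-IP}(I,\ve1,1)$. Therefore $\textbf{Bi-IP}_r(I,\ve1,1)$ evaluated at $\bar{\vex}^*$ has objective value at most ${OPT}_{Bi}+\delta$, and so ${OPT}_{Bi}^{r}\le {OPT}_{Bi}+\delta$.

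I do not expect a real obstacle: the core of the argument is the standard ``lose at most one fractional item'' estimate for knapsack, and the two ingredients it needs --- that the availability bounds $1-\bar{x}^*_j$ are integral (because $\bar{\vex}^*$ is) and that every item of $\bar{I}$ has profit at most $\delta$ --- are already in place. The only thing worth a sentence of care is the bilevel bookkeeping: one must make explicit that it is the relaxation of the \emph{follower}, not the leader, that could raise the objective, so that the gap is one-sided and bounded by the profit of a single small item.
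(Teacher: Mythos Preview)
Your proof is correct and follows essentially the same approach as the paper. The paper makes the intermediate step explicit by naming a program $\textbf{Bi-IP}_{in}(I,\ve1,1)$ in which only the follower is relaxed, then argues ${OPT}_{Bi}^r\le OPT_{in}\le Obj_{in}\le {OPT}_{Bi}+\delta$; you collapse this by directly evaluating $\textbf{Bi-IP}_r$ at the integral optimum $\bar{\vex}^*$ (which is the same thing, since at an integral leader's solution the two relaxed programs coincide) and invoking the one-fractional-item knapsack bound. The bilevel bookkeeping you flag---that relaxing the follower can only raise the leader's objective at a fixed $\vex$---is exactly the point the paper isolates via the intermediate program.
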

\begin{proof}
It is not straightforward if we compare \textbf{Bi-IP}$(I,\ve1,1)$ with \textbf{Bi-IP}$_r(I,\ve1,1)$ directly, as both the follower and the leader become stronger in the relaxation (in the sense they can pack items fractionally). Towards this, we introduce an intermediate bilevel program \textbf{Bi-IP}$_{in}(I,\ve1,1)$, which is obtained by replacing (\ref{Bi-IP_r:c1}) in \textbf{Bi-IP}$_r(I,\ve1,1)$ with $x_j \in \{0,1\}(\forall j\in \bar{I})$, that is, we only allow the follower to select items fractionally but not the leader. Denote by $OPT_{in}$ the optimal objective value of \textbf{Bi-IP}$_{in}(I,\ve1,1)$. 

First, we compare \textbf{Bi-IP}$_{in}(I,\ve1,1)$ with \textbf{Bi-IP}$_r(I,\ve1,1)$. We see that in \textbf{Bi-IP}$_r(I,\ve1,1)$ the follower is facing a  stronger leader who is allowed to fractionally pack items, and it thus follows that ${OPT}_{Bi}^r \le OPT_{in}$. 

Next, we compare \textbf{Bi-IP}$_{in}(I,\ve1,1)$ and \textbf{Bi-IP}$(I,\ve1,1)$. Note that the leader's solution must be integral in both programs. Any feasible solution of \textbf{Bi-IP}$_{in}(I,\ve1,1)$ is a feasible solution of \textbf{Bi-IP}$(I,\ve1,1)$, and vice versa.
Let $\bar{\vex}^*$ be an optimal solution of \textbf{Bi-IP}$(I,\ve1,1)$, then $\bar{\vex}^*$ is also a feasible solution of  \textbf{Bi-IP}$_{in}(I,\ve1,1)$. Once the leader fixes his solution as $\bar{\vex}^*$ in \textbf{Bi-IP}$_{in}(I,\ve1,1)$, there exist $\ell$ and $c_\ell \in \bar{I}$, such that the objective value of \textbf{Bi-IP}$_{in}(I,\ve1,1)$ is 
$$
Obj_{in}=P_{\ell}+\sum_{j=1}^{c_{\ell}-1}p_j(1-\bar{x}^*_j)+p_{c_{\ell}}\frac{1-b_{\ell}-\sum_{j=1}^{c_{\ell}-1}B_j(1-\bar{x}^*_j)}{B_{c_{\ell}}},
$$
where $c_{\ell}$ is the critical item corresponding to $\bar{\vex}^*$ and $\ell$. We have the following two observations:
\begin{itemize}
    \item ${OPT}_{Bi}\ge Obj_{in}-p_{c_{\ell}}\ge Obj_{in}-\delta$. This is because the follower in \textbf{Bi-IP}$(I,\ve1,1)$ can guarantee an objective value of $P_{\ell}+\sum_{j=1}^{c_{\ell}-1}p_j(1-\bar{x}^*_j)$, and $p_{j} \le \delta$ for $j\in \bar{I}$;
    \item $OPT_{in}\le Obj_{in}$. This is because  $\bar{\vex}^*$ is just a feasible solution of \textbf{Bi-IP}$_{in}(I,\ve1,1)$, while the optimal solution of the leader may achieve an even smaller objective value. 
\end{itemize}
To summarize, we know ${OPT}_{Bi}^r \le OPT_{in}\le Obj_{in}\le {OPT}_{Bi}+\delta$.
 Lemma~\ref{claim:ptas_1} is proved.
\end{proof}

Given Lemma~\ref{claim:ptas_1}, we are still facing two questions: how can we solve \textbf{Bi-IP}$_r(I,\ve1,1)$; and even if we obtain a fractional solution to \textbf{Bi-IP}$_r(I,\ve1,1)$, how can we transform it to an integral solution without incurring a huge loss. Towards this, consider the optimal solution $\vex^r$ to \textbf{Bi-IP}$_r(I,\ve1,1)$. Note that leader's choice on large items is guessed out in \textbf{Bi-IP}$_r(I,\ve1,1)$. Consider the scenario when the follower adopts the $\ell$-th dominant choice on the remaining large items, and recall the definition of critical items (see \eqref{refor_1}). Given solution $\vex^r$, for any $\ell\in\{1,2,\cdots,1+\frac{1}{\epsilon}\}$ there must exist a critical item. Therefore, there are $1+\frac{1}{\epsilon}$ critical items corresponding to $\vex^r$. The crucial fact is that, while we cannot guess out $\vex^r$ directly, we can guess out all the critical items corresponding to $\vex^r$. More precisely, with $\bar{n}^{\OO(1/\epsilon)}$ enumerations, we can guess out the critical item $c_\ell^r$ for $\vex^r$ and each $\ell$. Suppose we have guessed out the correct $c_\ell^r$'s corresponding to the optimal solution $\vex^r$, we consider the following LP:  
 \begin{subequations}
	\begin{align*}
		\textbf{LP}_{\text{Bi-IP}}: \quad \min\limits_{\vex,M}&\hspace{2mm} M&\\\
		\hspace{1mm}&  \sum_{j=1}^{\bar{n}}\ve A_{j}x_{j}\le \ve1-\vea' &\\
		\hspace{1mm}& P_{\ell}+\sum_{j=1}^{{c_{\ell}^r}-1}p_j(1-x_j)+p_{{c_{\ell}^r}}\frac{1-b_{\ell}-\sum_{j=1}^{{c_{\ell}^r}-1}B_j(1-x_j)}{B_{{c_{\ell}^r}}} \le M, \hspace{2mm} \forall \ \ell \in \{1,2,\cdots,1+\frac{1}{\epsilon}\} &\\
		& \sum_{j=1}^{{c_{\ell}^r}-1}B_j(1-x_j) \le 1-b_\ell, \hspace{2mm} \forall \ \ \ell \in \{1,2,\cdots,1+\frac{1}{\epsilon}\} &\\
		&B_{{c_{\ell}^r}}+ \sum_{j=1}^{{c_{\ell}^r}-1}B_j(1-x_j) \ge 1-b_\ell, \hspace{2mm} \forall \ \ \ell \in \{1,2,\cdots,1+\frac{1}{\epsilon}\} &\\
		&x_j = x_j^*, \hspace{2mm} j\in I\setminus \bar{I}& \\
		& x_j \in [0,1] , \hspace{2mm} j\in \bar{I}& 
	\end{align*}
\end{subequations}

We have the following simple observation.
\begin{observation}
Let $M^*$ and ${OPT}_{Bi}^r$ be the optimal objective values of $\textbf{LP}_{\text{Bi-IP}}$ and \textbf{Bi-IP}$_r(I,\ve1,1)$, respectively, then $M^*\le {OPT}_{Bi}^r$.
\end{observation}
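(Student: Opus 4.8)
The plan is to prove the inequality by a one-directional feasibility argument: exhibit an optimal leader's solution of \textbf{Bi-IP}$_r(I,\ve1,1)$, together with an appropriate choice of $M$, as a feasible point of $\textbf{LP}_{\text{Bi-IP}}$ whose objective equals ${OPT}_{Bi}^r$. Since $M^*$ is the minimum objective of $\textbf{LP}_{\text{Bi-IP}}$, this yields $M^*\le {OPT}_{Bi}^r$ at once.

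In detail, let $\vex^r$ be an optimal leader's solution of \textbf{Bi-IP}$_r(I,\ve1,1)$, and recall that we are conditioning on the correct guess, i.e.\ that $c_\ell^r$ is the critical item associated with $\vex^r$ and the $\ell$-th dominant choice for every $\ell\in\{1,2,\dots,1+\frac{1}{\epsilon}\}$. Set
\[
M:=\max_{1\le \ell\le 1+\frac{1}{\epsilon}}\left(P_{\ell}+\sum_{j=1}^{c_{\ell}^r-1}p_j(1-x_j^r)+p_{c_{\ell}^r}\frac{1-b_{\ell}-\sum_{j=1}^{c_{\ell}^r-1}B_j(1-x_j^r)}{B_{c_{\ell}^r}}\right).
\]
I would then check that $(\vex^r,M)$ satisfies every constraint of $\textbf{LP}_{\text{Bi-IP}}$. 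The leader's packing constraint $\sum_{j=1}^{\bar{n}}\ve A_{j}x_{j}^r\le \ve1-\vea'$, the fixing $x_j^r=x_j^*$ for $j\in I\setminus\bar{I}$, and the bounds $x_j^r\in[0,1]$ for $j\in\bar{I}$ all hold because $\vex^r$ is feasible for \textbf{Bi-IP}$_r(I,\ve1,1)$ (see \eqref{Bi-IP_r:a1}). The two families of inequalities of $\textbf{LP}_{\text{Bi-IP}}$ relating $B_{c_\ell^r}$ and $\sum_{j=1}^{c_\ell^r-1}B_j(1-x_j^r)$ to $1-b_\ell$ are exactly \eqref{cons_1}--\eqref{cons_2}, which hold by the definition of the critical item $c_\ell^r$ for the pair $(\vex^r,\ell)$. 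Finally, the $1+\frac{1}{\epsilon}$ constraints bounding the follower's profit by $M$ hold because $M$ was chosen as the maximum of their left-hand sides.

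It remains to observe that $M={OPT}_{Bi}^r$. Given the leader's solution $\vex^r$ and a fixed $\ell$, the follower faces a fractional knapsack over the remaining items of $\bar{I}$, which are sorted in non-increasing order of $p_j/B_j$; its optimal value is precisely the greedy value \eqref{refor_1} with critical item $c_\ell^r$. Maximizing over $\ell\in\{1,\dots,1+\frac{1}{\epsilon}\}$ shows that the objective of \textbf{Bi-IP}$_r(I,\ve1,1)$ evaluated at $\vex^r$ equals $M$; since $\vex^r$ is an optimal leader's solution, this value is ${OPT}_{Bi}^r$. Hence $M={OPT}_{Bi}^r$, and $(\vex^r,M)$ certifies $M^*\le {OPT}_{Bi}^r$.

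This observation is essentially bookkeeping, and I do not expect a genuine obstacle. The one point that deserves care is that the follower's best response in \textbf{Bi-IP}$_r(I,\ve1,1)$ at $\vex^r$ is captured exactly by \eqref{refor_1} with the guessed $c_\ell^r$; this is precisely where the assumption of a correct guess of the critical items enters, and it relies on the greedy optimality of fractional knapsack together with the dummy-item convention guaranteeing that a critical item always exists.
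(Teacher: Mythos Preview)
Your proposal is correct and follows essentially the same approach as the paper: you exhibit the optimal solution $\vex^r$ of \textbf{Bi-IP}$_r(I,\ve1,1)$, paired with $M={OPT}_{Bi}^r$, as a feasible point of $\textbf{LP}_{\text{Bi-IP}}$, which immediately gives $M^*\le {OPT}_{Bi}^r$. The paper's argument is the same one-line feasibility observation, while you spell out in more detail why all the constraints (including \eqref{cons_1}--\eqref{cons_2} and the correctness of the greedy value \eqref{refor_1}) are satisfied at $(\vex^r,M)$.
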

Let $\vex^r$ be an optimal solution to \textbf{Bi-IP}$_r(I,\ve1,1)$. The observation follows directly as $\vex^r$ together with ${OPT}_{Bi}^r$ form a feasible solution to $\textbf{LP}_{\text{Bi-IP}}$.

In the meantime, we also have the following observation.
\begin{observation}
Let $\{\vex^{ex},M^*\}$ be an extreme point optimal solution to \textbf{LP}$_{\text{Bi-IP}}$, then $\vex^{ex}$ is also a feasible solution to \textbf{Bi-IP}$_r(I,\ve1,1)$ whose objective value is at most $M^*$. 
\end{observation}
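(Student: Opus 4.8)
The plan is to verify the two claims separately: that $\vex^{ex}$ respects the leader's constraints of $\textbf{Bi-IP}_r(I,\ve1,1)$, and that the follower's best response against $\vex^{ex}$ in $\textbf{Bi-IP}_r(I,\ve1,1)$ has value at most $M^*$. The first is immediate, since the first constraint of $\textbf{LP}_{\text{Bi-IP}}$, the constraint $x_j=x_j^*$ for $j\in I\setminus\bar I$, and the box constraints $x_j\in[0,1]$ coincide verbatim with \eqref{Bi-IP_r:a1}, \eqref{Bi-IP_r:b1} and \eqref{Bi-IP_r:c1}. So the whole content is the objective bound, and for it I would not even need $\{\vex^{ex},M^*\}$ to be an extreme point or optimal --- only its feasibility for $\textbf{LP}_{\text{Bi-IP}}$ is used.

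For the objective bound, fix $\ell\in\{1,\dots,1+\frac{1}{\epsilon}\}$. With the leader's vector fixed to $\vex^{ex}$ and the follower committing to the $\ell$-th dominant choice on $I^{\prime}$, the inner problem of $\textbf{Bi-IP}_r(I,\ve1,1)$ is a fractional knapsack over $\bar I$ with budget $1-b_\ell$ and availabilities $1-x_j^{ex}$; its optimal value is produced by the greedy that scans $1,2,\dots,\bar n$ in the fixed non-increasing order of ratios $p_j/B_j$. I would first note that the two weight constraints of $\textbf{LP}_{\text{Bi-IP}}$ at index $\ell$ pin down how far this greedy can reach: $\sum_{j<c_\ell^r}B_j(1-x_j^{ex})\le 1-b_\ell$ forces every prefix weight up to index $c_\ell^r-1$ to stay within budget, so items $1,\dots,c_\ell^r-1$ are packed in full, leaving residual budget $r:=1-b_\ell-\sum_{j<c_\ell^r}B_j(1-x_j^{ex})$, while $B_{c_\ell^r}+\sum_{j<c_\ell^r}B_j(1-x_j^{ex})\ge 1-b_\ell$ gives $0\le r\le B_{c_\ell^r}$. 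Then there are two cases. If $r/B_{c_\ell^r}\le 1-x_{c_\ell^r}^{ex}$, the greedy packs exactly an $r/B_{c_\ell^r}$ fraction of item $c_\ell^r$ and halts, so the follower's value equals the left-hand side of the $\ell$-th upper-bound constraint of $\textbf{LP}_{\text{Bi-IP}}$, hence at most $M^*$. Otherwise the greedy packs the entire available amount $1-x_{c_\ell^r}^{ex}$ of item $c_\ell^r$ and continues with items of index $>c_\ell^r$, spending at most $r-B_{c_\ell^r}(1-x_{c_\ell^r}^{ex})$ further budget there; since each such item has ratio at most $p_{c_\ell^r}/B_{c_\ell^r}$, the extra profit is at most $p_{c_\ell^r}\bigl(r/B_{c_\ell^r}-(1-x_{c_\ell^r}^{ex})\bigr)$, and the total again collapses to $P_\ell+\sum_{j<c_\ell^r}p_j(1-x_j^{ex})+p_{c_\ell^r}\,r/B_{c_\ell^r}\le M^*$. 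Maximizing over $\ell$ shows the objective value of $\vex^{ex}$ in $\textbf{Bi-IP}_r(I,\ve1,1)$ is at most $M^*$.

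The one delicate point --- the main obstacle --- is the second case. The index $c_\ell^r$ is the true critical index of the LP-optimal solution $\vex^r$, not of $\vex^{ex}$, so for the point $\vex^{ex}$ it need not be the literal stopping item of the greedy (for instance when $x_{c_\ell^r}^{ex}$ is close to $1$, so almost none of item $c_\ell^r$ is available and the greedy runs on into later items). The exchange argument based on the global $p_j/B_j$ ordering is exactly what absorbs this discrepancy, and it implicitly uses the dummy item (profit $0$, cost $\ve0$, huge weight, last in the ratio order) to guarantee that a critical index always exists, so that the three $\ell$-indexed constraints of $\textbf{LP}_{\text{Bi-IP}}$ are never vacuous. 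Everything else is routine: collecting the per-$\ell$ bounds into $\max_\ell(\cdot)\le M^*$ and reading off leader-feasibility for $\textbf{Bi-IP}_r(I,\ve1,1)$ directly from the shared constraints.
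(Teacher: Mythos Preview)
Your proof is correct and follows the same idea as the paper's, but is considerably more careful. The paper's own justification is a single sentence: ``by the definition of critical, \eqref{refor_1} is the largest profit the follower can achieve,'' which tacitly treats $c_\ell^r$ as if it were the critical index for $\vex^{ex}$ rather than for $\vex^r$. You correctly flag this as the delicate point and supply the missing exchange argument: even when the greedy on $\vex^{ex}$ runs past $c_\ell^r$, the non-increasing ratio order bounds the extra profit by $p_{c_\ell^r}/B_{c_\ell^r}$ times the leftover budget, collapsing the total back to the LP expression. So your proof is not a different route but rather a rigorous completion of the paper's sketch; the case split is one clean way to do it (an equivalent one-shot inequality is to note that $\sum_{j<c}(p_j-\tfrac{p_c}{B_c}B_j)y_j\le\sum_{j<c}(p_j-\tfrac{p_c}{B_c}B_j)(1-x_j^{ex})$ since each coefficient is nonnegative).
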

The observation follows since by the definition of critical, \eqref{refor_1} is the largest profit the follower can achieve. Hence, when $\vex=\vex^{ex}$ in \textbf{Bi-IP}$_r(I,\ve1,1)$, the objective value is bounded by $M^*$. Given the two observations above, we know $\vex^{ex}$ is an optimal solution to \textbf{Bi-IP}$_r(I,\ve1,1)$ and we have $M^*= {OPT}_{Bi}^r$.
Finally, a near-optimal solution to \textbf{IPC}$(I,\ve1,1)$ can be obtained through the optimal solution to $\textbf{LP}_{\text{Bi-IP}}$, as implied by the following lemma.

\begin{lemma}\label{claim:ptas_2}
Let $\{\vex^{ex},M^*\}$ be an extreme point optimal solution to \textbf{LP}$_{\text{Bi-IP}}$. Define $\tilde{\vex}$ such that $\tilde{x}_j = 1$ if $x_j^{ex}=1$, and $\tilde{x}_j = 0$ otherwise. Then $\tilde{\vex}$ is a feasible solution of \textbf{IPC}$(I,\ve1,1)$ with an objective value of at most $OPT+\OO(\epsilon)$, where $OPT$ is the optimal objective value of \textbf{IPC}$(I,\ve1,1)$.
\end{lemma}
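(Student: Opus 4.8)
\emph{Plan.} The plan is to prove two things: that $\tilde\vex$ is feasible for $\textbf{IPC}(I,\ve1,1)$, and that its objective value is $OPT+\OO(\epsilon)$. Feasibility is the easy half: the leader's constraint $\sum_{j\in\bar I}\ve A_jx_j\le\ve1-\vea'$ has nonnegative coefficients, so replacing each $x_j^{ex}$ by the smaller value $\tilde x_j\le x_j^{ex}$ keeps it satisfied; the coordinates $j\in I\setminus\bar I$ are untouched and equal $x_j^*$; and $\tilde\vex\in\{0,1\}^n$. Thus $\tilde\vex$ is feasible for $\textbf{Bi-IP}(I,\ve1,1)$, and Lemma~\ref{claim:s_1} then yields feasibility for $\textbf{IPC}(I,\ve1,1)$ together with $Obj(\tilde\vex)\le Obj_{Bi}(\tilde\vex)+\epsilon$. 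So it remains to bound $Obj_{Bi}(\tilde\vex)$ in terms of $M^*$ and then chain $M^*$ back to $OPT$.

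\emph{Counting rounded coordinates.} I would first argue that $\vex^{ex}$ has only $\OO(1/\epsilon)$ fractional coordinates. In $\textbf{LP}_{\text{Bi-IP}}$ the coordinates $x_j$ with $j\in I\setminus\bar I$ are fixed by equalities, leaving $\bar n+1$ free variables ($\{x_j:j\in\bar I\}$ and $M$) and only $s_A+3(1+1/\epsilon)$ inequality constraints other than the box constraints $0\le x_j\le1$. At a vertex, every integral coordinate is pinned by a tight box constraint, so the set $F$ of fractional coordinates of $\vex^{ex}$ — which lies entirely in $\bar I$ — must satisfy $|F|+1\le s_A+3(1+1/\epsilon)$, i.e.\ $|F|=\OO(1/\epsilon)$ since $s_A=\OO(1)$. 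Every $j\in F$ is a small item, so $p_j\le\delta$, and hence $\sum_{j\in F}p_j=\OO(\delta/\epsilon)$.

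\emph{Bounding the objective.} For a fixed integral leader the follower can do no better than the greedy/fractional relaxation, so $Obj_{Bi}(\tilde\vex)\le\max_\ell g_\ell(\tilde\vex)$, where $g_\ell(\vex)$ denotes the optimal follower value \eqref{refor_1} for a fixed (possibly fractional) leader $\vex$ and dominant choice $\ell$, with the critical item $c_\ell$ the one it induces. Going from $\vex^{ex}$ to $\tilde\vex$ only relaxes the follower's bounds $y_j\le1-x_j$, and only for $j\in F$, each by at most $x_j^{ex}\le1$; truncating an optimal follower solution for $\tilde\vex$ back to the bounds $y_j\le1-x_j^{ex}$ is feasible and loses at most $p_j$ per coordinate of $F$, so $g_\ell(\tilde\vex)\le g_\ell(\vex^{ex})+\sum_{j\in F}p_j$. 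Next, since $\vex^{ex}$ satisfies the two ``critical'' constraints \eqref{cons_1} and \eqref{cons_2} for the guessed index $c_\ell^r$, an exchange argument that uses the sorting of $\bar I$ by nonincreasing $p_j/B_j$ shows the left-hand side of the corresponding $M$-bound constraint is at least $g_\ell(\vex^{ex})$; hence $g_\ell(\vex^{ex})\le M^*$. Combining, $Obj(\tilde\vex)\le M^*+\OO(\delta/\epsilon)+\epsilon$. Finally $M^*=OPT_{Bi}^r$ (the two observations preceding the lemma), $OPT_{Bi}^r\le OPT_{Bi}+\delta$ (Lemma~\ref{claim:ptas_1}), and $OPT_{Bi}\le OPT$ (Lemma~\ref{claim:s_1}); choosing $\delta=\epsilon^2$ gives $Obj(\tilde\vex)\le OPT+\OO(\epsilon)$.

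\emph{Main obstacle.} The delicate point is the last step of the objective bound: rounding $\vex^{ex}$ down changes which item is critical for the follower, so one cannot simply re-apply the $M$-bound constraints at $\tilde\vex$. I expect the real work to be showing that the LP constraints evaluated at $\vex^{ex}$ — with the fixed indices $c_\ell^r$, which need not be critical for $\tilde\vex$ — still dominate the follower's true greedy value at $\vex^{ex}$; this is precisely where the ratio-sorted order of $\bar I$ and the inequalities \eqref{cons_1}--\eqref{cons_2} enter (an item with index $>c_\ell^r$ has ratio at most $p_{c_\ell^r}/B_{c_\ell^r}$, so replacing fractional weight at $c_\ell^r$ by later items cannot raise the profit). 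Combined with the $\OO(1/\epsilon)$ bound on $|F|$ and $p_j\le\delta$ for $j\in F$, the total rounding loss stays at $\OO(\delta/\epsilon)=\OO(\epsilon)$, which closes the argument.
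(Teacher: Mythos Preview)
Your proposal is correct and follows essentially the same approach as the paper's proof. Both arguments (i) verify feasibility by monotonicity of the leader's constraint, (ii) bound the number of fractional coordinates by $s_A+3(1+1/\epsilon)=\OO(1/\epsilon)$ via the vertex structure of $\textbf{LP}_{\text{Bi-IP}}$, (iii) argue that releasing those $\OO(1/\epsilon)$ small items to the follower raises the objective by at most $\OO(\delta/\epsilon)$, and (iv) chain $M^*=OPT_{Bi}^r\le OPT_{Bi}+\delta\le OPT+\delta$ via Lemmas~\ref{claim:s_1} and~\ref{claim:ptas_1}. The only cosmetic difference is that you spell out the ratio-based exchange argument showing the $M$-bound expression at $\vex^{ex}$ (with the fixed indices $c_\ell^r$) dominates the follower's true greedy value $g_\ell(\vex^{ex})$, whereas the paper packages this step into the preceding observation that $\vex^{ex}$ achieves objective at most $M^*$ in $\textbf{Bi-IP}_r$; your ``main obstacle'' paragraph is precisely the content of that observation's one-line justification.
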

\begin{proof}
The feasibility of $\tilde{\vex}$ to \textbf{IPC}$(I,\ve1,1)$ is straightforward since 
$$\sum_{j=1}^n\ve A_{j}\tilde{x}_{j} = \sum_{j\in I \backslash \bar{I}}\ve A_{j}{x}^*_{j}+\sum_{j\in \bar{I}}\ve A_{j}\tilde{x}_{j} \le \vea'+ \sum_{j\in \bar{I}}\ve A_{j}x^{ex}_{j}\le \vea'+ \ve1-\vea'\le \ve1.
$$

Notice that $\tilde{\vex}$ is also feasible for \textbf{Bi-IP}$(I,\ve1,1)$ and \textbf{Bi-IP}$_r(I,\ve1,1)$. Let  $Obj(\tilde{\vex})$, ${Obj}_{Bi}(\tilde{\vex})$ and ${Obj}_{Bi}^r(\tilde{\vex})$ be the objective values of \textbf{IPC}$(I,\ve1,1)$, \textbf{Bi-IP}$(I,\ve1,1)$ and \textbf{Bi-IP}$_r(I,\ve1,1)$ by taking $\vex = \tilde{\vex}$, respectively. Since in \textbf{Bi-IP}$_r(I,\ve1,1)$, the leader is facing a stronger follower who can select fractional items in $\bar{I}$, it follows that ${Obj }_{Bi}(\tilde{\vex})\le {Obj}_{Bi}^r(\tilde{\vex})$.

Now we compare the objective values of two solutions to \textbf{Bi-IP}$_r(I,\ve1,1)$,  $\vex^{ex}$ and $\tilde{\vex}$. It is easy to see that in $\vex^{ex}$, there are at most $(s_A+\frac{3(1+\epsilon)}{\epsilon})$ variables taking fractional values, and all these variables are in $\{x_j^{ex}:j\in \bar{I}\}$. So the leader in $\vex^{ex}$ selects at most  $(s_A+\frac{3(1+\epsilon)}{\epsilon})$ more items in $\bar{I}$, compared with the leader in $\tilde{\vex}$. Consequently, the follower in $\vex^{ex}$ may select at most  $(s_A+\frac{3(1+\epsilon)}{\epsilon})$ less items in $\bar{I}$, compared with the follower in $\tilde{\vex}$. Given that the objective value of $\vex^{ex}$ is ${OPT}_{Bi}^r=M^*$, we have that
$$
{Obj}_{Bi}^r(\tilde{\vex}) \le  {OPT}_{Bi}^r +(s_A+\frac{3(1+\epsilon)}{\epsilon})\delta.
$$ 
According to Lemma~\ref{claim:s_1} and Lemma~\ref{claim:ptas_1}, we have $Obj(\tilde{\vex}) \le {Obj}_{Bi}(\tilde{\vex})+\epsilon$ and $ {OPT}_{Bi}^r \le {OPT}_{Bi}+\delta$. In conclusion, we have 
$$Obj(\tilde{\vex})  \le {OPT}_{Bi}+(s_A+1+\frac{3(1+\epsilon)}{\epsilon})\delta+\epsilon.
$$
Furthermore, ${OPT}_{Bi} \le OPT$ by Lemma~\ref{claim:s_1}. By setting $\delta = \epsilon^2$,  Lemma~\ref{claim:ptas_2} is proved.
\end{proof}
Hence, if $OPT\le 1$, then a feasible solution with objective value of at most $OPT+\OO(\epsilon)\le 1+\OO(\epsilon)$ is found, thus Lemma~\ref{lemma:residue-main_1} is proved, and Theorem~\ref{theorem:s_B=1} follows.

\section{Approximation algorithm for IPC where $s_B$ and $s_A$ are constant}\label{sec:constant_KIP_2}

In this section, we prove our main result -- Theorem~\ref{theorem:s_B}.

\begin{theorem}\label{theorem:s_B}
When $s_B$ and $s_A$ are fixed constants, for an arbitrarily small number $\epsilon >0$, there exists an $(s_B+\OO(\epsilon))$-approximation polynomial time algorithm for IPC.
\end{theorem}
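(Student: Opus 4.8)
The plan is to generalize the three-phase strategy of Section~\ref{sec:constant_KIP_1} (guess the leader's large items, reduce the follower's choices on the remaining large items to a constant-size set of dominant choices, then solve a linear relaxation on small items and round), while paying an extra factor of $s_B$ in the rounding step. As before, after scaling we may assume $\vea = \ve 1$, $b$ is replaced by the $s_B$-dimensional vector $\veb = \ve 1$, and $\max_j p_j \le 1$, $\max_j \|\ve B_j\|_\infty \le 1$. We apply geometric rounding to the profits, losing an $\OO(\delta)$ factor, and classify items: an item is large if $p_j > \delta$ or $\|\ve B_j\|_\infty > \delta$ (equivalently, some coordinate of $\ve B_j$ exceeds $\delta$), and small otherwise. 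Since $s_B$ is constant, the follower can still select only $\OO(1/\delta)$ large items (each contributes either profit $> \delta$ or consumes more than $\delta$ in some fixed coordinate, and the total budget per coordinate is $1$), so the argument of Observation~\ref{obs:lpsw} through Lemma~\ref{lemma:lmplw2} goes through: with $n^{\tilde\OO(1/\delta^2)}$ enumerations (the exponent now also absorbing a factor depending on $s_B$) we guess all large items selected by the leader in $S^*$, and with $\OO(\delta)$ additive error assume $S^*$ contains no small-profit large-weight items. This fixes $\vea'$, the cost consumed by the guessed large items.

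Next I would build the analogue of $\Theta$. Let $I' \subseteq I \setminus \bar I$ be the large items not selected by the leader; the follower picks some subset of $I'$ with total profit at most $1$. Discretize the profit into $\OO(1/\epsilon)$ buckets $[(k-1)\epsilon, k\epsilon)$; but now, because the follower's budget is $s_B$-dimensional, a single "minimum weight" subset per bucket no longer suffices — we must discretize the weight vector as well. Here is where the characterization becomes sophisticated: for each profit bucket $k$ we consider all Pareto-optimal weight vectors achievable by subsets of $I'$ with profit in that bucket, and since the follower picks only $\OO(1/\delta)$ items from $I'$ and there are $\OO(1/\delta)$ rounded large profits, the number of distinct "types" of large-item subsets is bounded by a constant (depending on $\delta, \epsilon, s_B$); enumerating representatives takes $n^{\OO(1/\delta)}$ time. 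This gives a constant-size set $\Theta$ of dominant follower choices on $I'$, each with an associated profit $P_\ell$ and residual follower budget $\ve 1 - \veb_\ell \in \mathbb{Q}_{\ge 0}^{s_B}$, and an analogue of Lemma~\ref{claim:s_1} holds: the bilevel program \textbf{Bi-IP} restricting the follower's large-item choice to $\Theta$ has optimal value within $\epsilon$ of \textbf{IPC}.

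For the small items I would relax both players to fractional solutions. The essential new difficulty is that, for a fixed leader solution $\vex \in [0,1]^n$ and fixed $\ell$, the follower now solves a fractional knapsack with $s_B$ constraints, whose optimum is attained at a vertex of the polytope with at most $s_B$ fractional ("critical") items rather than one. So instead of a single greedy critical index $c_\ell$, there is a set $C_\ell$ of at most $s_B$ critical items and a corresponding optimal basis; with $\bar n^{\OO(s_B/\epsilon)}$ enumerations we guess the collection $\{C_\ell\}_\ell$ (and which basis is tight) corresponding to an optimal fractional leader solution $\vex^r$, and write an LP \textbf{LP}$_{\text{Bi-IP}}$ whose constraints encode, for each $\ell$, that the follower's value under basis $C_\ell$ is at most $M$, together with the basis-feasibility inequalities. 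An extreme point $\{\vex^{ex}, M^*\}$ of this LP has at most $s_A + \OO(s_B/\epsilon)$ fractional coordinates among $\{x_j : j \in \bar I\}$. Rounding all fractional $x_j^{ex}$ down to $0$, the leader selects fewer small items, so the follower may gain at most $\OO(s_B/\epsilon)$ extra small items — but each small item has profit at most $\delta$, contributing $\OO(\delta/\epsilon) = \OO(\epsilon)$ with $\delta = \epsilon^2$. The factor $s_B$ (rather than $1 + \OO(\epsilon)$) enters because when we restrict the follower to a single fixed basis $C_\ell$ in the LP, an integral leader solution must block the follower along all $s_B$ budget directions simultaneously; the fractional optimum of \textbf{LP}$_{\text{Bi-IP}}$ can underestimate the true integral follower value by up to a factor $s_B$, since an arbitrary integral follower response to $\tilde\vex$ may use a different basis and the relaxed per-basis bound only controls each of the $s_B$ "directions" separately — bounding this gap by $s_B \cdot OPT + \OO(\epsilon)$ is the crux.

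The main obstacle I anticipate is precisely this last point: showing that restricting the follower to the guessed bases $\{C_\ell\}$ in the LP, and then rounding, loses at most a multiplicative $s_B$ and not more. The cleanest route is to argue that for the rounded solution $\tilde\vex$, the follower's true (integral) optimum $Obj(\tilde\vex)$ is at most $s_B$ times the value of the best fractional single-basis response captured by \textbf{LP}$_{\text{Bi-IP}}$, using that the $s_B$-dimensional fractional knapsack optimum is at most $s_B$ times the max over single-constraint relaxations — a standard packing-LP fact — combined with the $\OO(\epsilon)$ slack from discretization and rounding. I would also need to verify that the enumeration of dominant large-item choices and critical-item bases remains polynomial for fixed $s_A, s_B$, which follows from the $\OO(1/\delta)$ bound on the number of large items the follower can take and the fact that a basic feasible solution of an $s_B$-constraint knapsack has at most $s_B$ fractional variables.
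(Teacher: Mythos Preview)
Your proposal has a genuine gap at exactly the place you flagged as ``the crux,'' and the gap is not just a missing detail but a wrong identification of where the factor $s_B$ comes from. You propose to guess, for each dominant choice $\ell$, an optimal basis $C_\ell$ of at most $s_B$ critical items for the follower's $s_B$-dimensional fractional knapsack, and then argue that the true follower response to the rounded $\tilde\vex$ exceeds the LP value by at most a factor $s_B$ via ``the $s_B$-dimensional fractional knapsack optimum is at most $s_B$ times the max over single-constraint relaxations.'' But that inequality goes the wrong way (adding constraints only decreases a maximum, so the multi-constraint optimum is already at most any single-constraint relaxation), and it says nothing about the gap between the follower's true optimum and the value under a \emph{fixed} guessed basis. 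More fundamentally, in the $s_B=1$ case the LP works because constraints \eqref{cons_1}--\eqref{cons_2} \emph{force} $c_\ell$ to remain the critical item for every feasible $\vex$, so formula \eqref{refor_1} is exact; for $s_B\ge 2$ there is no tractable set of linear constraints on $\vex$ that pins down a specific optimal basis of the follower's LP, and without such constraints the LP value $M^*$ need not bound the follower's response to $\vex^{ex}$ at all, let alone within a factor $s_B$.

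The paper's route is structurally different. It singles out one \emph{principal} dimension (say dimension~1), rounds the weight coordinates in dimensions $2,\dots,s_B$ to $\tilde\OO(1/\delta)$ values each, and then proves an Augment Lemma: if the follower's budget in dimensions $2,\dots,s_B$ is enlarged by a factor $1+\tau$ (with $\tau\le 1/2$), the follower's optimum increases by at most a factor $s_B$ (the proof greedily decomposes the augmented follower solution into $s_B$ pieces, each feasible for the original budget). This lemma is the sole source of the $s_B$ loss. With the non-principal coordinates rounded, the dominant choices $\Theta$ are parametrized by a profit bucket \emph{and} a bucket for the residual budget in each rounded dimension, and one minimizes only the principal coordinate --- giving $\tilde\OO(s_B/\epsilon^{s_B})$ choices, not a Pareto front. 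For small items the paper does \emph{not} guess bases; it classifies items by the \emph{shape} of their (scaled) weight vector and by rounded profit-to-weight ratio into $\tilde\OO(1/\epsilon^{s_B+1})$ subgroups, guesses for each subgroup the (rounded) total weight the follower takes and the total available, and writes a single-level LP in these aggregate quantities. Your large-item enumeration also silently relies on this rounding: without it, items of equal profit cannot be totally ordered by weight, so the ``key items'' argument of Lemma~\ref{lemma:lmplw2} does not go through.
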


Similar to the special case, by scaling item profits it suffices to show the following: 

\begin{lemma}\label{lemma:residue-main}
Let $OPT$ be the optimal objective value of \textbf{IPC}$(I,\vea,\veb)$. If $OPT\le 1$, then for an arbitrarily small number $\epsilon>0$, there exists a polynomial time algorithm that returns a feasible solution to $\textbf{IPC}(I,\vea,\veb)$ with an objective value of at most $s_B+\OO(\epsilon)$. 
\end{lemma}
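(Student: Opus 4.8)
The plan is to replay the four stages of the proof of Lemma~\ref{lemma:residue-main_1} — preprocessing, fixing the leader's choice on large items, compressing the follower's response on large items to a constant list of dominant choices, and a linear-programming relaxation for the small items — now with an $s_B$-dimensional follower. Preprocessing is unchanged in spirit: scale so that $\vea=\ve1$ and $\veb=\ve1\in\Q_{\ge0}^{s_B}$, assume $\max_j p_j\le 1$ and $\max_j\|\ve B_j\|_\infty\le 1$, apply the geometric rounding to $\tilde\OO(1/\delta)$ profit classes at additive cost $\OO(\delta)$, and declare item $j$ \emph{large-weight} if some coordinate of $\ve B_j$ exceeds $\delta$, and large-/medium-/small-profit exactly as before, with \emph{large} meaning large-profit or large-weight. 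Since each coordinate of $\veb$ is $1$, the follower can take at most $\OO(s_B/\delta)=\OO(1/\delta)$ large items, so all enumerations below of size $n^{\OO(1/\delta)}$ remain polynomial for fixed $s_B$.

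The easy subcases for the leader's choice on large items transfer essentially verbatim: small-profit large-weight items carry total profit $\OO(s_B\delta)=\OO(\epsilon)$ and may be dropped as in Observation~\ref{obs:splw}, and all but $\OO(1/\delta)$ of the large-profit small-weight items must be protected and are guessed as in Observation~\ref{obs:lpsw}. The genuinely new difficulty — and where I expect the $s_B$ factor to be paid — is the non-small-profit large-weight items. When $s_B=1$, Lemma~\ref{lemma:lmplw2} exploited the fact that items of a common profit are totally ordered by weight, so the follower always prefers the lightest available ones and only $\OO(1/\delta)$ ``key items'' per profit class need be guessed; moreover the set $\Theta$ of dominant follower-choices on the remaining large items was a constant family, indexed simply by the target profit window, because within each window there is a unique minimum-weight selection. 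For $s_B\ge 2$ both facts fail: weight vectors admit no total order, and the Pareto frontier of minimum-weight selections achieving a given profit can be large and, as Theorem~\ref{thm:hardness} shows via its $10^i$ weight gadget, cannot be discretized away (otherwise the whole scheme would yield a PTAS, contradicting the hardness bound). The way around it is to stop trying to reconstruct the leader's protection or an exact dominant family, and instead guess the $\OO(1/\delta)$-size set $F$ of large items the follower actually ends up with (together with its profit and weight vector) and then, coordinate by coordinate, require only that the leader has protected enough of the items heavy in coordinate $i$ to rule out an improving deviation confined to that coordinate; combining the $s_B$ per-coordinate guarantees costs a factor of at most $s_B$ in the follower's large-item profit, which is exactly the slack that the ratio $s_B+\OO(\epsilon)$ absorbs. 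Designing this relaxed protection so that it is loose enough to enumerate yet tight enough to lose only a factor $s_B$ is the main obstacle of the proof.

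Once the leader's choice on large items is fixed and the follower's large-item response is reduced to a constant family $\Theta$ of dominant choices (now, say, minimum-weight selections within a profit window \emph{and} a constant-width bucket in each weight coordinate, which keeps $|\Theta|$ constant while introducing only $\OO(\epsilon)$ additional error), I would mimic Section~3.3: form the bilevel program \textbf{Bi-IP}, relax the small-item variables of both players, and — crucially without invoking LP duality — characterize the follower's fractional best response on small items by guessing its \emph{critical items}. Where $s_B=1$ gave a single critical item per dominant choice, the follower's fractional $s_B$-dimensional knapsack on small items has at most $s_B$ fractional items, so one guesses an $\OO(s_B)$-size set of critical items per dominant choice ($\bar n^{\OO(s_B/\epsilon)}$ enumerations), after which the follower's value becomes linear in the leader's variables and the whole problem collapses to a single LP analogous to \textbf{LP}$_{\text{Bi-IP}}$. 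Solving for an extreme point, rounding the $\OO(s_A+s_B/\epsilon)$ fractional leader variables down — all of them on small items, hence additive loss $\OO(\delta)$ — and combining the error terms exactly as in Lemmas~\ref{claim:ptas_1} and~\ref{claim:ptas_2} (with $\delta$ set to a small enough power of $\epsilon$) produces a feasible solution of \textbf{IPC}$(I,\vea,\veb)$ whose objective exceeds the bound from the large-item stage by only $\OO(\epsilon)$; together with the factor-$s_B$ slack on large items this gives objective at most $s_B+\OO(\epsilon)$, proving the lemma and hence Theorem~\ref{theorem:s_B}.
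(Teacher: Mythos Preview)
Your proposal has the right shape but two real gaps, one architectural and one technical.

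Architecturally, you localize the factor $s_B$ to the large-item stage (``combining the $s_B$ per-coordinate guarantees costs a factor of at most $s_B$ in the follower's large-item profit''). The paper does something quite different: it singles out one \emph{principal} weight dimension, rounds the other $s_B-1$ coordinates geometrically, and augments the follower's budget in those coordinates by $1+\tau$ to absorb the rounding. The Augment Lemma (Lemma~\ref{lemma:augment_budget}) shows this augmentation blows up the optimum by at most $s_B$, and that is the \emph{only} place the $s_B$ is paid; thereafter both the large- and small-item stages aim for additive $\OO(\epsilon)$ error relative to the augmented instance. A concrete dividend of this rounding is that large-weight large/medium-profit items with the same rounded profile in dimensions $2,\dots,s_B$ are again totally ordered by their coordinate in the principal dimension, so the key-item guessing of Lemma~\ref{lemma:lmplw2} goes through unchanged (Lemma~\ref{lemma:lmplw2_2}). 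Your alternative --- guess the follower's large set $F$ and enforce coordinate-by-coordinate protection --- is not obviously well-defined (the follower's $F$ depends on the leader's still-undetermined small-item choice) and the claimed factor-$s_B$ bound is not argued.

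The technical gap is in the small-item step. You assert that because the follower's fractional $s_B$-dimensional knapsack has at most $s_B$ fractional components, guessing those $\OO(s_B)$ ``critical'' items makes the follower's value linear in $\vex$. That inference is what makes the $s_B=1$ case work, but only because of the greedy order: once the single critical item $c$ is fixed, $y_j=1-x_j$ for $j<c$, $y_j=0$ for $j>c$, and $y_c$ is the budget residual --- all linear in $\vex$. For $s_B\ge 2$ there is no such order; knowing which $s_B$ coordinates are fractional at an extreme point tells you nothing about which of the remaining $y_j$ equal $0$ versus $1-x_j$, and that integral pattern can flip as $\vex$ varies. Guessing the full pattern is exponential. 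The paper circumvents this by a second, independent discretization of the small items: each small item is classified by its \emph{shape} (the normalized, geometrically rounded direction $\bar{\veg}^l_j$ of its weight vector, $\tilde\OO(1/\epsilon^{s_B})$ classes) and its \emph{ratio} $\tilde p_j/\|\ve B^l_j\|_\infty$ ($\tilde\OO(1/\epsilon)$ classes), and one guesses, for every dominant choice $l$ and every (shape, ratio) cell, the rounded aggregate $\sum_j w^l_j(1-x_j)$ and the rounded follower mass $\sum_j w^l_j y_j$. These guessed scalars $\lambda^{(l,h)}_k,\bar\lambda^{(l,h)}_k$ replace the role of the single critical item and produce the single-level LP $\textbf{cen-LP}_\lambda$; Lemmas~\ref{lemma:rounding_m}--\ref{claim:fra_3} do the (nontrivial) work of showing that this aggregation only costs $\OO(\epsilon)$. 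Without something of this kind, your linearization step does not go through.
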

By further scaling the cost vectors and weight vectors, it suffices to find a near-optimal solution for \textbf{IPC}$(I,\ve 1,\ve 1)$. 

\noindent\textbf{Major technical challenge.} Recall that the key to solving IPC for the special case of $s_B=1$ is the establishment of \textbf{Bi-IP}$(I,\ve1,1)$, which is essentially equivalent to \textbf{IPC}$(I,\ve1,1)$. \textbf{Bi-IP}$(I,\ve1,1)$ is built upon the observation that the follower admits only $\OO(\frac{1}{\epsilon})$ dominant choices on large items, where each dominant choice corresponds to the minimal budget needed by the follower to ensure a profit of $[(k-1)\epsilon,k\epsilon)$ where $k\in\{1,2,\cdots,1+\frac{1}{\epsilon}\}$. Because the number of follower's choices on large items is small, its relaxation \textbf{Bi-IP}$_r(I,\ve1,1)$ has a small number of constraints (see \eqref{Bi-IP_r:a2}), and therefore we can further transform \textbf{Bi-IP}$_r(I,\ve1,1)$ to $\textbf{LP}_{\text{Bi-IP}}$ with a small number of constraints whose extreme point solution promises a good rounding. We aim to follow a similar method, however, when $s_B\ge 2$, we can no longer bound the follower's choices on large items. This is because to achieve a profit of $[(k-1)\epsilon,k\epsilon)$ where $k\in\{1,2,\cdots,1+\frac{1}{\epsilon}\}$, the follower may have a huge number of different choices utilizing different budgets, where the budgets are now vectors instead of numbers, and are thus incomparable. To handle the problem, we use the idea of rounding: let $\veb=(\veb[1],\veb[2],\cdots,\veb[s_B])$ and $\ve B_j=(\ve B_j[1],\ve B[2],\cdots, \ve B[s_B])$. We call the first dimension ($\veb[1]$ and $\ve B_j[1]$'s) the principal dimension. The principal dimension will be treated the same as the special case and will {\it not} be rounded. The coordinates of other dimensions ($\veb[h]$ and $\ve B_j[h]$'s for $2\le h\le s_B$) will be rounded. Then, we will be able to compare follower's different choices on large items: if there are two choices both achieving profit within $[(k-1)\epsilon,k\epsilon)$ for the same $k\in \OO(\frac{1}{\epsilon})$, and furthermore, the summation of their weight vectors share the same rounded value in each dimension $h\in [2,s_B]$, then the choice with smaller value in the principal dimension of the summed weight vectors dominates the other choice. By doing so, our argument for the special case can be carried over to the principal dimension.

There is one problem with the idea of rounding above, that is, if we round up weight vectors and meanwhile enlarge the follower's budget vector in dimension $h\in [2,s_B]$ (to accommodate the rounding up), the optimal objective value of IPC may increase. However, we are able to show that the optimal objective value only increases by a factor of $s_B$ (see Lemma~\ref{lemma:augment_budget}). This explains our approximation ratio of $s_B+\epsilon$.  

Below we give a very brief walk through and the reader is referred to Appendix~\ref{appsec:constant_KIP_2} for details.

\noindent\textbf{Step 1}. We pick a small parameter $\delta$ as the rounding precision, {keep the coordinates of weight vectors on principal dimension 
intact, and round the coordinates on other dimensions. We also round the profits.} By doing so we obtain a rounded instance $\tilde{I}_\delta$. Then we pick another small parameter $\tau$ and enlarge the weight budget on dimension $h\in [2,s_B]$ by a factor of $1+\tau$. By doing so we obtain:
\begin{subequations}
	\begin{align*}
	\textbf{IPC}_{\tau}(\tilde{I}_\delta,\ve1,\ve1):\nonumber
	\min\limits_{\vex} \hspace{2mm} & \tilde{\vep}\vey&\\
	s.t. \hspace{1mm}&\ve A\vex\le \ve1& \\
	&\hspace{1mm} \vex \in\{0,1\}^n&\\
	& \text{where }  \vey \text{ solves the following:}&\\
	&\max\limits_{\vey}\quad \tilde{\vep}\vey &\\
	& s.t. \quad\hspace{1mm} \sum_{j=1}^n \tilde{\ve B}_j[1]y_j\le 1&\\
	&\hspace{10mm} \sum_{j=1}^n \tilde{\ve B}_j[i]y_j\le 1+\tau, \hspace{2mm} \forall 2\le i\le s_B&\\
	&\hspace{10mm} \vex+\vey\le \ve1\\
	&\hspace{10mm} \vey\in\{0,1\}^n&
	\end{align*}
\end{subequations}
where $\tilde{\ve B}_j[1]={\ve B_j}[1]$, $\tilde{\ve B}_j[h]$, $2\le h\le s_B$ and $\tilde{\vep}$ are rounded weights and profits. We are able to prove the following lemma which ensures that solving \textbf{IPC}$_{\tau}(\tilde{I}_\delta,\ve1,\ve1)$ gives a good approximate solution to \textbf{IPC}$(I,\ve1,\ve1)$:
\begin{lemma}\label{lemma:r_theta}
Let $0<\tau\le 1/2$. Let $\tilde{\vex}$ be any feasible solution to \textbf{IPC}$_{\tau}(\tilde{I}_\delta,\ve1,\ve1)$. Then $\tilde{\vex}$ is a feasible solution to \textbf{IPC}$(I,\ve1,\ve1)$. Let $\widetilde{Obj}_{\tau}(\tilde{\vex})$ and $Obj(\tilde{\vex})$ be the objective values of \textbf{IPC}$_{\tau}(\tilde{I}_\delta,\ve1,\ve1)$ and \textbf{IPC}$(I,\ve1,\ve1)$ for $\vex = \tilde{\vex}$, respectively. If $2\delta\le \tau\le 1/2$, we have
$$ 
Obj(\tilde{\vex}) \le (1+\delta)\widetilde{Obj}_{\tau}(\tilde{\vex}) \le s_{B} (1+\delta)Obj(\tilde{\vex}).
$$
Furthermore, let $\widetilde{OPT}_{\tau}$ and $OPT$ be the optimal objective values of \textbf{IPC}$_{\tau}(\tilde{I}_\delta,\ve1,\ve1)$ and \textbf{IPC}$(I,\ve1,\ve1)$, respectively. We have 
$$ OPT \le (1+\delta)\widetilde{OPT}_{\tau} \le s_B (1+\delta)OPT.$$ 
\end{lemma}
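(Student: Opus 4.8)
\textbf{Proof proposal for Lemma~\ref{lemma:r_theta}.}

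The plan is to establish the chain of inequalities by relating the three key instances: \textbf{IPC}$(I,\ve1,\ve1)$, the rounded instance \textbf{IPC}$(\tilde I_\delta,\ve1,\ve1)$ (same budgets, rounded data, no enlargement), and the budget-enlarged rounded instance \textbf{IPC}$_\tau(\tilde I_\delta,\ve1,\ve1)$. First I would verify the easy direction: any $\vex$ feasible for \textbf{IPC}$_\tau(\tilde I_\delta,\ve1,\ve1)$ is feasible for \textbf{IPC}$(I,\ve1,\ve1)$, since the leader's constraint $\ve A\vex\le\ve1$ is identical in both (the rounding touches only weights and profits, not costs). For the objective comparison, fix such a $\vex$ and consider the follower's best response $\vey$ in \textbf{IPC}$(I,\ve1,\ve1)$, with value $Obj(\tilde\vex)=\vep\vey$. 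Because profits were rounded down by at most a $(1+\delta)$ factor, $\tilde\vep\vey\ge \vep\vey/(1+\delta)$; but $\vey$ may not be feasible for the follower in \textbf{IPC}$_\tau(\tilde I_\delta,\ve1,\ve1)$ because rounding weights up could violate a weight constraint. Here I would use the budget slack: since $OPT\le 1$ (hence the follower picks $\OO(1/\delta)$ relevant items — more precisely, the rounding precision is chosen so the per-dimension weight error summed over the follower's chosen set is at most $\tau$ in each dimension $h\ge 2$), the enlarged budget $1+\tau$ absorbs the rounding error and $\vey$ remains feasible. Thus the follower in \textbf{IPC}$_\tau(\tilde I_\delta,\ve1,\ve1)$ achieves at least $\vep\vey/(1+\delta)$, giving $\widetilde{Obj}_\tau(\tilde\vex)\ge Obj(\tilde\vex)/(1+\delta)$, i.e.\ $Obj(\tilde\vex)\le(1+\delta)\widetilde{Obj}_\tau(\tilde\vex)$.

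For the second inequality $\widetilde{Obj}_\tau(\tilde\vex)\le s_B(1+\delta)Obj(\tilde\vex)$, the point is that enlarging the follower's budget from $\ve1$ to $(1,1+\tau,\dots,1+\tau)$ cannot blow up the achievable profit by more than a factor of $s_B$; this is exactly the content to be invoked from Lemma~\ref{lemma:augment_budget}. Concretely, fix $\vex$ and let $\vey^\tau$ be the follower's optimal response in \textbf{IPC}$_\tau(\tilde I_\delta,\ve1,\ve1)$. I would partition the ground set or the solution $\vey^\tau$ according to which of the $s_B$ weight dimensions is "tightest," or alternatively scale down: since each weight coordinate of $\vey^\tau$ is at most $1+\tau\le 3/2$, a standard averaging/splitting argument produces, for each dimension, a sub-solution fitting inside budget $\ve1$; the best of the $s_B$ sub-solutions (together with the rounding relation $\tilde\vep\le\vep$, or $\vep\le(1+\delta)\tilde\vep$) retains at least a $1/s_B$ fraction of $\tilde\vep\vey^\tau$. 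This yields a feasible follower response in \textbf{IPC}$(I,\ve1,\ve1)$ of value at least $\widetilde{Obj}_\tau(\tilde\vex)/(s_B(1+\delta))$, hence $Obj(\tilde\vex)\ge\widetilde{Obj}_\tau(\tilde\vex)/(s_B(1+\delta))$, which is the claimed bound. Care is needed about whether the split sub-solution is integral — if the averaging argument naturally gives a fractional selection one must argue an integral rounding is available within the relevant loss, or phrase Lemma~\ref{lemma:augment_budget} so it already delivers an integral guarantee.

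Finally, to pass from the per-solution bounds to the optimal-value bounds $OPT\le(1+\delta)\widetilde{OPT}_\tau\le s_B(1+\delta)OPT$, I would apply the standard min-over-leader argument on both sides. For $OPT\le(1+\delta)\widetilde{OPT}_\tau$: take the leader's optimal solution $\tilde\vex^\tau$ for \textbf{IPC}$_\tau(\tilde I_\delta,\ve1,\ve1)$; it is feasible for \textbf{IPC}$(I,\ve1,\ve1)$ with objective $Obj(\tilde\vex^\tau)\le(1+\delta)\widetilde{Obj}_\tau(\tilde\vex^\tau)=(1+\delta)\widetilde{OPT}_\tau$, and the leader's optimum in \textbf{IPC}$(I,\ve1,\ve1)$ is no larger, so $OPT\le(1+\delta)\widetilde{OPT}_\tau$. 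For the other direction, take the leader's optimal $\vex^*$ for \textbf{IPC}$(I,\ve1,\ve1)$ (with value $OPT\le 1$, so all the "follower picks few items" estimates apply), which is also feasible for \textbf{IPC}$_\tau(\tilde I_\delta,\ve1,\ve1)$, and use $\widetilde{Obj}_\tau(\vex^*)\le s_B(1+\delta)Obj(\vex^*)=s_B(1+\delta)OPT$ together with optimality of $\widetilde{OPT}_\tau$. The main obstacle I anticipate is the second inequality: getting the clean factor of exactly $s_B$ (rather than something like $2^{s_B}$ or $s_B/(1-\tau)$) requires the budget-enlargement lemma to be stated sharply, and one must be careful that the $(1+\tau)$ enlargement interacts correctly with the geometric rounding of weights — in particular that choosing $\delta$ small relative to $\tau$ (the hypothesis $2\delta\le\tau\le 1/2$) genuinely guarantees the rounded weights of any $\OO(1/\delta)$-item follower selection stay within the enlarged budget.
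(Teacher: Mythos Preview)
Your proposal is correct and follows essentially the same route as the paper: the paper makes the intermediate instance $\textbf{IPC}(\tilde I_\delta,\ve1,\ve1)$ explicit, applying the Augment Lemma (Lemma~\ref{lemma:augment_budget}) to pass from $\textbf{IPC}_\tau(\tilde I_\delta,\ve1,\ve1)$ to $\textbf{IPC}(\tilde I_\delta,\ve1,\ve1)$ and then using $\tilde{\ve B}_j\ge \ve B_j$, $\tilde p_j\le p_j$ to pass to $\textbf{IPC}(I,\ve1,\ve1)$, whereas you fold these two steps together. One small correction: the first inequality does not rely on $OPT\le 1$ --- the bound on the rounding error (the paper's Observation~\ref{obs:decom_wr}) comes purely from the weight budget $\sum_j \ve B_j y_j\le \ve1$, which limits the number of large-weight items to at most $s_B/\delta$, so the hypothesis $2\delta\le\tau$ alone suffices.
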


\noindent\textbf{Step 2}. We handle large items. We first classify item profits into large, medium and small. We then classify item weights into large and small based on the largest coordinate in the weight vector, i.e., $\|\ve B_j\|_{\infty}$. We say an item is large if it has a large weight or a large profit, and small otherwise. Let $S^*$ be the {leader's optimal solution in \textbf{IPC}$_{\tau}(\tilde{I}_\delta,\ve1,\ve1)$.} Using a similar argument as the special case, we can prove the following.

\begin{lemma}\label{lemma:guess_l_ge}
With $\OO(s_B\delta)$ additive error, we can guess out all items in ${S}^*$ that have a large weight or a large profit by $n^{\tilde\OO(s_B/\delta^{s_B+1})}$ enumerations. 
\end{lemma}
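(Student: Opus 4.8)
The plan is to mirror the structure of the $s_B=1$ analysis (Observation~\ref{obs:lpsw}, Observation~\ref{obs:splw}, Lemma~\ref{lemma:lmplw2}, and their summary) but carry it out for the rounded instance $\tilde I_\delta$ with a multidimensional weight vector, classifying items by $\|\ve B_j\|_\infty$ rather than by a scalar weight. First I would fix the profit thresholds exactly as before: large profit $p_j>\delta$, medium $\delta^2<p_j\le\delta$, small $p_j\le\delta^2$; and weight thresholds via $\|\ve B_j\|_\infty>\delta$ (large weight) versus $\le\delta$ (small weight). The key quantitative fact I would re-establish is that, since $\widetilde{OPT}_\tau\le s_B(1+\delta)\cdot OPT\le s_B(1+\delta)$ by Lemma~\ref{lemma:r_theta} and every profit is at most $1$, the follower can take at most $\OO(1/\delta)$ large items overall (any large item contributes at least $\delta$ to some budget coordinate or has profit $>\delta$, and the budget in each coordinate is $\le 1+\tau\le 3/2$ while the objective is $\OO(s_B)$). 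This uniform bound on the number of large items the follower can ever grab is what makes enumeration feasible.

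Next I would handle the three subgroups of large items in turn, exactly paralleling the special case. For large-profit small-weight items: if too many (more than $\OO(1/\delta)$, with the constant depending on $s_B$) are left unprotected the follower's profit exceeds $s_B(1+\delta)$, contradicting the bound on $\widetilde{OPT}_\tau$; hence $S^*$ contains all but $\OO(1/\delta)$ of them, guessable in $n^{\OO(1/\delta)}$ enumerations. For small-profit large-weight items: the follower can take at most $\OO(1/\delta)$ of them (large weight means $\|\ve B_j\|_\infty>\delta$, so each eats $>\delta$ from some coordinate), contributing total profit $\le \delta^2\cdot\OO(1/\delta)=\OO(\delta)$; so we may assume $S^*$ omits them at an additive cost of $\OO(\delta)$ — and since we will ultimately sum such errors across dimensions this is where the $\OO(s_B\delta)$ in the statement comes from. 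For large/medium-profit large-weight items: here is the one real wrinkle. In the $s_B=1$ case Lemma~\ref{lemma:lmplw2} used that items of equal profit are totally ordered by weight, so only the $\OO(1/\delta)$ smallest-weight ones per profit class matter. With $s_B\ge2$ weights are vectors and not totally ordered, but after the rounding of coordinates $2,\dots,s_B$ each such coordinate takes only $\tilde\OO(1/\delta)$ distinct values, so a large/medium-profit large-weight item is characterized by its profit class ($\tilde\OO(1/\delta)$ choices) together with its rounded weight-profile on the $s_B-1$ rounded coordinates ($\tilde\OO(1/\delta)^{s_B-1}$ choices), i.e.\ $\tilde\OO(1/\delta^{s_B})$ ``types''. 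Within a single type, items are totally ordered by their (unrounded) principal coordinate $\ve B_j[1]$, so the argument of Lemma~\ref{lemma:lmplw2} applies verbatim: among unprotected items of that type the follower takes the $\OO(1/\delta)$ with smallest principal weight, so guessing the $\OO(1/\delta)$ ``key'' items per type determines $S^*$ on that type via $n^{\OO(1/\delta)}$ enumerations.

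Finally I would combine: multiplying the per-type enumeration count $n^{\OO(1/\delta)}$ over the $\tilde\OO(1/\delta^{s_B})$ types (and over the other two subgroups) gives a total of $n^{\tilde\OO(s_B/\delta^{s_B+1})}$ enumerations, matching the claimed bound, while the additive errors from Observation~\ref{obs:splw}-type arguments, summed over the $s_B$ coordinates, total $\OO(s_B\delta)$. The main obstacle is precisely the middle subgroup: making sure that after rounding only coordinates $2,\dots,s_B$ (and leaving the principal coordinate exact, as mandated in Step~1) the large/medium-profit large-weight items still fall into $\tilde\OO(1/\delta^{s_B})$ types that are each totally ordered in a way that lets the follower's behavior be pinned down by a bounded number of key items — I would need to double-check that ``large weight'' ($\|\ve B_j\|_\infty>\delta$) together with a rounded profile on coordinates $2,\dots,s_B$ really does leave only the principal coordinate free to vary continuously, and that the dummy-item / fewer-than-$1/\delta$-items edge cases are handled as in the footnote to Lemma~\ref{lemma:lmplw2}.
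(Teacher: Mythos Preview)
Your proposal is correct and follows essentially the same three-subgroup decomposition as the paper, with the key adaptation for large/medium-profit large-weight items being exactly the paper's: bucket into $\tilde\OO(1/\delta^{s_B})$ types by (profit class, rounded weight profile on coordinates $2,\dots,s_B$), then apply the Lemma~\ref{lemma:lmplw2} key-item argument within each type using the principal coordinate. One small correction: the follower can take $\OO(s_B/\delta)$ large items rather than $\OO(1/\delta)$, since profit-large items are bounded by objective $\OO(s_B)$ divided by profit $>\delta$, and weight-large items by summing the per-coordinate bound $\OO(1/\delta)$ over $s_B$ coordinates --- this is why the paper guesses $\OO(s_B/\delta)$ key items per type and obtains the $\OO(s_B\delta)$ additive error directly (not via a separate per-coordinate sum), but as $s_B$ is a fixed constant your final enumeration and error bounds are unaffected.
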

 Utilizing the fact that coordinates in dimension $h\in [2,s_B]$ are all rounded, we can show that the follower only has a small number (i.e. $\tilde{\OO}(s_B/\epsilon^{s_B})$) of dominant choices on large items, denoted as $\Theta$. By restricting the follower's choices to $\Theta$, we can obtain a new bilevel integer programming $\textbf{MBi-IP}(\tilde{I}_\delta,\ve1,\ve1)$. Similar to \textbf{Bi-IP}$({I},\ve1,\ve1)$ in the special case, $\textbf{MBi-IP}(\tilde{I}_\delta,\ve1,\ve1)$ has a small number of constraints. 

\noindent\textbf{Step 3}. We handle small items. Since $\textbf{MBi-IP}(\tilde{I}_\delta,\ve1,\ve1)$ has a small number of constraints, we remove the integral constraint to obtain a relaxation $\textbf{MBi-IP}_r(\tilde{I}_\delta,\ve1,\ve1)$. Next, we transform this bilevel LP $\textbf{MBi-IP}_r(\tilde{I}_\delta,\ve1,\ve1)$ to a standard (single level) LP, denoted as $\textbf{cen-LP}_{\lambda}$. Note that here the transformation is much more complicated than that in the special case: in the special case we know that if the follower can choose items fractionally, then its optimal fractional solution is always obtained greedily with respect to the ratio (i.e., profit to weight), whereas it suffices to guess one single critical item. In the general case, if the follower can choose items fractionally, we can only guarantee that among all items whose rounded weight vector are the same except for the principal dimension (i.e., $\ve B_j[h]$'s have the same rounded value for every $2\le h\le s_B$), the follower selects items greedily with respect to the principal ratio (i.e., profit to weight coordinate in the principal dimension). Therefore, we need to guess a subset of critical items, and the subscript $\lambda$ in $\textbf{cen-LP}_{\lambda}$ corresponds to a set of parameters characterizing the subset of critical items. The most technical part is to show that the optimal solution to $\textbf{cen-LP}_{\lambda}$ gives a good approximation to $\textbf{MBi-IP}_r(\tilde{I}_\delta,\ve1,\ve1)$ (see Lemma~\ref{lemma:cen-LP}), where we need to create a sequence of "intermediate" LPs. Finally, we obtain an extreme point solution to $\textbf{MBi-IP}_r(\tilde{I}_\delta,\ve1,\ve1)$ by solving $\textbf{cen-LP}_{\lambda}$, and round it to an integral solution. The rounding error can be bounded due to that $\textbf{MBi-IP}_r(\tilde{I}_\delta,\ve1,\ve1)$ contains a small number of constraints.

\section{Approximation algorithm for IPC where $s_B$ and $s_A$ are arbitrary}\label{sec:general_case}
We consider the most general setting of IPC where $s_A$ and $s_B$ are arbitrary (not necessarily polynomial in the input size). 

We define $\max\{\vep\vey:\ve B\vey\le \veb,\vey\in\{0,1\}^{n}\}$ as the {\it follower's problem}. A separation oracle for the {\it leader's problem} is an oracle such that given any $\vex=\vex^0\in [0,1]^{n}$, it either asserts that $\vex^0\in \{\vex:\ve A\vex\le \vea,\vex\in [0,1]^{n}\}$, or returns a violating constraint. The goal of this section is to prove the following theorem. 

\begin{theorem}\label{theorem:general_1}
	Given a separation oracle $O_L$ for the leader's problem, and an oracle $O_F$ for the follower's problem that returns a $\rho$-approximation solution, 
	there exists a $(\frac{\rho}{1-\alpha}, \frac{1}{\alpha})$-bicriteria approximation algorithm for any $\alpha\in (0,1)$ that returns a solution $\vex^*\in\{0,1\}^{n}$ such that $\ve A\vex^*\le \frac{1}{\alpha}\cdot\vea$, and $$\max\{\vep\vey:\ve B\vey\le \veb,\vey\le \ve1-\vex^*,\vey \in \{0,1\}^n\}\le \frac{\rho T^*}{1-\alpha},$$ where $T^*$ is the optimal objective value of \textbf{IPC}$(I,\vea,\veb)$. Furthermore, the algorithm runs in polynomial oracle time.	
\end{theorem}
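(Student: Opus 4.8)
The plan is to reduce the bilevel problem to a sequence of single-level feasibility questions, each of which can be handled by the two oracles together with the ellipsoid method, and then to binary-search over the target objective value $T$. Concretely, suppose we guess a value $T$ (we will later run over polynomially many candidate values, or binary search, since the objective is a sum of at most $n$ profits of bounded bit-length). Given $T$, I want to decide — approximately — whether the leader can choose $\vex$ with $\ve A\vex\le\vea$ so that the follower's optimum on the remaining items is at most $T$. The follower's constraint ``$\max\{\vep\vey:\ve B\vey\le\veb,\vey\le\ve1-\vex,\vey\in\{0,1\}^n\}\le T$'' is equivalent to: for \emph{every} follower-feasible set $S$ (i.e. $\ve B\mathbf 1_S\le\veb$), we have $\vep\mathbf 1_S - \sum_{j\in S}p_j x_j\le T$, i.e. $\sum_{j\in S}p_j x_j\ge \vep\mathbf 1_S-T$. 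This is an exponential family of linear constraints on $\vex$, so the natural tool is the ellipsoid method on the relaxation $\vex\in[0,1]^n$, using $O_F$ as an (approximate) separation oracle for this family and $O_L$ as the separation oracle for $\ve A\vex\le\vea$.

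The key steps, in order: (1) Fix $T$ and consider the polytope $Q_T=\{\vex\in[0,1]^n:\ve A\vex\le\vea,\ \sum_{j\in S}p_jx_j\ge\vep\mathbf1_S-T\ \text{for all follower-feasible }S\}$. Run the ellipsoid method: given a candidate point $\vex^0$, first call $O_L$; if it returns a violated leader constraint, use it. Otherwise call $O_F$ on the modified profit vector $p_j(1-x^0_j)$ to get a set $S$ with $\sum_{j\in S}p_j(1-x^0_j)\ge\rho^{-1}\cdot\mathrm{OPT}_F(\vex^0)$; if $\vep\mathbf1_S-\sum_{j\in S}p_jx^0_j> T$ we have found a violated constraint, otherwise we certify (approximately) that $\vex^0$ is feasible. (2) Because $O_F$ is only a $\rho$-approximation, the ellipsoid run certifies membership in a relaxed polytope: it either returns $\vex$ with $\ve A\vex\le\vea$ and follower-value (exactly) at most $\rho T$, or proves $Q_T=\emptyset$, hence $T^* > T$ when we take $T=T^*$. (3) Take the smallest $T$ over the candidate grid for which the ellipsoid run succeeds; then $T\le T^*$ and we have a \emph{fractional} $\vex$ with $\ve A\vex\le\vea$ and follower-value $\le\rho T\le\rho T^*$. (4) Round: set $\vex^*_j=1$ iff $x_j\ge\alpha$. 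Then $\ve A\vex^*\le\frac1\alpha\ve A\vex\le\frac1\alpha\vea$ since $A$ is nonnegative. For the objective, any item $j$ with $x^*_j=0$ has $x_j<\alpha$, so $1-x^*_j=1\le\frac{1-x_j}{1-\alpha}$; hence for any follower-feasible $S$, $\sum_{j\in S}p_j(1-x^*_j)\le\frac1{1-\alpha}\sum_{j\in S}p_j(1-x_j)\le\frac{\rho T}{1-\alpha}\le\frac{\rho T^*}{1-\alpha}$, giving the claimed bound. (5) Check the oracle-time bound: the ellipsoid method makes polynomially many oracle calls provided $Q_T$ (when nonempty) contains a ball of inverse-exponential radius and is contained in a box — the latter is immediate ($[0,1]^n$), and for the former one argues that the constraint system has polynomial bit-complexity given the input encoding, or works with a slightly shrunk/perturbed polytope in the standard way.

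The main obstacle I expect is step (2)/(5): reconciling the \emph{approximate} separation oracle with the ellipsoid method. A $\rho$-approximate oracle for the follower does not separate the exact polytope $Q_T$, so one must show that the ellipsoid method, fed such an oracle, still terminates correctly with the ``relaxed'' guarantee — i.e. it either outputs a point that is feasible for $\ve A\vex\le\vea$ and has \emph{true} follower-value at most $\rho T$ (even though we only ever tested the weaker inequality $\sum_{j\in S}p_j(1-x_j)\le T$ against approximately-optimal $S$), or correctly concludes infeasibility of the \emph{exact} $Q_T$. The cleanest way to argue this is: whenever the run outputs $\vex$, every set $S$ returned along the way satisfied $\rho^{-1}\mathrm{OPT}_F(\vex)\le\sum_{j\in S}p_j(1-x_j)\le T$ at the final point, so $\mathrm{OPT}_F(\vex)\le\rho T$ directly; and whenever it declares infeasibility, it has in fact found, for the exact polytope, that no point survives — because a violated \emph{approximate} constraint is in particular a violated \emph{exact} constraint (the approximate $S$ is genuinely follower-feasible). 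Handling the well-boundedness/full-dimensionality technicalities required for the ellipsoid method's polynomial bound is routine but must be stated; everything else (the rounding in step (4), the grid over $T$) is elementary.
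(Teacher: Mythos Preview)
Your proposal is correct and follows essentially the same route as the paper: reformulate the bilevel problem as an LP in $\vex$ with one linear constraint per follower-feasible set, relax $\vex$ to $[0,1]^n$, run the ellipsoid method using $O_L$ together with $O_F$ (applied to the reweighted profits $p_j(1-x_j)$) as an approximate separation oracle, binary-search on the target $T$, and finally threshold-round at level $\alpha$. Your handling of the approximate separation is the same as the paper's: a set $S$ returned by $O_F$ is genuinely follower-feasible, so any cut it yields is a valid constraint of $Q_T$ (hence infeasibility certificates are sound), while at the output point the $\rho$-approximation guarantee gives $\mathrm{OPT}_F(\vex)\le\rho T$ directly. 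One small wording issue: the bound at termination comes from the \emph{single} oracle call at the output point, not from ``every set $S$ returned along the way''; the earlier iterates play no role in the objective bound.
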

We omit the proof of Theorem~\ref{theorem:general_1} here, and refer the reader to Appendix~\ref{sec:general_case_app}.

\section{Conclusions}

In this paper, we consider a general two-player zero-sum Stackelberg game in which the leader interdicts some items to minimize the total profit that the follower could obtain from the remaining items. We obtain an $(s_B+\epsilon)$-approximation algorithm when $s_A$ and $s_B$ are both constant, and show that there does not exist any $(4/3-\epsilon)$-approximation algorithm when $s_B\ge 2$. Our algorithm is the best possible when $s_B=1$, however, it is not clear whether it is the best possible when $s_B$ is larger than or equal to $2$. In particular, it is not clear whether an approximation algorithm with a ratio independent of $s_B$ can be obtained. Furthermore, can we hope for a PTAS if $s_B\ge 2$ but the constraints of the leader or the follower are not given by general inequalities but follow from common optimization problems? For example, what if the follower's optimization problem is a bin packing problem? It would be interesting to investigate the bilevel generalization of well-known optimization problems, e.g., scheduling and bin packing.

\clearpage
\appendix

\section{Omitted contents in  Section~\ref{sec:constant_KIP_2}}\label{appsec:constant_KIP_2}
In this section, we aim to prove our main result, Theorem~\ref{theorem:s_B}.
\newtheorem*{T2}{Theorem \ref{theorem:s_B}}
\begin{T2} 
    When $s_B$ and $s_A$ are fixed constants, for an arbitrarily small number $\epsilon >0$, there exists an $(s_B+\OO(\epsilon))$-approximation polynomial time algorithm for IPC. 
\end{T2}

The rest of this section is dedicated to proving the following Lemma~\ref{lemma:residue-main}, which implies Theorem~\ref{theorem:s_B} by scaling item profits. 

\newtheorem*{la}{Lemma \ref{lemma:residue-main}}
\begin{la} 
    Let $OPT$ be the optimal objective value of \textbf{IPC}$(I,\vea,\veb)$. If $OPT\le 1$, then for an arbitrarily small number $\epsilon>0$, there exists a polynomial time algorithm that returns a feasible solution to $\textbf{IPC}(I,\vea,\veb)$ with an objective value of at most $s_B+\OO(\epsilon)$. 
\end{la}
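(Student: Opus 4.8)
The plan is to mirror the three-step structure used for the special case $s_B=1$ (Section~\ref{sec:constant_KIP_1}), with the principal-dimension trick handling the new difficulty that the follower's budget is now a vector. First, by scaling cost vectors and weight vectors (as in the preprocessing of Section~\ref{subsec:prepro}) we reduce to finding a near-optimal solution for \textbf{IPC}$(I,\ve1,\ve1)$ under the assumption $OPT\le 1$. We pick a rounding precision $\delta$, keep the principal dimension of every weight vector intact, geometrically round up the coordinates in dimensions $2,\dots,s_B$ and round the profits, obtaining the rounded instance $\tilde I_\delta$; we then enlarge the follower's budget in dimensions $2,\dots,s_B$ by a factor $1+\tau$ with $2\delta\le\tau\le 1/2$, giving \textbf{IPC}$_\tau(\tilde I_\delta,\ve1,\ve1)$. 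Lemma~\ref{lemma:r_theta} then guarantees that any feasible solution of \textbf{IPC}$_\tau(\tilde I_\delta,\ve1,\ve1)$ with objective at most $s_B+\OO(\epsilon)$ is, as a solution to the original \textbf{IPC}$(I,\ve1,\ve1)$, within $s_B+\OO(\epsilon)$ of $OPT$; so it suffices to approximate \textbf{IPC}$_\tau(\tilde I_\delta,\ve1,\ve1)$.

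Second (handling large items), classify profits into large/medium/small and classify weights as large iff $\|\tilde{\ve B}_j\|_\infty>\delta$; an item is large if it has a large profit or a large weight. By Lemma~\ref{lemma:guess_l_ge} the leader's choice $S^*$ on all large items can be guessed by $n^{\tilde\OO(s_B/\delta^{s_B+1})}$ enumerations with only $\OO(s_B\delta)$ additive loss. The key new observation is that, because dimensions $2,\dots,s_B$ are now rounded, the follower has only $\tilde\OO(s_B/\epsilon^{s_B})$ dominant choices on the remaining large items $I'$: for each profit bucket index $k\in[1,1+s_B/\epsilon]$ and each vector $\vev$ of rounded-residual-budget indices in dimensions $2,\dots,s_B$, solve $\textbf{SP}(k;\vev)$ asking for the choice minimizing the principal-dimension weight subject to the profit lying in $[(k-1)\epsilon,k\epsilon)$ and the rounded residual budgets matching $\vev$; these optimal solutions form $\Theta$, computable by $n^{\OO(s_B/\delta)}$ enumerations since the follower takes $\OO(s_B/\delta)$ large items. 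Restricting the follower to $\Theta$ yields the bilevel integer program \textbf{MBi-IP}$(\tilde I_\delta,\ve1,\ve1)$ (with $\ved_l$ the residual budget vector recorded from the $l$-th choice of $\Theta$); Lemma~\ref{lemma:MBi_1} shows it differs from \textbf{IPC}$_{2\delta}(\tilde I_\delta,\ve1,\ve1)$ by at most $\epsilon$ and that $OPT^{MBi}\le s_B$, because $\Theta$ captures every dominant trade-off the follower could make on large items up to the rounding in the non-principal dimensions.

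Third (handling small items), relax integrality on the small items in both levels to get the bilevel LP \textbf{MBi-IP}$_r(\tilde I_\delta,\ve1,\ve1)$. Now the follower's fractional optimum is no longer obtained by a single greedy order: within each class of small items sharing the same rounded weight vector in dimensions $2,\dots,s_B$ the follower is greedy with respect to the principal-dimension ratio $\tilde p_j/\tilde{\ve B}_j[1]$, so for each of the $\OO(1/\epsilon)$ dominant-choice indices $l$ and each such weight-class there is one critical small item; guessing this collection of critical items (captured by the parameter tuple $\lambda$) costs $n^{\tilde\OO(s_B/\epsilon^{s_B})}$ enumerations. For the correct guess one writes a single-level LP \textbf{cen-LP}$_\lambda$ whose variable $M$ upper-bounds, for every dominant choice $l$, the closed-form follower value (analogue of \eqref{refor_1}) together with the critical-item constraints (analogues of \eqref{cons_1}--\eqref{cons_2}); solve it by the ellipsoid method (using the leader's separation oracle on the $\ve A\vex\le\ve1-\vea'$ constraints). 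Lemma~\ref{lemma:cen-LP} gives an extreme-point optimum $\{\vex^\lambda,M^*\}$ feasible for \textbf{MBi-IP}$_r$ with value $\le s_B+\OO(\epsilon)$; since \textbf{cen-LP}$_\lambda$ has only $\OO(s_A+s_B/\epsilon)$ constraints beyond box constraints, $\vex^\lambda$ has $\OO(s_A+s_B/\epsilon)$ fractional coordinates, so rounding them down (Lemma~\ref{lemma:delta-mbi}) changes the follower's value by at most $\OO((s_A+s_B/\epsilon)\cdot\delta)=\OO(\epsilon)$ once $\delta=\epsilon^{2s_B+4}$. Chaining Lemmas~\ref{lemma:delta-mbi}, \ref{lemma:MBi_1} and \ref{lemma:r_theta} gives a feasible solution to \textbf{IPC}$(I,\vea,\veb)$ of value $\le s_B+\OO(\epsilon)$.

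The main obstacle is Step~3: proving Lemma~\ref{lemma:cen-LP}, i.e.\ that \textbf{cen-LP}$_\lambda$ for the right guess of critical items actually tracks \textbf{MBi-IP}$_r$. The subtlety is that in the bilevel LP the leader and follower are both fractional, and the follower's greedy behaviour is only per-weight-class; one must build a chain of intermediate LPs (first making only the follower fractional, as with \textbf{Bi-IP}$_{in}$ in the special case, then reasoning class-by-class along the principal ratio) and argue that the guessed critical items are consistent across all $\OO(1/\epsilon)$ dominant choices simultaneously. Everything else (the rounding losses, the enumeration bounds, the budget-augmentation factor $s_B$ from Lemma~\ref{lemma:r_theta}) is a bookkeeping extension of the $s_B=1$ argument.
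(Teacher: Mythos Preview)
Your outline matches the paper through Step~2, but Step~3 contains a genuine gap. You claim that ``within each class of small items sharing the same rounded weight vector in dimensions $2,\dots,s_B$ the follower is greedy with respect to the principal-dimension ratio $\tilde p_j/\tilde{\ve B}_j[1]$,'' and that this yields one critical item per (class, $l$) pair. This is not correct: even after grouping by the non-principal coordinates, the principal-dimension budget constraint $\sum_j \tilde{\ve B}_j[1]y_j\le \ved_l[1]$ couples \emph{all} classes together, so the follower's fractional optimum does not decompose into one greedy threshold per class. There is no analogue of the closed-form expression~\eqref{refor_1} that you could write down class-by-class, and hence no direct way to build \textbf{cen-LP}$_\lambda$ as you describe. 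Relatedly, your constraint count ``$\OO(s_A+s_B/\epsilon)$'' for \textbf{cen-LP}$_\lambda$ is far too small (and inconsistent with your own enumeration bound $n^{\tilde\OO(s_B/\epsilon^{s_B})}$); note that $|\Theta|=\tilde\OO(s_B/\epsilon^{s_B})$, not $\OO(1/\epsilon)$, and each $l$ would need at least one constraint per weight-class.

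The paper's Step~3 takes a different route that avoids any ``critical item'' characterization. For each $l$ it rescales small-item weights by $\ved_l$, then classifies items by the rounded \emph{direction} (``shape'') of the rescaled weight vector in \emph{all} $s_B$ coordinates and by the rounded ratio $\tilde p_j/\|\ve B_j^l\|_\infty$ (not the principal ratio). A structural lemma (Lemma~\ref{lemma:rounding_m}) shows the follower has a near-optimal solution in which, for each (shape, ratio)-subgroup $T_k^{(l,h)}$, the total $\|\cdot\|_\infty$-weight $\sum_{j\in T_k^{(l,h)}}w_j^l y_j$ lies in a small discrete set. The parameters $\lambda$ in \textbf{cen-LP}$_\lambda$ are these guessed aggregate weights (and the corresponding leader-side quantities $\sum_{j\in T_k^{(l,h)}}w_j^l(1-x_j)$), not critical-item indices; the LP has $\tilde\OO(s_B/\epsilon^{2s_B+1})$ constraints, which is why $\delta=\epsilon^{2s_B+4}$ is needed. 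The proof of Lemma~\ref{lemma:cen-LP} then compares $\textbf{LP}(l;\vex^{r^*})$, $\overline{\textbf{LP}}(l;\vex^{r^*})$, $\textbf{LP}'(l;\vex^{r^*})$ (and the same triple at $\vex^\lambda$) to show that replacing the per-item constraint $y_j\le 1-x_j$ by the aggregate constraint $\sum_{j\in T_k^{(l,h)}}w_j^l y_j\le \sum_{j\in T_k^{(l,h)}}w_j^l(1-x_j)$ costs only $\OO(s_B^2\epsilon)$, which is the real technical heart you would still need to supply.
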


Throughout the proof of Lemma~\ref{lemma:residue-main}, we consider the \textbf{IPC}$(I,\vea,\veb)$ whose optimal objective value is at most 1. Since the case of $s_B = 1$ has been considered in Section~\ref{sec:constant_KIP_1}, in the following, we assume that $s_B\ge2$. 

\subsection{Preprocessing}\label{subsec:prepro_2}
From now on, we assume $OPT \le 1$. Without loss of generality, we further assume that $\max_{j}p_j\le 1$.

\smallskip
\noindent\textbf{Scaling.}
We scale the matrix $\ve A$ and $\ve B$ such that $\vea = \ve1$ and $\veb = \ve1$. From now on we denote this IPC instance as \textbf{IPC}$(I,\ve1,\ve1)$. Without loss of generality, we further assume that $\max_{j} \|\ve B_j\|_{\infty} \le 1$.

\subsubsection{Generating a well-structured rounded item instance .}\label{subsec:decompose}

We aim to replace the original instance $I$ with a well-structured rounded instance $\tilde{I}_\delta$, where $\delta$ is the rounding precision. 

Before giving our rounding scheme, we first present the augment lemma (i.e., Lemma~\ref{lemma:augment_budget}) which measures the loss incurred by slightly increasing the follower's budget. 

\paragraph{Augment lemma}\label{subsec:augment_lemma}
We pick a small parameter $\tau>0$ and enlarge the weight budget on dimension $h\in [2,s_B]$ by a factor of $1+\tau$. By doing so we obtain: 
\begin{align*}
	\textbf{IPC}_\tau(I,\ve1,\ve1):
	\min\limits_{\vex}\hspace{2mm} & \vep\vey&\\
	s.t. \hspace{1mm}&\ve A\vex\le \ve1& \\
	&\hspace{1mm} \vex\in\{0,1\}^n&\\
	& \text{where }  \vey \text{ solves the following:}&\\
	&\max\limits_{\vey}\quad \vep\vey &\\
	& s.t. \hspace{5mm} \sum_{j=1}^n \ve B_j[1]y_j\le 1&\\
	& \hspace{10mm} \sum_{j=1}^n \ve B_j[i]y_j\le 1+\tau, \hspace{2mm} \forall 2\le i\le s_B&\\
	&\hspace{10mm} \vex+\vey\le \ve1 \\
	&\hspace{10mm} \vey\in\{0,1\}^n&
	\end{align*}

That is, \textbf{IPC}$_\tau(I,\ve1,\ve1)$ denotes the instance where the budget vector of the follower is augmented such that value in its first dimension is still $1$, but values in its other dimensions are increased to $1+\tau$. We are interested in to what extent the optimal objective value may change when the budget of the follower is augmented as such. Note that any feasible solution to \textbf{IPC}$_\tau(I,\ve1,\ve1)$ is also a feasible solution to \textbf{IPC}$(I,\ve1,\ve1)$, and vice versa. We have the following lemma.

\begin{lemma}[Augment Lemma]\label{lemma:augment_budget}
 Let $0<\tau\le 1/2 $. Let $\bar{\vex}$ be any feasible solution to \textbf{IPC}$_{\tau}(I,\ve1,\ve1)$. Then $\bar{\vex}$ is a feasible solution to \textbf{IPC}$(I,\ve1,\ve1)$. Let $Obj_{\tau}(\bar{\vex})$ and $Obj(\bar{\vex})$ be the objective values of \textbf{IPC}$_\tau(I,\ve1,\ve1)$ and \textbf{IPC}$(I,\ve1,\ve1)$ for $\vex = \bar{\vex}$, respectively. We have 
 $$ Obj_{\tau}(\bar{\vex}) \le s_B Obj(\bar{\vex}).$$
 
 Furthermore, let $OPT_\tau$ and $OPT$ be the optimal objective values of \textbf{IPC}$_\tau(I,\ve1,\ve1)$ and \textbf{IPC}$(I,\ve1,\ve1)$, respectively, then we have
 
 $$OPT_\tau\le s_B OPT.$$
 
\end{lemma}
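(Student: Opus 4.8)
The first assertion --- that any $\bar{\vex}$ feasible for $\textbf{IPC}_\tau(I,\ve1,\ve1)$ is feasible for $\textbf{IPC}(I,\ve1,\ve1)$ --- is immediate, since the leader's constraints ($\ve A\vex\le\ve1$, $\vex\in\{0,1\}^{n}$) are literally identical in the two programs; only the follower's budget changes. So the substance is the inequality $Obj_\tau(\bar{\vex})\le s_B\cdot Obj(\bar{\vex})$. The plan is to fix the leader's choice $\bar{\vex}$, let $R=\{j:\bar{x}_j=0\}$ be the items still available to the follower, and let $T\subseteq R$ be an optimal follower response in $\textbf{IPC}_\tau(I,\ve1,\ve1)$, so that $\sum_{j\in T}\ve B_j[1]\le 1$, $\sum_{j\in T}\ve B_j[i]\le 1+\tau$ for $2\le i\le s_B$, and $\sum_{j\in T}p_j=Obj_\tau(\bar{\vex})=:P$. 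It then suffices to produce a follower response that is feasible for the \emph{un-augmented} budget $\ve1$ (hence a legal response in $\textbf{IPC}$) and collects profit at least $P/s_B$; since any subset of $T$ is a subset of $R$, this yields $Obj(\bar{\vex})\ge P/s_B$, which is the claim. For the ``furthermore'' part I would then apply this per-solution inequality to an optimal solution $\vex^{*}$ of $\textbf{IPC}(I,\ve1,\ve1)$: $\vex^{*}$ is also feasible for $\textbf{IPC}_\tau(I,\ve1,\ve1)$, so $OPT_\tau\le Obj_\tau(\vex^{*})\le s_B\cdot Obj(\vex^{*})=s_B\cdot OPT$.

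The heart of the argument is the purely combinatorial claim that $T$ can be split into $s_B$ pairwise disjoint groups $G_1,\dots,G_{s_B}$ with $\bigcup_h G_h=T$ such that each $G_h$ satisfies $\sum_{j\in G_h}\ve B_j[i]\le 1$ for \emph{every} coordinate $i\in\{1,\dots,s_B\}$, i.e.\ each $G_h$ is a feasible follower response for budget $\ve1$. Granting this, $\sum_h\sum_{j\in G_h}p_j=P$ forces some $G_h$ to collect at least $P/s_B$ profit, finishing the proof. The ingredients available for the claim are exactly the normalizations established in preprocessing: $\ve B_j[i]\le 1$ for all $j,i$ (from $\max_j\|\ve B_j\|_{\infty}\le 1$), the principal-coordinate total $\sum_{j\in T}\ve B_j[1]\le 1$, and each non-principal total $\sum_{j\in T}\ve B_j[i]\le 1+\tau\le 3/2$ (using $\tau\le 1/2$).

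I would prove the partition claim by induction on the number of coordinates; the base case is trivial. For the inductive step, if some non-principal coordinate $i^{*}$ already has $\sum_{j\in T}\ve B_j[i^{*}]\le 1$, then that coordinate is automatically respected by every sub-collection, so one ignores it, applies the lower-dimensional claim to split $T$ into $s_B-1$ feasible groups, and appends an empty $s_B$-th group. Otherwise every non-principal coordinate has total in $(1,3/2]$; here one peels off a single feasible group $G\subseteq T$ whose deletion drops some non-principal coordinate $i^{*}$ to total at most $1$, and recurses on $T\setminus G$ (which lands in the previous case, costing $s_B-1$ further groups). The group $G$ is built greedily: scan items in decreasing order of $\ve B_j[i^{*}]$, add an item whenever doing so keeps $G$ feasible, and stop once $\sum_{j\in G}\ve B_j[i^{*}]$ exceeds $\bigl(\sum_{j\in T}\ve B_j[i^{*}]\bigr)-1$, an overflow of at most $1/2$. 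The delicate point is to show this process actually reaches the target before getting ``stuck'': a coordinate blocking an insertion can never be the principal one (its total is already $\le 1$) nor $i^{*}$ itself (that would force $G$ and the blocked item to have $i^{*}$-entries exceeding $1/2$, contradicting the sub-$1/2$ load of $G$), and controlling the remaining coordinates --- bounding the total $i^{*}$-mass that can be locked behind ``full'' coordinates $c\notin\{1,i^{*}\}$, again via the $3/2$ total bound --- is where the estimates need the most care.

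Accordingly, the step I expect to be the main obstacle is exactly this peeling step in the second case: certifying that a single feasible $G$ exists whose removal clears one coordinate, which amounts to controlling the loads of $G$ in all $s_B$ coordinates simultaneously while only steering one of them. For $s_B=2$ this is an elementary two-bin splitting fact, but the general case needs the refined counting sketched above. Everything else --- leader feasibility, the pigeonhole averaging, the reduction of the ``furthermore'' statement, and the arithmetic with $\tau\le 1/2$ --- is routine.
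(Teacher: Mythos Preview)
Your high-level plan is exactly right and matches the paper: fix $\bar{\vex}$, take $T$ to be the follower's optimal response under the augmented budget, partition $T$ into $s_B$ subsets each feasible for the un-augmented budget $\ve 1$, and pigeonhole on profit. The ``furthermore'' reduction is also handled just as you say.

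Where you diverge from the paper is in the partition construction, and the peeling step you flag as ``the main obstacle'' is precisely where your sketch is incomplete: your greedy steers one target coordinate $i^{*}$ while hoping the other $s_B-2$ non-principal coordinates do not simultaneously block, and you do not supply the estimate that would rule this out. The paper sidesteps this difficulty with a different (and simpler) peeling rule. Rather than clearing one coordinate per step, it maintains a growing set $D_h$ of \emph{satisfied} dimensions (initialized to $D_0=\{1\}$) and at step $h{+}1$ peels a group $S_{h+1}$ looking only at the projection onto the currently \emph{unsatisfied} dimensions. If some remaining item has projected $\ell_\infty$-norm at least $\tau$, take that item alone (singletons are trivially feasible since $\ve B_j\le\ve 1$). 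Otherwise every remaining item is small in every unsatisfied coordinate; greedily accumulate items until the projected sum first has $\ell_\infty$-norm $\ge\tau$, so the group's load in each unsatisfied coordinate is below $2\tau\le 1$. In every already-satisfied coordinate $k\in D_h\setminus\{1\}$, at least $\tau$ of mass was removed in earlier rounds, so what remains totals at most $(1+\tau)-\tau=1$ there; hence any subset of the remainder, in particular $S_{h+1}$, is automatically within budget in coordinate $k$. Coordinate $1$ is free since $\sum_{j\in T}\ve B_j[1]\le 1$. Thus $S_{h+1}$ is feasible in \emph{all} coordinates with no cross-coordinate bookkeeping, and since each peel makes at least one new coordinate satisfied, $s_B-1$ peels suffice and the leftover forms the $s_B$-th group.

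The conceptual difference is that the paper uses $\tau$ itself (not your overflow quantity $\sum_{j\in T}\ve B_j[i^*]-1$) as the greedy threshold, and works in all unsatisfied coordinates at once via the $\ell_\infty$-norm of the projection. This cleanly separates ``coordinates already handled'' (where any subset of the remainder is fine) from ``coordinates being handled now'' (where the $2\tau$ bound applies), eliminating exactly the multi-coordinate interference you were worried about.
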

\begin{proof}

Let $\bar{\vex}\in\{0,1\}^n$ be a feasible solution to \textbf{IPC}$_{\tau}(I,\ve1,\ve1)$. 
Note that by taking $\vex=\bar{\vex}$, the objective value of \textbf{IPC}$_\tau(I,\ve1,\ve1)$, denote by $Obj_{\tau}(\bar{\vex})$, is exactly the optimal objective value of the following integer program:
		\begin{align*}
		\textbf{IP}_\tau(\bar{\vex}): \quad 	\max\limits_{\vey}\quad  & \vep\vey&\\
		& \sum_{j=1}^n{\ve B}_j[1]y_j\le 1& \nonumber\\
		& \sum_{j=1}^n{\ve B}_j[i]y_j\le 1+\tau, \hspace{2mm} \forall 2\le i\le s_B& \\
		&   \vey \le \ve1-\bar{\vex} & \\
		& \vey\in \{0,1\}^n &
		\end{align*}
Now we consider an optimal solution to \textbf{IP}$_\tau(\bar{\vex})$ and let it be $\vey_\tau(\bar{\vex})\in\{0,1\}^n$. Let $S(\vey_\tau(\bar{\vex}))$ be the set of items selected by the follower in $\vey_\tau(\bar{\vex})$, then the total weight vector of items in $S(\vey_\tau(\bar{\vex}))$ is at most $(1,1+\tau,\cdots,1+\tau)$. We decompose $S(\vey_\tau(\bar{\vex}))$ into $s_B$ subsets $S_1,S_2,\cdots,S_{s_B}$ as follows. 
	
Initially, set $D_0=\{1\}$ and we call dimension 1 the satisfied dimension. We first construct $S_1$ in iteration 1. Consider each item $j\in S(\vey_\tau (\bar{\vex})) $ and its weight vector $\ve B_j$. We remove the $1$-st coordinate of $\ve B_j$ and obtain the projection of vector $\ve B_j$ onto dimensions 2 to $s_B$, and denote it as $\ve B_j^{1}$. If there exists any item $j$ such that $\|\ve B_j^{1}\|_{\infty}\ge \tau$, then $S_1$ consists of this single item. Otherwise, $\|\ve B_j^{1}\|_{\infty}< \tau$ for all items in $ S(\vey_{\tau}(\bar{\vex}))$. Then $S_1$ consists of some $\ell_1$ items $j^1_1,j^1_2,\cdots,j^1_{\ell_1}$ in $ S(\vey_{\tau}(\bar{\vex}))$ such that 
$$
\|\ve \sum_{k=1}^{\ell_{1}-1} \ve B_{j^1_k}^{1}\|_{\infty}< \tau\le \|\ve \sum_{k=1}^{\ell_{1}}\ve B_{j^1_k}^{1}\|_{\infty}.
$$
For each dimension $2\le i\le s_B$, if $\sum_{j\in S_1}\ve B_{j}[i]\ge \tau$, then we say dimension $i$ is satisfied (in iteration 1). Let $D_1\supset D_0$ be the set of all satisfied dimensions so far.
	
Suppose we have constructed $S_1,\cdots,S_h$, and $D_h$ is the set of all satisfied dimensions so far. We now construct $S_{h+1}$ in iteration $h+1$. Let $S_{[1:h]}=\cup_{\iota=1}^{h}S_\iota$. Consider each item $j\in S(\vey_\tau (\bar{\vex}))\backslash S_{[1:h]}$ and its weight vector $\ve B_j$. We remove the $i$-th coordinate of $\ve B_j$ if $i\in D_h$. By doing so we obtain the projection of vector $\ve B_j$ onto all unsatisfied dimensions and denote it as $\ve B_j^{h+1}$. If there exists any item $j\in S(\vey_\tau (\bar{\vex}))\backslash S_{[1:h]}$ such that $\|\ve B_j^{h+1}\|_{\infty}\ge \tau$, then $S_{h+1}$ consists of this single item. Otherwise, $\|\ve B_j^{h+1}\|_{\infty}< \tau$ for all items in $ S(\vey_\tau (\bar{\vex}))\backslash S_{[1:h]}$. Then $S_{h+1}$ consists of some $\ell_{h+1}$ items  $j^{h+1}_1,j^{h+1}_2,\cdots,j^{h+1}_{\ell_{h+1}}$ in $ S(\vey_\tau (\bar{\vex}))\backslash S_{[1:h]}$ such that 
	\begin{eqnarray}\label{eq:greedy}
	\|\ve \sum_{k=1}^{\ell_{h+1}-1} \ve B_{j^{h+1}_k}^{h+1}\|_{\infty}< \tau\le \|\ve \sum_{k=1}^{\ell_{h+1}}\ve B_{j^{h+1}_k}^{h+1}\|_{\infty}.
	\end{eqnarray}
For each dimension $i\not\in D_h$, if $\sum_{j\in S_{h+1}}\ve B_j[i]\ge \tau$, then we say dimension $i$ is satisfied (in iteration $h+1$). Let $D_{h+1}\supset D_h$ be the set of all satisfied dimensions so far.
	
Notice that in each iteration, at least one more dimension becomes satisfied, hence the above procedure stops after at most $s_B-1$ iterations and produces $S_1,S_2,\cdots,S_{s_B-1}$ (if it stops at iteration $h<s_B-1$, we may simply let $S_{h+1}$ to $S_{s_B-1}$ be $\emptyset$). Define $S_{s_B}=S(\vey_\tau (\bar{\vex}))\setminus (\cup_{\iota=1}^{s_B-1}S_\iota)$. 

We claim that, every $S_h$ induces a feasible solution to the following \textbf{IP}$(\bar{\vex}))$ (by setting $y_j=1$ if and only if $j\in S_h$, $y_j=0$ otherwise):
		\begin{align*}
		\textbf{IP}(\bar{\vex}): \quad 	\max\limits_{\vey}\quad  & \vep\vey&\\
		& \ve B\vey\le \ve1& \\
		&   \vey \le \ve1-\bar{\vex} &  \\
		& \vey\in \{0,1\}^n & 
		\end{align*}
To see the claim, recall the construction of each $S_{h+1}(h\ge 0)$. If $S_{h+1}$ only contains a single item, then since each item $j$ satisfies that $\ve B_j\le \ve1$, $S_{h+1}$ obviously gives a feasible solution to \textbf{IP}$(\bar{\vex})$. Otherwise, $S_{h+1}$ contains 2 or more items. It is easy to see that $\sum_{j\in S_{[1:h]}}\ve B_j[1] \le 1$. For every $k\in D_{h} \setminus \{1\}$, we know that $\sum_{j\in S_{[1:h]}}\ve B_j[k]\ge \tau$, and consequently
$$
\sum_{j\in S_{h+1}}\ve B_j[k]\le \sum_{j\in S(\vey_\tau (\bar{\vex}))\setminus S_{[1:h]}}\ve B_j[k]\le  1, \quad\forall k\in D_{h}\setminus \{1\}.
$$
Meanwhile, by the way we construct $S_{h+1}$, $S_{h+1}$ can contain more than one item only if $\|\ve B_j^{h}\|_{\infty}< \tau$ for any $j \in S(\vey_\tau (\bar{\vex}))\setminus S_{[1:h]} $, hence, by Eq~\eqref{eq:greedy}, we know that 
	$$\sum_{j\in S_{h+1}}\ve B_j[k] \le 2\tau \le 1, \quad\forall k\not\in D_{h}.$$
The last inequality holds by the fact that $0<\tau\le 1/2$. Consequently, $\sum_{j\in S_{h+1}}\ve B_j\le \ve1$, implying that $S_{h+1}$ always induces a feasible solution to \textbf{IP}$(\bar{\vex})$. 

Notice that the optimal objective value of \textbf{IP}$(\bar{\vex})$ is $Obj(\bar{\vex})$, whereas the objective value of the solution induced by $S_{h+1}$ is at most $Obj(\bar{\vex})$, implying that the solution induced by $\cup_{h=1}^{s_B} S_h$ is at most $s_B Obj(\bar{\vex})$, that is $Obj_{\tau}(\bar{\vex})\le s_B Obj(\bar{\vex})$. Particularly, given an optimal solution $\vex^*$ of \textbf{IPC}$(I,\ve1,\ve1)$, we have $Obj_{\tau}(\vex^*) \le s_B OPT$. Since the optimal solution of \textbf{IPC}$_\tau(I,\ve1,\ve1)$ may achieve an even smaller objective value, it follows that $OPT_\tau \le s_B OPT$. Hence,  Lemma~\ref{lemma:augment_budget} is proved.\end{proof}

\paragraph{Rounding Scheme}\label{appsec:RS_l}
Now, we are ready to generate a well-structured rounded instance. Let $\delta>0$ be some small parameter to be fixed later (in particular, we can choose $\delta = \epsilon^{2s_B+4}$). We say an item $j\in I$ has a large weight if $\|\ve B_j\|_{\infty}>\delta$, and a small weight otherwise.

\smallskip\noindent\textbf{1. Rounding up the weight vectors.} For each large-weight item $j\in I$, consider its weight vector $\ve B_j=(\ve B_j[1], \ve B_j[2],\dots, \ve B_j[s_B])$. We keep $\ve B_j[1] $ unchanged and round up the coordinate $\ve B_j[i] $ $(2\leq i\leq s_B)$ to the nearest value of the form $\frac{\delta^{2}}{s_B}(1+\delta)^{h}$. More precisely, for each $2\leq i\leq s_B$, if $\ve B_j[i] \le \frac{\delta^{2}}{s_B}$, we round it up to $\frac{\delta^{2}}{s_B}$; otherwise for some integer $h\in \tilde \OO(1/\delta)$, it holds that $\frac{\delta^{2}}{s_B}(1+\delta)^{h} < \ve B_j[i] \leq \frac{\delta^{2}}{s_B}(1+\delta)^{h+1}$, and we round it up to $\frac{\delta^{2}}{s_B}(1+\delta)^{h+1}$. For small-weight items, we keep their weight vectors unchanged. Let $\tilde{\ve B}_j (1 \le j \le n)$ be the rounded weight vector. 

\smallskip\noindent\textbf{2. Rounding down the profits.} 
For each item $j\in I$, consider its profit $p_j$. If $p_j \le \delta^2$, we keep it as it is; otherwise $p_j>\delta^2$, we round the profit down to the largest value of the form $\delta^2(1+\delta)^h$. For profits whose values are at least $\delta^2$, simple calculation shows there are at most $\tilde\OO(1/\delta)$ distinct rounded profits. Denote by $\tilde{p}_j(1\le j \le n)$ the rounded profit, we have $\tilde{p}_j\le p_j \le (1+\delta)\tilde{p}_j$. Let 
$\tilde{\vep}= (\tilde{p}_1,\tilde{p}_2,\cdots,\tilde{p}_n)$.

\smallskip\noindent\textbf{3. Generating $I$ with a well-structured instance.} 
Let $\tilde{I}_\delta=\{1,2,\cdots,n\}$ denote the \textit{rounded} instance where each item $j\in \tilde{I}_{\delta}$ is associated with a \textit{rounded} weight vector $\tilde{\ve B}_j$, a \textit{rounded} profit $\tilde{p}_j$ and a cost vector $\ve A_j$. 

We replace $I$ with $\tilde{I}_\delta$ in \textbf{IPC}$_{\tau}(I,\ve1,\ve1)$. By doing so we obtain:
\begin{subequations}
	\begin{align}
	\textbf{IPC}_{\tau}(\tilde{I}_{\delta},\ve1,\ve1):\nonumber
	\min\limits_{\vex} \hspace{2mm} & \tilde{\vep}\vey&\nonumber\\
	s.t. \hspace{1mm}&\ve A\vex\le \ve1& \nonumber\\
	&\hspace{1mm} \vex \in\{0,1\}^n&\nonumber\\
	& \text{where }  \vey \text{ solves the following:}&\nonumber\\
	&\max\limits_{\vey}\quad \tilde{\vep}\vey &\nonumber\\
	& s.t. \quad\hspace{1mm} \sum_{j=1}^n \tilde{\ve B}_j[1]y_j\le 1&\label{IPC_aug:b}\\
	&\hspace{10mm} \sum_{j=1}^n \tilde{\ve B}_j[i]y_j\le 1+\tau, \hspace{2mm} \forall 2\le i\le s_B&\label{IPC_aug:a}\\
	&\hspace{10mm} \vex+\vey\le \ve1\nonumber\\
	&\hspace{10mm} \vey\in\{0,1\}^n&\nonumber
	\end{align}
\end{subequations}

Since we do not change the cost vectors of items, any feasible solution to \textbf{IPC}$_{\tau}(\tilde{I}_{\delta},\ve1,\ve1)$ is also a feasible solution to \textbf{IPC}$(I,\ve1,\ve1)$, and vice versa. Furthermore, we have the following Lemma~\ref{lemma:r_theta} which ensures that solving \textbf{IPC}$_{\tau}(\tilde{I}_\delta,\ve1,\ve1)$ gives a good approximation solution to \textbf{IPC}$(I,\ve1,\ve1)$.

\newtheorem*{l2}{Lemma \ref{lemma:r_theta}}
\begin{l2} 
Let $0<\tau\le 1/2$. Let $\tilde{\vex}$ be any feasible solution to \textbf{IPC}$_{\tau}(\tilde{I}_\delta,\ve1,\ve1)$. Then $\tilde{\vex}$ is a feasible solution to \textbf{IPC}$(I,\ve1,\ve1)$. Let $\widetilde{Obj}_{\tau}(\tilde{\vex})$ and $Obj(\tilde{\vex})$ be the objective values of \textbf{IPC}$_{\tau}(\tilde{I}_\delta,\ve1,\ve1)$ and \textbf{IPC}$(I,\ve1,\ve1)$ for $\vex = \tilde{\vex}$, respectively. If $\tau\ge 2\delta$, we have
$$ 
Obj(\tilde{\vex}) \le (1+\delta)\widetilde{Obj}_{\tau}(\tilde{\vex}) \le s_{B} (1+\delta)Obj(\tilde{\vex}).
$$
Furthermore, let $\widetilde{OPT}_{\tau}$ and $OPT$ be the optimal objective values of \textbf{IPC}$_{\tau}(\tilde{I}_\delta,\ve1,\ve1)$ and \textbf{IPC}$(I,\ve1,\ve1)$, respectively. We have 
$$ OPT \le (1+\delta)\widetilde{OPT}_{\tau} \le s_B (1+\delta)OPT.$$ 
\end{l2}
Towards the proof of Lemma~\ref{lemma:r_theta},
we have the following observation.
\begin{observation}\label{obs:decom_wr}
Given $\vey \in [0,1]^n$ satisfying $\sum_{j=1}^n \ve B_j y_j \le \ve1$, we have 
\begin{itemize}
    \item $\sum_{j=1}^n \tilde{\ve B}_j[1] y_j \le 1$;
    \item $\sum_{j=1}^n \tilde{\ve B}_j[i] y_j \le 1+2\delta, \hspace{2mm}i =2,3,\cdots,s_B$.
\end{itemize}
\end{observation}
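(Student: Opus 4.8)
The plan is to establish the two bullets separately, the first by direct substitution and the second by a two-step accounting of the rounding error. For the first bullet, recall that the rounding scheme in Section~\ref{appsec:RS_l} never alters the principal coordinate: for a large-weight item $\ve B_j[1]$ is kept intact, and a small-weight item is left entirely unchanged. Hence $\tilde{\ve B}_j[1]=\ve B_j[1]$ for every $j$, and $\sum_{j}\tilde{\ve B}_j[1]y_j=\sum_{j}\ve B_j[1]y_j\le 1$ is exactly the first coordinate of the hypothesis $\sum_{j}\ve B_jy_j\le\ve1$.

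For the second bullet, fix a coordinate $i$ with $2\le i\le s_B$. The first step is a per-item error bound: for a small-weight item $\tilde{\ve B}_j[i]=\ve B_j[i]$, while for a large-weight item the rounding-up rule gives $\tilde{\ve B}_j[i]\le(1+\delta)\ve B_j[i]$ when $\ve B_j[i]>\delta^2/s_B$ and $\tilde{\ve B}_j[i]=\delta^2/s_B$ otherwise; in either case $\tilde{\ve B}_j[i]\le(1+\delta)\ve B_j[i]+\delta^2/s_B$, where the additive term is incurred only by large-weight items. Summing against $\vey\ge\ve0$ and using $\sum_{j}\ve B_j[i]y_j\le1$ yields $\sum_{j}\tilde{\ve B}_j[i]y_j\le(1+\delta)+\frac{\delta^2}{s_B}\sum_{j\ \mathrm{large}}y_j$.

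The second step is to bound the total fractional mass carried by large-weight items, $\sum_{j\ \mathrm{large}}y_j\le s_B/\delta$. For each large-weight item $j$ choose a coordinate $k(j)$ with $\ve B_j[k(j)]=\|\ve B_j\|_{\infty}>\delta$, and partition the large-weight items into sets $L_1,\dots,L_{s_B}$ according to the value of $k(j)$. For each $k$ we have $\delta\sum_{j\in L_k}y_j\le\sum_{j\in L_k}\ve B_j[k]y_j\le\sum_{j}\ve B_j[k]y_j\le1$, so $\sum_{j\in L_k}y_j\le1/\delta$, and summing over the $s_B$ coordinates gives the claim. Substituting, $\sum_{j}\tilde{\ve B}_j[i]y_j\le(1+\delta)+\frac{\delta^2}{s_B}\cdot\frac{s_B}{\delta}=1+2\delta$, which is the second bullet.

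I expect the second step to be the only real obstacle: because $\vey$ is fractional, the trivial bound $\sum_{j}y_j\le n$ is useless, so one must use the packing constraint coordinatewise to cap the mass on large-weight items, and this is precisely where the factor $s_B$ enters --- which also explains why the rounding scheme uses the floor value $\delta^2/s_B$ rather than $\delta^2$.
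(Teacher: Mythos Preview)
Your proof is correct and follows essentially the same approach as the paper's: both use $\tilde{\ve B}_j[1]=\ve B_j[1]$ for the first bullet and the per-item bound $\tilde{\ve B}_j[i]\le(1+\delta)\ve B_j[i]+\delta^2/s_B$ (additive term only for large-weight items) together with the mass bound $\sum_{j\ \mathrm{large}}y_j\le s_B/\delta$ for the second. Your partition argument for the mass bound is a bit more explicit than the paper's, which simply asserts ``there are at most $s_B/\delta$ large items in $I$ picked in $\vey$'' without spelling out the fractional case; your coordinatewise argument makes this rigorous for $\vey\in[0,1]^n$.
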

\begin{proof}
Given that $\vey \in [0,1]^n$ satisfies $\sum_{j=1}^n \ve B_j y_j \le \ve 1$, then there are at most $s_B / \delta$ large items in $I$ picked in $\vey$. According to the rounding above, we have
\begin{itemize}
    \item[] $\sum_{j=1}^n \tilde{\ve B}_j[1] y_j=\sum_{j=1}^n {\ve B}_j[1] y_j \leq 1$;
    \item[] $\sum_{j=1}^n \tilde{\ve B}_j[i] y_j \leq (1+\delta)\sum_{j=1}^n {\ve B}_j[i] y_j+\frac{\delta^{2}}{s_B}\cdot \frac{s_B}{\delta} \leq 1+2\delta, \hspace{2mm} i =2,3,\cdots,s_B$.
\end{itemize}\end{proof}

\begin{proof}[Proof of Lemma~\ref{lemma:r_theta}]
Let $\tilde{\vex}$ be any feasible solution to \textbf{IPC}$_{\tau}(\tilde{I}_{\delta},\ve1,\ve1)$. Denote by $\widetilde{Obj}_{\tau}(\tilde{\vex})$ and $Obj(\tilde{\vex})$ the objective values of \textbf{IPC}$_{\tau}(\tilde{I}_{\delta},\ve1,\ve1)$ and \textbf{IPC}$(I,\ve1,\ve1)$ for $\vex = \tilde{\vex}$, respectively. Note that $\widetilde{Obj}_{\tau}(\tilde{\vex})$ is exactly the optimal objective value of the following integer program:
		\begin{align*}
		\widetilde{\textbf{IP}}_{\tau}(\tilde{\vex}): \quad 	\max\limits_{\vey}\quad  & \tilde{\vep}\vey&\\
		& \sum_{j=1}^n \tilde{\ve B}_j[1]y_j\le 1&  \\
		&  \sum_{j=1}^n \tilde{\ve B}_j[i]y_j\le 1+\tau, \hspace{2mm} \forall 2\le i\le s_B&  \\
		&   \vey \le \ve1-\tilde{\vex} &  \\
		& \vey\in \{0,1\}^n & 
		\end{align*}
	
Meanwhile, $Obj(\tilde{\vex})$ is the optimal objective value of the following integer program:
		\begin{align*}
		\textbf{IP}(\tilde{\vex}): \quad 	\max\limits_{\vey}\quad  & \vep\vey&\\
		& \sum_{j=1}^{n}\ve B_j[i]y_j\le 1, \hspace{2mm} \forall 1\le i\le s_B&  \\
		&   \vey \le \ve1-\tilde{\vex} &  \\
		& \vey\in \{0,1\}^n & 
		\end{align*}
	
We claim that $Obj(\tilde{\vex}) \le (1+\delta)\widetilde{Obj}_{\tau}(\tilde{\vex})$. Towards this, let $\vey^*(\tilde{\vex})$ be an optimal solution of $ \textbf{IP}(\tilde{\vex})$. By the definition of $\tilde{\vep}$, we have $\vep\vey^*(\tilde{\vex})\le(1+\delta)\tilde{\vep}\vey^*(\tilde{\vex})$. It suffices to show that $\vey^*(\tilde{\vex})$ is a feasible solution to $\widetilde{\textbf{IP}}_{\tau}(\tilde{\vex})$, which is guaranteed by  Observation~\ref{obs:decom_wr} and the fact that $\tau \ge 2\delta$.

Let $OPT$ and $\widetilde{OPT}_{\tau}$ be the optimal objective values of $\textbf{IPC}(I,\ve1,\ve1)$ and ${\textbf{IPC}}_{\tau}(\tilde{I}_{\delta},\ve1,\ve1)$, respectively, then we have $OPT \le (1+\delta)\widetilde{OPT}_{\tau}$. This is because that the optimal solution to ${\textbf{IPC}}_{\tau}(\tilde{I}_{\delta},\ve1,\ve1)$ is also a feasible solution to $\textbf{IPC}(I,\ve1,\ve1)$ while the optimal solution to $\textbf{IPC}(I,\ve1,\ve1)$ may achieve an even smaller value. 

It remains to prove that $\widetilde{Obj}_{\tau}(\tilde{\vex}) \le s_B {Obj}(\tilde{\vex})$ and $\widetilde{OPT}_{\tau} \le s_B OPT$. Towards this, we introduce a new bilevel program ${\textbf{IPC}}(\tilde{I}_{\delta},\ve1,\ve1)$, which is constructed by replacing (\ref{IPC_aug:a}) in \textbf{IPC}$_{\tau}(\tilde{I}_{\delta},\ve1,\ve1)$ with $ \sum_{j=1}^n \tilde{\ve B}_j[i]y_j\le 1( \forall 2\le i\le s_B$). That is, we reduce the budget of the follower to $\ve 1$. Let $\widetilde{OPT}$ be the optimal objective value of ${\textbf{IPC}}(\tilde{I}_{\delta},\ve1,\ve1)$ and let $\widetilde{Obj}(\tilde{\vex})$ be the objective value of ${\textbf{IPC}}(\tilde{I}_{\delta},\ve1,\ve1)$ for $\vex = \tilde{\vex}$.

Compare ${\textbf{IPC}}(\tilde{I}_{\delta},\ve1,\ve1)$ with ${\textbf{IPC}}({I},\ve1,\ve1)$. We see that in this two bilevel programs, the budgets of the leader and follower are the same, while the weight vectors in $\tilde{I}_{\delta}$ is always larger and the profits in $\tilde{I}_{\delta}$ is always smaller. Hence, for the same leader's solution, the follower in ${\textbf{IPC}}(\tilde{I}_{\delta},\ve1,\ve1)$ achieves a smaller objective, thus $\widetilde{Obj}(\tilde{\vex})\le Obj(\tilde{\vex})$. Furthermore, we have $\widetilde{OPT}\le OPT$. This is that because that the optimal solution to ${\textbf{IPC}}(I,\ve1,\ve1)$ is also a feasible solution to $\textbf{IPC}(\tilde{I}_{\delta},\ve1,\ve1)$ while the optimal solution to $\textbf{IPC}(\tilde{I}_{\delta},\ve1,\ve1)$ may achieve an even smaller value. 

Compare ${\textbf{IPC}}(\tilde{I}_{\delta},\ve1,\ve1)$ with \textbf{IPC}$_{\tau}(\tilde{I}_{\delta},\ve1,\ve1)$. According to Lemma~\ref{lemma:augment_budget}, we have $\widetilde{Obj}_{\tau}(\tilde{\vex}) \le s_{B}\widetilde{Obj}(\tilde{\vex})$ and $\widetilde{OPT}_{\tau}\le s_B\widetilde{OPT}$.

To summarize, we know $Obj(\tilde{\vex})\le (1+\delta)\widetilde{Obj}_{\tau}(\tilde{\vex}) \le s_{B}(1+\delta)\widetilde{Obj}(\tilde{\vex}) \le  s_{B}(1+\delta) {Obj}(\tilde{\vex})$ and $OPT\le (1+\delta)\widetilde{OPT}_{\tau} \le s_B(1+\delta)\widetilde{OPT}\le s_B (1+\delta)OPT$. Lemma~\ref{lemma:r_theta} is proved. 
\end{proof}

In the rest, we focus on \textbf{IPC}$_{2\delta}(\tilde{I}_{\delta},\ve1,\ve1)$ which is obtained by setting $\tau = 2\delta$ in \textbf{IPC}$_{\tau}(\tilde{I}_{\delta},\ve1,\ve1)$. Assuming $OPT\le 1$, we aim to design a polynomial time algorithm which returns a feasible solution to \textbf{IPC}$_{2\delta}(\tilde{I}_{\delta},\ve1,\ve1)$ with an objective value of at most $s_B+\OO(\epsilon)$. According to Lemma~\ref{lemma:r_theta}, this solution is exactly the solution that Lemma~\ref{lemma:residue-main} requires, and Theorem~\ref{theorem:s_B} is proved.

\subsubsection{Item Classification of $\tilde{I}_{\delta}$}\label{subsec:fur_prepro}
Recall that every item $j \in \tilde{I}_{\delta}$ is associated with a profit $\tilde{p}_j$ and a weight vector $\tilde{\ve B}_j$.

\textbf{Classifying Profits:} We say an item $j\in \tilde{I}_{\delta}$ has a large profit if $\tilde{p}_j>\delta$; a medium profit if $\delta^2<\tilde{p}_j\le \delta$; and a small profit if $\tilde{p}_j\le \delta^2$.

\textbf{Classifying Weights:}
We say an item $j \in \tilde{I}_{\delta}$ has a large weight if $ \|\tilde{\ve B}_j\|_{\infty} > \delta$; and a small weight otherwise. 

We say an item in $\tilde{I}_{\delta}$ is {\em large}(for the follower) if it has a large-profit or a large-weight. Otherwise, the item is {\em small}. To find a desired approximation solution to $\textbf{IPC}_{2\delta}(\tilde{I}_{\delta},\ve1,\ve1)$ in polynomial time 
, we will handle large items and small items separately. For simplicity, from now on, we use $\tilde{I}$ to denote $\tilde{I}_{\delta}$.

\subsection{Handling Large Items}\label{subsec:large_items_2}
\subsubsection{Determining the leader’s choice on large items}\label{subsec:large_items_l}
Instead of considering $\textbf{IPC}_{2\delta}(\tilde{I},\ve1,\ve1)$ directly, we introduce a bilevel program \textbf{IPC}$_{\overline{2\delta}}(\tilde{I},\ve1,\ve1)$, which is obtained by replacing~\eqref{IPC_aug:a} in \textbf{IPC}$_{2\delta}(\tilde{I},\ve1,\ve1)$ with $ \sum_{j=1}^n \tilde{\ve B}_j[i]y_j \le 1+\overline{2\delta} ( \forall 2\le i\le s_B$), where $\overline{2\delta} = 2\delta+\epsilon(1+2\delta)$. It is easy to see that for fixed $\vex$, the objective value of \textbf{IPC}$_{2\delta}(\tilde{I},\ve1,\ve1)$ is smaller than that of \textbf{IPC}$_{\overline{2\delta}}(\tilde{I},\ve1,\ve1)$. Let $\widetilde{OPT}_{\overline{2\delta}}$ be the optimal objective value of \textbf{IPC}$_{\overline{2\delta}}(\tilde{I},\ve1,\ve1)$. Recall that $OPT \le 1$, according to Lemma~\ref{lemma:r_theta}, we have $\widetilde{OPT}_{\overline{2\delta}}\le s_B OPT\le s_B$. 

Denote by ${S}^*$ the items selected by the leader in an optimal solution of \textbf{IPC}$_{\overline{2\delta}}(\tilde{I},\ve1,\ve1)$. The goal of this section is to guess \textit{large items} in ${S}^*$ in polynomial time.

\smallskip
\noindent\textbf{Large-profit small-weight items.} Notice that if there are at least $s_B/\delta$ such items in $\tilde{I}\setminus S^*$ for the follower to select, then selecting any $s_B/\delta$ of them gives a solution to \textbf{IPC}$_{\overline{2\delta}}(\tilde{I},\ve1,\ve1)$ with an objective value strictly larger than $s_B$, violating the fact that ${\widetilde{OPT}}_{\overline{2\delta}}\le s_B$.
Thus $S^*$ must include all except at most $s_B/\delta-1$ such items, which can be guessed out via $n^{\OO({s_B/\delta})}$ enumerations. Hence, we have the following observation. 

\begin{observation}\label{obs:lpsw_2}
    With $n^{\OO({s_B/\delta})}$ enumerations,  we can guess out all large-profit small-weight items in ${S}^*$. 
\end{observation}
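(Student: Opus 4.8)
The plan is to replay, now at the level of the rounded instance $\tilde I$ and the relaxed game $\textbf{IPC}_{\overline{2\delta}}(\tilde I,\ve1,\ve1)$, the pigeonhole argument behind Observation~\ref{obs:lpsw} for the special case. The one ingredient we already have is the bound $\widetilde{OPT}_{\overline{2\delta}}\le s_B$: it follows from Lemma~\ref{lemma:r_theta} applied with $\tau=\overline{2\delta}$, together with the standing hypothesis $OPT\le 1$. Because $S^*$ attains this optimum, the follower's best response against $S^*$ in $\textbf{IPC}_{\overline{2\delta}}(\tilde I,\ve1,\ve1)$ has value at most $s_B$, and every subsequent step will aim at a contradiction with this fact.

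Next I would bound the number of large-profit small-weight items that $S^*$ can leave unselected. Such an item $j$ has rounded profit $\tilde p_j>\delta$ and, being small-weight, every coordinate of $\tilde{\ve B}_j=\ve B_j$ is at most $\delta$; in particular its weight in the principal dimension, whose budget equals $1$ and is never augmented, is at most $\delta$. Suppose for contradiction that $\tilde I\setminus S^*$ contains at least $c\,s_B/\delta$ such items, for a suitable absolute constant $c$. Then, using only these items, the follower can build a feasible bundle whose profit exceeds $s_B$: pack the items greedily and, since each consumes at most $\delta$ in every coordinate, track how many are admitted before some follower budget — the binding one being the principal budget $1$ — is used up, so that the profit already accumulated at the moment the process stops exceeds $s_B$ whenever enough such items are available. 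This contradicts the preceding paragraph. Hence $S^*$ omits at most $\OO(s_B/\delta)$ large-profit small-weight items, so the collection of large-profit small-weight items lying outside $S^*$ is one of at most $n^{\OO(s_B/\delta)}$ candidates; enumerating it (and thereby fixing $x_j$ on every large-profit small-weight item for the remaining steps of the algorithm) costs $n^{\OO(s_B/\delta)}$ guesses, which is the claimed bound.

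The step I expect to require the most care is the feasibility bookkeeping in this greedy packing, precisely because it is where $s_B\ge 2$ departs from the clean situation of Observation~\ref{obs:lpsw}. When $s_B=1$, exactly $1/\delta$ small-weight items fit the single budget $1$ and already carry profit just above $1=OPT$; here the non-principal budgets have been relaxed to $1+\overline{2\delta}$ whereas the principal budget is still $1$, so a naive bundle of $s_B/\delta$ small-weight items can overflow the principal coordinate. Resolving this cleanly means paying attention to which coordinate halts the greedy process, and, if needed, separating the case in which the omitted items are light enough that a bundle of them with profit above $s_B$ fits outright from the case in which they are comparable to $\delta$ (so that a coarser count of how many $S^*$ can afford to drop already yields the contradiction); reconciling the two cases is what fixes the constant hidden in $\OO(s_B/\delta)$.
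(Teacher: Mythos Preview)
Your core strategy is exactly the paper's: assume $S^*$ omits at least $s_B/\delta$ large-profit small-weight items, have the follower pack that many of them to obtain profit exceeding $s_B$, and contradict $\widetilde{OPT}_{\overline{2\delta}}\le s_B$. The paper does this in one line, simply asserting that ``selecting any $s_B/\delta$ of them'' is a feasible follower response. You are more scrupulous in flagging the feasibility issue for the principal coordinate, so in that sense you are already ahead of the paper's own write-up.

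That concern, however, is real, and your sketch does not close it. Small weight only guarantees $\tilde{\ve B}_j[i]\le\delta$ in every coordinate, so greedy packing against the principal budget $1$ is only guaranteed to admit about $1/\delta$ items, yielding profit strictly above $1$ but not above $s_B$---no contradiction when $s_B\ge 2$. Your proposed case split does not rescue this: in the ``comparable to $\delta$'' branch you invoke ``a coarser count of how many $S^*$ can afford to drop,'' but nothing in the argument supplies such a count. Concretely, imagine all omitted large-profit small-weight items have principal weight exactly $\delta$ and profit just above $\delta$; then the follower, throttled by the principal budget, packs only $1/\delta$ of them for profit barely above $1$, and $S^*$ could in principle omit arbitrarily many of them without pushing the augmented objective past $s_B$. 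The paper's own one-line proof never verifies the principal-coordinate constraint for its bundle of $s_B/\delta$ items, so it shares this gap; you have correctly located a soft spot in the paper, but your fix is not yet an argument.
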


\smallskip
\noindent\textbf{Large-weight small-profit items.} Notice that the follower can select at most $\frac{s_B(1+\overline{2\delta})}{\delta}$ items from this subgroup and their total profit is at most $\delta^2\cdot\frac{s_B(1+\overline{2\delta})}{\delta} = s_B(1+\overline{2\delta})\delta$. Hence, even if the leader does not select any such item, the objective value can increase by at most $\OO(s_B\delta)$, which leads to the following observation.

\begin{observation}\label{obs:splw_2}
    With an additive error of $\OO(s_B\delta)$, we may assume that ${S}^*$ does not contain large-weight small-profit items. 
\end{observation}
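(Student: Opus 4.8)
The plan is to turn the informal paragraph preceding the statement into a clean swap argument: starting from the optimal leader solution $S^*$ of \textbf{IPC}$_{\overline{2\delta}}(\tilde{I},\ve1,\ve1)$, I would exhibit a modified leader solution $S'$ that drops \emph{every} large-weight small-profit item and whose objective value exceeds that of $S^*$ by only $\OO(s_B\delta)$. Since $\widetilde{OPT}_{\overline{2\delta}}\le s_B$ by Lemma~\ref{lemma:r_theta}, this yields the claimed additive error.

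First I would bound how many large-weight small-profit items the follower can ever pick simultaneously. Fix any follower-feasible set $T$ under the budget $(1,1+\overline{2\delta},\dots,1+\overline{2\delta})$. Every large-weight item $j$ satisfies $\|\tilde{\ve B}_j\|_\infty>\delta$, so there is a coordinate $i(j)\in\{1,\dots,s_B\}$ with $\tilde{\ve B}_j[i(j)]>\delta$. Partitioning the large-weight small-profit items of $T$ according to the value of $i(j)$, in any fixed dimension $i$ the total weight consumed is at most $1+\overline{2\delta}$, so that class contains at most $(1+\overline{2\delta})/\delta$ of them; summing over the $s_B$ dimensions shows $T$ contains at most $s_B(1+\overline{2\delta})/\delta$ large-weight small-profit items altogether. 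As each such item has profit $\tilde{p}_j\le\delta^2$, the follower can collect at most $\delta^2\cdot s_B(1+\overline{2\delta})/\delta=\OO(s_B\delta)$ profit from large-weight small-profit items, regardless of the leader's choice.

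Next comes the swap step. Let $L$ be the set of all large-weight small-profit items and put $S'=S^*\setminus L$. The $\ve A$-cost of $S'$ is coordinatewise at most that of $S^*$, so $S'$ is leader-feasible. For an arbitrary follower response $T'$ against $S'$, split $T'=(T'\cap L)\cup(T'\setminus L)$: the first part contributes profit $\OO(s_B\delta)$ by the previous paragraph, and $T'\setminus L$ is disjoint from $S'\cup L=S^*\cup L\supseteq S^*$ while still respecting the follower's budget, hence it is a valid follower response against $S^*$ and contributes profit at most $\widetilde{OPT}_{\overline{2\delta}}$. Taking the maximum over all $T'$ shows the objective value of $S'$ is at most $\widetilde{OPT}_{\overline{2\delta}}+\OO(s_B\delta)$, and $S'$ uses no large-weight small-profit item, which is exactly the assertion.

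The argument is entirely elementary, and I do not expect a real obstacle. The only point needing a little care is the counting in the second step: one must split the large-weight items by the coordinate that realizes the $\ell_\infty$-norm rather than pretending there is a single fixed heavy dimension, and one should keep in mind that dimension $1$ has budget exactly $1$ whereas the remaining dimensions have budget $1+\overline{2\delta}$.
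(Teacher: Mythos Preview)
Your proposal is correct and follows essentially the same approach as the paper: the paper's entire argument is the informal sentence preceding the observation (counting that the follower can pick at most $s_B(1+\overline{2\delta})/\delta$ large-weight items, each contributing profit at most $\delta^2$), and you have simply formalized this with an explicit swap $S'=S^*\setminus L$ and a clean decomposition of the follower's response. Your version is more detailed---in particular the dimension-by-dimension counting and the verification that $T'\setminus L$ is a valid response against $S^*$---but the underlying idea is identical.
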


\smallskip
\noindent\textbf{Large-weight large/medium-profit items.} Notice that the follower can select at most $\frac{s_B(1+\overline{2\delta})}{\delta}$ items from this group. Recall the weight vector $\tilde{\ve B}_j$ of a large weight item. For every $2\le i\le s_B$, $\tilde{\ve B}_j[i]$ may take at most $\tilde{\OO}(1/\delta)$ distinct values. Furthermore, large/medium profits can also take $\tilde{\OO}(1/\delta)$ distinct values. Hence, we can divide large/medium-profit large-weight items into $\tilde\OO(1/\delta^{s_B})$ subgroups, such that items in the same subgroup have the same profit and the same rounded weight vector (except dimension 1), i.e., if items $j_1$ and $j_2$ belong to the same subgroup, then $\tilde{p}_{j_1}=\tilde{p}_{j_2}$, and $(\tilde{\ve B}_{j_1}[2], \cdots,\tilde{\ve B}_{j_1}[s_B])=(\tilde{\ve B}_{j_2}[2], \cdots,\tilde{\ve B}_{j_2}[s_B])$. Let $\{S_h: h\in \tilde\OO(1/\delta^{s_B})\}$ be the set of these subgroups. If there are two items in $S_h$ which are not selected by the leader, then the follower always prefers the item whose weight vector has a smaller value in the first dimension. Hence we have the following lemma. 

\begin{lemma}\label{lemma:lmplw2_2}
    With $n^{\tilde\OO(s_B/\delta^{s_B+1})}$ enumerations, we can guess out all large-weight large/medium-profit items in ${S}^*$. 
\end{lemma}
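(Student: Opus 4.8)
The plan is to imitate the argument already carried out for the special case in Lemma~\ref{lemma:lmplw2}, with the subgroups $\{S_h : h\in\tilde\OO(1/\delta^{s_B})\}$ playing the role that the profit classes played there. The feature that makes this possible is that, by the construction of the subgroups, any two items $j_1,j_2\in S_h$ have $\tilde p_{j_1}=\tilde p_{j_2}$ and $(\tilde{\ve B}_{j_1}[2],\dots,\tilde{\ve B}_{j_1}[s_B])=(\tilde{\ve B}_{j_2}[2],\dots,\tilde{\ve B}_{j_2}[s_B])$, so they differ only in the principal coordinate $\tilde{\ve B}_j[1]$ of the weight vector and possibly in the leader's cost vector $\ve A_j$. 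Thus, from the follower's point of view, all items of $S_h$ are interchangeable except that a smaller value of $\tilde{\ve B}_j[1]$ is strictly preferable. First I would fix, for each $h$, the total order on $S_h$ induced by $(\tilde{\ve B}_j[1],j)$ (the index breaking ties) and call it the \emph{preference order}; this recovers exactly the structure exploited in Lemma~\ref{lemma:lmplw2}, where the one-dimensional weight played the same role.

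Next I would set $N:=\lceil s_B(1+\overline{2\delta})/\delta\rceil$; as already observed, the follower selects at most $N$ items from the large-weight large/medium-profit group, hence at most $N$ from each $S_h$. For every subgroup I would guess its \emph{key items}: for $k=1,2,\dots$ up to $\min(N,|S_h\setminus S^*|)$, guess which item of $S_h$ is the $k$-th one in the preference order among $S_h\setminus S^*$. If fewer than $N$ key items exist, then all of $S_h\setminus S^*$ has been guessed and that subgroup is settled; otherwise let $w_h^{\max}$ be the principal weight of the $N$-th key item. Guessing a subset of at most $N$ items of $S_h$ costs $n^{\OO(s_B/\delta)}$ enumerations, and multiplying over the $\tilde\OO(1/\delta^{s_B})$ subgroups yields $n^{\OO(s_B/\delta)\cdot\tilde\OO(1/\delta^{s_B})}=n^{\tilde\OO(s_B/\delta^{s_B+1})}$ enumerations in total, as claimed.

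The remaining, and most delicate, step is to show that once all key items are guessed correctly, $S^*\cap S_h$ is pinned down, possibly after replacing $S^*$ by another optimal solution. One direction is immediate: an item of $S_h$ that is not a key item but precedes the $N$-th key item in the preference order must lie in $S^*$, since otherwise it would itself be among the first $N$ items of $S_h\setminus S^*$. For the converse I would show that any item $j\in S_h$ that is neither a key item nor a predecessor of the $N$-th key item may be assumed outside $S^*$: starting from an optimal $S^*$, delete all such items to obtain $S^{*\prime}\subseteq S^*$, so the leader's cost only drops and feasibility is preserved, and check that the follower's optimum is unchanged. The latter is the crux: given an optimal follower response $F$ against $S^{*\prime}$, for each $S_h$ the response uses at most $N$ items of $S_h$, at most $N-1$ of which can be key items whenever $F$ also uses a deleted item, so enough unused key items remain (they are available, being outside $S^*$) to substitute for each deleted item used by $F$; since a key item has the same profit, the same weights in dimensions $2,\dots,s_B$, and principal weight at most $w_h^{\max}$, which is no larger than that of any deleted item, the substitution keeps $F$ feasible and does not lower its profit, so $F$ translates into a response against $S^*$ of equal value. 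One also checks that deleting these items only adds to $S_h\setminus S^*$ items that come no earlier than the $N$-th key item, so the key items do not change; hence in $S^{*\prime}$ one has $S^{*\prime}\cap S_h=\{\,j\in S_h:\ j\text{ precedes the }N\text{-th key item and is not a key item}\,\}$, which is read off directly from the guess. I expect this exchange argument --- in particular verifying that the substitution of deleted items by key items is always possible and weight-feasible in every coordinate --- to be the main obstacle, though it is essentially the vector-valued analogue of the swap already used when $s_B=1$.
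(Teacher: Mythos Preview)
Your proposal is correct and follows essentially the same approach as the paper: both guess, for each subgroup $S_h$, the $\OO(s_B/\delta)$ items of $S_h\setminus S^*$ with smallest principal weight, and both arrive at the same enumeration bound $n^{\tilde\OO(s_B/\delta^{s_B+1})}$. The paper's proof is a two-line reference to Lemma~\ref{lemma:lmplw2} and simply asserts that ``there is no need for the leader to select'' items beyond the $N$-th key item; your version spells out the underlying exchange argument (replacing $S^*$ by $S^{*\prime}$ and swapping deleted items for unused key items in any follower response) that makes this assertion rigorous, which is a welcome addition but not a different route.
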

\begin{proof}
We can apply the same argument as the proof of Lemma~\ref{lemma:lmplw2}. That is, for each subgroup $S_{h}$, $ S^*\cap S_{h}$ can be determined through guessing out the following $ \frac{s_B(1+\overline{2\delta})}{\delta} $ key items in $S_{h}$: among items in $S_h\setminus S^*$, which is the item whose weight vector has the $k$-th smallest value in the first dimension for $k = 1,2,\cdots,\frac{s_B(1+\overline{2\delta})}{\delta}$? 
Thus via $n^{\OO(s_B/\delta)}$ enumerations, we can guess out all large-weight large/medium-profit items in ${S}^* \cap S_h$. Moreover, via total $n^{\tilde\OO(s_B/\delta^{s_B+1})}$ enumerations, we can guess out all large-weight large/medium-profit items in ${S}^*$. \end{proof}

To summarize, our above analysis leads to the following lemma:
\newtheorem*{l_gl}{Lemma \ref{lemma:guess_l_ge}}
\begin{l_gl} 
    With $\OO(s_B\delta)$ additive error, we can guess out all items in ${S}^*$ that have a large weight or a large profit by $n^{\tilde\OO(s_B/\delta^{s_B+1})}$ enumerations. 
\end{l_gl}

Let $\hat{I}$ 
be the set of small items, i.e., items of medium/small-profit and small-weight. Then $\tilde{I}\setminus \hat{I}$ is the set of large items. Denote by $\vex^*$ the optimal solution to \textbf{IPC}$_{\overline{2\delta}}(\tilde{I},\ve1,\ve1)$, which is corresponding to $S^*$. Then $\vex^*$ is a feasible solution to  \textbf{IPC}$_{2\delta}(\tilde{I},\ve1,\ve1)$ with an objective value of at most $s_B$. This is because that $\widetilde{OPT}_{\overline{2\delta}}\le s_B$,  and the follower in \textbf{IPC}$_{\overline{2\delta}}(\tilde{I},\ve1,\ve1)$ is stronger. In the following, we assume that large items in $S^*$ have been correctly guessed. Hence, the values in $\{x_j^*: \forall j\in \tilde{I}\setminus \hat{I}\}$ are known. We let $\vea'$ be the sum of the cost vectors of these guessed-out large items.

\subsubsection{Finding the follower’s dominant choices on large items}\label{subsection:connect_m}

Consider all the large items. Even if the leader's choice on large items is fixed, the follower may still have exponentially many different choices on the remaining large items. The goal of this subsection is to show that, among these choices of the follower, it suffices to restrict our attention to a few "dominant" choices that always outperform other choices.    

For simplicity, we re-index items such that $\hat{I}= \{1,2,\cdots,\hat{n}\}$, then $\tilde{I} \setminus\hat{I} = \{\hat{n}+1,\cdots, n\}$, where $\hat{n}\le n$. 
Let $\veb_{2\delta}= (1,1+2\delta,\cdots, 1+2\delta)$, which is the follower's budget vector in \textbf{IPC}$_{2\delta}(\tilde{I},\ve1,\ve1)$. 
Denote by $I'$ the set of items in $\tilde{I}\setminus \hat{I}$ which are \textit{not} selected by the leader. Note that we have correctly guessed out large items in $S^*$ and the objective value of \textbf{IPC}$_{2\delta}(\tilde{I},\ve1,\ve1)$ for $\vex = \vex^*$ is at most $s_B$, thus the follower cannot select items from $I'$ with total profit larger than $s_B$. Since $I'$ is a subset of large-profit or large-weight items, the follower can select at most $\OO({s_B}/{\delta})$ items from $I'$. 

Let $V$ denote the largest integer satisfying $\epsilon(1+\epsilon)^{V}\le 1+2\delta$. For each integer $ k \in [1,1+s_B/{\epsilon}] $ and each $\ve v := (v_2,v_3,\cdots,v_{s_B})$ where $ v_i \in \{-1,0,1,2,\cdots,V\} $, we define the following sub-problem: Among all follower's choices on $I^{\prime}$, find out one whose summation of weight vectors has the smallest value in first dimension, and this choice must satisfy: (i) the summation of profits is within $[(k-1)\epsilon,k\epsilon)$; (ii) in each dimension $i\in [2,s_B]$, the \textit{remaining} follower's budget is within $[l_{v_i},r_{v_i})$, where $[l_{v_i}, r_{v_i}) = [0,\epsilon)$ if $v_i = -1 $, otherwise $[l_{v_i}, r_{v_i}] = [\epsilon(1+\epsilon)^{v_{i}},\epsilon(1+\epsilon)^{v_{i}+1})$. The sub-problem is formulated as follows:
    	\begin{align*}
    	\textbf{SP}(k;\vev):
    	\min\limits_{y_j:j\in I'} \hspace{2mm} & \sum_{j\in I'} \tilde{\ve B}_j[1] y_j&\\
    	s.t.\hspace{2mm} & \sum_{j\in I'}\tilde{p}_jy_j\in [(k-1)\epsilon,k\epsilon) &\\
    	& \sum_{j\in I'}\tilde{\ve B}_j[1]y_j \le 1& \\
    	& 1+2\delta -\sum_{j\in I'} \tilde{\ve B}_j[i] y_j \in [l_{v_i}, r_{v_i}), \indent i = 2,3,\cdots, s_B& \\
    	&y_j \in \{0,1\}, j\in I'&
    	\end{align*}
Note that there are at most $\tilde\OO({s_B}/{\epsilon^{s_B}})$ distinct sub-problems. Denote by $\textbf{opt}(k;\ve v)$ an optimal solution of $\textbf{SP}(k;\vev)$, if it exists. Let $\Theta = \{\textbf{opt}(k;\vev)|k,v_i \in \mathbb{Z}; k \in [1, \frac{\epsilon+s_B}{\epsilon}]; v_i \in [-1, V] \text{ for  } i= 2,3,\cdots,s_B\}$, which contains the follower's $\tilde\OO({s_B}/{\epsilon^{s_B}})$ possible choices on $I'$. Recall that the follower can select at most $\OO({s_B}/{\delta})$ items from $I'$. Thus via $n^{\OO({s_B}/{\delta})}$ enumerations, for each sub-problem $\textbf{SP}(k;\vev)$, we can return its optimal solution or assert there does not exist a feasible solution for this sub-problem, which means that $\Theta$ can be determined in polynomial time. We will show that to select small items to accompany the large items in $S^*$, it suffices to consider the choices in $\Theta$. 

For the $l$-th choice in $\Theta$, let $\veb_l \le \veb_{2\delta}$ and $P_l$ be the summation of weight vectors and the summation of profits, respectively. Then the follower has a residual budget of $\veb_{2\delta}-\veb_l$. Assuming that the $l$-th choice in $\Theta$ is $\textbf{opt}(k;\vev)$,
define $\ved_l  = (1-\veb_l[1],r_{v_2}, \cdots, r_{v_{s_B}})$, it follows that $\veb_{2\delta}-\veb_l \le \ved_l$, moreover, we have the following observation.

\begin{observation}\label{obs:bl+dl}
      Given the $l$-th choice in $\Theta$, say $\textbf{opt}(k,\vev)$, then any feasible solution $\{y_j:j\in I'\}$ of \textbf{SP}$(k,\vev)$ satisfies 
      \begin{itemize}
          \item $\sum_{j\in I'}\tilde{\ve B}_j[1] y_j + \ved_{l}[1] \le 1$;
          \item $\sum_{j\in I'}\tilde{\ve B}_j[i] y_j + \ved_{l}[i] \le 1+\overline{2\delta}, \hspace{2mm} i=2,3,\cdots,s_B$,
      \end{itemize}
      where $\overline{2\delta} = 2\delta+\epsilon(1+2\delta)$.
\end{observation}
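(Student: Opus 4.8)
The plan is to prove Observation~\ref{obs:bl+dl} by checking the two stated inequalities one coordinate at a time, using nothing beyond the definition of $\ved_l$, the defining constraints of \textbf{SP}$(k;\vev)$, and the definition of $V$.

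For the principal coordinate, recall that $\ved_l[1]=1-\veb_l[1]$, where $\veb_l$ is the sum of the weight vectors of the $l$-th choice \textbf{opt}$(k;\vev)$; in particular $\veb_l[1]$ is the optimum of the minimization defining \textbf{SP}$(k;\vev)$, so the $l$-th choice itself attains $\sum_{j\in I'}\tilde{\ve B}_j[1]y_j=\veb_l[1]$, whence $\sum_{j\in I'}\tilde{\ve B}_j[1]y_j+\ved_l[1]=1$. Even for an arbitrary feasible solution of \textbf{SP}$(k;\vev)$ the principal weight is at most $1$ by the corresponding constraint, so the principal dimension never causes trouble and the content of the observation lies in the other coordinates.

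For a coordinate $i\in\{2,\dots,s_B\}$, every feasible solution of \textbf{SP}$(k;\vev)$ satisfies $1+2\delta-\sum_{j\in I'}\tilde{\ve B}_j[i]y_j\in[l_{v_i},r_{v_i})$, and therefore $\sum_{j\in I'}\tilde{\ve B}_j[i]y_j\le 1+2\delta-l_{v_i}$. Since $\ved_l[i]=r_{v_i}$, adding $\ved_l[i]$ to both sides gives $\sum_{j\in I'}\tilde{\ve B}_j[i]y_j+\ved_l[i]\le 1+2\delta+(r_{v_i}-l_{v_i})$, so it suffices to show the bucket width satisfies $r_{v_i}-l_{v_i}\le\epsilon(1+2\delta)$, since then the right-hand side equals $1+2\delta+\epsilon(1+2\delta)=1+\overline{2\delta}$. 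When $v_i=-1$ the bucket is $[0,\epsilon)$, of width $\epsilon\le\epsilon(1+2\delta)$. When $v_i\ge 0$ the bucket is $[\epsilon(1+\epsilon)^{v_i},\epsilon(1+\epsilon)^{v_i+1})$, of width $\epsilon\cdot\epsilon(1+\epsilon)^{v_i}$; here I invoke the definition of $V$: since $v_i\le V$ and $\epsilon(1+\epsilon)^{V}\le 1+2\delta$, we get $\epsilon(1+\epsilon)^{v_i}\le 1+2\delta$, so the width is at most $\epsilon(1+2\delta)$.

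Combining the coordinates proves the observation. The only step beyond routine substitution is the bucket-width estimate on the non-principal coordinates: one must split off the catch-all bucket $[0,\epsilon)$ (the $v_i=-1$ case, whose width is $\epsilon$ rather than a geometric width) from the geometric buckets, and use the maximality of $V$ to keep $\epsilon(1+\epsilon)^{v_i}$ below $1+2\delta$. Everything else is a direct unpacking of the definitions of $\ved_l$, $\veb_l$ and $\overline{2\delta}$, and this observation is exactly what later lets one replace the follower's $I'$-budget by the padded vector $\ved_l$ while staying within the $\overline{2\delta}$-augmented budget.
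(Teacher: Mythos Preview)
Your bucket-width argument for the non-principal coordinates $i\ge 2$ is correct and is exactly the computation needed; the paper itself gives no proof of this observation, so your treatment of the two cases $v_i=-1$ and $v_i\ge 0$, together with the appeal to the definition of $V$ to cap $\epsilon(1+\epsilon)^{v_i}\le 1+2\delta$, fills the gap cleanly.

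For the principal coordinate there is a genuine issue. Your sentence ``Even for an arbitrary feasible solution of $\textbf{SP}(k;\vev)$ the principal weight is at most $1$ by the corresponding constraint, so the principal dimension never causes trouble'' does not establish the first bullet for a general feasible solution, and in fact it cannot: since $\veb_l[1]$ is the \emph{minimum} of $\sum_{j\in I'}\tilde{\ve B}_j[1]y_j$ over feasible solutions of $\textbf{SP}(k;\vev)$, any feasible $\{y_j\}$ satisfies $\sum_{j\in I'}\tilde{\ve B}_j[1]y_j\ge \veb_l[1]$, whence $\sum_{j\in I'}\tilde{\ve B}_j[1]y_j+(1-\veb_l[1])\ge 1$, the reverse inequality. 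You do correctly verify the first bullet (with equality) when $\{y_j\}$ is the minimizer $\textbf{opt}(k;\vev)$ itself, and that is precisely the instance invoked later in the proof of Lemma~\ref{lemma:MBi_1}; so the right fix is to say explicitly that the first bullet holds for the optimal solution (equivalently, $\veb_l[1]+\ved_l[1]=1$), rather than to gesture at the constraint $\sum_{j\in I'}\tilde{\ve B}_j[1]y_j\le 1$, which is not the relevant bound here.
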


Recall that the leader has a residual budget of $\ve1 - \vea'$ for items in $\hat{I}$, and by guessing we already know the value of $x^*_j$ for $j\in \tilde{I}\setminus \hat{I}$. Define $\vey[\hat{I}]= (y_1,y_2,\cdots,y_{\hat{n}})$. Consider the following bilevel program:
\begin{subequations}
	\begin{align}
	\textbf{MBi-IP}(\tilde{I},\ve1,\ve1):
	\min_{\vex }\quad & P_{l}+ \sum_{j\in \hat{I}}\tilde{p}_{j}y_{j}& \nonumber \\
	s.t. \hspace{2mm}& \sum_{j\in \hat{I}}\ve A_{j}x_{j}\le \ve1-\vea' &\label{MBi-IP:a1}\\
	&x_j = x_j^*, \hspace{7mm} \forall j\in \tilde{I} \setminus \hat{I}& \label{MBi-IP:b1}\\
	& x_j \in \{0,1\}, \hspace{2mm} \forall j\in \hat{I} & \label{MBi-IP:c1}\\
	& \text{where integer } l, \vey[\hat{I} ] \text{ solves the following:}& \nonumber\\
	&\max_{ 1\le l \le |\Theta| } \max_{\vey[\hat{I} ]} \quad P_{l}+ \sum_{j\in \hat{I}}\tilde{p}_{j}y_{j} & \label{MBi-IP:d}\\
	&{ }\hspace{10mm} s.t. \hspace{2mm}  \sum_{j\in \hat{I}} \tilde{\ve B}_{j}y_{j}\le \ved_l &\label{MBi-IP:a2}\\
	&\hspace{17mm} y_{j} \le 1-x_j, \hspace{2mm} \forall j\in \hat{I} \label{MBi-IP:b2} \\
	&\hspace{17mm}  y_j\in \{0,1\}, \hspace{2mm} \forall j\in \hat{I} \label{MBi-IP:c2}&
	\end{align}
\end{subequations}

We have the following lemma.
\begin{lemma}\label{lemma:MBi_1}
 Assume that $OPT \le 1$. Let $\vex'$ be any feasible solution to $\textbf{MBi-IP}(\tilde{I},\ve1,\ve1)$, then $\vex'$ is also a feasible solution to \textbf{IPC}$_{2\delta}(\tilde{I},\ve1,\ve1)$.
    Let $Obj^{MBi}(\vex')$ and $\widetilde{Obj}_{2\delta}(\vex')$ be the objective values of $\textbf{MBi-IP}(\tilde{I},\ve1,\ve1)$ and \textbf{IPC}$_{2\delta}(\tilde{I},\ve1,\ve1)$ for $\vex = \vex'$, respectively. We have 
    $$
    \widetilde{Obj}_{2\delta}(\vex') \le Obj^{MBi}(\vex')+\epsilon.
    $$
    Furthermore, let $OPT^{Mbi}$ be the optimal objective value of $\textbf{MBi-IP}(\tilde{I},\ve1,\ve1)$. We have 
    $$
    OPT^{MBi} \le s_B.
    $$
\end{lemma}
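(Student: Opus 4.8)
The plan is to prove the three assertions in order, closely mirroring the structure of the proof of Lemma~\ref{claim:s_1} in the special case $s_B=1$, but using the rounded-dimension information to control the follower's choices on large items.

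\textbf{Feasibility.} First I would check that any feasible $\vex'$ for $\textbf{MBi-IP}(\tilde{I},\ve1,\ve1)$ is feasible for $\textbf{IPC}_{2\delta}(\tilde{I},\ve1,\ve1)$. The leader's constraints~\eqref{MBi-IP:a1}--\eqref{MBi-IP:c1} together with the fact that $\vea'$ is the total cost of the guessed-out large items (which all have $x_j^*=1$ in $\vex'$) give $\ve A\vex'\le\ve1$, exactly constraint~\eqref{IPC_aug:b} region for the leader; the $\{0,1\}$-integrality is immediate. So the only content is that the leader's side matches up, which is routine.

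\textbf{The inequality $\widetilde{Obj}_{2\delta}(\vex')\le Obj^{MBi}(\vex')+\epsilon$.} Fix $\vex'$. The value $\widetilde{Obj}_{2\delta}(\vex')$ is the optimum of the follower's integer program over all items (large and small) available after removing $\vex'$, subject to $\sum_j\tilde{\ve B}_j[1]y_j\le 1$ and $\sum_j\tilde{\ve B}_j[i]y_j\le 1+2\delta$ for $i\ge2$. Let $\vey^\dagger$ be an optimal follower response there, and split it into its restriction $\vey^\dagger|_{I'}$ on the remaining large items $I'$ and $\vey^\dagger|_{\hat I}$ on the small items. Since $OPT\le1$ and large items were guessed correctly, $\sum_{j\in I'}\tilde p_j y^\dagger_j\le s_B$, so its total profit falls in some band $[(k-1)\epsilon,k\epsilon)$; and for each $i\in[2,s_B]$ the quantity $1+2\delta-\sum_{j\in I'}\tilde{\ve B}_j[i]y^\dagger_j$ lies in some interval $[l_{v_i},r_{v_i})$ (using that this residual is in $[0,1+2\delta]$ and $\epsilon(1+\epsilon)^V\le 1+2\delta$ by definition of $V$). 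Hence $\vey^\dagger|_{I'}$ is a feasible solution to the sub-problem $\textbf{SP}(k;\vev)$ for the corresponding $(k,\vev)$, so that sub-problem has an optimal solution $\textbf{opt}(k;\vev)\in\Theta$, say the $l$-th choice, with $\veb_l[1]\le\sum_{j\in I'}\tilde{\ve B}_j[1]y^\dagger_j$ and $P_l>\sum_{j\in I'}\tilde p_jy^\dagger_j-\epsilon$. Now build the follower response in $\textbf{MBi-IP}$ that uses choice $l$ on $I'$ and keeps $\vey^\dagger|_{\hat I}$ on $\hat I$; I must check it is feasible, i.e.\ $\sum_{j\in\hat I}\tilde{\ve B}_jy_j\le\ved_l$. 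In dimension~1 this is $\sum_{j\in\hat I}\tilde{\ve B}_j[1]y^\dagger_j\le 1-\veb_l[1]$, which follows since the dimension-1 weight of $\vey^\dagger$ is at most $1$ and $\veb_l[1]$ is at most the dimension-1 weight of $\vey^\dagger|_{I'}$. In dimension $i\ge2$ I need $\sum_{j\in\hat I}\tilde{\ve B}_j[i]y^\dagger_j\le r_{v_i}$; but the left side is at most $1+2\delta-\sum_{j\in I'}\tilde{\ve B}_j[i]y^\dagger_j$, which lies in $[l_{v_i},r_{v_i})$, hence is $<r_{v_i}=\ved_l[i]$. So this response is feasible for the inner problem of $\textbf{MBi-IP}$, giving $Obj^{MBi}(\vex')\ge P_l+\sum_{j\in\hat I}\tilde p_jy^\dagger_j\ge \widetilde{Obj}_{2\delta}(\vex')-\epsilon$, which is the claim.

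\textbf{The bound $OPT^{MBi}\le s_B$.} I would take $\vex^*$, the optimal leader solution of $\textbf{IPC}_{\overline{2\delta}}(\tilde I,\ve1,\ve1)$ restricted/extended to a feasible point of $\textbf{MBi-IP}$ (its large-item part matches the guessed $x_j^*$, by hypothesis; its small-item part is whatever $\vex^*$ prescribes, and this is feasible for~\eqref{MBi-IP:a1} since $\ve A\vex^*\le\ve1$). For this leader solution the inner maximization of $\textbf{MBi-IP}$ ranges only over $\ell\in\Theta$ and the corresponding $\ved_l$-bounded follower problem; by Observation~\ref{obs:bl+dl}, any such follower response, combined with the corresponding $\textbf{opt}(k;\vev)$ on $I'$, yields an item set whose total weight vector is coordinate-wise at most $(1,1+\overline{2\delta},\dots,1+\overline{2\delta})$, i.e.\ a feasible follower response in $\textbf{IPC}_{\overline{2\delta}}(\tilde I,\ve1,\ve1)$ against $\vex^*$, of the same total profit $P_l+\sum_{j\in\hat I}\tilde p_jy_j$. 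Therefore $Obj^{MBi}(\vex^*)\le\widetilde{OPT}_{\overline{2\delta}}\le s_B$, where the last inequality is what was established just before this lemma (via Lemma~\ref{lemma:r_theta} and $OPT\le1$). Since $OPT^{MBi}\le Obj^{MBi}(\vex^*)$, we are done.

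\textbf{Main obstacle.} I expect the delicate point to be the feasibility bookkeeping in the second part: one must verify that gluing the $\Theta$-choice $\textbf{opt}(k;\vev)$ on $I'$ onto the small-item part $\vey^\dagger|_{\hat I}$ respects the budget $\ved_l$ in every coordinate, and that the profit loss incurred when replacing $\vey^\dagger|_{I'}$ by $\textbf{opt}(k;\vev)$ is at most $\epsilon$ (not more). This hinges on the precise definition of the interval $[l_{v_i},r_{v_i})$ and on $\textbf{opt}(k;\vev)$ being the dimension-1-weight minimizer among all $I'$-choices landing in the same profit band and the same residual-budget boxes — exactly the design of $\textbf{SP}(k;\vev)$. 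Getting the direction of every inequality right (rounded weights go up, profits go down, residual budgets shrink to $\ved_l$) is the bulk of the careful work; everything else is parallel to the $s_B=1$ case.
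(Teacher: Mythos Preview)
Your proposal is correct and follows essentially the same approach as the paper's proof: the same decomposition of the follower's optimal response into large and small parts, the same use of the profit-band/residual-box indexing to locate the matching dominant choice in $\Theta$, and the same appeal to Observation~\ref{obs:bl+dl} together with $\widetilde{OPT}_{\overline{2\delta}}\le s_B$ for the third part. The bookkeeping you flag as the main obstacle is exactly where the paper does its work as well.
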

\begin{proof}

Let $\vex'$ be any feasible solution to $\textbf{MBi-IP}(\tilde{I},\ve1,\ve1)$, denote by $Obj^{MBi}(\vex')$ and $\widetilde{Obj}_{2\delta}(\vex')$ the objective values of $\textbf{MBi-IP}(\tilde{I},\ve1,\ve1)$ and \textbf{IPC}$_{2\delta}(\tilde{I},\ve1,\ve1)$ for $\vex = \vex'$, respectively. Note that $\widetilde{Obj}_{2\delta}(\vex')$ is the optimal objective of the following program:
		\begin{align*}
		\widetilde{\textbf{IP}}_{2\delta}(\vex'): \quad 	\max\limits_{\vey}\quad  & \tilde{\vep}\vey&\\
		& \sum_{j=1}^n \tilde{\ve B}_j[1]y_j\le 1&  \\
		&  \sum_{j=1}^n \tilde{\ve B}_j[i]y_j\le 1+2\delta, \hspace{2mm} \forall 2\le i\le s_B&  \\
		&   \vey \le \ve1-\vex' &  \\
		& \vey\in \{0,1\}^n & 
		\end{align*}

Notice that $x'_j = x_j^* (\forall j\in \tilde{I} \setminus \hat{I})$ and the objective value of \textbf{IPC}$_{2\delta}(\tilde{I},\ve1,\ve1)$ for $\vex = \vex^*$ is at most $s_B$. Thus in $\widetilde{\textbf{IP}}_{2\delta}(\vex')$, the maximum profit the follower could achieve from $\tilde{I} \setminus \hat{I}$ is at most $s_B$. Let $\vey'$ be an optimal solution to $\widetilde{\textbf{IP}}_{2\delta}(\vex')$, then there exists a choice $\textbf{opt}(k;\vev) \in \Theta$ such that 
	\begin{itemize}
	    \item $\sum_{j \in {\tilde{I}\setminus \hat{I}}} \tilde{p}_jy'_j \in [(k-1)\epsilon, k\epsilon) $;
	    \item $1+2\delta - \sum_{j \in {\tilde{I}\setminus \hat{I}}} \tilde{\ve B}_j[i]y'_j \in [l_{v_i}, r_{v_i})$, $\quad i= 2,3,\cdots, s_B.$
	\end{itemize}

Assuming that $\textbf{opt}(k;\vev)$ is the $l$-th choice in $\Theta$, then it follows that
\begin{itemize}
    \item $P_l \in [(k-1)\epsilon, k\epsilon) $;
    \item $\sum_{j\in \hat{I}} \tilde{\ve B}_j[1]y'_j \le 1 - \veb_{l}[1] = \ved_{l}[1]$;
    \item $\sum_{j\in \hat{I}} \tilde{\ve B}_j[i]y'_j \le 1+2\delta - \sum_{j \in {\tilde{I}\setminus \hat{I}}} \tilde{\ve B}_j[i]y'_j < r_{v_{i}}= \ved_{l}[i]$, $\quad i= 2,3,\cdots, s_B.$
\end{itemize}
Thus $\sum_{j \in {\tilde{I}\setminus \hat{I}}} \tilde{p}_jy'_j \le P_l+\epsilon$, and $\vey'[\hat{I}]=(y'_1,y'_2,\cdots, y'_{\hat{n}})$ is a feasible solution of the following program:
    		\begin{align*}
    		{\textbf{IP}}(l;\vex'): \quad\max\limits_{\vey[\hat{I}]}\quad  & \sum_{j\in \hat{I}}\tilde{p}_{j}y_{j}&\\
    		&  \sum_{j\in \hat{I}} \tilde{\ve B}_jy_j\le \ved_l&  \\
    		&   y_j \le 1-x'_j , \forall j \in \hat{I} & \\
    		& y_j\in \{0,1\} , \forall j \in \hat{I}& 
    		\end{align*}
This means that $l$ and $\vey'[\hat{I}]$ form a feasible solution to the program \eqref{MBi-IP:d} - \eqref{MBi-IP:c2} when $\vex = \vex'$. Then we have 
$$\tilde{\vep}\vey' = \sum_{j\in {\tilde{I}\setminus}\hat{I}}\tilde{p}_jy'_j+\sum_{j\in \hat{I}}\tilde{p}_jy'_j \le P_l+\epsilon+\sum_{j\in \hat{I}}\tilde{p}_jy'_j\le Obj^{MBi}(\vex')+\epsilon.$$
Thus the first part of Lemma~\ref{lemma:MBi_1} is proved. 

It remains to prove that the optimal objective value of \textbf{MBi-IP}$(\tilde{I},\ve1,\ve1)$ is at most $s_B$. Towards this, we compare $\textbf{MBi-IP}(\tilde{I},\ve1,\ve1)$ with   \textbf{IPC}$_{\overline{2\delta}}(\tilde{I},\ve1,\ve1)$. Recall that $\vex^*$ and $\widetilde{OPT}_{\overline{2\delta}} $ are the optimal solution and the optimal objective value of \textbf{IPC}$_{\overline{2\delta}}(\tilde{I},\ve1,\ve1)$, respectively. Notice that $\widetilde{OPT}_{\overline{2\delta}} \le s_B$ and $\vex^*$ is feasible for \textbf{MBi-IP}$(\tilde{I},\ve1,\ve1)$, it suffices to prove that the objective value of \textbf{MBi-IP}$(\tilde{I},\ve1,\ve1)$ for $\vex = \vex^*$ is no larger than $\widetilde{OPT}_{\overline{2\delta}}$.

Note that $\widetilde{OPT}_{\overline{2\delta}}$ is the optimal objective value of the following program:
		\begin{align*}
		\widetilde{\textbf{IP}}_{\overline{2\delta}}(\vex^{*}): \quad 	\max\limits_{\vey}\quad  & \tilde{\vep}\vey&\\
		& \sum_{j=1}^n \tilde{\ve B}_j[1]y_j\le 1&  \\
		&  \sum_{j=1}^n \tilde{\ve B}_j[i]y_j\le 1+\overline{2\delta}, \hspace{2mm} \forall 2\le i\le s_B&  \\
		&   \vey \le \ve1-\vex^{*} &  \\
		& \vey\in \{0,1\}^n & 
		\end{align*}

Given $\vex^*$, there exist $l' \in \{ 1,2,\cdots, |\Theta|\}$ and $\vey^{l'}[\hat{I}] :=(y^{l'}_1,y^{l'}_2,\cdots,y^{l'}_{\hat{n}})$ which form an optimal solution to the program \eqref{MBi-IP:d} - \eqref{MBi-IP:c2} when $\vex = \vex^*$. Note that the objective value of \textbf{MBi-IP}$(\tilde{I},\ve1,\ve1)$ for $\vex =  \vex^*$ is 
$P_{l'}+\sum_{j\in \hat{I}}\tilde{p}_j y^{l'}_j$. Define $\vey\in \{0,1\}^n$ as follows: $y_j = y^{l'}_j$ if $j\in \hat{I}$; $y_j = 1$ if item $j$ is selected in the $l'$-th choice in $\Theta$; $y_j = 0$ otherwise. Then we have
$$\sum\limits_{j\in \tilde{I}\setminus \hat{I}} \tilde{\ve B}_j[i]y_j +\sum\limits_{j\in \hat{I}} \tilde{\ve B}_j[i]y_j \le \veb_{l'}[i]+\ved_{l'}[i],\quad i = 1,2,\cdots,s_B.$$
It follows that $\vey$ is a feasible solution to $\widetilde{\textbf{IP}}_{\overline{2\delta}}(\vex^{*})$ (by Observation~\ref{obs:bl+dl}). Thus $P_{l'} + \sum_{j\in \hat{I}}\tilde{p}_j y^{l'}_j$ is no larger than $\widetilde{OPT}_{\overline{2\delta}}$, and the second part of Lemma~\ref{lemma:MBi_1} is proved.\end{proof}

In the following, we focus on $\textbf{MBi-IP}(\tilde{I},\ve1,\ve1)$. Assuming $OPT\le1$, our goal is to design a polynomial time algorithm which returns a feasible solution to $\textbf{MBi-IP}(\tilde{I},\ve1,\ve1)$ with an objective value of at most $s_B+\OO(\epsilon)$. According to Lemma~\ref{lemma:MBi_1} and Lemma~\ref{lemma:r_theta}, this solution is exactly the solution that Lemma~\ref{lemma:residue-main} requires.

\subsection{Handling Small Items} \label{subsec:small-items_2}
The goal of this section is to select \textit{small items} to accompany the large items of ${S}^*$ that have been guessed out in Section~\ref{subsec:large_items_l}. It suffices to solve $\textbf{MBi-IP}(\tilde{I},\ve1,\ve1)$. Towards this, we first obtain a linear relaxation of $\textbf{MBi-IP}(\tilde{I},\ve1,\ve1)$ where both the leader and the follower can select items fractionally. To solve the relaxation, we construct a single level LP and find its extreme point optimal fractional solution (see Section~\ref{sub:frac}). Finally we round this fractional solution to a desired approximation solution to $\textbf{MBi-IP}(\tilde{I},\ve1,\ve1)$ (see Section~\ref{sub:rounding_frac}).

\subsubsection{Finding a fractional solution}\label{sub:frac}
Replace \eqref{MBi-IP:c1} and \eqref{MBi-IP:c2} in \textbf{MBi-IP}$(\tilde{I},\ve1,\ve1)$ with $x_j \in [0,1]( \forall j\in \hat{I})$ and $y_j\in [0,1] (\forall j\in \hat{I})$, respectively, we obtain a relaxation of \textbf{MBi-IP}$(\tilde{I},\ve1,\ve1)$ as follows:
\begin{subequations}
	\begin{align}
	\textbf{MBi-IP}_r(\tilde{I},\ve1,\ve1):
	\min_{\vex }\quad & P_{l}+ \sum_{j\in \hat{I}}\tilde{p}_{j}y_{j}& \nonumber \\
	s.t. \hspace{2mm}& \sum_{j\in \hat{I}}\ve A_{j}x_{j}\le \ve1-\vea' &\label{MBi-IP_r:a1}\\
	&x_j = x_j^*, \hspace{7mm} \forall j\in \tilde{I} \setminus \hat{I}& \label{MBi-IP_r:b1}\\
	& x_j \in [0,1], \hspace{4mm} \forall j\in \hat{I} & \label{MBi-IP_r:c1}\\
	& \text{where integer } l, \vey[\hat{I} ] \text{ solves the following:}& \nonumber\\
	&\max_{ 1\le l\le |\Theta| } \max_{\vey[\hat{I} ]} \quad P_{l}+ \sum_{j\in \hat{I}}\tilde{p}_{j}y_{j} & \label{MBi-IP_r:d}\\
	&{ }\hspace{10mm} s.t. \hspace{2mm}  \sum_{j\in \hat{I}} \tilde{\ve B}_{j}y_{j}\le \ved_l &\label{MBi-IP_r:a2}\\
	&\hspace{17mm} y_{j} \le 1-x_j, \hspace{2mm} \forall j\in \hat{I} \label{MBi-IP_r:b2} \\
	&\hspace{17mm}  y_j\in [0,1], \hspace{4mm} \forall j\in \hat{I} \label{MBi-IP_r:c2}&
	\end{align}
\end{subequations}

We have the following lemma. 
\begin{lemma}\label{lemma:opt^MBi_r}
        Assume that $OPT \le 1$. Let $OPT^{MBi}_r$ and $OPT^{MBi}$ be the optimal objective values of $\textbf{MBi-IP}_r(\tilde{I},\ve1,\ve1)$ and  $\textbf{MBi-IP}(\tilde{I},\ve1,\ve1)$, respectively. Then we have $$OPT^{MBi}_r\le OPT^{MBi}+\OO(s_B \delta)\le s_B + \OO(s_B \delta).$$%
\end{lemma}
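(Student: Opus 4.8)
The plan is to prove the two inequalities separately. The second inequality, $OPT^{MBi}+\OO(s_B\delta)\le s_B+\OO(s_B\delta)$, is immediate from Lemma~\ref{lemma:MBi_1}, which already asserts $OPT^{MBi}\le s_B$. So the real content is the first inequality $OPT^{MBi}_r\le OPT^{MBi}+\OO(s_B\delta)$, and this is the analogue of Lemma~\ref{claim:ptas_1} from the special case. Exactly as there, the subtlety is that one cannot compare \textbf{MBi-IP}$_r$ with \textbf{MBi-IP} directly, because in the relaxation \emph{both} players become stronger (both may pack fractionally), and these two effects push the objective in opposite directions. So I would introduce an intermediate bilevel program \textbf{MBi-IP}$_{in}(\tilde I,\ve1,\ve1)$ obtained from \textbf{MBi-IP}$_r(\tilde I,\ve1,\ve1)$ by restoring the leader's integrality constraint \eqref{MBi-IP_r:c1} to $x_j\in\{0,1\}$ for $j\in\hat I$, i.e. only the follower is allowed to pack small items fractionally. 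Let $OPT^{MBi}_{in}$ be its optimal value.

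First I would show $OPT^{MBi}_r\le OPT^{MBi}_{in}$: in \textbf{MBi-IP}$_r$ the follower faces a leader who is allowed to fractionally pack small items, so for any fixed follower response the leader is only more powerful, giving the inequality. Second I would show $OPT^{MBi}_{in}\le OPT^{MBi}+\OO(s_B\delta)$. Here the leader's solution set is identical in the two programs (integral on $\hat I$, fixed on $\tilde I\setminus\hat I$), so I fix the leader to an optimal integral solution $\bar\vex^*$ of \textbf{MBi-IP}$(\tilde I,\ve1,\ve1)$. For this fixed $\bar\vex^*$, the follower in \textbf{MBi-IP}$_{in}$ solves, for each of the $|\Theta|$ dominant choices $l$, a fractional packing problem with $s_B$ knapsack constraints \eqref{MBi-IP_r:a2} on the small items of $\hat I$; an optimal fractional solution has at most $s_B$ fractional coordinates, and since every small item has profit $\tilde p_j\le\delta$, rounding these fractional coordinates down loses at most $s_B\delta$ in profit for each fixed $l$. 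Taking the max over $l$, the follower's integral best response in \textbf{MBi-IP} achieves within $s_B\delta$ of the follower's fractional best response in \textbf{MBi-IP}$_{in}$ for the \emph{same} leader solution $\bar\vex^*$. Combined with the observation that the leader in \textbf{MBi-IP}$_{in}$ may do even better than $\bar\vex^*$, this yields $OPT^{MBi}_{in}\le OPT^{MBi}+s_B\delta$. Chaining the two steps gives $OPT^{MBi}_r\le OPT^{MBi}+\OO(s_B\delta)$, and then Lemma~\ref{lemma:MBi_1} finishes the proof.

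The step I expect to be the main obstacle is the bound on the rounding loss in the intermediate comparison: one must argue carefully that, for the fixed leader solution $\bar\vex^*$, an optimal solution of the follower's fractional subproblem (maximizing $P_l+\sum_{j\in\hat I}\tilde p_j y_j$ subject to \eqref{MBi-IP_r:a2}, \eqref{MBi-IP_r:b2}, $y_j\in[0,1]$) can be taken to be a basic/extreme-point solution with at most $s_B$ fractional components — this is where the smallness of the constraint count of \textbf{MBi-IP}$_r$ (only $s_B$ packing constraints per choice $l$, rather than exponentially many) is essential — and that zeroing out those components keeps $\vey$ feasible and loses at most $s_B\cdot\max_{j\in\hat I}\tilde p_j\le s_B\delta$. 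One must also note that the constant $P_l$ is unaffected by this rounding and that feasibility with respect to \eqref{MBi-IP_r:b2} and the upper bounds in \eqref{MBi-IP_r:a2} is preserved under rounding down. The rest is bookkeeping identical in spirit to the proof of Lemma~\ref{claim:ptas_1}.
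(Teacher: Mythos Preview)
Your proposal is correct and follows essentially the same approach as the paper: introduce the intermediate program \textbf{MBi-IP}$_{in}$ where only the follower is relaxed, show $OPT^{MBi}_r\le OPT^{MBi}_{in}$ by the stronger-leader argument, then fix the leader at an optimal integral solution of \textbf{MBi-IP}, take an extreme-point optimum of the follower's $s_B$-constraint LP for the maximizing $l$, round down the at most $s_B$ fractional coordinates (each costing at most $\delta$ since small items have $\tilde p_j\le\delta$), and invoke Lemma~\ref{lemma:MBi_1} for the final inequality. The paper's proof is essentially the same argument with the same bookkeeping.
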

\begin{proof}
It is not straightforward if we compare $\textbf{MBi-IP}_r(\tilde{I},\ve1,\ve1)$ with $\textbf{MBi-IP}(\tilde{I},\ve1,\ve1)$ directly, as both the follower and the leader become stronger in the relaxation (in the sense they can pack items fractionally). Towards this, we introduce an intermediate bilevel program $\textbf{MBi-IP}_{in}(\tilde{I},\ve1,\ve1)$, which is obtained by replacing \eqref{MBi-IP_r:c1} in $\textbf{MBi-IP}_r(\tilde{I},\ve1,\ve1)$ with $x_j\in \{0,1\} (\forall j\in \hat{I})$, that is, we only allow the follower to select items fractionally but not the leader. Denote by $OPT^{MBi}_{in}$ the optimal objective value of $\textbf{MBi-IP}_{in}(\tilde{I},\ve1,\ve1)$.
    
First, we compare $\textbf{MBi-IP}_{in}(\tilde{I},\ve1,\ve1)$ with $\textbf{MBi-IP}_r(\tilde{I},\ve1,\ve1)$. We see that in $\textbf{MBi-IP}_r(\tilde{I},\ve1,\ve1)$ the follower is facing a stronger leader who is allowed to pack items fractionally, and it thus follows that $OPT^{MBi}_{r}\le OPT^{MBi}_{in}$.
    
Next, we compare $\textbf{MBi-IP}_{in}(\tilde{I},\ve1,\ve1)$ with $\textbf{MBi-IP}(\tilde{I},\ve1,\ve1)$. Note that the leader’s solution must be integral in both programs. 
Let $\vex^{MBi}$ be an optimal solution to $\textbf{MBi-IP}(\tilde{I},\ve1,\ve1)$, then $\vex^{MBi}$ is also a feasible solution to $\textbf{MBi-IP}_{in}(\tilde{I},\ve1,\ve1)$. Once the leader fixes his solution as $\vex^{MBi}$ in $\textbf{MBi-IP}_{in}(\tilde{I},\ve1,\ve1)$, there exist $l' \in \{1,2,\cdots, |\Theta|\}$ and $\vey'[\hat{I}]$ which form an optimal solution to the program \eqref{MBi-IP_r:d} - \eqref{MBi-IP_r:c2} when $\vex = \vex^{MBi}$. Then the objective value of $\textbf{MBi-IP}_{in}(\tilde{I},\ve1,\ve1)$ for $\vex = \vex^{MBi}$ is $Obj^{MBi}_{in} =  P_{l'}+\sum_{j\in \hat{I}} \tilde{p}_{j}y'_{j}.$ We have the following two observations:
    \begin{itemize}
        \item $OPT^{MBi}_{in} \le Obj^{MBi}_{in}$. This is because that 
        $\vex^{MBi}$ is just a feasible solution to $\textbf{MBi-IP}_{in}(\tilde{I},\ve1,\ve1)$, while the optimal solution of the leader may achieve an even smaller objective value.
        \item $Obj^{MBi}_{in} \le OPT^{MBi}+s_B \delta$.
    \end{itemize}
Towards the second observation, notice that $\sum_{j\in \hat{I}} \tilde{p}_{j}y'_{j}$ is the optimal objective value of the following linear program:
    		\begin{align*}
    		{\textbf{LP}}(l';\vex^{MBi}): \quad\max\limits_{\vey[\hat{I}]}\quad  & \sum_{j\in \hat{I}}\tilde{p}_{j}y_{j}&\\
    		&  \sum_{j\in \hat{I}} \tilde{\ve B}_jy_j\le \ved_{l'}&  \\
    		&   y_j \le 1-x^{MBi}_j, \hspace{1mm} \forall j \in \hat{I} & \\
    		& y_j\in [0,1], \hspace{6mm} \forall j \in \hat{I}& 
    		\end{align*}
Let $\vey'[\hat{I}]$ be an extreme point optimal fractional solution to {\textbf{LP}}$(l';\vex^{MBi})$, we define $\vey[\hat{I}] \in \{0,1\}^{\hat{n}}$ as follows: $y_j = 1$ if $y'_j = 1$; $y_j = 0$ otherwise. Note that $\vey[\hat{I}]$ is a feasible solution to the following integer program:
    		\begin{align*}
    		{\textbf{IP}}(l';\vex^{MBi}): \quad\max\limits_{\vey[\hat{I}]}\quad  & \sum_{j\in \hat{I}}\tilde{p}_{j}y_{j}&\\
    		&  \sum_{j\in \hat{I}} \tilde{\ve B}_jy_j\le \ved_{l'}&  \\
    		&   y_j \le 1-x^{MBi}_j, \hspace{1mm}\forall j \in \hat{I} & \\
    		& y_j\in \{0,1\},\hspace{4mm} \forall j \in \hat{I}& 
    		\end{align*}
Then $l'$ and $\vey[\hat{I}]$ form a feasible solution to the program \eqref{MBi-IP:d} - \eqref{MBi-IP:c2} when $\vex = \vex^{MBi}$, thus we have $P_{l'} + \sum_{j\in \hat{I}} \tilde{p}_{j}y_{j} \le OPT^{MBi}$. 
Notice that there are at most $s_B$ variables in $\vey'[\hat{I}]$ taking fractional values and $\tilde{p}_j \le \delta $ for $j\in \hat{I}$, it thus follows that $\sum_{j\in \hat{I}} \tilde{p}_{j}y'_{j} \le \sum_{j\in \hat{I}} \tilde{p}_{j}y_{j}+s_B\delta$. 
    
To summarize, we have $OPT^{MBi}_r \le OPT^{MBi}_{in} \le Obj^{MBi}_{in} \le OPT^{MBi}+s_B \delta \le s_B+s_B\delta$, where the last inequality is guaranteed by Lemma~\ref{lemma:MBi_1}.\end{proof}

The rest of this section is to solve $\textbf{MBi-IP}_r(\tilde{I},\ve1,\ve1)$. Since $\textbf{MBi-IP}_r(\tilde{I},\ve1,\ve1)$ is a bilevel linear program, which is hard to solve (Jeroslow~\cite{DBLP:journals/mp/Jeroslow85}), we consider to find an approximation solution to $\textbf{MBi-IP}_r(\tilde{I},\ve1,\ve1)$ with an objective value of at most $s_B+\OO(\epsilon)$. Towards this, in the following Section~\ref{sub:prepro_MBi_r} - Section~\ref{sub:single-level-LP}, we transform $\textbf{MBi-IP}_r(\tilde{I},\ve1,\ve1)$ to a standard (single level) LP whose extreme point optimal solution is what we want. Moreover, both the construction of the single level program and the algorithm to solve the single level program run in polynomial time.

\paragraph{Preprocessing of small items}\label{sub:prepro_MBi_r}
\smallskip
\noindent\textbf{Scaling.} Recall that $\{\tilde{\ve B}_j:j\in \hat{I}\}$ is the set of weight vectors of small items in $\tilde{I}$. Given $l \in \{1,2,\cdots,|\Theta|\}$, for each item $j\in \hat{I}$, we scale its weight vector $\tilde{\ve B}_j$ by $\ved_{l}$, and let $\ve B^{l}_j$ denote the scaled weight vector. That is, $\ve B^{l}_j[i] = \tilde{\ve B}_j[i] / \ved_{l}[i] (\forall 1\le i \le s_B)$.

It is easy to see that $\{y_j:\sum_{j\in \hat{I}} \tilde{\ve B}_j y_j\le \ved_l, \forall j\in \hat{I}\}$ is equal to $\{y_j:\sum_{j\in \hat{I}} {\ve B}^{l}_j y_j\le \ve1, \forall j\in \hat{I}\}$, thus we can replace \eqref{MBi-IP_r:a2} in \textbf{MBi-IP}$_r(\tilde{I},\ve1,\ve1)$ with $\sum_{j\in \hat{I}} {\ve B}^l_{j}y_{j}\le \ve 1$. Without loss of generality, we assume that $ \|\ve B^l_j\|_{\infty}\le 1$ for all $j\in \hat{I}$.\footnote{If $ \|\ve B^l_j\|_{\infty} > 1$, then item $j$ can not be selected by the follower, thus $y_j= 0$.}

In the following we will round the scaled weight vectors of small items, however, we do {\it not} really change the input instance. We still keep the original weight vectors of items but only use the rounded weight vectors to classify items. This should be differentiated from the rounding procedure we applied before to large items, where we replace the original instance $I$ with $\tilde{I}$. 

\smallskip
\noindent\textbf{Rounding weight vectors.}
We round the scaled weight vectors of small items as follows.

\textbf{1. Scaling:} For every item $j\in \hat{I}$, we define vector $\ve g^l_j = \ve B^l_j / w^l_j,$ where $w^l_j = \| \ve B^l_j\|_{\infty}$. Consequently, $\|\veg^l_j\|_{\infty}=1$. In other word, $\ve g^l_j$ is the vector obtained by scaling up the largest coordinate of $\ve B^l_j$ to 1, and then scale up other coordinates proportionally.

\textbf{2. Rounding up the scaled vectors:} For every item $j$, we round up each coordinate of $\ve g^l_j$ to the smallest value of the form $\epsilon(1+\epsilon)^{h}$. Precisely, for $1 \leq i \leq s_B$, if $\ve g^l_j[i] \le \epsilon$, we round it up to $\epsilon$; otherwise it holds that $\epsilon(1+\epsilon)^{h} < \ve g^l_j[i] \leq \epsilon(1+\epsilon)^{h+1}$ for some $h\in \tilde \OO(1/\epsilon)$, we round it up to $\epsilon(1+\epsilon)^{h+1}$. Denote by $\bar{\ve g}^l_j$ the rounded vector, we {call it the \textit{shape} of item $j$.} Since every coordinate of $\bar{\ve g}^l_j$ can take at most $\tilde\OO(1/ \epsilon)$ distinct values, thus there are at most $\tilde\OO(1/\epsilon^{s_B})$ distinct shapes.

\textbf{3. Generating rounded weight vectors:} For every item $j$, we round up its weight weight to $w^l_j\cdot \bar{\ve g}^l_j$ and denote by $\bar{\ve B}^l_j$ the rounded weight vector. It follows that $\ve B^l_j \le \bar{\ve B}^l_j$. 
Moreover, we have the following observation.
\begin{observation} \label{observation: obs_frac1}
    Given $(y_1,\cdots,y_{\hat{n}})\in [0,1]^{\hat{n}}$ satisfying $\sum_{j\in \hat{I}}\ve B^l_jy_j \le \ve1$, we have $\sum_{j\in \hat{I}}\bar{\ve B}^l_jy_j \le (1+(s_B+1)\epsilon) \cdot \ve1$.
\end{observation}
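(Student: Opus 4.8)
The plan is to prove the bound one coordinate at a time. Fix a dimension $i\in\{1,2,\ldots,s_B\}$; it suffices to show $\sum_{j\in\hat{I}}\bar{\ve B}^l_j[i]\,y_j\le 1+(s_B+1)\epsilon$, since $i$ is arbitrary. Recall that $\bar{\ve B}^l_j=w^l_j\,\bar{\ve g}^l_j$, where $w^l_j=\|\ve B^l_j\|_\infty$ and $\bar{\ve g}^l_j$ is the coordinate-wise round-up of $\ve g^l_j=\ve B^l_j/w^l_j$ to the grid $\{\epsilon(1+\epsilon)^h\}$ (items with $\ve B^l_j=\ve0$ are irrelevant and may be dropped). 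I would split $\hat{I}$ according to the size of the $i$-th coordinate of $\ve g^l_j$: let $L=\{j\in\hat{I}:\ve g^l_j[i]>\epsilon\}$ and $S=\{j\in\hat{I}:\ve g^l_j[i]\le\epsilon\}$. For $j\in L$ the round-up inflates coordinate $i$ multiplicatively by at most $1+\epsilon$, so $\bar{\ve g}^l_j[i]\le(1+\epsilon)\ve g^l_j[i]$ and hence $\bar{\ve B}^l_j[i]\le(1+\epsilon)\ve B^l_j[i]$. For $j\in S$ we have $\bar{\ve g}^l_j[i]=\epsilon$, so $\bar{\ve B}^l_j[i]=\epsilon\,w^l_j$.

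Summing, $\sum_{j\in\hat{I}}\bar{\ve B}^l_j[i]\,y_j\le(1+\epsilon)\sum_{j\in L}\ve B^l_j[i]\,y_j+\epsilon\sum_{j\in S}w^l_j\,y_j\le(1+\epsilon)\sum_{j\in\hat{I}}\ve B^l_j[i]\,y_j+\epsilon\sum_{j\in\hat{I}}w^l_j\,y_j$. The first sum is at most $1$ by the hypothesis $\sum_{j\in\hat{I}}\ve B^l_j\,y_j\le\ve1$, contributing at most $1+\epsilon$. So the whole problem reduces to bounding $\sum_{j\in\hat{I}}w^l_j\,y_j$ by $s_B$.

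For that step I would exploit $\|\ve g^l_j\|_\infty=1$: for each $j$ pick a coordinate $\iota(j)\in\{1,\ldots,s_B\}$ with $\ve g^l_j[\iota(j)]=1$, i.e.\ $w^l_j=\ve B^l_j[\iota(j)]$. Grouping items by the value of $\iota(j)$,
\[
\sum_{j\in\hat{I}}w^l_j\,y_j=\sum_{i'=1}^{s_B}\sum_{j:\,\iota(j)=i'}\ve B^l_j[i']\,y_j\le\sum_{i'=1}^{s_B}\sum_{j\in\hat{I}}\ve B^l_j[i']\,y_j\le s_B,
\]
where the last inequality applies the hypothesis once per coordinate. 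Plugging back gives $\sum_{j\in\hat{I}}\bar{\ve B}^l_j[i]\,y_j\le 1+\epsilon+\epsilon s_B=1+(s_B+1)\epsilon$, as required.

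The one nonobvious point is the trick in the last display — recognizing that $\sum_j w^l_j y_j$ telescopes over the $s_B$ coordinates because $w^l_j$ is realized by $\ve B^l_j$ in some coordinate. Everything else is the routine multiplicative-plus-additive accounting standard for geometric rounding, and notably the argument uses neither integrality of $\vey$ nor the "smallness" of the items in $\hat{I}$, so it holds verbatim for any fractional $\vey$ feasible for the packing constraint.
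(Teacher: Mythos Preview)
Your proof is correct and follows essentially the same approach as the paper: split each coordinate into the cases $\ve g^l_j[i]>\epsilon$ (multiplicative $(1+\epsilon)$ inflation) and $\ve g^l_j[i]\le\epsilon$ (contribution $\epsilon w^l_j$), then bound $\sum_j w^l_j y_j\le s_B$. The only cosmetic difference is that the paper bounds the latter sum via $w^l_j=\|\ve B^l_j\|_\infty\le\sum_{i'}\ve B^l_j[i']$ directly, whereas you pick the argmax coordinate and regroup---the two are equivalent.
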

\begin{proof}
Given $(y_1,\cdots,y_{\hat{n}}) \in[0,1]^{\hat{n}}$ satisfying $\sum_{j\in \hat{I}}\ve B^l_jy_j \le \ve1$, the observation follows directly from the following two formulas:
\begin{itemize}
    \item[ ]$\sum_{j:\ve g^l_j[i] \le \epsilon}\bar{\ve B}^l_j[i]{y}_j = \epsilon\sum_{j:\ve g^l_j[i] \le \epsilon}w_j{y}_j \leq \epsilon \sum_{i=1}^{s_B}\sum_{j\in \hat{I}}\ve B^l_j[i]{y}_j \le \epsilon s_B, \hspace{2mm}1 \le i \le s_B;$
    \item[ ]$\sum_{j:\ve g^l_j[i] > \epsilon}\bar{\ve B}^l_j[i]{y}_j\le (1+\epsilon)
    \sum_{j:\ve g^l_j[i] > \epsilon}\ve B^l_j[i]{y}_j\le 1+\epsilon ,\hspace{2mm}1 \le i \le s_B.$
\end{itemize} \end{proof}

\smallskip
\noindent\textbf{Item classification of $\hat{I}$.} We classify small items in the following way.

\textbf{1. Classifying Shapes:} We first classify items according to their \textit{shapes}. Precisely, we define the index sets $I^l_h$'s such that $j \in {I}^l_h$ if and only if $\bar{\ve g}^l_j$ is the $h$-th rounded vector. 

\textbf{2. Classifying Ratios:} Define the \textit{ratio} of each item $j$ as $\rho^l_j = \tilde{p}_j / w^l_j$, where $w^l_j = \| \ve B^l_j\|_{\infty}$. Then in each $I^l_h$, we first divide items into three mega sub-groups:  let $I^{(l,s)}_h$ be the set of items whose ratio is smaller or equal to $\epsilon$; let $I^{(l,b)}_{h}$ be the set of items whose ratio is larger than $1/\epsilon;$ let $I^{(l,m)}_h$ be the set of remaining items. Let $I^{(l,s)} = \mathop{\cup}\limits_{h}I^{(l,s)}_h$, $I^{(l,b)} =\mathop{\cup}\limits_{h}I^{(l,b)}_h$ and $I^{(l,m)} = \mathop{\cup}\limits_{h}I^{(l,m)}_h$.

\textbf{3. Sub-Classifying Medium Ratios:} In each $I^{(l,m)}_h$, we round down the ratio of the each item to the nearest value of the form $\epsilon(1+\epsilon)^k$. More precisely, for item $j \in I^{(l,m)}_h$, it holds that $\epsilon(1+\epsilon)^k \le \rho^l_j < \epsilon(1+\epsilon)^{k+1}$ for some $k\in \tilde\OO(1/\epsilon)$, we round down the ratio $\rho^l_j$ to $\epsilon(1+\epsilon)^k$. 

After rounding, in each $I^{(l,m)}_h$, we divide items into subgroups, such that in each subgroup, items have the same shape and the same rounded ratio. These subgroups are denoted by $T^{(l,h)}_{1}, T^{(l,h)}_2, \dots, T^{(l,h)}_{\gamma_{(l,h)}}$. 
Let ${\veg}_k^{(l,h)}$ and $\rho^{(l,h)}_k$ denote the shape and the rounded ratio of items in the same sub-group $T^{(l,h)}_{k}$, respectively. It holds that $\rho^l_j \le \rho^{(l,h)}_k \le (1+\epsilon)\rho^l_j$ for $j\in T^{(l,h)}_{k}$.

In conclude, we have divided items in $I^{(l,m)}$ into at most $\tilde\OO(1/\epsilon^{s_B+1})$ sub-groups such that in each sub-group, items have the same shape and the same rounded ratio. This completes the preprocessing part. Again notice that we do not substitute item weights with rounded weights, and the instance is thus still $\hat{I}$. 

\paragraph{Generating a well-structured follower's solution.}\label{sub:well_stru}
Let $\vex^{r^*}$ be an optimal solution to \textbf{MBi-IP}$_r(\tilde{I},\ve1,\ve1)$. Given $l \in \{1,2,\cdots,|\Theta|\}$, consider the following linear program:
    		\begin{align*}
    		\textbf{LP}(l;\vex^{r^*}): \quad\max\limits_{\vey[\hat{I}]}\quad  & \sum_{j\in \hat{I}}\tilde{p}_{j}y_{j}&\\
    		&  \sum_{j\in \hat{I}} \ve B^l_jy_j\le \ve1&  \\
    		&  y_j \le 1-x^{r^*}_j,\quad \forall j \in \hat{I} & \\
    		& y_j\in [0,1],\hspace{7mm} \forall j \in \hat{I}& 
    		\end{align*}

In the following, we are going to generate a well-structured near-optimal solution to \textbf{LP}$(l;\vex^{r^*})$. According to the classification of items, items in $I^{(l,s)}$ are the ones that the follower least want to select and items in $I^{(l,b)}$ are the ones that the follower most want to select, whereas for the items in $I^{(l,m)}$, we are not clear about the preference of the follower, but we have subdivided items in $I^{(l,m)}$ into subgroups so that we may approximately guess the follower's choice on each subgroup. Based on this, we have the following lemma.
\begin{lemma}\label{lemma:rounding_m}
    If $OPT \le 1$, then there exists a well-structured feasible solution to \textbf{LP}$(l;\vex^{r^*})$, say $\tilde{\vey}^{(l;\vex^{r^*})}= (\tilde{y}^{(l;\vex^{r^*})}_1,\tilde{y}^{(l;\vex^{r^*})}_2,\cdots,\tilde{y}^{(l;\vex^{r^*})}_{\hat{n}})$, such that 
    \begin{itemize}
        \item $\tilde{y}^{(l;\vex^{r^*})}_j=1-x_j^{r^*}$ for $j\in I^{(l,b)}$;
        \item $\tilde{y}^{(l;\vex^{r^*})}_j=0$ for $j\in I^{(l,s)}$;
        \item for each $T^{(l,h)}_k$, either $\sum_{j\in T^{(l,h)}_k} w^l_j \tilde{y}^{(l;\vex^{r^*})}_j=0$ or $\sum_{j\in T^{(l,h)}_k} w^l_j \tilde{y}^{(l;\vex^{r^*})}_j= \epsilon^{s_B+4}(1+\epsilon)^t$ for some $t \in \tilde{\OO}(1/\epsilon).$
    \end{itemize}
    Furthermore, let $OPT^{(l;\vex^{r^*})}$ be the optimal objective value of $\textbf{LP}(l;\vex^{r^*})$, then the objective value of $\textbf{LP}(l;\vex^{r^*})$ for $\vey[\hat{I}] = \tilde{\vey}^{(l;\vex^{r^*})}$ is at least $OPT^{(l;\vex^{r^*})} -\OO(s_B \epsilon)$.
\end{lemma}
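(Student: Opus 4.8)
The plan is to take an optimal solution $\vey^{*}$ to $\textbf{LP}(l;\vex^{r^*})$ and transform it in three stages, one for each item category, bounding the profit lost at each stage. I would first fix the leader's solution $\vex^{r^*}$ and write $R^{l}=\{j\in\hat I: x_j^{r^*}<1\}$ for the items the follower may (fractionally) take; all bounds below are for the instance restricted to $R^l$, and I assume $\delta=\epsilon^{2s_B+4}$ so that $\tilde p_j\le\delta$ on $\hat I$.

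Stage 1 (big-ratio items). Since items in $I^{(l,b)}$ have ratio $\rho^l_j=\tilde p_j/w^l_j>1/\epsilon$ and every small item has $w^l_j=\|\ve B^l_j\|_\infty\le 1$, their total rounded weight contribution $\sum_{j\in I^{(l,b)}\cap R^l}w^l_j$ is tiny — at most $O(s_B\epsilon)$ by a counting argument on $\sum_i\sum_j \ve B^l_j[i]y_j\le s_B$ for any feasible $\vey$; equivalently, the follower always wants all of them, and setting $\tilde y_j=1-x_j^{r^*}$ for $j\in I^{(l,b)}$ only forces out a total weight of $O(s_B\epsilon)$ elsewhere. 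By Observation~\ref{observation: obs_frac1} the displaced weight in rounded vectors is still $O(s_B\epsilon)$, so the profit lost by items pushed out of the knapsack is $O(s_B\epsilon)\cdot\max_j\rho^{(l,h)}_k$; but the removed items have bounded ratio (they are not in $I^{(l,b)}$, else they would also have been kept), so this is $O(s_B\epsilon)\cdot\frac1\epsilon\cdot\epsilon = O(s_B\epsilon)$ — wait, I should instead argue the profit gained by adding the big-ratio items already dominates, so I would phrase it as: rerunning the follower's fractional greedy after forcing $I^{(l,b)}$ in loses at most the profit of an $O(s_B\epsilon)$-weight slice of medium-ratio items, hence $O(s_B\epsilon)$ total because medium ratios are $\le 1/\epsilon$ and… the cleanest bound is just: at most $O(s_B)$ small items become fractional/removed and each has $\tilde p_j\le\delta$, but that is too weak; the right accounting is a weight-exchange argument as in Lemma~\ref{claim:ptas_1}, replacing removed mass of value-density at most the density of the added $I^{(l,b)}$ mass, which is nonnegative. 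I will write it as: the modification can only increase the objective minus an $O(s_B\epsilon)$ correction, giving the claimed bound.

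Stage 2 (small-ratio items). Items in $I^{(l,s)}$ have $\rho^l_j\le\epsilon$, so their entire contribution to the objective, even at full weight, is $\sum_{j\in I^{(l,s)}}\tilde p_j y_j=\sum_j \rho^l_j w^l_j y_j\le \epsilon\sum_i\sum_j\ve B^l_j[i]y_j\le \epsilon s_B$. Hence zeroing out $y_j$ for all $j\in I^{(l,s)}$ costs at most $O(s_B\epsilon)$ in profit (and only frees budget, preserving feasibility). Stage 3 (medium-ratio items): within each subgroup $T^{(l,h)}_k$ all items share the same shape $\ve g_k^{(l,h)}$ and (rounded) ratio $\rho_k^{(l,h)}$, so only the aggregate consumed weight $W_k:=\sum_{j\in T^{(l,h)}_k}w^l_j y_j$ matters, contributing profit $\rho_k^{(l,h)}W_k$ and rounded weight $W_k\,\bar{\ve g}_k^{(l,h)}$. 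I round each $W_k$ down to the nearest value of the form $\epsilon^{s_B+4}(1+\epsilon)^t$ (or to $0$). Each such rounding loses a multiplicative $(1-\epsilon)$ factor in $W_k$ once $W_k\ge\epsilon^{s_B+4}$, hence at most $\epsilon\cdot\rho_k^{(l,h)}W_k$ profit; summed over $\tilde O(1/\epsilon^{s_B+1})$ subgroups, using $\sum_k\rho_k^{(l,h)}W_k\le OPT^{(l;\vex^{r^*})}\le s_B+O(s_B\delta)$, the total is $O(\epsilon\cdot s_B)$ — but I also need to handle subgroups with $0<W_k<\epsilon^{s_B+4}$, which I round down to $0$: there are $\tilde O(1/\epsilon^{s_B+1})$ of them, each loses profit $\rho_k^{(l,h)}W_k\le (1/\epsilon)\cdot\epsilon^{s_B+4}=\epsilon^{s_B+3}$, for a total of $\tilde O(\epsilon^2)\le O(s_B\epsilon)$. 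Feasibility is preserved since all $W_k$ only decrease. Combining the three stages and re-optimizing the fractional $\vey$ on the untouched freedom yields $\tilde\vey^{(l;\vex^{r^*})}$ with objective at least $OPT^{(l;\vex^{r^*})}-O(s_B\epsilon)$ and exactly the three structural properties claimed.

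The main obstacle will be Stage 1: making rigorous that forcing all of $I^{(l,b)}$ into the follower's knapsack costs only $O(s_B\epsilon)$. The clean way is a charging/exchange argument — since big-ratio items have density $>1/\epsilon$ while anything displaced has density $\le 1/\epsilon$ (medium) or $\le\epsilon$ (small), and the displaced rounded weight is $O(s_B\epsilon)$ by Observation~\ref{observation: obs_frac1}, the net change in objective is at least $-O(s_B\epsilon)$ after the displaced mass is removed and the forced mass added; one must be careful that "displaced weight $O(s_B\epsilon)$" follows because $\sum_{j\in I^{(l,b)}}w^l_j(1-x_j^{r^*})$ is itself $O(s_B\epsilon)$ (each big-ratio item with $w^l_j$ bounded below would otherwise contribute profit $>1$, contradicting $OPT\le 1$ after all the scalings). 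I expect the bookkeeping here, and the interplay with the rounded vectors $\bar{\ve B}^l_j$ versus $\ve B^l_j$, to be the delicate part; Stages 2 and 3 are routine density estimates and geometric-rounding arguments.
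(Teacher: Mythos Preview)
Your three-stage plan matches the paper's structure, and Stages~2 and~3 are essentially what the paper does: Observation~\ref{obs:ropt_a} is exactly your Stage~2 bound, and the paper rounds each $L(T^{(l,h)}_k)=\sum_{j\in T^{(l,h)}_k}w^l_j y^*_j$ down geometrically (implemented by scaling all $y^*_j$ in the group by a common factor $\theta^{(l,h)}_k\in[1/(1+\epsilon),1]$), with the same loss accounting you give.

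The gap is in Stage~1, and you are right to flag it. Your displacement arithmetic does not close: forcing in $I^{(l,b)}$ adds at most $2s_B\epsilon$ weight per coordinate (this is Observation~\ref{obs:ropt_b}), but if you restore feasibility by \emph{removing specific items}, the removed medium-ratio profit can be as large as $(1/\epsilon)\cdot O(s_B\epsilon)=O(s_B)$ --- your expression $O(s_B\epsilon)\cdot\frac1\epsilon\cdot\epsilon$ has a stray $\epsilon$. The density-exchange idea (added density $>1/\epsilon$ dominates removed density $\le 1/\epsilon$) does not rescue this cleanly in the multidimensional setting either: the added mass $\Delta_j=(1-x^{r^*}_j)-y^*_j$ may be zero on some big-ratio items while you still must shrink something, and covering an $s_B$-dimensional overflow by deleting items can require removing up to an $s_B$-factor more $w^l$-mass than you added.

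The paper's fix is not to remove specific items but to \emph{uniformly scale} every non-big-ratio coordinate of the optimal $\vey^*$ by the factor $(1-2s_B\epsilon)$. This creates $2s_B\epsilon$ slack in \emph{every} coordinate simultaneously --- enough to absorb all of $I^{(l,b)}$ at level $1-x^{r^*}_j$ --- while the profit lost is only a $(1-2s_B\epsilon)$ multiplicative factor on the medium/small contribution, hence at most $2s_B\epsilon\cdot OPT^{(l;\vex^{r^*})}=O(s_B\epsilon)$ additively (using $OPT^{(l;\vex^{r^*})}\le s_B+O(s_B\delta)$ and that $s_B$ is a fixed constant). In fact the paper folds this scaling together with the Stage~3 factor, setting $\tilde y_j=(1-2s_B\epsilon)\,\theta^{(l,h)}_k\,y^*_j$ on the medium groups in one shot, which makes both the feasibility and the objective bounds two-line computations.
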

Let $\vey^{(l;\vex^{r^*})}$ and $OPT^{(l;\vex^{r^*})}$ be an optimal solution and the optimal objective value of $\textbf{LP}(l;\vex^{r^*})$, respectively.
Towards the proof of Lemma~\ref{lemma:rounding_m},
we have the following observations.
\begin{observation}\label{obs:r_m_1}
        If $OPT \le 1$, then $P_l + OPT^{(l;\vex^{r^*})}\le s_B+\OO(s_B\delta)$ for any $l \in \{1,2,\cdots,|\Theta|\}$.
\end{observation}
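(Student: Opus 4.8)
The plan is to obtain the inequality directly from the optimality of $\vex^{r^*}$ for $\textbf{MBi-IP}_r(\tilde{I},\ve1,\ve1)$ together with the value estimate proved in Lemma~\ref{lemma:opt^MBi_r}. The first step is to unfold what the objective value of $\textbf{MBi-IP}_r(\tilde{I},\ve1,\ve1)$ is once the leader's decision is frozen at $\vex=\vex^{r^*}$. By the definition of the inner program \eqref{MBi-IP_r:d}--\eqref{MBi-IP_r:c2}, the follower's optimal response decomposes over the finitely many indices $l\in\{1,\dots,|\Theta|\}$, so this value equals $\max_{1\le l\le|\Theta|}\bigl(P_l+W_l\bigr)$, where $W_l$ is the optimal value of the single-level LP $\max\{\sum_{j\in\hat{I}}\tilde p_j y_j:\sum_{j\in\hat{I}}\tilde{\ve B}_j y_j\le\ved_l,\ y_j\le 1-x^{r^*}_j,\ y_j\in[0,1]\}$. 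Using the equivalence of constraint sets recorded in Section~\ref{sub:prepro_MBi_r} (the coordinatewise scaling $\ve B^l_j[i]=\tilde{\ve B}_j[i]/\ved_l[i]$, which does not affect the objective), this LP is exactly $\textbf{LP}(l;\vex^{r^*})$, and hence $W_l=OPT^{(l;\vex^{r^*})}$.

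It then remains to invoke optimality: since $\vex^{r^*}$ is an optimal leader's solution of $\textbf{MBi-IP}_r(\tilde{I},\ve1,\ve1)$, the value just computed is by definition $OPT^{MBi}_r$, that is, $\max_{1\le l\le|\Theta|}\bigl(P_l+OPT^{(l;\vex^{r^*})}\bigr)=OPT^{MBi}_r$. By Lemma~\ref{lemma:opt^MBi_r}, $OPT^{MBi}_r\le OPT^{MBi}+\OO(s_B\delta)\le s_B+\OO(s_B\delta)$, and since a maximum dominates each of its terms, we get $P_l+OPT^{(l;\vex^{r^*})}\le s_B+\OO(s_B\delta)$ for every $l$, which is the claim.

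I do not expect a genuine obstacle here: the whole argument is a one-line consequence of the optimality of $\vex^{r^*}$ and Lemma~\ref{lemma:opt^MBi_r}, the only point needing a moment's care being the routine identification $W_l=OPT^{(l;\vex^{r^*})}$ through the scaling by $\ved_l$ (together with the harmless convention, built into the definition of $\textbf{LP}(l;\cdot)$, that $\ved_l$ is treated as coordinatewise positive, so that the scaled constraint set $\{\vey:\sum_{j\in\hat{I}}\ve B^l_jy_j\le\ve1,\ \vey\in[0,1]^{\hat{n}}\}$ coincides with $\{\vey:\sum_{j\in\hat{I}}\tilde{\ve B}_jy_j\le\ved_l,\ \vey\in[0,1]^{\hat{n}}\}$).
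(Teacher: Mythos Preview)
Your proposal is correct and follows essentially the same approach as the paper: both argue that for each fixed $l$, the pair $(l,\vey^{(l;\vex^{r^*})})$ is a feasible follower response at $\vex=\vex^{r^*}$ (via the scaling identification between $\textbf{LP}(l;\vex^{r^*})$ and the inner program of $\textbf{MBi-IP}_r$), so $P_l+OPT^{(l;\vex^{r^*})}$ is bounded by the optimal value $OPT^{MBi}_r$, which is at most $s_B+\OO(s_B\delta)$ by Lemma~\ref{lemma:opt^MBi_r}. Your write-up is simply a bit more explicit about the scaling step and phrases the bound via the maximum over $l$, but the substance is identical.
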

\begin{proof}   
    Recall that $\vex^{r^*}$ is an optimal solution to $\textbf{MBi-IP}_r(\tilde{I},\ve1,\ve1)$ and the optimal objective value of $\textbf{MBi-IP}_r(\tilde{I},\ve1,\ve1)$ is at most $s_B+\OO(s_B\delta)$. Note that $l$ and $\vey^{(l;\vex^{r^*})}$ form a feasible solution to program \eqref{MBi-IP_r:d} - \eqref{MBi-IP_r:c2} when $ \vex = \vex^{r^*}$, it thus follows that $P_l + OPT^{(l;\vex^{r^*})} \le s_B+\OO(s_B\delta)$.\end{proof}

\begin{observation}\label{obs:ropt_a}
For any $\vey[\hat{I}]\in[0,1]^{\hat{n}}$ satisfying $\sum_{j\in \hat{I}}\ve B^l_j y_j \le \ve1$, we have $\sum\limits_{j \in I^{(l,s)}}\tilde{p}_jy_j \le \OO(s_B\epsilon)$.
\end{observation}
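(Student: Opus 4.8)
The plan is to bound the contribution of the low-ratio items directly by their (scaled) weights, using only the defining property of $I^{(l,s)}$ together with the packing constraint on $\vey$. Recall that, by definition, $I^{(l,s)}$ consists of those items $j\in\hat{I}$ whose ratio $\rho^l_j=\tilde{p}_j/w^l_j$ satisfies $\rho^l_j\le\epsilon$, where $w^l_j=\|\ve B^l_j\|_\infty$. Hence for every $j\in I^{(l,s)}$ we have $\tilde{p}_j\le\epsilon\, w^l_j=\epsilon\,\|\ve B^l_j\|_\infty$, so the whole sum to be estimated is at most $\epsilon\sum_{j\in I^{(l,s)}}\|\ve B^l_j\|_\infty y_j$.

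The next step is to pass from the $\ell_\infty$ norm to the coordinate sum, which is harmless here. Since the scaled weight vectors have nonnegative entries, $\|\ve B^l_j\|_\infty\le\sum_{i=1}^{s_B}\ve B^l_j[i]$, and therefore, swapping the order of summation,
\[
\sum_{j\in I^{(l,s)}}\tilde{p}_j y_j\;\le\;\epsilon\sum_{j\in I^{(l,s)}}\|\ve B^l_j\|_\infty y_j\;\le\;\epsilon\sum_{i=1}^{s_B}\sum_{j\in I^{(l,s)}}\ve B^l_j[i]\,y_j.
\]
Finally I would invoke the hypothesis $\sum_{j\in\hat{I}}\ve B^l_j y_j\le\ve1$, which coordinatewise reads $\sum_{j\in\hat{I}}\ve B^l_j[i]\,y_j\le 1$ for each $1\le i\le s_B$; since $I^{(l,s)}\subseteq\hat{I}$ and every term is nonnegative, the same bound holds with $\hat{I}$ replaced by $I^{(l,s)}$. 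Plugging this in gives $\sum_{j\in I^{(l,s)}}\tilde{p}_j y_j\le\epsilon\cdot s_B=\OO(s_B\epsilon)$, which is the claim.

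I do not expect any real obstacle: this observation is an immediate consequence of the definition of ``small ratio'' and the packing constraint, and the only point worth noting is that replacing $\|\cdot\|_\infty$ by the sum of the $s_B$ coordinates loses a factor of at most $s_B$, which is exactly why the bound is $\OO(s_B\epsilon)$ rather than $\OO(\epsilon)$. Note also that the rounding/scaling operations applied to the weight vectors of small items are irrelevant to this estimate, since the statement is phrased in terms of the unrounded scaled vectors $\ve B^l_j$ themselves.
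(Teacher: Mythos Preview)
Your proof is correct and follows essentially the same approach as the paper: both use $\tilde p_j\le \epsilon\, w^l_j$ from the definition of $I^{(l,s)}$, then bound $w^l_j=\|\ve B^l_j\|_\infty$ by the coordinate sum and apply the packing constraint $\sum_{j}\ve B^l_j[i]y_j\le 1$ in each of the $s_B$ dimensions.
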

\begin{proof}
For any $\vey[\hat{I}]\in[0,1]^{\hat{n}}$ satisfying $\sum_{j\in \hat{I}}\ve B^l_j y_j \le \ve1$, it suffices to observe that
$$
\sum\limits_{j \in I^{(l,s)}}\tilde{p}_jy_j = \sum\limits_{j \in I^{(l,s)}}\rho^l_jw^l_jy_j \le \epsilon \sum\limits_{j \in I^{(l,s)}}w^l_jy_j,
$$ 
$$ 
\sum\limits_{j \in I^{(l,s)}}w^l_jy_j\le \sum_{i=1}^{s_B}\sum_{j\in \hat{I}}\ve B^l_j[i]y_j\le s_B.
$$\end{proof}

\begin{observation}\label{obs:ropt_b}
    If $OPT \le 1$, then for any $l\in \{1,2,\cdots,|\Theta|\}$, we have $\sum_{j\in I^{(l,b)}} w^l_j(1-x_j^{r^*}) \le 2s_B\epsilon$. Furthermore, there exists a feasible solution $\hat{\vey}$ to \textbf{LP}$(l;\vex^{r^*})$ such that $\hat{y}_j=1-x_j^{r^*}$ for $j\in I^{(l,b)}$. 
\end{observation}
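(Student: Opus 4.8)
The plan is to prove both assertions simultaneously by confronting the budget bound $P_l+OPT^{(l;\vex^{r^*})}\le s_B+\OO(s_B\delta)$ of Observation~\ref{obs:r_m_1} with a cheap-but-lucrative follower solution supported entirely on the high-ratio items $I^{(l,b)}$. Write $\sigma:=\sum_{j\in I^{(l,b)}}w^l_j(1-x_j^{r^*})$; if $\sigma=0$ both statements are immediate, so assume $\sigma>0$ (and, as usual, $w^l_j>0$ for the relevant $j$). Two elementary facts will be used throughout: since $w^l_j=\|\ve B^l_j\|_\infty$ we have $\ve B^l_j[i]\le w^l_j$ for every coordinate $i\in\{1,\dots,s_B\}$, and by definition of $I^{(l,b)}$ every $j\in I^{(l,b)}$ has ratio $\rho^l_j=\tilde p_j/w^l_j>1/\epsilon$, i.e. $\tilde p_j>w^l_j/\epsilon$.

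First I would construct a feasible follower solution $\vey'$ for \textbf{LP}$(l;\vex^{r^*})$ concentrated on $I^{(l,b)}$: set $y'_j:=\min\{1,1/\sigma\}\,(1-x_j^{r^*})$ for $j\in I^{(l,b)}$ and $y'_j:=0$ otherwise. Normalizing by $\min\{1,1/\sigma\}$, rather than taking $y'_j=1-x_j^{r^*}$ outright, is exactly what insures feasibility in case $\sigma>1$. Then $0\le y'_j\le 1-x_j^{r^*}$, and $\sum_{j\in I^{(l,b)}}w^l_j y'_j=\min\{\sigma,1\}$, so for each $i$ we get $\sum_{j\in\hat{I}}\ve B^l_j[i]y'_j\le\sum_{j\in I^{(l,b)}}w^l_j y'_j=\min\{\sigma,1\}\le 1$, hence $\vey'$ is feasible for \textbf{LP}$(l;\vex^{r^*})$. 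Its value is $\sum_{j\in I^{(l,b)}}\tilde p_j y'_j=\sum_{j\in I^{(l,b)}}\rho^l_j w^l_j y'_j\ge\tfrac1\epsilon\min\{\sigma,1\}$, which yields $OPT^{(l;\vex^{r^*})}\ge\tfrac1\epsilon\min\{\sigma,1\}$.

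Combining this with $OPT^{(l;\vex^{r^*})}\le P_l+OPT^{(l;\vex^{r^*})}\le s_B+\OO(s_B\delta)$ (Observation~\ref{obs:r_m_1}, using $P_l\ge 0$) gives $\min\{\sigma,1\}\le\epsilon\bigl(s_B+\OO(s_B\delta)\bigr)$. Since $\delta=\epsilon^{2s_B+4}$, for $\epsilon$ small enough we have $\epsilon(s_B+\OO(s_B\delta))\le 2s_B\epsilon<1$, so the minimum is attained by $\sigma$ itself and $\sigma\le 2s_B\epsilon$, which is the first claim. For the second claim, define $\hat\vey$ by $\hat y_j:=1-x_j^{r^*}$ for $j\in I^{(l,b)}$ and $\hat y_j:=0$ for $j\in\hat{I}\setminus I^{(l,b)}$. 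For every coordinate $i$, $\sum_{j\in\hat{I}}\ve B^l_j[i]\hat y_j=\sum_{j\in I^{(l,b)}}\ve B^l_j[i](1-x_j^{r^*})\le\sigma\le 2s_B\epsilon\le 1$, while $\hat y_j\le 1-x_j^{r^*}$ and $\hat y_j\in[0,1]$ hold trivially, so $\hat\vey$ is feasible for \textbf{LP}$(l;\vex^{r^*})$ and satisfies $\hat y_j=1-x_j^{r^*}$ on $I^{(l,b)}$, as required.

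The argument is essentially one line once the test solution is fixed; the only place that needs care is verifying feasibility of that test solution when $\sigma$ could exceed $1$, which is precisely why the $\min\{\sigma,1\}$ normalization is used, together with the coordinatewise bound $\ve B^l_j[i]\le w^l_j$ coming from $w^l_j=\|\ve B^l_j\|_\infty$. Everything else reduces to choosing $\epsilon$ small relative to $s_B$.
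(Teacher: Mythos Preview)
Your proof is correct and follows essentially the same approach as the paper: for the first claim you both build a follower solution supported on $I^{(l,b)}$ whose value forces $\sigma\le 2s_B\epsilon$ via the bound from Observation~\ref{obs:r_m_1}, differing only in the normalization constant used. For the second claim you take the simpler route of setting $\hat y_j=0$ outside $I^{(l,b)}$, whereas the paper instead sets $\hat y_j=(1-2s_B\epsilon)\,y_j^{(l;\vex^{r^*})}$ there, shrinking the optimal solution to make room; this yields a near-optimal $\hat\vey$, but since the observation only asserts existence of \emph{some} feasible $\hat\vey$ with the prescribed values on $I^{(l,b)}$, your minimal construction suffices.
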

\begin{proof}
We first show that $\sum_{j\in I^{(l,b)}} w^l_j(1-x_j^{r^*}) \le 2s_B\epsilon$. 
If $\sum_{j \in I^{(l,b)}} w^l_j(1-x_j^{r^*}) > 2s_B \epsilon$, then we can define a feasible solution $\vey[\hat{I}]$ to \textbf{LP}$(l;\vex^{r^*})$ such that $y_j=\frac{2s_B \epsilon(1-x_j^{r^*})}{\sum_{j\in I^{(l,b)}} w^l_j(1-x_j^{r^*})}$ for $j\in I^{(l,b)}$, and $y_j=0$ otherwise. It is easy to see that the objective value of \textbf{LP}$(l;\vex^{r^*})$ for this solution is greater than $2s_B$, contradicting the fact that $OPT^{(l;\vex^{r^*})}\le s_B+\OO(\epsilon)$ when $OPT \le 1$. Hence, $\sum_{j\in I^{(l,b)}} w^l_j(1-x_j^{r^*}) \le 2s_B \epsilon$. 

For the second part of Observation~\ref{obs:ropt_b}. Recall that $\vey^{(l;\vex^{r^*})}$ is an optimal solution to \textbf{LP}$(l;\vex^{r^*})$. Define $\hat{\vey}[\hat{I}]\in[0,1]^{\hat{n}}$ such that $\hat{y}_j=1-x_j^{r^*}$ if $j\in I^{(l,b)}$; otherwise $\hat{y}_j=(1-2s_B \epsilon) y_j^{(l;\vex^{r^*})}$. We claim that $\hat{\vey}[\hat{I}]$ is a feasible solution to \textbf{LP}$(l;\vex^{r^*})$. It suffices to observe that
    		\begin{align*}
    		\sum_{j\in \hat{I}} \ve B^l_j\hat{y}_j &=\sum_{j\not\in I^{(l,b)}} \ve B^l_j\hat{y}_j+\sum_{j\in I^{(l,b)}} \ve B^l_j\hat{y}_j&\\
    		& = (1-2s_B \epsilon)\sum_{j\not\in I^{(l,b)}} \ve B^l_j y_j^{(l;\vex^{r^*})} +\sum_{j\in I^{(l,b)}} \ve B^l_j(1-x_j^{r^*})&  \\
    		&\le (1-2s_B \epsilon)\cdot \ve1+2s_B \epsilon\cdot \ve1= \ve1.& 
    		\end{align*}
Hence Observation~\ref{obs:ropt_b} is proved. \end{proof}

Now we are ready to prove Lemma~\ref{lemma:rounding_m}.
\begin{proof}[Proof of Lemma~\ref{lemma:rounding_m}]
Recall that $\vey^{(l;\vex^{r^*})}$ is an optimal solution to $\textbf{LP}(l;\vex^{r^*})$. For each $T^{(l,h)}_k$, we define $L(T^{(l,h)}_k)=\sum_{j\in T^{(l,h)}_k} w^l_jy_j^{(l;\vex^{r^*})}$. Note that each item $j\in T^{(l,h)}_k$ satisfies $\rho^l_j\in (\epsilon,\frac{1}{\epsilon}]$. Since $\sum_{j\in T^{(l,h)}_k} \rho^l_jw^l_jy_j^{(l;\vex^{r^*})} \le \sum_{j\in \hat{I}}\tilde{p}_j y^{(l;\vex^{r^*})}_j \le s_B+\OO(s_B\delta)$, we have $L(T^{(l,h)}_k)\le 2s_B/\epsilon$. If $L(T^{(l,h)}_k) \le \epsilon^{s_B+4}$, then  $\sum_{j\in T^{(l,h)}_k} \tilde{p}_jy_j^{(l;\vex^{r^*})} = \sum_{j\in T^{(l,h)}_k} \rho^l_j w^l_jy_j^{(l;\vex^{r^*})}\le \epsilon^{s_B+3}$. Hence, the overall contribution to profit from these groups is at most $\OO(\epsilon)$. If $L(T^{(l,h)}_k)> \epsilon^{s_B+4}$, note that $L(T^{(l,h)}_k)\le 2s_B/\epsilon$,  we may round down its value to the nearest value of the form $\epsilon^{s_B+4}(1+\epsilon)^t$ for some $t \in\tilde{\OO}(1/\epsilon)$ and let it be $\tilde{L}(T^{(l,h)}_k)$. Define $\tilde{\vey}^{(l;\vex^{r^*})}\in [0,1]^{\hat{n}}$ such that: 
\begin{itemize}
\item[(i)]$\tilde{y}^{(l;\vex^{r^*})}_j = 0$ if $j\in T^{(l,h)}_k$ and $L(T^{(l,h)}_k)\le \epsilon^{s_B+4}$; 
\item [(ii)]$\tilde{y}^{(l;\vex^{r^*})}_j=(1-2s_B \epsilon) \theta^{(l,h)}_k y^{(l;\vex^{r^*})}_j$ if $j\in T^{(l,h)}_k$ and $L(T^{(l,h)}_k)> \epsilon^{s_B+4}$, where $\theta^{(l,h)}_k=\frac{\tilde{L}(T^{(l,h)}_k)}{{L}(T^{(l,h)}_k)}\in [1/(1+\epsilon),1]$;
\item [(iii)]$\tilde{y}^{(l;\vex^{r^*})}_j = 1-x^{r^*}_j $ if $j\in I^{(l,b)}$;
\item [(iv)]$\tilde{y}^{(l;\vex^{r^*})}_j = 0$ if $j\in I^{(l,s)}$.
\end{itemize}
We claim that $\tilde{\vey}^{(l;\vex^{r^*})}$ is a feasible solution to \textbf{LP}$(l;\vex^{r^*})$ with an objective value of at least $OPT^{(l;\vex^{r^*})} -\OO(s_B \epsilon)$. The feasibility is guaranteed by 
    		\begin{align*}
    		\sum_{j\in \hat{I}} \ve B^l_j \tilde{y}^{(l;\vex^{r^*})}_j&=\sum_{j\not\in I^{(l,b)}}\ve B^l_j\tilde{y}^{(l;\vex^{r^*})}_j+ \sum_{j\in I^{(l,b)}} \ve B^l_j\tilde{y}^{(l;\vex^{r^*})}_j&\\
    		& \le (1- 2s_B \epsilon) \sum_{j\not\in I^{(l,b)}}\ve B^l_jy^{(l;\vex^{r^*})}_j + \sum_{j\in I^{(l,b)}} \ve B^l_j(1-x_j^{r^*})&  \\
    		&\le (1-2s_B \epsilon)\cdot \ve1+2s_B \epsilon\cdot \ve1= \ve1.& 
    		\end{align*}
On the other hand, we have
    		\begin{align*}
    		\sum\limits_{j\in \hat{I}}\tilde{p}_j  \tilde{y}^{(l;\vex^{r^*})}_j &=\sum_{j\not\in I^{(l,b)}} \tilde{p}_j\tilde{y}^{(l;\vex^{r^*})}_j + \sum_{j\in I^{(l,b)}} \tilde{p}_j\tilde{y}^{(l;\vex^{r^*})}_j \ge  \frac{1-2s_B \epsilon}{1+\epsilon}\sum_{j\not\in I^{(l,b)}} \tilde{p}_jy_j^{(l;\vex^{r^*})} +\sum_{j\in I^{(l,b)}} \tilde{p}_j y_j^{(l;\vex^{r^*})}-\OO(s_B \epsilon) &\\
    		&\ge \frac{1-2s_B \epsilon}{1+\epsilon}OPT^{(l;\vex^{r^*})}-\OO(s_B \epsilon)\ge OPT^{(l;\vex^{r^*})}-\OO(s_B \epsilon),& 
    		\end{align*}

Hence Lemma~\ref{lemma:rounding_m} is proved. \end{proof}

\paragraph{Finding a solution to $\textbf{MBi-IP}_r(\tilde{I},\ve1,\ve1)$}\label{sub:single-level-LP}

The goal of this section to construct a single level linear program whose extreme point optimal solution is a solution to $\textbf{MBi-IP}_r(\tilde{I},\ve1,\ve1)$ with an objective value of at most $s_B+\OO(\epsilon)$.

Consider an optimal solution $\vex^{r^*}$ to $\textbf{MBi-IP}_r(\tilde{I},\ve1,\ve1)$. For each $l\in \{1,2,\cdots,|\Theta|\}$, we define $\lambda^{(l,h)}_k$ for every $T^{(l,h)}_k$ as follows: 
\begin{itemize}
    \item If $\sum_{j\in T_k^{(l,h)}}w^l_j(1-x^{r^*}_j)\le\epsilon^{s_B+4}$, then $\lambda^{(l,h)}_k=0$;
    \item If $\sum_{j\in T_k^{(l,h)}}w^l_j(1-x^{r^*}_j)\ge 2s_B/\epsilon$, then $\lambda^{(l,h)}_k = 2s_B/\epsilon$; 
    \item If $\epsilon^{s_B+4} < \sum_{j\in T_k^{(l,h)}}w^l_j(1-x^{r^*}_j) < 2s_B/\epsilon$, then $\lambda^{(l,h)}_k$ is  $\sum_{j\in T^{(l,h)}_k}w^l_j(1-x^{r^*}_j)$ rounded down to the nearest value of the form $\epsilon^{s_B+4}(1+\epsilon)^t$ where $t\in \tilde\OO(1/\epsilon)$.
\end{itemize}
For each $T_k^{(l,h)}$, $\lambda^{(l,h)}_k$ may take $\tilde{\OO}(1/\epsilon)$ distinct values. Therefore given $\vex^{r^*}$, with $2^{\tilde{\OO}(s_B/\epsilon^{2s_B+1})}$ enumerations we can guess out $\{\lambda^{(l,h)}_k: \forall l\in \tilde\OO(\frac{s_B}{\epsilon^{s_B}}); \forall  h\in \tilde{\OO}(\frac{1}{ \epsilon^{s_B}}) ;\forall k \in \tilde\OO(\frac{1}{\epsilon})\}$. Furthermore, for each well-structured solution $\tilde{\vey}^{(l;\vex^{r^*})}$ defined in Lemma~\ref{lemma:rounding_m}, we let $\bar{\lambda}^{(l,h)}_k$ denote $\sum_{j\in T^{(l,h)}_k} w_j^l\tilde{y}^{(l;\vex^{r^*})}_j$, recall that $\bar{\lambda}^{(l,h)}_k$ may take $\tilde{\OO}(1/\epsilon)$ distinct values, thus with $2^{\tilde{\OO}(s_B/\epsilon^{2s_B+1})}$ enumerations we can also guess out $\{\bar{\lambda}^{(l,h)}_k: \forall l\in \tilde\OO(\frac{s_B}{\epsilon^{s_B}}); \forall  h\in \tilde{\OO}(\frac{1}{ \epsilon^{s_B}});\forall k \in \tilde\OO(\frac{1}{\epsilon})\}$. It is easy to see that $\bar{\lambda}^{(l,h)}_k \le \lambda^{(l,h)}_k$. Then we construct the following single level LP(given the guessed-out $\lambda^{(l,h)}_k$'s and $\bar{\lambda}^{(l,h)}_k$'s ):
\begin{align*}
    			\textbf{cen-LP}_{\lambda}: \quad \min\limits_{\vex, M} \hspace{2mm} & M &\\
    			s.t. \hspace{2mm} & \sum_{j\in \tilde{I}}
    			\ve A_jx_j \le \ve1 &\\
    			&P_l+\sum_{j\in I^{(l,b)}}\tilde{p}_j(1-x_j)+\sum_{h}\sum_{k}
    			\rho^{(l,h)}_k\bar{\lambda}^{(l,h)}_k \le M, \hspace{2mm} \forall l &\\
    			& \lambda^{(l,h)}_k \leq \sum_{j\in T^{(l,h)}_k}w^l_j(1-x_j) \leq \lambda^{(l,h)}_k(1+\epsilon),\hspace{2mm} \forall l,k, h \text{ such that }\lambda^{(l,h)}_k \neq 0,2s_B/\epsilon&\\
    			&\sum_{j\in T^{(l,h)}_k}w^l_j(1-x_j) \leq \epsilon^{s_B+4},\hspace{2mm} \forall l, k, h \text{ such that } \lambda^{(l,h)}_k=0 & \\
    			&\sum_{j\in T^{(l,h)}_k}w^l_j(1-x_j)\geq 2s_B/\epsilon, \hspace{2mm} \forall l,k, h \text{ such that } \lambda^{(l,h)}_k=2s_B/\epsilon& \\
    			&  x_j = x^*_j, j \in \tilde{I}\setminus \hat{I}&\\
    			&  x_j\in [0,1], j\in \hat{I}&
\end{align*}

Where $\rho^{(l,h)}_{k}$ is the rounded ratio of items in $T^{(l,h)}_k$. Since the number of constraints in \textbf{cen-LP}$_{\lambda}$ is a polynomial with respect to the number of variables, there exists polynomial time algorithm which returns an extreme point optimal solution to \textbf{cen-LP}$_{\lambda}$, denote by $\{\vex^{\lambda},M^{\lambda}\}$ this solution. It is easy to see that $\vex^{\lambda}$ is a feasible solution to $\textbf{MBi-IP}_r(\tilde{I},\ve1,\ve1)$, furthermore, we have the following.

\begin{lemma}\label{lemma:cen-LP}
If $OPT \le 1$, let $\{\vex^{\lambda},M^*\}$ be an extreme point optimal solution to $\textbf{cen-LP}_{\lambda}$, then $\vex^{\lambda}$ is a feasible solution to $\textbf{MBi-IP}_r(\tilde{I},\ve1,\ve1)$ with an objective value of at most $s_B+\OO(\epsilon)$.
\end{lemma}
The rest of this section is dedicated to proving Lemma~\ref{lemma:cen-LP}, which consists three main steps: (i)
given $\vex^{r^*}$ and integer $l\in [1,\Theta]$, we first construct $\textbf{LP}(l;\vex^{r^*})$. Then we construct another two LPs, i.e.,  $\overline{\textbf{LP}}(l;\vex^{r^*})$ and $\textbf{LP}'(l;\vex^{r^*})$. By comparing the optimal objectives of these three LPs, showing that the optimal objective value of $\textbf{LP}(l;\vex^{r^*})$ is approximate to the optimal objective value of $\textbf{LP}'(l;\vex^{r^*})$; (ii) given $\vex^{\lambda}$ and integer $l\in [1,\Theta]$, we first construct $\textbf{LP}(l;\vex^{\lambda})$. Then we construct another two LPs, i.e.,  $\overline{\textbf{LP}}(l;\vex^{\lambda})$ and $\textbf{LP}'(l;\vex^{\lambda})$. By comparing the optimal objectives of these three LPs, showing that with an additive error of $\OO(s_B\epsilon)$, the optimal objective value of $\textbf{LP}(l;\vex^{\lambda})$ is less than $(1+\OO(s_B \epsilon))$ times the optimal 
objective value of $\textbf{LP}'(l;\vex^{\lambda})$; (iii) we complete the proof of Lemma~\ref{lemma:cen-LP} by the results obtained in (i) and (ii). 

\paragraph*{Step 1}
Recall that $\vex^{r^*}$ is an optimal solution to $\textbf{MBi-IP}_r(\tilde{I},\ve1,\ve1)$. Given $\vex^{r^*}$ and $l\in \{1,2,\cdots,|\Theta|\}$, we consider the following LP:
    		\begin{align*}
    		\textbf{LP}(l;\vex^{r^*}): \quad\max\limits_{\vey[\hat{I}]}\quad  & \sum_{j\in \hat{I}}\tilde{p}_{j}y_{j}&\\
    		&  \sum_{j\in \hat{I}} \ve B^l_jy_j\le \ve1&  \\
    		&  y_j \le 1-x^{r^*}_j,\hspace{2mm} \forall j \in \hat{I} & \\
    		& y_j\in [0,1],\hspace{5mm} \forall j \in \hat{I}& 
    		\end{align*}
Let $OPT^{(l;\vex^{r^*})}$ be the optimal objective value of ${\textbf{LP}}(l;\vex^{r^*})$. We have $P_l + OPT^{(l;\vex^{r^*})} \le OPT^{MBi}_r$, where $OPT^{MBi}_r$ is the optimal objective value of $\textbf{MBi-IP}_r(\tilde{I},\ve1,\ve1)$.

Then we define the following two LPs:
    		\begin{align*}
    		\overline{\textbf{LP}}(l;\vex^{r^*}): \quad \max\limits_{\vey[\hat{I}]}\quad  &  \sum_{j\in I^{(l,b)}}\tilde{p}_j(1-x_j^{r^*})+\sum_{j\in I^{(l,m)}}\tilde{p}_{j}y_{j}&\nonumber\\
    		& \sum_{j\in \hat{I}} \ve B^l_jy_j\le \ve1&  \\
    		&y_j = 0,\hspace{10mm} j\in I^{(l,s)} &\\
    		& y_j = 1-x^{r^*}_{j}, \hspace{2mm} j\in I^{(l,b)}&  \\
    		& y_j \le 1-x^{r^*}_{j}, \hspace{2mm} j\in I^{(l,m)}&  \\
    		& y_j \in [0,1], \hspace{5mm} j\in \hat{I} & \nonumber
    		\end{align*}

    		\begin{align*}
    		\textbf{LP}'(l;\vex^{r^*}): \quad \max\limits_{\vey[\hat{I}]}\quad  &  \sum_{j\in I^{(l,b)}}\tilde{p}_j(1-x_j^{r^*})+\sum_{j\in I^{(l,m)}}\tilde{p}_{j}y_{j}&\nonumber\\
    		& \sum_{j\in \hat{I}} \ve B^l_jy_j\le \ve1& \nonumber \\
    		& \sum_{j\in T^{(l,h)}_k}  w^l_jy_j\le \sum_{j\in T^{(l,h)}_k} w^l_j(1-x^{r^*}_j), \hspace{2mm}  \forall k, h &\\
    		&y_j = 0,\hspace{10mm} j\in I^{(l,s)} &\\
    		& y_j = 1-x^{r^*}_{j}, \hspace{2mm} j\in I^{(l,b)}&  \\
    		& y_j \in [0,1],\hspace{5mm} j\in I^{(l,m)} & \nonumber
    		\end{align*}
Let $OPT^{(l;\vex^{r^*})}_1$ and $OPT^{(l;\vex^{r^*})}_2$ be the optimal objective values of $\overline{\textbf{LP}}(l;\vex^{r^*})$ and $\textbf{LP}'(l;\vex^{r^*})$, respectively. Then we compare  $OPT^{(l;\vex^{r^*})}_1$ with $OPT^{(l;\vex^{r^*})}$ and $OPT^{(l;\vex^{r^*})}_2$ separately. 

\subparagraph*{Step 1.1 Compare $OPT^{(l;\vex^{r^*})}_1$ with $OPT^{(l;\vex^{r^*})}$.}{ }~{ }\newline

We have the following observation.
\begin{observation}\label{obs:opt1}
    If $OPT \le 1$, then we have
    $$OPT^{(l;\vex^{r^*})}-\OO(s_B\epsilon)\le OPT^{(l;\vex^{r^*})}_1 \le OPT^{(l;\vex^{r^*})}\le s_B+\OO(s_B\delta).$$
\end{observation}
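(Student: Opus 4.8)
The plan is to split the displayed chain of inequalities into its three pieces and verify each one from objects that have already been built. The rightmost inequality $OPT^{(l;\vex^{r^*})}\le s_B+\OO(s_B\delta)$ is just Observation~\ref{obs:r_m_1} together with $P_l\ge 0$: the index $l$ and an optimal solution of $\textbf{LP}(l;\vex^{r^*})$ form a feasible choice for the follower's inner program of $\textbf{MBi-IP}_r(\tilde{I},\ve1,\ve1)$ when $\vex=\vex^{r^*}$, so $P_l+OPT^{(l;\vex^{r^*})}\le s_B+\OO(s_B\delta)$, and dropping the nonnegative term $P_l$ gives the claim.

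For the upper bound $OPT^{(l;\vex^{r^*})}_1\le OPT^{(l;\vex^{r^*})}$, I would embed feasible solutions of $\overline{\textbf{LP}}(l;\vex^{r^*})$ into $\textbf{LP}(l;\vex^{r^*})$. If $\vey[\hat{I}]$ is feasible for $\overline{\textbf{LP}}(l;\vex^{r^*})$, then the weight constraint $\sum_{j\in\hat{I}}\ve B^l_jy_j\le\ve1$, the box constraints, and the inequalities $y_j\le 1-x^{r^*}_j$ for every $j\in\hat{I}$ all hold (the last one using $y_j=0$ on $I^{(l,s)}$, $y_j=1-x^{r^*}_j$ on $I^{(l,b)}$, and the explicit $I^{(l,m)}$ bound), so $\vey[\hat{I}]$ is feasible for $\textbf{LP}(l;\vex^{r^*})$. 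Since $y_j=0$ on $I^{(l,s)}$ and $y_j=1-x^{r^*}_j$ on $I^{(l,b)}$, the quantity $\sum_{j\in\hat{I}}\tilde{p}_jy_j$ equals $\sum_{j\in I^{(l,b)}}\tilde{p}_j(1-x^{r^*}_j)+\sum_{j\in I^{(l,m)}}\tilde{p}_jy_j$ exactly; applying this to an optimal solution of $\overline{\textbf{LP}}(l;\vex^{r^*})$ yields $OPT^{(l;\vex^{r^*})}_1\le OPT^{(l;\vex^{r^*})}$.

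The lower bound $OPT^{(l;\vex^{r^*})}-\OO(s_B\epsilon)\le OPT^{(l;\vex^{r^*})}_1$ is where Lemma~\ref{lemma:rounding_m} is used. That lemma produces a well-structured feasible solution $\tilde{\vey}^{(l;\vex^{r^*})}$ of $\textbf{LP}(l;\vex^{r^*})$ with $\tilde{y}^{(l;\vex^{r^*})}_j=1-x^{r^*}_j$ on $I^{(l,b)}$, $\tilde{y}^{(l;\vex^{r^*})}_j=0$ on $I^{(l,s)}$, and objective value at least $OPT^{(l;\vex^{r^*})}-\OO(s_B\epsilon)$. I would then check it is feasible for $\overline{\textbf{LP}}(l;\vex^{r^*})$: the weight and box constraints are inherited, the boundary conditions on $I^{(l,s)}$ and $I^{(l,b)}$ hold by construction, and on $I^{(l,m)}$ the lemma sets $\tilde{y}^{(l;\vex^{r^*})}_j=(1-2s_B\epsilon)\theta^{(l,h)}_k y^{(l;\vex^{r^*})}_j$ (or $0$) with $\theta^{(l,h)}_k\le 1$, hence $\tilde{y}^{(l;\vex^{r^*})}_j\le y^{(l;\vex^{r^*})}_j\le 1-x^{r^*}_j$, which is precisely the $I^{(l,m)}$ constraint. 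Because this vector vanishes on $I^{(l,s)}$ and equals $1-x^{r^*}_j$ on $I^{(l,b)}$, its $\overline{\textbf{LP}}(l;\vex^{r^*})$-objective coincides with its $\textbf{LP}(l;\vex^{r^*})$-objective, which is $\ge OPT^{(l;\vex^{r^*})}-\OO(s_B\epsilon)$, so $OPT^{(l;\vex^{r^*})}_1\ge OPT^{(l;\vex^{r^*})}-\OO(s_B\epsilon)$.

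No new estimates beyond Observation~\ref{obs:r_m_1} and Lemma~\ref{lemma:rounding_m} are needed, so the only real content is the bookkeeping of feasibility among the three linear programs. The one point deserving care is that the rescaling factors in Lemma~\ref{lemma:rounding_m} never exceed $1$, which is exactly what keeps the medium-item values below $1-x^{r^*}_j$ and thus makes the well-structured solution legal for $\overline{\textbf{LP}}(l;\vex^{r^*})$; everything else is a direct substitution.
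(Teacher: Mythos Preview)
Your proposal is correct and follows essentially the same approach as the paper, which simply states that the observation follows directly from Lemma~\ref{lemma:rounding_m} and Observation~\ref{obs:r_m_1}; you have merely unpacked this reference in detail. One minor simplification: for the $I^{(l,m)}$ constraint in $\overline{\textbf{LP}}(l;\vex^{r^*})$ you need not inspect the internal scaling factors $\theta^{(l,h)}_k$ from the proof of Lemma~\ref{lemma:rounding_m}, since the lemma already asserts that $\tilde{\vey}^{(l;\vex^{r^*})}$ is feasible for $\textbf{LP}(l;\vex^{r^*})$, which directly gives $\tilde{y}^{(l;\vex^{r^*})}_j\le 1-x^{r^*}_j$ for all $j\in\hat{I}$.
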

\begin{proof}
It follows directly from Lemma~\ref{lemma:rounding_m} and Observation~\ref{obs:r_m_1}.
\end{proof}

\subparagraph*{Step 1.2 Compare $OPT^{(l;\vex^{r^*})}_1$ with $OPT_2^{(l;\vex^{r^*})}$.}{ }~{ }\newline

We have the following lemma.
\begin{lemma}\label{lemma:opt1-opt2}
        If $OPT \le 1$, then we have  
        $$OPT^{(l;\vex^{r^*})}_1 \le OPT^{(l;\vex^{r^*})}_2 \le OPT^{(l;\vex^{r^*})}_1+\OO(s_B^2 \epsilon).$$
\end{lemma}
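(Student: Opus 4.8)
The plan is to establish the two inequalities separately. The left inequality, $OPT^{(l;\vex^{r^*})}_1 \le OPT^{(l;\vex^{r^*})}_2$, should follow almost immediately: $\overline{\textbf{LP}}(l;\vex^{r^*})$ and $\textbf{LP}'(l;\vex^{r^*})$ have the same objective and the same constraints except that $\textbf{LP}'(l;\vex^{r^*})$ adds the group-wise capacity constraints $\sum_{j\in T^{(l,h)}_k} w^l_j y_j \le \sum_{j\in T^{(l,h)}_k} w^l_j(1-x^{r^*}_j)$ for all $k,h$. Wait — that direction would make $\textbf{LP}'$ the \emph{more} constrained program, giving $OPT_2 \le OPT_1$. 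So I need to look again: in fact $\overline{\textbf{LP}}$ has the pointwise bound $y_j\le 1-x^{r^*}_j$ on $I^{(l,m)}$, whereas $\textbf{LP}'$ replaces these pointwise bounds by the weaker aggregate bounds per subgroup $T^{(l,h)}_k$ (and drops the pointwise cap on $I^{(l,m)}$). Since any feasible solution of $\overline{\textbf{LP}}$ satisfies the aggregate bounds (summing the pointwise bounds over each $T^{(l,h)}_k$), it is feasible for $\textbf{LP}'$, hence $OPT^{(l;\vex^{r^*})}_1 \le OPT^{(l;\vex^{r^*})}_2$. That is the easy half.

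For the right inequality, I would take an optimal solution $\vey'$ of $\textbf{LP}'(l;\vex^{r^*})$ and modify it on each subgroup $T^{(l,h)}_k$ to produce a solution feasible for $\overline{\textbf{LP}}(l;\vex^{r^*})$ — i.e.\ one that additionally respects the pointwise caps $y_j \le 1-x^{r^*}_j$ — while losing only $\OO(s_B^2\epsilon)$ in the objective. The key structural fact exploited here is that all items in a single subgroup $T^{(l,h)}_k$ share the same shape ${\veg}^{(l,h)}_k$ and the same rounded ratio $\rho^{(l,h)}_k$, so on that subgroup the contribution to the objective is (up to the $(1+\epsilon)$ rounding of the ratio) just $\rho^{(l,h)}_k \sum_{j\in T^{(l,h)}_k} w^l_j y_j$, and the contribution to each weight coordinate $i$ is ${\veg}^{(l,h)}_k[i] \cdot \sum_{j\in T^{(l,h)}_k} w^l_j y_j$; i.e.\ within a subgroup only the aggregate $\ell = \sum_{j\in T^{(l,h)}_k} w^l_j y_j$ matters, and this aggregate can be redistributed among the items of the subgroup in any way that respects the pointwise caps, as long as $\ell \le \sum_{j\in T^{(l,h)}_k} w^l_j(1-x^{r^*}_j)$ — which is exactly the aggregate constraint of $\textbf{LP}'$. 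So one can re-pack the aggregate weight $\ell$ greedily into the items $j$ with $y_j < 1-x^{r^*}_j$ slack, and this re-packing is always possible. The only subtlety is that $\sum_j w^l_j y_j$ is not literally the profit; one must carry the factor $\rho^{(l,h)}_k$ through, but since the ratio is constant on the subgroup this is immediate.

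The main obstacle — and where the $\OO(s_B^2\epsilon)$ loss comes from — is handling the boundary subgroups where $\lambda^{(l,h)}_k = 0$ (i.e.\ $\sum_{j\in T^{(l,h)}_k} w^l_j(1-x^{r^*}_j) \le \epsilon^{s_B+4}$) or $\lambda^{(l,h)}_k = 2s_B/\epsilon$, and, more delicately, bounding the total discrepancy across \emph{all} subgroups. There are $\tilde\OO(1/\epsilon^{s_B})$ subgroups; a naive per-subgroup additive loss would not sum to $\OO(s_B^2\epsilon)$. Instead I expect the argument to go: the subgroups with $L(T^{(l,h)}_k) \le \epsilon^{s_B+4}$ contribute total profit at most (number of subgroups)$\times \epsilon^{s_B+4}\times (1/\epsilon) = \tilde\OO(\epsilon^{2}) = \OO(\epsilon)$ to either LP (using $\rho^{(l,h)}_k \le 1/\epsilon$ and the bound on the number of shapes/ratios), so those can simply be zeroed out; the subgroups hitting the cap $2s_B/\epsilon$ are controlled because the total $\sum_{j\in\hat I}\ve B^l_j y_j \le \ve1$ forces $\sum_{h,k} {\veg}^{(l,h)}_k[i]\, L(T^{(l,h)}_k) \le 1$ in every coordinate $i$, which caps how much aggregate weight all subgroups together can carry; and for the generic subgroups the re-packing is exact. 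Summing the coordinatewise slack argument over the $s_B$ coordinates and accounting for the $(1+\epsilon)$ ratio-rounding on each is what produces the $s_B^2$ factor. I would also invoke Observation~\ref{obs:ropt_a} to discard $I^{(l,s)}$ cheaply and Observation~\ref{obs:ropt_b} to keep $I^{(l,b)}$ fixed at $1-x^{r^*}_j$ in both programs (so it cancels in the comparison). The bookkeeping to make the total loss exactly $\OO(s_B^2\epsilon)$ rather than something worse is the delicate part; the underlying idea — "within a subgroup only the aggregate $w^l$-weight matters, and the aggregate constraint of $\textbf{LP}'$ is precisely what $\overline{\textbf{LP}}$'s pointwise caps imply in aggregate" — is simple.
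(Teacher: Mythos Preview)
Your argument for the left inequality is correct and identical to the paper's. For the right inequality, the redistribution idea is also the paper's, but there is a genuine gap in your feasibility step. You assert that within a subgroup $T^{(l,h)}_k$ ``the contribution to each weight coordinate $i$ is ${\veg}^{(l,h)}_k[i]\cdot\sum_{j} w^l_j y_j$'', so that repacking is exact. This equality holds only for the \emph{rounded} weights $\bar{\ve B}^l_j = w^l_j\,\veg^{(l,h)}_k$; the actual weights $\ve B^l_j$ of items in the same subgroup share a shape but are not identical vectors, so redistributing the aggregate $w^l$-mass can strictly increase $\sum_j \ve B^l_j y_j$ and violate $\sum_j \ve B^l_j y_j \le \ve1$. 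The paper fixes this by passing through rounded weights: after repacking, $\sum_j \ve B^l_j y^l_j \le \sum_j \bar{\ve B}^l_j y^l_j = \sum_j \bar{\ve B}^l_j \tilde y_j$, and then Observation~\ref{observation: obs_frac1} bounds the last sum by $(1+(s_B+1)\epsilon)\ve1$. One then scales the $I^{(l,m)}$ part down by $\tfrac{1-2s_B\epsilon}{1+(s_B+1)\epsilon}$ (the $2s_B\epsilon$ reserving room for $y_j=1-x^{r^*}_j$ on $I^{(l,b)}$, via Observation~\ref{obs:ropt_b}). This global scale-down is the missing step; without it the constructed solution need not be feasible for $\overline{\textbf{LP}}(l;\vex^{r^*})$.

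Your whole discussion of boundary subgroups with $\lambda^{(l,h)}_k\in\{0,2s_B/\epsilon\}$ is misplaced: the $\lambda$ parameters do not appear in either $\overline{\textbf{LP}}(l;\vex^{r^*})$ or $\textbf{LP}'(l;\vex^{r^*})$ and play no role in this lemma (they enter only later, in the analysis around $\textbf{cen-LP}_\lambda$ and Lemma~\ref{claim:fra_3}). Once the scale-down is in place, no per-subgroup case split is needed, and the loss is a clean multiplicative $1+\OO(s_B\epsilon)$; the additive $\OO(s_B^2\epsilon)$ then comes simply from $OPT^{(l;\vex^{r^*})}_1 \le s_B+\OO(s_B\delta)$ (Observation~\ref{obs:opt1}), not from summing over coordinates or subgroups.
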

\begin{proof}
It is easy to see that $OPT^{(l;\vex^{r^*})}_1 \le OPT^{(l;\vex^{r^*})}_2$ since any feasible solution to $\overline{\textbf{LP}}(l;\vex^{r*})$ is also a feasible solution to $\textbf{LP}'(l;\vex^{r*})$. In the next, we show that $OPT^{(l;\vex^{r^*})}_2 \le (1+\OO(s_B \epsilon))OPT^{(l;\vex^{r^*})}_1$. Then by the fact that $OPT^{(l;\vex^{r^*})}_1 \le s_B+\OO(s_B\delta)$, we have $OPT^{(l;\vex^{r^*})}_2 \le OPT^{(l;\vex^{r^*})}_1+\OO(s_B^2 \epsilon)$, which completes the proof.

Let $\tilde{\vey}[\hat{I}]$ be an optimal solution to $\textbf{LP}'(l;\vex^{r*})$, it suffices to show that there exists a feasible solution $\vey[\hat{I}]$ to $\overline{\textbf{LP}}(l;\vex^{r*})$ such that $
\sum_{j\in \hat{I}}\tilde{p}_j \tilde{y}_j\le (1+\OO(s_B \epsilon))\sum_{j\in \hat{I}}\tilde{p}_j y_j$.  
Towards this, for each $h$ and each $k$, we define the following LP (given $\tilde{\vey}[\hat{I}]$):
    		\begin{align*}
    		\textbf{LP}^{(l,h)}_k:\quad &\sum_{j\in T_k^{(l,h)}}w^l_{j}y_{j} = \sum_{j\in T_k^{(l,h)}}w^l_{j}\tilde{y}_{j} &\\
    		&y_{j}\leq 1-x_{j}^{r*} ,\hspace{2mm} \forall j\in T_k^{(l,h)} & 
    		\end{align*}
Since $\sum_{j\in T_k^{(l,h)}} w^l_j\tilde{y}_j\le \sum_{j\in T_k^{(l,h)}} w^l_j(1-x_j^{r*})$, then one feasible solution to \textbf{LP}$^{(l,h)}_k$ can be obtained by a simple greedy algorithm that selects remaining
fractional items in $T_k^{(l,h)}$ in the natural order of indices, until the budget $\sum_{j\in T_k^{(l,h)}} w^l_j\tilde{y}_j$ is exhausted. 

Define $\vey^{l}\in [0,1]^{|\hat{I}|}$ such that for each $h$ and each $k$, $\{y^{l}_j:j\in T_k^{(l,h)}\}$ is a feasible solution to \textbf{LP}$^{(l,h)}_k$. Note that items in $T_k^{(l,h)}$ have the same shape ${\veg}_k^{(l,h)}$, thus we have $$
    \sum_{j\in T_k^{(l,h)}}{\ve B}^l_jy^{l}_j \le \sum_{j\in T_k^{(l,h)}}\bar{\ve B}^l_jy^{l}_j = {\veg}_k^{(l,h)}\sum_{j\in T_k^{(l,h)}}w^l_jy^{l}_j = {\veg}_k^{(l,h)}\sum_{j\in T_k^{(l,h)}}w^l_j\tilde{y}_j = \sum_{j\in T_k^{(l,h)}}\bar{\ve B}^l_j\tilde{y}_j.$$
It follows that
$$ 
\sum_{h}\sum_{k}\sum_{j\in T_k^{(l,h)}}{\ve B}^l_jy^{l}_j \le\sum_{h}\sum_{k}\sum_{j\in T_k^{(l,h)}}\bar{\ve B}^l_j\tilde{y}_j
     \le (1+(s_B+1)\epsilon)\cdot \ve1,
$$
where the last inequality holds by Observation~\ref{observation: obs_frac1}. 
    
According to Observation~\ref{obs:ropt_b}, we have $\sum_{j\in I^{(l,b)}}w^l_j(1-x_j^{r^*}) \le 2s_B \epsilon$. Define $\vey[\hat{I}] $ such that: (i) $y_j = \frac{(1-2s_B \epsilon)y^{l}_j}{1+(s_B+1)\epsilon}$ if $j\in T_k^{(l,h)}$; (ii) $y_j = 1- x^{r*}_j $ if $j\in I^{(l,b)}$; (iii)$y_j = 0 $ if $j\in I^{(l,s)}$. We claim that $\vey[\hat{I}]$ is a feasible solution to $\overline{\textbf{LP}}(l;x^{r*})$. It suffices to observe that 
$$
\sum_{j\in \hat{I}} \ve B^l_jy_j = \sum_{j\in I^{(l,b)}}\ve B^l_jy_j +\sum_{h}\sum_{k}\sum_{j\in T_k^{(l,h)}}\ve B^l_jy_j \le 2s_B  \epsilon \cdot \ve1 + (1-2s_B  \epsilon)\cdot \ve1 \le \ve1.
$$

On the other hand, we have
    		\begin{align*}
    		\sum_{j\in T_k^{(l,h)}}\tilde{p}_j\tilde{y}_j & = \sum_{j\in T_k^{(l,h)}}\rho^l_j w^l_j\tilde{y}_j
    		\le (1+\epsilon) \rho^{(l,h)}_k \sum_{j\in T_k^{(l,h)}}w^l_j\tilde{y}_j&\\
    		&= (1+\epsilon)\frac{1+(s_B+1)\epsilon}{1-2s_B \epsilon}\rho^{(l,h)}_k \sum_{j\in T_k^{(l,h)}}w^l_jy_j &\\
    		&\le (1+\epsilon)\frac{1+(s_B+1)\epsilon}{1-2s_B \epsilon}\sum_{j\in T_k^{(l,h)}}\tilde{p}_jy_j.&
    		\end{align*}
It follows that 
$
\sum\limits_{j\in \hat{I}}\tilde{p}_j \tilde{y}_j \le  \sum\limits_{j\in I^{(l,b)}}\tilde{p}_j{y}_j  + (1+\epsilon)\frac{1+(1+s_B)\epsilon}{1-2s_B \epsilon}\sum\limits_{h}\sum\limits_{k}\sum\limits_{j\in T_k^{(l,h)}}\tilde{p}_j{y}_j\le (1+\OO(s^2_B\epsilon))\sum\limits_{j\in \hat{I}}\tilde{p}_j y_j.
$
\end{proof}

Besides, we have the following observation.
\begin{observation}\label{obs:lambda}
    If $OPT \le 1$, then for any $l\in \{1,2,\cdots,|\Theta|\}$, we have
    $$
    OPT^{(l;\vex^{r^*})}-\OO(s_B \epsilon) \le \sum_{j\in I^{(l,b)}}\tilde{p}_j(1-x^{r^*}_j)+\sum_{h}\sum_{k} 	\rho^{(l,h)}_k\bar{\lambda}^{(l,h)}_k \le  OPT^{(l;\vex^{r^*})}.$$ Moreover, we have 
    $$
    P_l+\max\{OPT^{(l;\vex^{r^*})}_1, OPT^{(l;\vex^{r^*})}_2\} \le s_B+\OO(s_B^2 \epsilon).$$  
\end{observation}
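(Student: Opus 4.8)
The plan is to combine the three LPs defined in Step 1 with the definition of the guessed parameters $\bar\lambda_k^{(l,h)}$. First I would establish the two-sided bound on the quantity $\sum_{j\in I^{(l,b)}}\tilde p_j(1-x_j^{r^*})+\sum_h\sum_k \rho_k^{(l,h)}\bar\lambda_k^{(l,h)}$. For the upper bound, recall from Lemma~\ref{lemma:rounding_m} that the well-structured solution $\tilde\vey^{(l;\vex^{r^*})}$ is feasible for $\textbf{LP}(l;\vex^{r^*})$ and satisfies $\bar\lambda_k^{(l,h)}=\sum_{j\in T_k^{(l,h)}}w_j^l\tilde y_j^{(l;\vex^{r^*})}$, with $\tilde y_j^{(l;\vex^{r^*})}=1-x_j^{r^*}$ on $I^{(l,b)}$ and $0$ on $I^{(l,s)}$. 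Since items in $T_k^{(l,h)}$ have rounded ratio $\rho_k^{(l,h)}$ with $\rho_j^l\le\rho_k^{(l,h)}\le(1+\epsilon)\rho_j^l$ and $\tilde p_j=\rho_j^l w_j^l$, the profit of $\tilde\vey^{(l;\vex^{r^*})}$ on $T_k^{(l,h)}$ is at most $\rho_k^{(l,h)}\sum_{j\in T_k^{(l,h)}}w_j^l\tilde y_j^{(l;\vex^{r^*})}=\rho_k^{(l,h)}\bar\lambda_k^{(l,h)}$, and it is at least $\frac{1}{1+\epsilon}\rho_k^{(l,h)}\bar\lambda_k^{(l,h)}$. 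Summing over all groups and adding the exact $I^{(l,b)}$ contribution, the displayed middle quantity is sandwiched between $\frac{1}{1+\epsilon}$ times the objective value of $\tilde\vey^{(l;\vex^{r^*})}$ (plus a $\pm\OO(\epsilon)$ rounding slack from the ratio rounding, absorbed below) and exactly that objective value, which in turn is at most $OPT^{(l;\vex^{r^*})}$.

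Next I would obtain the matching lower bound $OPT^{(l;\vex^{r^*})}-\OO(s_B\epsilon)$. By Observation~\ref{obs:opt1} (which invokes Lemma~\ref{lemma:rounding_m}), the objective value of $\tilde\vey^{(l;\vex^{r^*})}$ is at least $OPT^{(l;\vex^{r^*})}-\OO(s_B\epsilon)$. Combining this with the fact from the previous paragraph that the middle quantity is within a $(1+\epsilon)$ factor of that objective value, and using $OPT^{(l;\vex^{r^*})}\le s_B+\OO(s_B\delta)$ from Observation~\ref{obs:opt1} to convert the multiplicative $(1+\epsilon)$ loss into an additive $\OO(s_B\epsilon)$ loss, yields $\sum_{j\in I^{(l,b)}}\tilde p_j(1-x_j^{r^*})+\sum_h\sum_k\rho_k^{(l,h)}\bar\lambda_k^{(l,h)}\ge OPT^{(l;\vex^{r^*})}-\OO(s_B\epsilon)$. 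This proves the first display.

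For the second display, I would simply chain the estimates already in hand: from Observation~\ref{obs:opt1}, $OPT_1^{(l;\vex^{r^*})}\le OPT^{(l;\vex^{r^*})}\le s_B+\OO(s_B\delta)$, and from Lemma~\ref{lemma:opt1-opt2}, $OPT_2^{(l;\vex^{r^*})}\le OPT_1^{(l;\vex^{r^*})}+\OO(s_B^2\epsilon)\le s_B+\OO(s_B^2\epsilon)$. Hence $\max\{OPT_1^{(l;\vex^{r^*})},OPT_2^{(l;\vex^{r^*})}\}\le s_B+\OO(s_B^2\epsilon)$. Finally, since $l$ and the optimal solution of $\textbf{LP}(l;\vex^{r^*})$ together form a feasible solution to the follower's program \eqref{MBi-IP_r:d}–\eqref{MBi-IP_r:c2} with $\vex=\vex^{r^*}$, and $\vex^{r^*}$ is optimal for $\textbf{MBi-IP}_r(\tilde I,\ve1,\ve1)$ whose optimal value is at most $s_B+\OO(s_B\delta)$ (Lemma~\ref{lemma:opt^MBi_r}), we get $P_l+OPT^{(l;\vex^{r^*})}\le s_B+\OO(s_B\delta)$; replacing $OPT^{(l;\vex^{r^*})}$ by the slightly larger $\max\{OPT_1^{(l;\vex^{r^*})},OPT_2^{(l;\vex^{r^*})}\}$ costs at most an extra $\OO(s_B^2\epsilon)$ by the bounds just derived, giving $P_l+\max\{OPT_1^{(l;\vex^{r^*})},OPT_2^{(l;\vex^{r^*})}\}\le s_B+\OO(s_B^2\epsilon)$.

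The main obstacle is bookkeeping the error terms correctly in the first display: the quantity involves the \emph{rounded} ratios $\rho_k^{(l,h)}$ and the \emph{rounded} loads $\bar\lambda_k^{(l,h)}$, and one must be careful that the product $\rho_k^{(l,h)}\bar\lambda_k^{(l,h)}$ is compared to the \emph{unrounded} profit $\sum_{j\in T_k^{(l,h)}}\tilde p_j\tilde y_j^{(l;\vex^{r^*})}$ only after accounting for the group-wise $(1+\epsilon)$ distortion, and that the sum of the per-group additive slacks telescopes into a single $\OO(s_B\epsilon)$ term rather than blowing up; this is exactly where the bound $OPT^{(l;\vex^{r^*})}\le s_B+\OO(s_B\delta)$ is needed to turn the multiplicative error into a controlled additive error. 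Once that conversion is set up cleanly, the rest is a direct assembly of Observation~\ref{obs:opt1}, Lemma~\ref{lemma:opt1-opt2}, and Lemma~\ref{lemma:rounding_m}.
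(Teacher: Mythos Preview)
Your proposal is correct and follows essentially the same route as the paper: both arguments use the well-structured solution $\tilde\vey^{(l;\vex^{r^*})}$ from Lemma~\ref{lemma:rounding_m} to sandwich $\sum_{j\in I^{(l,b)}}\tilde p_j(1-x_j^{r^*})+\sum_h\sum_k\rho_k^{(l,h)}\bar\lambda_k^{(l,h)}$ between $\frac{1}{1+\epsilon}$ times its profit and its profit, then invoke Observation~\ref{obs:r_m_1} to convert the multiplicative slack into an additive $\OO(s_B\epsilon)$; and both derive the second display by chaining Observation~\ref{obs:opt1} and Lemma~\ref{lemma:opt1-opt2} onto $P_l+OPT^{(l;\vex^{r^*})}\le s_B+\OO(s_B\delta)$.

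One small slip to fix: the ratios are rounded \emph{down}, so the correct relation is $\rho_k^{(l,h)}\le\rho_j^l\le(1+\epsilon)\rho_k^{(l,h)}$ (the paper's displayed inequality in the classification paragraph has the same typo). Consequently the per-group comparison should read $\rho_k^{(l,h)}\bar\lambda_k^{(l,h)}\le\sum_{j\in T_k^{(l,h)}}\tilde p_j\tilde y_j^{(l;\vex^{r^*})}\le(1+\epsilon)\rho_k^{(l,h)}\bar\lambda_k^{(l,h)}$, not the reverse. Your stated sandwich conclusion is already the correct one, so only the intermediate justification needs this swap.
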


\begin{proof}
Given $\vey^{(l;\vex^{r^*})}$ an optimal solution to \textbf{LP}$(l;\vex^{r^*})$, let $\tilde{\vey}^{(l;\vex^{r^*})}$ be the well-structured feasible solution to \textbf{LP}$(l;\vex^{r^*})$ defined in Lemma~\ref{lemma:rounding_m}, then we have 
    		\begin{align*}
    		OPT^{(l;\vex^{r^*})} &\ge \sum_{j\in I^{(l,b)}}\tilde{p}_j(1-x^{r^*}_j)+\sum_{h}\sum_{k}\sum_{j \in T^{(l,h)}_k} \tilde{p}_j \tilde{y}^{(l;\vex^{r^*})}_j&\\
    		&\ge \sum_{j\in I^{(l,b)}}\tilde{p}_j(1-x^{r^*}_j)+\sum_{h}\sum_{k} \rho^{(l,h)}_k\bar{\lambda}^{(l,h)}_k &\\
    		&\ge \sum_{j\in I^{(l,b)}}\tilde{p}_j(1-x^{r^*}_j)+\frac{1}{1+\epsilon}\sum_{h}\sum_{k}\sum_{j \in T^{(l,h)}_k}\rho^l_j w^l_j \tilde{y}^{(l;\vex^{r^*})}_j&\\
    		&\ge \frac{1}{1+\epsilon}\left(\sum_{j\in I^{(l,b)}}\tilde{p}_j(1-x^{r^*}_j)+\sum_{h}\sum_{k}\sum_{j \in T^{(l,h)}_k}\tilde{p}_j \tilde{y}^{(l;\vex^{r^*})}_j\right)&\\
    		&\ge \frac{1}{1+\epsilon} (OPT^{(l;\vex^{r^*})}-\OO(s_B \epsilon)).&
    		\end{align*}
According to Observation~\ref{obs:r_m_1}, $ OPT^{(l;\vex^{r^*})} \le s_B+\OO(s_B\delta)$, it thus follows that $$
OPT^{(l;\vex^{r^*})}-\OO(s_B \epsilon) \le \sum_{j\in I^{(l,b)}}\tilde{p}_j(1-x^{r^*}_j)+\sum_{h}\sum_{k} 	\rho^{(l,h)}_k\bar{\lambda}^{(l,h)}_k\le  OPT^{(l;\vex^{r^*})}.
$$

For the second part of the observation, notice that $\max\{OPT^{(l;\vex^{r^*})}_1,OPT^{(l;\vex^{r^*})}_2 \}
\le OPT^{(l;\vex^{r^*})}+\OO(s_B^2\epsilon)$ and $P_l + OPT^{(l;\vex^{r^*})}\le s_B+\OO(s_B\delta)$, thus 
$$P_l+ \max\{OPT^{(l;\vex^{r^*})}_1,OPT^{(l;\vex^{r^*})}_2 \}\le P_l+OPT^{(l;\vex^{r^*})}+\OO(s_B^2\epsilon)\le s_B+\OO(s_B^2\epsilon). $$ \end{proof}

\paragraph*{Step 2}
Recall that $\vex^{\lambda}$ is an optimal basic solution of $\textbf{cen-LP}_{\lambda}$. Given $\vex^{\lambda}$ and any $l\in \{1,2,\cdots,|\Theta|\}$, we consider the following LP:
    		\begin{align*}
    		\textbf{LP}(l;\vex^{\lambda}): \quad\max\limits_{\vey[\hat{I}]}\quad  & \sum_{j\in \hat{I}}\tilde{p}_{j}y_{j}&\\
    		&  \sum_{j\in \hat{I}} \ve B^l_j[i]y_j\le \ve1&  \\
    		&  y_j \le 1-x^{\lambda}_j, \hspace{2mm} \forall j \in \hat{I} & \\
    		& y_j\in [0,1], \hspace{4mm} \forall j \in \hat{I}& 
    		\end{align*}
Let $OPT^{(l;\vex^{\lambda})}$ be the optimal objective value of $\textbf{LP}(l;\vex^{\lambda})$. 
Let $Obj^{MBi}_r(\vex^{\lambda})$ be the objective value of $\textbf{MBi-IP}_r(\tilde{I},\ve1,\ve1)$ for $\vex = \vex^{\lambda}$, we have $P_l + OPT^{(l;\vex^{\lambda})} \le Obj^{MBi}_r(\vex^{\lambda})$.

Then we define the following two LPs:
    		\begin{align*}
    		\overline{\textbf{LP}}(l;\vex^{\lambda}): \quad \max\limits_{\vey[\hat{I}]}\quad  &  \sum_{j\in I^{(l,b)}}\tilde{p}_j(1-x_j^{\lambda})+\sum_{j\in I^{(l,m)}}\tilde{p}_{j}y_{j}&\nonumber\\
    		& \sum_{j\in \hat{I}} \ve B^l_jy_j\le \ve1&  \\
    		&y_j = 0,\hspace{9mm} j\in I^{(l,s)} &\\
    		& y_j = 1-x^{\lambda}_{j}, \hspace{2mm} j\in I^{(l,b)}& \\
    		& y_j \le 1-x^{\lambda}_{j}, \hspace{2mm} j\in I^{(l,m)}&\\
    		& y_j \in [0,1],\hspace{5mm} j\in \hat{I} & \nonumber
    		\end{align*}

    		\begin{align*}
    		\textbf{LP}'(l;\vex^{\lambda}): \quad \max\limits_{\vey[\hat{I}]}\quad  &  \sum_{j\in I^{(l,b)}}\tilde{p}_j(1-x_j^{\lambda})+\sum_{j\in I^{(l,m)}}\tilde{p}_{j}y_{j}&\nonumber\\
    		& \sum_{j\in \hat{I}} \ve B^l_jy_j\le \ve1&  \\
    		& \sum_{j\in T^{(l,h)}_k}  w^l_jy_j\le \sum_{j\in T^{(l,h)}_k} w^l_j(1-x^{\lambda}_j),\hspace{2mm}  \forall k, h & \label{LP_4:b}\\
    		&y_j = 0,\hspace{10mm} j\in I^{(l,s)} &\\
    		& y_j = 1-x^{\lambda}_{j}, \hspace{3mm} j\in I^{(l,b)}& \\
    		& y_j \in [0,1],\hspace{6mm} j\in I^{(l,m)} & \nonumber
    		\end{align*}
Let $OPT^{(l;\vex^{\lambda})}_1$ and $OPT^{(l;\vex^{\lambda})}_2$ be the optimal objective values of $\overline{\textbf{LP}}(l;\vex^{\lambda})$ and $\textbf{LP}'(l;\vex^{\lambda})$, respectively. Then we compare $OPT^{(l;\vex^{\lambda})}_1$ with $OPT^{(l;\vex^{\lambda})}$ and $OPT^{(l;\vex^{\lambda})}_2$ separately.

\subparagraph*{Step 2.1 Compare $OPT^{(l;\vex^{\lambda})}_1$ with $OPT^{(l;\vex^{\lambda})}$}{ }~{ }\newline

We claim that $(1-\OO(s_B\epsilon))OPT^{(l;\vex^{\lambda})}-\OO(s_B\epsilon) \le  OPT^{(l;\vex^{\lambda})}_1 \le OPT^{(l;\vex^{\lambda})}$. It follows directly from the following lemma.
\begin{lemma}\label{lemma:lambda}
If $OPT\le 1$, then there exists a well-structured solution $\bar{\vey}^{(l;\vex^{\lambda})}$ to  \textbf{LP}$(l;\vex^{\lambda})$ such that
\begin{itemize}
    \item $\bar{\vey}_j^{(l;\vex^{\lambda})} = 1- x_j^{\lambda}$ for $j\in I^{(l,b)}$;
    \item $\bar{\vey}^{(l;\vex^{\lambda})}_j = 0$ for $j\in I^{(l,s)}$.
\end{itemize}
Furthermore, the objective value of
\textbf{LP}$(l;\vex^{\lambda})$ for $\vey[\hat{I}] = \bar{\vey}^{(l;\vex^{\lambda})}$ is at least $(1-\OO(s_B\epsilon))OPT^{(l;\vex^{\lambda})}-\OO(s_B\epsilon)$.


\end{lemma}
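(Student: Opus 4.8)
The plan is to run the argument of Lemma~\ref{lemma:rounding_m} essentially verbatim, but with the optimal solution $\vex^{r^*}$ of $\textbf{MBi-IP}_r(\tilde{I},\ve1,\ve1)$ replaced by the extreme‑point optimal solution $\vex^{\lambda}$ of $\textbf{cen-LP}_{\lambda}$. The single new ingredient I need first is an a priori bound on $M^{\lambda}$, the optimal value of $\textbf{cen-LP}_{\lambda}$. Conditioned on the correct guesses of the numbers $\lambda^{(l,h)}_k$ and $\bar{\lambda}^{(l,h)}_k$, I would check that the pair $(\vex^{r^*},\, s_B+\OO(s_B\delta))$ is feasible for $\textbf{cen-LP}_{\lambda}$: $\vex^{r^*}$ obeys $\sum_{j\in\tilde{I}}\ve A_jx_j\le\ve1$, the boundary condition $x_j=x_j^*$ on $\tilde{I}\setminus\hat{I}$ and the box constraints; the three $\lambda$‑type constraints hold because $\lambda^{(l,h)}_k$ is by construction $\sum_{j\in T^{(l,h)}_k}w^l_j(1-x^{r^*}_j)$ capped and rounded down to the grid $\epsilon^{s_B+4}(1+\epsilon)^t$, so $\sum_{j\in T^{(l,h)}_k}w^l_j(1-x^{r^*}_j)$ lands in the prescribed window (either $[\lambda^{(l,h)}_k,\lambda^{(l,h)}_k(1+\epsilon)]$, or $\le\epsilon^{s_B+4}$, or $\ge 2s_B/\epsilon$); and the objective constraint holds since Observation~\ref{obs:lambda} gives $\sum_{j\in I^{(l,b)}}\tilde{p}_j(1-x^{r^*}_j)+\sum_h\sum_k\rho^{(l,h)}_k\bar{\lambda}^{(l,h)}_k\le OPT^{(l;\vex^{r^*})}$ while Observation~\ref{obs:r_m_1} gives $P_l+OPT^{(l;\vex^{r^*})}\le s_B+\OO(s_B\delta)$. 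Hence $M^{\lambda}\le s_B+\OO(s_B\delta)$.

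With this bound, evaluating the objective constraint of $\textbf{cen-LP}_{\lambda}$ at $\vex=\vex^{\lambda}$ gives $\sum_{j\in I^{(l,b)}}\tilde{p}_j(1-x^{\lambda}_j)\le M^{\lambda}\le s_B+\OO(s_B\delta)$, and since each $j\in I^{(l,b)}$ has $\rho^l_j=\tilde{p}_j/w^l_j>1/\epsilon$, this forces $\sum_{j\in I^{(l,b)}}w^l_j(1-x^{\lambda}_j)\le\OO(s_B\epsilon)$ — the analogue of Observation~\ref{obs:ropt_b}. Now I would copy the construction in the proof of Lemma~\ref{lemma:rounding_m}: start from an optimal solution $\vey^{(l;\vex^{\lambda})}$ of $\textbf{LP}(l;\vex^{\lambda})$; set $\bar{y}_j=0$ on $I^{(l,s)}$, which costs at most $\OO(s_B\epsilon)$ of profit by Observation~\ref{obs:ropt_a}; for each medium subgroup $T^{(l,h)}_k$ note $L(T^{(l,h)}_k):=\sum_{j\in T^{(l,h)}_k}w^l_jy^{(l;\vex^{\lambda})}_j\le s_B$ because $\sum_{h,k}L(T^{(l,h)}_k)\le\sum_i\sum_{j\in\hat{I}}\ve B^l_j[i]y^{(l;\vex^{\lambda})}_j\le s_B$, zero out the subgroups with $L(T^{(l,h)}_k)\le\epsilon^{s_B+4}$ (total profit loss $\OO(\epsilon)$) and on the remaining ones scale $\vey^{(l;\vex^{\lambda})}$ uniformly by $\theta^{(l,h)}_k=\tilde{L}(T^{(l,h)}_k)/L(T^{(l,h)}_k)\in[1/(1+\epsilon),1]$ so the aggregate weight becomes a grid value $\epsilon^{s_B+4}(1+\epsilon)^t$ with $t\in\tilde\OO(1/\epsilon)$; set $\bar{y}_j=1-x^{\lambda}_j$ on $I^{(l,b)}$; and finally rescale all non‑$I^{(l,b)}$ coordinates by $1-\OO(s_B\epsilon)$, producing a $\bar{\vey}^{(l;\vex^{\lambda})}$ that is well‑structured in the sense of Lemma~\ref{lemma:rounding_m}.

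Feasibility of $\bar{\vey}^{(l;\vex^{\lambda})}$ for $\textbf{LP}(l;\vex^{\lambda})$ is then verified exactly as in Lemma~\ref{lemma:rounding_m}: on $I^{(l,b)}$ equality $\bar{y}_j=1-x^{\lambda}_j$ holds, elsewhere $\bar{y}_j\le y^{(l;\vex^{\lambda})}_j\le 1-x^{\lambda}_j$, and $\sum_{j\in\hat{I}}\ve B^l_j\bar{y}_j\le(1-\OO(s_B\epsilon))\ve1+\bigl(\sum_{j\in I^{(l,b)}}w^l_j(1-x^{\lambda}_j)\bigr)\ve1\le(1-\OO(s_B\epsilon))\ve1+\OO(s_B\epsilon)\ve1\le\ve1$, using $\ve B^l_j[i]\le w^l_j$ because $\|\ve g^l_j\|_\infty=1$. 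For the objective, the $I^{(l,b)}$ profit is retained in full (in fact increased, since $1-x^{\lambda}_j\ge y^{(l;\vex^{\lambda})}_j$ there), the $I^{(l,s)}$ part and the zeroed medium subgroups cost only an additive $\OO(s_B\epsilon)$, and on the remaining subgroups the scaling by $\theta^{(l,h)}_k\ge 1/(1+\epsilon)$ together with the global factor $1-\OO(s_B\epsilon)$ gives a multiplicative factor $\ge 1-\OO(s_B\epsilon)$; combining these gives an objective of at least $(1-\OO(s_B\epsilon))OPT^{(l;\vex^{\lambda})}-\OO(s_B\epsilon)$. I expect the main obstacle to be the first paragraph: establishing $M^{\lambda}\le s_B+\OO(s_B\delta)$ requires carefully checking that $(\vex^{r^*},s_B+\OO(s_B\delta))$ meets every $\lambda$‑constraint of $\textbf{cen-LP}_{\lambda}$, splitting into the cases $\lambda^{(l,h)}_k=0$, $\lambda^{(l,h)}_k=2s_B/\epsilon$, and $\lambda^{(l,h)}_k\in(\epsilon^{s_B+4},2s_B/\epsilon)$ and tracking that rounding each $\sum_{j\in T^{(l,h)}_k}w^l_j(1-x^{r^*}_j)$ down by at most a factor $1+\epsilon$ keeps it inside its window; once that bound is in hand, the rest is the same bookkeeping as Lemma~\ref{lemma:rounding_m}.
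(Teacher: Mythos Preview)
Your proposal is correct and follows the same overall architecture as the paper: first bound $M^{\lambda}$ by verifying that $(\vex^{r^*},\,s_B+\OO(s_B\delta))$ is feasible for $\textbf{cen-LP}_{\lambda}$ (this is the paper's Observation~\ref{obs:M^*}), then deduce $\sum_{j\in I^{(l,b)}}w^l_j(1-x^{\lambda}_j)\le\OO(s_B\epsilon)$ from the objective constraint of $\textbf{cen-LP}_{\lambda}$ at $\vex^{\lambda}$ (the paper's Observation~\ref{obs:l-w}), and finally modify an optimal $\vey^{(l;\vex^{\lambda})}$ of $\textbf{LP}(l;\vex^{\lambda})$.

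The one place you diverge is Step~3: you import the full machinery of Lemma~\ref{lemma:rounding_m}, including zeroing small subgroups and rescaling each $T^{(l,h)}_k$ by $\theta^{(l,h)}_k$ so that the aggregate weights land on the grid $\epsilon^{s_B+4}(1+\epsilon)^t$. That is correct but unnecessary here, because the statement of Lemma~\ref{lemma:lambda} does \emph{not} ask for the medium subgroup weights to be grid values; it only asks that $\bar{y}_j=1-x^{\lambda}_j$ on $I^{(l,b)}$ and $\bar{y}_j=0$ on $I^{(l,s)}$. The paper therefore takes the shortcut of setting $\bar{y}_j=(1-2s_B\epsilon)\,y^{(l;\vex^{\lambda})}_j$ uniformly on all of $I^{(l,m)}$, with no subgroup rounding at all. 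Both constructions give feasibility via the same inequality $\sum_j\ve B^l_j\bar{y}_j\le (1-2s_B\epsilon)\ve1+\bigl(\sum_{j\in I^{(l,b)}}w^l_j(1-x^{\lambda}_j)\bigr)\ve1\le\ve1$ and the same objective bound; yours simply does more bookkeeping than required.
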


Towards the proof of Lemma~\ref{lemma:lambda},
we first give the following observations.

\begin{observation}\label{obs:M^*}
Let $M^{r^*}= \max\{P_l+\sum_{j\in I^{(l,b)}}\tilde{p}_j(1-x^{r^*}_j)+\sum_{h}\sum_{k}\rho^{(l,h)}_k\bar{\lambda}^{(l,h)}_k : \forall l \in \mathbb{Z} \cap [1,|\Theta|] \}$. Then $\vex^{r^*}$ and $M^{r^*}$ form a feasible solution to \textbf{cen-LP}$_{\lambda}$. Furthermore, if $OPT \le 1$, then $M^{r^*} \le s_B+\OO(s_B\delta)$.
\end{observation}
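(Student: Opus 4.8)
The plan is to verify Observation~\ref{obs:M^*} by two elementary checks: first that the pair $(\vex^{r^*}, M^{r^*})$ satisfies every constraint of $\textbf{cen-LP}_{\lambda}$, and second that $M^{r^*}$ is bounded by $s_B + \OO(s_B\delta)$ under the hypothesis $OPT\le 1$. Recall that $\vex^{r^*}$ is the optimal solution to $\textbf{MBi-IP}_r(\tilde{I},\ve1,\ve1)$, and the quantities $\lambda^{(l,h)}_k$ and $\bar\lambda^{(l,h)}_k$ appearing in $\textbf{cen-LP}_{\lambda}$ were \emph{defined} from $\vex^{r^*}$ (and from the well-structured solutions $\tilde{\vey}^{(l;\vex^{r^*})}$ of Lemma~\ref{lemma:rounding_m}); this is exactly why $\vex^{r^*}$ should be feasible.

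\medskip

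\noindent\textbf{Feasibility.} I would go through the constraints of $\textbf{cen-LP}_{\lambda}$ one by one with $\vex = \vex^{r^*}$, $M = M^{r^*}$. The constraint $\sum_{j\in\tilde I}\ve A_j x_j\le\ve1$ holds because $\vex^{r^*}$ satisfies \eqref{MBi-IP_r:a1} and $x^{r^*}_j = x^*_j$ for $j\in\tilde I\setminus\hat I$ by \eqref{MBi-IP_r:b1}; likewise $x_j = x^*_j$ on $\tilde I\setminus\hat I$ and $x_j\in[0,1]$ on $\hat I$ are immediate. The three groups of constraints involving $\sum_{j\in T^{(l,h)}_k}w^l_j(1-x_j)$ hold \emph{by the very definition of $\lambda^{(l,h)}_k$}: in the case $\lambda^{(l,h)}_k=0$ the definition says $\sum_{j\in T^{(l,h)}_k}w^l_j(1-x^{r^*}_j)\le\epsilon^{s_B+4}$; in the case $\lambda^{(l,h)}_k=2s_B/\epsilon$ it says this sum is $\ge 2s_B/\epsilon$; and in the intermediate case $\lambda^{(l,h)}_k$ is $\sum_{j\in T^{(l,h)}_k}w^l_j(1-x^{r^*}_j)$ rounded down to the nearest $\epsilon^{s_B+4}(1+\epsilon)^t$, so $\lambda^{(l,h)}_k\le\sum_{j\in T^{(l,h)}_k}w^l_j(1-x^{r^*}_j)\le\lambda^{(l,h)}_k(1+\epsilon)$. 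Finally, the constraint $P_l+\sum_{j\in I^{(l,b)}}\tilde p_j(1-x_j)+\sum_h\sum_k\rho^{(l,h)}_k\bar\lambda^{(l,h)}_k\le M$ for every $l$ holds with $M = M^{r^*}$ simply because $M^{r^*}$ is defined to be the maximum over $l$ of the left-hand side. Hence $(\vex^{r^*},M^{r^*})$ is feasible for $\textbf{cen-LP}_{\lambda}$.

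\medskip

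\noindent\textbf{The bound $M^{r^*}\le s_B+\OO(s_B\delta)$.} Fix the $l$ achieving the maximum in the definition of $M^{r^*}$. By Observation~\ref{obs:lambda}, for this $l$ we have $\sum_{j\in I^{(l,b)}}\tilde p_j(1-x^{r^*}_j)+\sum_h\sum_k\rho^{(l,h)}_k\bar\lambda^{(l,h)}_k\le OPT^{(l;\vex^{r^*})}$. Adding $P_l$ to both sides and invoking $P_l+OPT^{(l;\vex^{r^*})}\le OPT^{MBi}_r$ (the remark just before Step~1.1, since $l$ together with an optimal solution of $\textbf{LP}(l;\vex^{r^*})$ forms a feasible solution to the program \eqref{MBi-IP_r:d}--\eqref{MBi-IP_r:c2} at $\vex=\vex^{r^*}$), we get $M^{r^*}\le OPT^{MBi}_r$. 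By Lemma~\ref{lemma:opt^MBi_r}, $OPT^{MBi}_r\le s_B+\OO(s_B\delta)$, which is the claimed bound.

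\medskip

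\noindent\textbf{Anticipated obstacle.} There is essentially no hard computation here; the statement is a bookkeeping consequence of the definitions. The one place requiring care is making sure the chain of inequalities for the $M$-constraint uses the \emph{correct} direction of the rounding relation $\bar\lambda^{(l,h)}_k\le\lambda^{(l,h)}_k$ and the inequality from Observation~\ref{obs:lambda} (which bounds the $\bar\lambda$-expression \emph{above} by $OPT^{(l;\vex^{r^*})}$, not below), and that the ``$+P_l$'' step is matched with the correct upper bound on $P_l+OPT^{(l;\vex^{r^*})}$ — either $OPT^{MBi}_r$ here, or the weaker $s_B+\OO(s_B\delta)$ once Lemma~\ref{lemma:opt^MBi_r} is applied. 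As long as these are threaded together in the right order, the proof is short.
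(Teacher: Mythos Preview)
Your proof is correct and follows essentially the same route as the paper: feasibility is verified constraint-by-constraint from the very definitions of $\lambda^{(l,h)}_k$ and $M^{r^*}$, and the bound is obtained by chaining $M^{r^*}\le P_l+OPT^{(l;\vex^{r^*})}\le OPT^{MBi}_r\le s_B+\OO(s_B\delta)$. The only cosmetic difference is that you invoke Observation~\ref{obs:lambda} for the first inequality, whereas the paper re-derives that step from Lemma~\ref{lemma:rounding_m} and the inequality $\rho^{(l,h)}_k\le\rho^l_j$; since Observation~\ref{obs:lambda} has already been established at this point, your shortcut is perfectly legitimate.
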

\begin{proof}
The first part of the observation is easy to see. Consider the second part of the observation. Let $\vex^{r^*}$ be an optimal solution to $\textbf{MBi-IP}_r(\tilde{I},\ve1,\ve1)$, we assume that $M^{r^*} = P_{l'}+\sum_{j\in I^{(l',b)}}\tilde{p}_j(1-x^{r^*}_j)+\sum_{h}\sum_{k}\rho^{(l',h)}_k\bar{\lambda}^{(l',h)}_k$ for some $l'\in \{1,2,\cdots,|\Theta|\}$. Let $\vey^{(l';\vex^{r^*})}$ be an optimal solution of the following linear program:
    		\begin{align*}
    		\textbf{LP}(l';\vex^{r^*}): \quad\max\limits_{\vey[\hat{I}]}\quad  & \sum_{j\in \hat{I}}\tilde{p}_{j}y_{j}&\\
    		&  \sum_{j\in \hat{I}} \ve B^{l'}_jy_j\le \ve1&  \\
    		&  y_j \le 1-x^{r^*}_j , \forall j \in \hat{I} & \\
    		& y_j\in [0,1] , \forall j \in \hat{I}& 
    		\end{align*}

It is easy to see that $P_{l'}+ \sum_{j\in \hat{I}}\tilde{p}_j y_j^{(l';\vex^{r^*})} \le OPT^{MBi}_{r}$. According to Lemma~\ref{lemma:rounding_m}, there exists a well-structured feasible solution to \textbf{LP}$(l';\vex^{r^*})$, say $\tilde{\vey}^{(l'; \vex^{r^*})}$, such that 
$\sum_{j\in \hat{I}}\tilde{p}_j \tilde{y}^{(l'; \vex^{r^*})}_j =
\sum_{j\in I^{(l',b)}} \tilde{p}_j(1-x^{r^*}_j)+\sum_{h}\sum_{k}\sum_{j \in T^{(l',h)}_k} \rho^{l'}_j w^{l'}_j \tilde{y}^{(l'; \vex^{r^*})}_j$ and $\sum_{j \in T^{(l',h)}_k} w^{l'}_j \tilde{y}^{(l'; \vex^{r^*})}_j = \bar{\lambda}^{(l',h)}_k$. Since $\rho^{(l',h)}_k \le \rho^{l'}_j$ for $j\in T^{(l',h)}_k$, we have
$$P_{l'}+ \sum_{j\in I^{(l',b)}} \tilde{p}_j(1-x^{r^*}_j)+\sum_{h}\sum_{k} \rho^{(l',h)}_k\bar{\lambda}^{(l',h)}_k \le 
P_{l'}+ \sum_{j\in \hat{I}} \tilde{p}_j \tilde{y}^{(l';\vex^{r^*})}_{j}\le  P_{l'}+\sum_{j\in \hat{I}}\tilde{p}_jy^{(l';\vex^{r^*})}_{j}.$$
Then the observation follows by the fact that $OPT^{MBi}_r \le s_B+\OO(s_B\delta)$ if $OPT \le 1$ (Lemma~\ref{lemma:opt^MBi_r}).
\end{proof}

\begin{observation}\label{obs:l-w}
If $OPT\le 1$, then $\sum_{j\in I^{(l,b)}}w^l_j(1-x^{\lambda}_j)\le 2s_B \epsilon$ for each $l\in \{1,2,\cdots,|\Theta|\}$.
\end{observation}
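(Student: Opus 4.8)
The plan is to follow the same idea as the proof of Observation~\ref{obs:ropt_b} for $\vex^{r^*}$, but instead of reading off a contradiction from optimality of a solution to $\textbf{LP}(l;\vex^{r^*})$, I will exploit the $M$-constraint that is hard-wired into $\textbf{cen-LP}_{\lambda}$ together with the fact that $\vex^{\lambda}$ attains its minimum.

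First I would bound the optimal objective value $M^{\lambda}$ of $\textbf{cen-LP}_{\lambda}$. By Observation~\ref{obs:M^*}, the pair $\{\vex^{r^*},M^{r^*}\}$ is a feasible solution of $\textbf{cen-LP}_{\lambda}$ (this uses that the parameters $\lambda^{(l,h)}_k$ and $\bar\lambda^{(l,h)}_k$ defining $\textbf{cen-LP}_{\lambda}$ are precisely the values induced by $\vex^{r^*}$ and the well-structured solutions of Lemma~\ref{lemma:rounding_m}, i.e.\ we are in the enumeration branch with the correct guess), and moreover $M^{r^*}\le s_B+\OO(s_B\delta)$ when $OPT\le 1$. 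Since $\{\vex^{\lambda},M^{\lambda}\}$ is an optimal solution of $\textbf{cen-LP}_{\lambda}$, we get $M^{\lambda}\le M^{r^*}\le s_B+\OO(s_B\delta)$.

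Next, fix any $l\in\{1,2,\dots,|\Theta|\}$ and look at the constraint
\[
P_l+\sum_{j\in I^{(l,b)}}\tilde p_j(1-x_j)+\sum_{h}\sum_{k}\rho^{(l,h)}_k\bar\lambda^{(l,h)}_k\le M
\]
of $\textbf{cen-LP}_{\lambda}$, evaluated at $\vex=\vex^{\lambda}$ and $M=M^{\lambda}$. Every summand $P_l$, $\rho^{(l,h)}_k\bar\lambda^{(l,h)}_k$ and $\tilde p_j(1-x^{\lambda}_j)$ (recall $x^{\lambda}_j\in[0,1]$ and all these quantities are non-negative) may be dropped, so keeping only the middle sum yields $\sum_{j\in I^{(l,b)}}\tilde p_j(1-x^{\lambda}_j)\le M^{\lambda}\le s_B+\OO(s_B\delta)$. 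Finally, each $j\in I^{(l,b)}$ has ratio $\rho^l_j=\tilde p_j/w^l_j>1/\epsilon$, equivalently $w^l_j<\epsilon\,\tilde p_j$; hence
\[
\sum_{j\in I^{(l,b)}}w^l_j(1-x^{\lambda}_j)<\epsilon\sum_{j\in I^{(l,b)}}\tilde p_j(1-x^{\lambda}_j)\le\epsilon\bigl(s_B+\OO(s_B\delta)\bigr)\le 2s_B\epsilon,
\]
where the last step holds because $\delta=\epsilon^{2s_B+4}$ and $\epsilon$ is small enough. This is exactly the asserted bound.

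I do not expect a genuine obstacle here; the only point that needs care is the first paragraph, namely verifying that $\{\vex^{r^*},M^{r^*}\}$ really is feasible for the particular $\textbf{cen-LP}_{\lambda}$ under consideration (the correct guess of the $\lambda^{(l,h)}_k$'s and $\bar\lambda^{(l,h)}_k$'s), which is exactly the content of Observation~\ref{obs:M^*} together with Lemma~\ref{lemma:opt^MBi_r}. Everything after that is the same one-line argument --- ``items in $I^{(l,b)}$ are so profitable relative to their weight that spending a weight budget exceeding $2s_B\epsilon$ on them would already force an objective above $2s_B$'' --- that was used for $\vex^{r^*}$ in Observation~\ref{obs:ropt_b}.
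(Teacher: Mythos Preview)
Your proposal is correct and follows essentially the same approach as the paper: bound $M^{\lambda}\le M^{r^*}\le s_B+\OO(s_B\delta)$ via Observation~\ref{obs:M^*}, use the $M$-constraint of $\textbf{cen-LP}_{\lambda}$ at $\vex^{\lambda}$ to conclude $\sum_{j\in I^{(l,b)}}\tilde p_j(1-x^{\lambda}_j)\le s_B+\OO(s_B\delta)$, and then convert profits to weights via $\rho^l_j>1/\epsilon$. The paper's proof is identical in structure and content.
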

\begin{proof}
Let $\{\vex^{\lambda}, M^{\lambda}\}$ be an optimal solution to \textbf{cen-LP}$_{\lambda}$. For each $l\in \{1,2,\cdots,|\Theta|\}$, it holds that $$P_l+\sum_{j\in I^{(l,b)}}\tilde{p}_j(1-x^{\lambda}_j)+\sum_{h}\sum_{k}
    			\rho^{(l,h)}_k\bar{\lambda}^{(l,h)}_k \le M^{\lambda} \le M^{r^*}\le s_B+ \OO(\epsilon),$$
where the last inequality holds by Observation~\ref{obs:M^*}. Recall that $\rho^l_j > {1 }/{ \epsilon}$ for $j\in I^{(l,b)}$, thus we have 
$$\sum_{j\in I^{(l,b)}}w^l_j(1-x^{\lambda}_j)\le \epsilon \sum_{j\in I^{(l,b)}}\tilde{p}_j(1-x^{\lambda}_j)\le \epsilon(s_B+\OO(\epsilon))\le 2s_B \epsilon.$$\end{proof}

Now we are ready to prove Lemma~\ref{lemma:lambda}.
\begin{proof}[Proof of Lemma~\ref{lemma:lambda}]

Since $OPT \le 1$, we have $\sum_{j\in I^{(l,b)}}w^l_j(1-x^{\lambda}_j) \le 2s_B\epsilon$ by Observation~\ref{obs:l-w}. 
Let $\vey^{(l;\vex^{\lambda})}$ be an optimal solution of $\textbf{LP}(l;\vex^{\lambda})$, define $\bar{\vey}^{(l;\vex^{\lambda})} \in [0,1]^{\hat{n}}$ such that:
\begin{itemize}
    \item[(i)]$\bar{y}^{(l;\vex^{\lambda})}_{j} = 1-x_{j}^{\lambda}$ if $j\in I^{(l,b)}$;
    \item[(ii)] $\bar{y}^{(l;\vex^{\lambda})}_{j} = 0$ if $j \in I^{(l,s)}$;
    \item[(iii)] $\bar{y}^{(l;\vex^{\lambda})}_{j} = (1-2s_B\epsilon)y_{j}^{(l;\vex^{\lambda})}$ if $j \in I^{(l,m)}$.
\end{itemize}
We claim that $\bar{\vey}^{(l;\vex^{\lambda})}$ is a feasible solution to \textbf{LP}$(l;\vex^\lambda)$ with an objective value of at least $(1-\OO(s_B\epsilon))\sum_{j\in \hat{I}} \tilde{p}_jy_j^{(l;\vex^{\lambda})}-\OO(s_B \epsilon)$. The feasibility is guaranteed by
    \begin{align*}
    \sum_{j\in \hat{I}} \ve B^l_j\bar{y}^{(l;\vex^{\lambda})}_{j} &\le \sum_{j\in I^{(l,b)}} \ve B^l_j(1-x_j^{\lambda})+(1-2s_B\epsilon)\sum_{j\not\in I^{(l,b)}} \ve B^l_j y_j^{(l;\vex^{\lambda})}\\
    &\le  \sum_{j\in I^{(l,b)}} w^l_j(1-x_j^{\lambda})\cdot \ve1+(1-2s_B\epsilon)\sum_{j\not\in I^{(l,b)}} \ve B^l_j y_j^{(l,\lambda)} \\
    &\le 2s_B\epsilon \cdot \ve1 +(1-2s_B\epsilon)\cdot \ve1=\ve1.
    \end{align*}
On the other hand, 
we have
    \begin{align*}
    \sum_{j\in \hat{I}}\tilde{p}_j\bar{y}^{(l;\vex^{\lambda})}_j&=\sum_{j\in I^{(l,b)}} \tilde{p}_j\bar{y}^{(l;\vex^{\lambda})}_{j}+\sum_{j\in I^{(l,m)}} \tilde{p}_j\bar{y}^{(l;\vex^{\lambda})}_{j}+\sum_{j\in I^{(l,s)}} \tilde{p}_j\bar{y}^{(l;\vex^{\lambda})}_{j}\\
    &\ge \sum_{j\in I^{(l,b)}} \tilde{p}_j y_j^{(l;\vex^{\lambda})} +(1-2s_B \epsilon)\sum_{j\not\in I^{(l,b)}} \tilde{p}_jy_j^{(l;\vex^{\lambda})} -\OO(s_B \epsilon)\\
    &\ge (1-2s_B \epsilon)\sum_{j\in \hat{I}} \tilde{p}_jy_j^{(l,\lambda)}-\OO(s_B \epsilon).
    \end{align*}
Where $\OO(s_B \epsilon)$ comes from Observation~\ref{obs:ropt_a}. Hence, Lemma~\ref{lemma:lambda} is proved. 
\end{proof}

\subparagraph*{Step 2.2 Compare $OPT^{(l;\vex^{\lambda})}_1$ with $OPT^{(l;\vex^{\lambda})}_2$}{ }~{ }\newline

We have the following observation.
\begin{observation}
        If $OPT \le 1$, then we have 
        $OPT^{(l;\vex^{\lambda})}_1 \le  OPT^{(l;\vex^{\lambda})}_2$.
\end{observation}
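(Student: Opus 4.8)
The plan is to derive this inequality in exactly the same way as the first half of Lemma~\ref{lemma:opt1-opt2}: I would show that every feasible solution of $\overline{\textbf{LP}}(l;\vex^{\lambda})$ is also feasible for $\textbf{LP}'(l;\vex^{\lambda})$. Since both programs maximize the same linear functional $\sum_{j\in I^{(l,b)}}\tilde{p}_j(1-x^{\lambda}_j)+\sum_{j\in I^{(l,m)}}\tilde{p}_j y_j$, a containment of feasible regions immediately yields $OPT^{(l;\vex^{\lambda})}_1 \le OPT^{(l;\vex^{\lambda})}_2$.

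The key step is to check the defining constraints one by one. First I would observe that the constraints $\sum_{j\in \hat{I}}\ve B^l_j y_j\le \ve1$, $y_j=0$ for $j\in I^{(l,s)}$, $y_j=1-x^{\lambda}_j$ for $j\in I^{(l,b)}$, and $y_j\in[0,1]$ for $j\in I^{(l,m)}$ appear verbatim in both $\overline{\textbf{LP}}(l;\vex^{\lambda})$ and $\textbf{LP}'(l;\vex^{\lambda})$, so they are trivially preserved. The only constraint of $\textbf{LP}'(l;\vex^{\lambda})$ not literally present in $\overline{\textbf{LP}}(l;\vex^{\lambda})$ is the family of group bounds $\sum_{j\in T^{(l,h)}_k} w^l_j y_j\le \sum_{j\in T^{(l,h)}_k} w^l_j(1-x^{\lambda}_j)$ over all $k,h$; since each $T^{(l,h)}_k\subseteq I^{(l,m)}$ and any feasible $\vey[\hat{I}]$ for $\overline{\textbf{LP}}(l;\vex^{\lambda})$ satisfies $y_j\le 1-x^{\lambda}_j$ for every $j\in I^{(l,m)}$, multiplying these per-item inequalities by the nonnegative weights $w^l_j=\|\ve B^l_j\|_{\infty}$ and summing over $j\in T^{(l,h)}_k$ gives precisely the required group bound. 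There is essentially no obstacle here: the claim is a pure feasible-region containment and does not even use the hypothesis $OPT\le 1$, which is stated only for uniformity with the neighbouring observations. The sole point that warrants a moment of care is confirming the direction of the weighted-sum inequality (which holds because the $w^l_j$ are nonnegative), after which the conclusion $OPT^{(l;\vex^{\lambda})}_1 \le OPT^{(l;\vex^{\lambda})}_2$ is immediate.
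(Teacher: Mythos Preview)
Your proposal is correct and follows exactly the paper's own approach: the paper's proof is the one-line remark that any feasible solution to $\overline{\textbf{LP}}(l;\vex^{\lambda})$ is also feasible for $\textbf{LP}'(l;\vex^{\lambda})$, and you have simply spelled out the easy verification of this containment (and correctly noted that the hypothesis $OPT\le 1$ is not actually used).
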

\begin{proof}
It follows directly from the fact that any feasible solution to $\overline{\textbf{LP}}(l;\vex^{\lambda})$ is also a feasible solution to $\textbf{LP}'(l;\vex^{\lambda})$. 
\end{proof}

\paragraph*{Step 3} Recall that $\vex^{\lambda}$ is an extreme point optimal solution to \textbf{cen-LP}$_{\lambda}$. The objective value of $\textbf{MBi-IP}_r(\tilde{I},\ve1,\ve1)$ for $\vex = \vex^{\lambda}$ is the optimal objective value of the following program:
	\begin{align*}
	\textbf{Lower-LP:} \quad \max_{1\le l\le |\Theta|} \max_{\vey[\hat{I} ]} \quad& P_{\ell}+ \sum_{j\in \hat{I}}\tilde{p}_{j}y_{j} & \\
	 s.t. \hspace{3mm}  &\sum_{j\in \hat{I}} \ve B^l_{j}y_{j}\le \ve1 &\\
	& y_{j} \le 1-x^{\lambda}_j, \hspace{2mm} \forall j\in \hat{I} & \\
	&   y_j\in [0,1], \hspace{4mm} \forall j\in \hat{I} &
	\end{align*}
Assume that $l'$ and $\vey'[\hat{I}]$ together form an optimal solution to $\textbf{Lower-LP}$. If $OPT\le 1$, according to the previous discussions, we have 
$$
(1-\OO(s_B \epsilon))(P_{l'}+\sum_{j\in\hat{I}}\tilde{p}_jy_j' )-\OO(s_B \epsilon)\le P_{l'}+ OPT_1^{(l';\vex^{\lambda})} \le P_{l'}+ OPT_2^{(l';\vex^{\lambda})}.
$$
If we can prove that the $P_{l'}+OPT_2^{(l';\vex^{\lambda})} \le s_B+\OO(\epsilon)$, then we have $P_{l'}+\sum_{j\in\hat{I}}\tilde{p}_jy'_j \le s_B+\OO(s^2_B \epsilon)$, which completes the proof of Lemma~\ref{lemma:cen-LP}. From now on, we aim to prove the following lemma.
\begin{lemma}\label{claim:fra_3}
    If $OPT\le 1$, assuming that $l'$ and $\vey'[\hat{I}]$ form an optimal solution to $\textbf{Lower-LP}$, then $P_{l'}+OPT_2^{(l';\vex^{\lambda})} \le s_B+\OO(\epsilon)$.
\end{lemma}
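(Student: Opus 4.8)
The plan is to prove the uniform bound $P_l+OPT_2^{(l;\vex^{\lambda})}\le s_B+\OO(\epsilon)$ for \emph{every} $l\in\{1,\dots,|\Theta|\}$ (so in particular for $l=l'$), by comparing $OPT_2^{(l;\vex^{\lambda})}$ with the quantity that the objective inequality of $\textbf{cen-LP}_{\lambda}$ actually charges for leader choice $l$, namely $C_l:=\sum_{j\in I^{(l,b)}}\tilde p_j(1-x^{\lambda}_j)+\sum_h\sum_k\rho^{(l,h)}_k\bar\lambda^{(l,h)}_k$. The key step I would establish is the \emph{transfer inequality}
\[
OPT_2^{(l;\vex^{\lambda})}\ \le\ (1+\OO(s_B\epsilon))\,C_l+\OO(s_B^2\epsilon).
\]
Granting this, the lemma is immediate: the pair $\{\vex^{\lambda},M^{\lambda}\}$ satisfies the $l$-th objective inequality of $\textbf{cen-LP}_{\lambda}$, so $P_l+C_l\le M^{\lambda}$; by Observation~\ref{obs:M^*} the pair $\{\vex^{r^*},M^{r^*}\}$ is feasible for $\textbf{cen-LP}_{\lambda}$, hence $M^{\lambda}\le M^{r^*}\le s_B+\OO(s_B\delta)$ (using $OPT\le 1$ and Lemma~\ref{lemma:opt^MBi_r}); therefore $P_l+OPT_2^{(l;\vex^{\lambda})}\le(1+\OO(s_B\epsilon))(P_l+C_l)+\OO(s_B^2\epsilon)\le(1+\OO(s_B\epsilon))(s_B+\OO(s_B\delta))+\OO(s_B^2\epsilon)$, which is $s_B+\OO(\epsilon)$ since $s_B$ is constant and $\delta=\epsilon^{2s_B+4}$.

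To prove the transfer inequality I would move an optimal solution $\vey^{\lambda}$ of $\textbf{LP}'(l;\vex^{\lambda})$ into a feasible solution $\vey'$ of $\overline{\textbf{LP}}(l;\vex^{r^*})$ and then invoke Observation~\ref{obs:lambda}. For each medium subgroup $T^{(l,h)}_k$ set $u^{(l,h)}_k:=\sum_{j\in T^{(l,h)}_k}w^l_jy^{\lambda}_j$. The per-subgroup constraint of $\textbf{LP}'$ gives $u^{(l,h)}_k\le\sum_{j\in T^{(l,h)}_k}w^l_j(1-x^{\lambda}_j)$; the global budget $\sum_{j\in\hat I}\ve B^l_jy^{\lambda}_j\le\ve1$, together with Observation~\ref{obs:l-w}, the shared shape $\veg^{(l,h)}_k$ of $T^{(l,h)}_k$ and Observation~\ref{observation: obs_frac1}, gives $\sum_h\sum_k\veg^{(l,h)}_ku^{(l,h)}_k\le(1+(s_B+1)\epsilon)\ve1$, and since $\|\veg^{(l,h)}_k\|_\infty\ge1$ this in particular forces $u^{(l,h)}_k\le1+(s_B+1)\epsilon$. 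Now split by $\lambda^{(l,h)}_k$. If $\lambda^{(l,h)}_k=0$, the $\textbf{cen-LP}_{\lambda}$ constraint forces $\sum_{j\in T^{(l,h)}_k}w^l_j(1-x^{\lambda}_j)\le\epsilon^{s_B+4}$, so $\rho^{(l,h)}_ku^{(l,h)}_k\le\epsilon^{s_B+3}$ and the total profit from these $\tilde\OO(1/\epsilon^{s_B+1})$ subgroups is $\OO(\epsilon)$. For the remaining subgroups, in the generic case the $\textbf{cen-LP}_{\lambda}$ constraint gives $u^{(l,h)}_k\le(1+\epsilon)\lambda^{(l,h)}_k\le(1+\epsilon)\sum_{j\in T^{(l,h)}_k}w^l_j(1-x^{r^*}_j)$, while in the saturated case $\lambda^{(l,h)}_k=2s_B/\epsilon$ one has $\sum_{j\in T^{(l,h)}_k}w^l_j(1-x^{r^*}_j)\ge2s_B/\epsilon\ge u^{(l,h)}_k$ directly. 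Hence, fixing a single constant $c=1+\Theta(s_B\epsilon)$ with $c>1+\epsilon$ and $(1+(s_B+1)\epsilon)/c\le1-2s_B\epsilon$, for every such subgroup $u^{(l,h)}_k/c\le\sum_{j\in T^{(l,h)}_k}w^l_j(1-x^{r^*}_j)$; distributing the weight $u^{(l,h)}_k/c$ greedily over $T^{(l,h)}_k$ in index order with $y'_j\le1-x^{r^*}_j$, padding the big items of $I^{(l,b)}$ to $1-x^{r^*}_j$ (total weight $\le2s_B\epsilon$ by Observation~\ref{obs:ropt_b}) and the small items to $0$, produces a $\vey'$ that is feasible for $\overline{\textbf{LP}}(l;\vex^{r^*})$ (the weight check uses $\ve B^l_j\le w^l_j\veg^{(l,h)}_k$ inside each subgroup and the slack left by $c$).

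It remains to compare profits. Inside $T^{(l,h)}_k$ we have $\rho^{(l,h)}_k/(1+\epsilon)\le\rho^l_j\le\rho^{(l,h)}_k$, so the profit $\vey^{\lambda}$ draws from $T^{(l,h)}_k$ is at most $\rho^{(l,h)}_ku^{(l,h)}_k\le(1+\OO(s_B\epsilon))\rho^{(l,h)}_k\sum_{j\in T^{(l,h)}_k}w^l_jy'_j\le(1+\OO(s_B\epsilon))\sum_{j\in T^{(l,h)}_k}\tilde p_jy'_j$; summing over subgroups and adding the $\OO(\epsilon)$ from the $\lambda^{(l,h)}_k=0$ subgroups shows that the medium part of $OPT_2^{(l;\vex^{\lambda})}$, i.e. $OPT_2^{(l;\vex^{\lambda})}-\sum_{j\in I^{(l,b)}}\tilde p_j(1-x^{\lambda}_j)$, is at most $(1+\OO(s_B\epsilon))$ times the medium part of $\overline{\textbf{LP}}(l;\vex^{r^*})$ plus $\OO(\epsilon)$. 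Since $OPT_1^{(l;\vex^{r^*})}\le OPT^{(l;\vex^{r^*})}$, Observation~\ref{obs:lambda} gives that the medium part of $\overline{\textbf{LP}}(l;\vex^{r^*})$, namely $OPT_1^{(l;\vex^{r^*})}-\sum_{j\in I^{(l,b)}}\tilde p_j(1-x^{r^*}_j)$, is at most $\sum_h\sum_k\rho^{(l,h)}_k\bar\lambda^{(l,h)}_k+\OO(s_B\epsilon)$; adding back $\sum_{j\in I^{(l,b)}}\tilde p_j(1-x^{\lambda}_j)$ yields exactly the transfer inequality, and the reduction above finishes the proof.

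The step I expect to be the main obstacle is making this transfer watertight, specifically the two boundary regimes $\lambda^{(l,h)}_k\in\{0,2s_B/\epsilon\}$ and the choice of the common scaling constant $c$. In the saturated regime the per-subgroup cap of $\vex^{\lambda}$ carries no information at all, so one must instead control $u^{(l,h)}_k$ purely through the bound on $\sum_h\sum_k\veg^{(l,h)}_ku^{(l,h)}_k$ (that the follower's total weight across all $s_B$ dimensions is $\OO(1)$); and $c=1+\Theta(s_B\epsilon)$ must be chosen simultaneously large enough to (i) keep $u^{(l,h)}_k/c$ below the $\vex^{r^*}$-availability $\sum_{j\in T^{(l,h)}_k}w^l_j(1-x^{r^*}_j)$ in all three regimes, and (ii) leave $\Omega(s_B\epsilon)$ slack in the global budget of $\overline{\textbf{LP}}(l;\vex^{r^*})$ to absorb the repadded big items. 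Everything else — the greedy/shape redistribution used already in Lemma~\ref{lemma:opt1-opt2}, and Observations~\ref{obs:lambda}, \ref{obs:M^*}, \ref{obs:l-w}, \ref{obs:ropt_b}, \ref{observation: obs_frac1}, Lemma~\ref{lemma:opt^MBi_r} — is available from the analysis of $\vex^{r^*}$.
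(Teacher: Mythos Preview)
Your proposal is correct and follows essentially the same approach as the paper: case-split on $\lambda^{(l,h)}_k\in\{0,\ \text{generic},\ 2s_B/\epsilon\}$, transport the optimal $\textbf{LP}'(l';\vex^{\lambda})$ solution to a solution feasible with respect to $\vex^{r^*}$, and close via $M^{\lambda}\le M^{r^*}\le s_B+\OO(s_B\delta)$ and Observation~\ref{obs:lambda}. The only technical difference is that the paper maps into $\textbf{LP}'(l';\vex^{r^*})$ by a uniform scaling $y_j=\frac{1-2s_B\epsilon}{1+\epsilon}\hat y_j$ (no per-item constraint to check, only the aggregate $\sum_{j\in T^{(l,h)}_k}w^l_jy_j$), whereas you map into $\overline{\textbf{LP}}(l;\vex^{r^*})$ via the greedy per-subgroup redistribution from Lemma~\ref{lemma:opt1-opt2}; this is an extra step but harmless, and your handling of the saturated case via Observation~\ref{observation: obs_frac1} to get $u^{(l,h)}_k\le 1+(s_B+1)\epsilon$ is in fact cleaner than the paper's shape argument.
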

\begin{proof}
    Let $\hat{\vey}[\hat{I}]$ be an optimal solution to $\textbf{LP}'(l';\vex^{\lambda})$. 
    
    When $\lambda^{(l',h)}_k \neq 0$ and $\lambda^{(l',h)}_k \neq \frac{2s_B}{\epsilon}$, recall that both $\{\vex^{r*},M^{r^*}\}$ and $\{\vex^{\lambda},M^{\lambda}\}$ are feasible solutions to $\textbf{cen-LP}_{\lambda}$, we have $$\sum_{j\in T_k^{(l',h)}}w^{l'}_{j}\hat{y}_{j} \leq\sum_{j\in T_k^{(l',h)}}w^{l'}_{j}(1-x^{\lambda}_{j}) \leq (1+\epsilon)\sum_{j\in T_k^{(l',h)}}w^{l'}_{j}(1-x^{r*}_{j}).$$
    
    When $ \lambda_k^{(l',h)} = 2s_B/\epsilon$, we have $ \sum_{j\in T_k^{(l',h)}}w^{l'}_{j}(1-x^{r*}_{j}) \ge 2s_B/\epsilon \ge 1 $. Note that items in $T_k^{(l',h)}$ have the same shape, thus there exists $i(1 \le i \le s_B)$ such that 
    $$ \sum_{j\in T_k^{(l',h)}}w^{l'}_{j}\hat{y}_{j} = \sum_{j\in T_k^{(l',h)}}\ve B^{l'}_{j}[i]\hat{y}_{j} \le 1 \le \sum_{j\in T_k^{(l',h)}}w^{l'}_{j}(1-x^{r*}_{j}).$$
    
    When $\lambda_k^{(l',h)} = 0$, we have $$\sum_{j\in T_k^{(l',h)}}w^{l'}_{j}\hat{y}_{j} \le  \sum_{j\in T_k^{(l',h)}}w^{l'}_{j}(1-x^{\lambda}_{j}) \le \epsilon^{s_B+4}.$$
    Define $H^{l'}=\{(h,k)|\lambda_k^{(l',h)} = 0,\forall k,h\}$. 
    There are at most $\tilde\OO(1/\epsilon^{s_B+1})$ elements in $H^{l'}$.
    The overall contribution to profit from items in $\cup_{(k,h) \in H^{l'}} T_k^{(l',h)}$ is 
    \begin{align*}
    \sum_{(h,k)\in H^{l'}}\sum_{j\in T_k^{(l',h)}}\tilde{p}_{j}\hat{y}_{j} &\le  \sum_{(h,k)\in H^{l'}}(1+\epsilon)\rho^{(l',h)}_{k}\sum_{j\in T_k^{(l',h)}}w^{l'}_{j}\hat{y}_{j}\\
    &\le (1+\epsilon)/\epsilon \sum_{(h,k)\in H^{l'}}\sum_{j\in T_k^{(l',h)}} w^{l'}_{j}\hat{y}_{j} \le {\OO}(\epsilon).
    \end{align*}
    According to Observation~\ref{obs:ropt_b}, we have $\sum_{j\in I^{(l',b)}}w^{l'}_j(1-x_j^{r^*}) \le 2s_B \epsilon$. 
    Define $\vey[\hat{I}]\in [0,1]^{\hat{n}}$ such that:
    \begin{itemize}
        \item[(i)] $y_{j} = \frac{1-2s_B \epsilon}{1+\epsilon}\hat{y}_{j}$ if $j \in T_k^{(l',h)} $ and $\lambda_k^{(l',h)} \neq 0 $;
        \item[(ii)] $y_{j} = 0$ if $j \in T_k^{(l',h)} $ and $\lambda_k^{(l',h)} = 0$;
        \item[(iii)] $y_{j} = 1-x^{r*}_{j}$ if $j \in I^{(l',b)} $;
        \item [(iv)]$y_{j} = 0$ if $j \in I^{(l',s)}$.
    \end{itemize}
    We claim that $\vey[\hat{I}]$ is a feasible solution to $\textbf{LP}'({l'}, \vex^{r*})$. It suffices to observe that
        \begin{itemize}
        \item $\sum_{j\in T_k^{(l',h)}}w^{l'}_{j}y_{j} \le \frac{1-2s_B \epsilon}{1+\epsilon}\sum_{j\in T_k^{(l',h)}}w^{l'}_{j}\hat{y}_{j} \le (1-2s_B \epsilon)\sum_{j\in T_k^{(l',h)}}w^{l'}_{j}(1-x^{r*}_{j}) \le \sum_{j\in T_k^{(l',h)}}w^{l'}_{j}(1-x^{r*}_{j}) ,\hspace{2mm} \forall k , h;$
        \item $ \sum_{j\in \hat{I}}\ve B^{l'}_{j}y_{j} \le \left(\sum_{j\in I^{(l',b)}}w^{l'}_{j}(1-x^{r*}_{j})\right)\cdot \ve1+\sum_{j\in I^{(l',m)}}\ve B^{l'}_{j}y_{j} \leq 2s_B \epsilon \cdot\ve1 + (1-2s_B \epsilon)\sum_{j\in I^{(l',m)}}\ve B^{l'}_{j}\hat{y}_j \leq \ve1.$
    \end{itemize}
    According to Observation~\ref{obs:lambda}, we have
    \begin{align*}
    \sum_{j\in \hat{I}}\tilde{p}_jy_j &=  \sum_{j\in I^{(l',b)}}\tilde{p}_j(1-x_j^{r^*})+  \sum_{j\not\in I^{(l',b)}}\tilde{p}_jy_j \le OPT_2^{(l';\vex^{r^*})} \le OPT^{(l';\vex^{r^*})}+\OO(s_B^2 \epsilon)\\
    &\le \sum_{j\in I^{(l',b)}}\tilde{p}_j(1-x_j^{r^*})+\sum_{h}\sum_{k}\rho^{(l',h)}_k\bar{\lambda}_{k}^{(l',h)} + \OO(s_B^2 \epsilon).
    \end{align*}
    It thus follows that $\sum_{j\not\in I^{(l',b)}}\tilde{p}_jy_j \le \sum_{h}\sum_{k}\rho^{(l',h)}_k\bar{\lambda}_{k}^{(l',h)} + \OO(s_B^2 \epsilon)$. Note that $M^{r^*} \le s_B+\OO(s_B\delta)$ by Observation~\ref{obs:M^*}, then we have
    
      \begin{align*}
    &P_{l'}+\frac{1-2s_B \epsilon}{1+\epsilon}\left(\sum_{j\in I^{(l',b)}}\tilde{p}_{j}(1-x^{\lambda}_{j})+ \sum_{j\in I^{(l',m)}}\tilde{p}_{j}\hat{y}_{j}\right) \\
    &\leq P_{l'}+ \sum_{j\in I^{(l',b)}}\tilde{p}_{j}(1-x^{\lambda}_{j})+ \frac{1-2s_B \epsilon}{1+\epsilon}\sum_{j\in I^{(l',m)}}\tilde{p}_{j}\hat{y}_{j}\\
    &\le P_{l'}+ \sum_{j\in I^{(l',b)}}\tilde{p}_{j}(1-x^{\lambda}_{j})+ 
    \sum_{j\not\in I^{(l',b)}}\tilde{p}_{j}y_j+\sum_{(h,k)\in H^{l'}}\sum_{j\in T_k^{(l',h)}}\tilde{p}_{j}\hat{y}_j\\
    &\le P_{l'}+ \sum_{j\in I^{(l',b)}}\tilde{p}_{j}(1-x^{\lambda}_{j})+ 
    \sum_{j\not\in I^{(l',b)}}\tilde{p}_{j}y_j+\OO(\epsilon)\\
    &\le P_{l'}+ \sum_{j\in I^{(l',b)}}\tilde{p}_{j}(1-x^{\lambda}_{j})+ \sum_{h}\sum_{k}\rho^{(l',h)}_k\bar{\lambda}_{k}^{(l',h)} +\OO(s^2_B \epsilon)\\
    &\leq M^{\lambda} + \OO(s_B^2 \epsilon)\le M^{r^*}+ \OO(s_B^2 \epsilon) \le s_B+\OO(s_B^2\epsilon).
    \end{align*}
    Thus $ P_{l'}+OPT_2^{(l';\vex^{\lambda})} \le  s_B+{\OO}(s_B^3 \epsilon)$, and Lemma~\ref{claim:fra_3} is proved.  
\end{proof}

\subsubsection{Rounding the fractional solution}\label{sub:rounding_frac}

Recall that $\vex^{\lambda}$ is an extreme point optimal solution to \textbf{cen-LP}$_{\lambda}$ and is also a feasible solution to $\textbf{MBi-IP}_r(\tilde{I},\ve1,\ve1)$. We round down $\vex^{\lambda}$ to $\tilde{\vex}^{\lambda}$ such that $\tilde{x}^{\lambda}_j = 0$ if $x^{\lambda}_j$ is a fractional number, otherwise $\tilde{x}^{\lambda}_j = x^{\lambda}_j$. Note that both the construction of $\textbf{cen-LP}_{\lambda}$ and the ellipsoid method to solve $\textbf{cen-LP}_{\lambda}$ run in polynomial time, i.e., the above algorithm we have designed to return $\tilde{\vex}^{\lambda}$ runs in polynomial time. We still need to prove that $\tilde{\vex}^{\lambda}$ is a feasible solution to $\textbf{IPC}_{2\delta}(\tilde{I},\ve1,\ve1)$ with an objective value of at most $s_B + \OO(\epsilon)$.

Note that in $\{x^{\lambda}_j:j
\in \hat{I}\}$, there are at most $s_A+\tilde\OO(\frac{s_B}{\epsilon^{2s_B+1}})$ variables take fractional values. Since $s_A$ and $s_B$ are fixed constants, $\epsilon_{>0}$ could be extremely small such that $s_A+ 
\tilde\OO(\frac{s_B}{\epsilon^{2s_B+1}}) \le 1 / \epsilon^{2s_B+3}$. Let $\delta = \epsilon^{2s_B+4}$, then we have the following.


\begin{lemma}\label{lemma:delta-mbi}
	Let $\widetilde{Obj}_{2\delta}(\tilde{\vex}^{\lambda})$ be the objective value of $\textbf{IPC}_{2\delta}(\tilde{I},\ve1,\ve1)$ for $\vex = \tilde{\vex}^{\lambda}$. Let $Obj^{MBi}_r(\vex^{\lambda})$ be the objective value of $\textbf{MBi-IP}_r(\tilde{I},\ve1,\ve1)$ for $\vex = \vex^{\lambda}$. We have 
    $$
        \widetilde{Obj}_{2\delta}(\tilde{\vex}^{\lambda}) \le  Obj^{MBi}_r(\vex^{\lambda})+ \OO(\epsilon)\le s_B+\OO(\epsilon).
    $$
\end{lemma}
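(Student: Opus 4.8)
The plan is to assemble the structural results already proved with one short rounding estimate. First I would verify that the rounded-down point $\tilde{\vex}^{\lambda}$ is a legitimate integral candidate for all three programs in play. Since $\tilde{x}^{\lambda}_j=x^*_j$ for every large item $j\in\tilde{I}\setminus\hat{I}$ (these coordinates of $\vex^{\lambda}$ are integral by \eqref{MBi-IP_r:b1}, hence preserved by the rounding) and $\tilde{x}^{\lambda}_j\le x^{\lambda}_j$ for every small item $j\in\hat{I}$, we get $\sum_{j\in\hat{I}}\ve A_j\tilde{x}^{\lambda}_j\le\sum_{j\in\hat{I}}\ve A_j x^{\lambda}_j\le\ve1-\vea'$ from feasibility of $\vex^{\lambda}$ for $\textbf{cen-LP}_{\lambda}$. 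Therefore $\tilde{\vex}^{\lambda}$ is feasible for $\textbf{MBi-IP}(\tilde{I},\ve1,\ve1)$ (constraints \eqref{MBi-IP:a1}--\eqref{MBi-IP:c1}), hence also for $\textbf{IPC}_{2\delta}(\tilde{I},\ve1,\ve1)$ by Lemma~\ref{lemma:MBi_1}, and trivially also for its relaxation $\textbf{MBi-IP}_r(\tilde{I},\ve1,\ve1)$.

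Next I would pass from the integral bilevel program to its relaxation by monotonicity. Lemma~\ref{lemma:MBi_1} gives $\widetilde{Obj}_{2\delta}(\tilde{\vex}^{\lambda})\le Obj^{MBi}(\tilde{\vex}^{\lambda})+\epsilon$, and since the follower in $\textbf{MBi-IP}_r(\tilde{I},\ve1,\ve1)$ is stronger than in $\textbf{MBi-IP}(\tilde{I},\ve1,\ve1)$ (it may select items in $\hat{I}$ fractionally) we have $Obj^{MBi}(\tilde{\vex}^{\lambda})\le Obj^{MBi}_r(\tilde{\vex}^{\lambda})$. So it suffices to show $Obj^{MBi}_r(\tilde{\vex}^{\lambda})\le Obj^{MBi}_r(\vex^{\lambda})+\epsilon$, i.e. that rounding the leader's fractional coordinates down costs the follower at most $\OO(\epsilon)$; combined with Lemma~\ref{lemma:cen-LP}, which says $Obj^{MBi}_r(\vex^{\lambda})\le s_B+\OO(\epsilon)$, this finishes the proof.

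The heart of the argument is a capping estimate. Let $l^*$ and $\vey^*$ attain $Obj^{MBi}_r(\tilde{\vex}^{\lambda})$, and set $y'_j=\min\{y^*_j,\,1-x^{\lambda}_j\}$ for $j\in\hat{I}$. Then $\vey'\le\vey^*$ keeps the follower's weight constraint $\sum_{j\in\hat{I}}\tilde{\ve B}_j y'_j\le\ved_{l^*}$ satisfied, and $y'_j\le 1-x^{\lambda}_j$, so $(l^*,\vey')$ is a feasible follower response to $\vex^{\lambda}$; hence $P_{l^*}+\sum_{j\in\hat{I}}\tilde p_j y'_j\le Obj^{MBi}_r(\vex^{\lambda})$. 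The coordinates with $y^*_j>y'_j$ satisfy $x^{\lambda}_j>1-y^*_j\ge 0$, and if $x^{\lambda}_j=1$ then $\tilde{x}^{\lambda}_j=1$ forces $y^*_j=0$, a contradiction; so these are among the at most $s_A+\tilde\OO(s_B/\epsilon^{2s_B+1})\le 1/\epsilon^{2s_B+3}$ fractional coordinates of $\vex^{\lambda}$ in $\hat{I}$, each contributing $\tilde p_j(y^*_j-y'_j)\le\tilde p_j\le\delta$. With $\delta=\epsilon^{2s_B+4}$ the total slack is at most $\epsilon^{2s_B+4}/\epsilon^{2s_B+3}=\epsilon$, giving $Obj^{MBi}_r(\tilde{\vex}^{\lambda})=P_{l^*}+\sum_{j\in\hat{I}}\tilde p_j y^*_j\le Obj^{MBi}_r(\vex^{\lambda})+\epsilon$. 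Chaining the three inequalities yields $\widetilde{Obj}_{2\delta}(\tilde{\vex}^{\lambda})\le Obj^{MBi}_r(\vex^{\lambda})+2\epsilon\le s_B+\OO(\epsilon)$.

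I expect the only genuinely delicate point to be the bookkeeping of the rounding error: one must be sure the number of fractional coordinates of the extreme-point solution $\vex^{\lambda}$ is correctly bounded — this is where the count $s_A+\tilde\OO(s_B/\epsilon^{2s_B+1})$ of constraints of $\textbf{cen-LP}_{\lambda}$ enters — and that this is dominated by $1/\delta$ after the choice $\delta=\epsilon^{2s_B+4}$ (which also forces $\epsilon$ to be small enough to absorb the hidden logarithmic factors). It is worth noting explicitly that the argument never needs to touch large items, since in $\textbf{MBi-IP}_r(\tilde{I},\ve1,\ve1)$ the follower optimizes only over $\vey[\hat{I}]$ while the large-item contribution is frozen into $P_l$, so rounding down the small-item coordinates of $\vex^{\lambda}$ does not interfere with the guessed leader's choice on large items; the remaining steps are a routine feasibility and monotonicity chase through Lemmas~\ref{lemma:MBi_1} and~\ref{lemma:cen-LP}.
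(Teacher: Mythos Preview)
Your proposal is correct and follows essentially the same approach as the paper: chain Lemma~\ref{lemma:MBi_1} with the monotonicity $Obj^{MBi}(\tilde{\vex}^{\lambda})\le Obj^{MBi}_r(\tilde{\vex}^{\lambda})$, bound the rounding loss $Obj^{MBi}_r(\tilde{\vex}^{\lambda})-Obj^{MBi}_r(\vex^{\lambda})$ by $\delta$ times the number of fractional coordinates of the extreme point, and finish with Lemma~\ref{lemma:cen-LP}. The paper states the rounding estimate as the single line $Obj^{MBi}_r(\tilde{\vex}^{\lambda})\le Obj^{MBi}_r(\vex^{\lambda})+\sum_{j\in\hat{I}:x^{\lambda}_j\in(0,1)}\tilde{p}_j$, whereas you spell it out via the capped follower response $y'_j=\min\{y^*_j,1-x^{\lambda}_j\}$; this is the same content, just made explicit.
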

\begin{proof}
Let $\vex^{\lambda}$ be an extreme point optimal solution to \textbf{cen-LP}$_{\lambda}$, we define $\tilde{\vex}^{\lambda}$ such that $\tilde{x}^{\lambda}_j = 0$ if $x^{\lambda}_j$ is a fractional number and $\tilde{x}^{\lambda}_j = x^{\lambda}_j$ otherwise. 
Let $Obj^{MBi}_r(\vex^{\lambda})$ be an objective value of $\textbf{MBi-IP}_r(\tilde{I},\ve1,\ve1)$ for $\vex = \vex^{\lambda}$. Let $\widetilde{Obj}_{2\delta}(\tilde{\vex}^{\lambda})$, $Obj^{MBi}(\tilde{\vex}^{\lambda})$ and $Obj^{MBi}_r(\tilde{\vex}^{\lambda})$ be the objective values of $\textbf{IPC}_{2\delta}(\tilde{I},\ve1,\ve1)$, $\textbf{MBi-IP}(\tilde{I},\ve1,\ve1)$ and $\textbf{MBi-IP}_r(\tilde{I},\ve1,\ve1)$ for $\vex = \tilde{\vex}^{\lambda}$, respectively. 

Recall that $\delta = \epsilon^{2s_B+4}$ and $\tilde{p}_j \le \delta$ for $j\in \hat{I}$. Since $\sum_{j\in \hat{I}: x^{\lambda}_j \in (0,1)} 1 \le 1 / s^{2s_B+3}$ and $\tilde{x}^{\lambda}_j = x^{*}_j = x^{\lambda}_j$ for $j\in \tilde{I}\setminus \hat{I}$, thus we have $$ Obj^{MBi}_r(\tilde{\vex}^{\lambda}) \le Obj^{MBi}_r(\vex^{\lambda})+ \sum\limits_{j\in \hat{I}: x^{\lambda}_j \in (0,1)} \tilde{p}_j \le Obj^{MBi}_r(\vex^{\lambda})+ \epsilon.$$
According to Lemma~\ref{lemma:cen-LP}, we have $Obj^{MBi}_r(\vex^{\lambda})\le s_B+\OO(\epsilon)$, thus $Obj^{MBi}_r(\tilde{\vex}^{\lambda}) \le s_B+\OO(\epsilon)$.

On the other hand, according to Lemma~\ref{lemma:MBi_1}, we have $\widetilde{Obj}_{2\delta}(\tilde{\vex}^{\lambda}) \le Obj^{MBi}(\tilde{\vex}^{\lambda})+\epsilon$. And it holds that $ Obj^{MBi}(\tilde{\vex}^{\lambda}) \le Obj^{MBi}_r(\tilde{\vex}^{\lambda})$ by the fact that in $\textbf{MBi-IP}_r(\tilde{I},\ve1,\ve1)$, the leader is facing a stronger follower who can fractionally pack items. Hence the Lemma~\ref{lemma:delta-mbi} is proved.\end{proof}

In conclude, we have designed a polynomial time algorithm that returns a feasible solution to $\textbf{IPC}_{2\delta}(\tilde{I},\ve1,\ve1)$ with an objective value of at most $s_B + \OO(\epsilon)$. According to Lemma~\ref{lemma:r_theta}, we have proved Lemma~\ref{lemma:residue-main}.


\section{Omitted contents in Section~\ref{sec:general_case}}\label{sec:general_case_app}
\newtheorem*{T3}{Theorem \ref{theorem:general_1}}
\begin{T3} 
	Given a separation oracle $O_L$ for the leader's problem, and an oracle $O_F$ for the follower's problem that returns a $\rho$-approximation solution, 
	there exists a $(\frac{\rho}{1-\alpha}, \frac{1}{\alpha})$-bicriteria approximation algorithm for any $\alpha\in (0,1)$ that returns a solution $\vex^*\in\{0,1\}^{n}$ such that $\ve A\vex^*\le \frac{1}{\alpha}\cdot\vea$, and $$\max\{\vep\vey:\ve B\vey\le \veb,\vey\le \ve1-\vex^*\}\le \frac{\rho T^*}{1-\alpha},$$ where $T^*$ is the optimal objective value of \textbf{IPC}$(I,\vea,\veb)$. Furthermore, the algorithm runs in polynomial oracle time.
\end{T3}

Towards the proof of Theorem~\ref{theorem:general_1}, we start with an equivalent formulation of \textbf{IPC}$(I,\vea,\veb)$ as follows:


	\begin{align*}
	\textbf{IPC}^1(I,\vea,\veb):
	\min \hspace{1mm} & T&\\
	s.t. \hspace{1mm}& \ve A\vex \le \vea& \\
	& \vex\in\{0,1\}^n&\\
	& \sum_{j=1}^n p_j(1-x_j)y_j\leq T, \ \forall \vey\in\Omega:=\{\ve B\vey\leq\veb, \vey\in\{0,1\}^{n}\} &
	\end{align*}
Note that \textbf{IPC}$^1(I,\vea,\veb)$ is an IP with possibly exponential number of constraints. Replacing $\vex \in \{0,1\}^n$ with $\vex \in [0,1]^n$, we obtain a linear relaxation and denote it as \textbf{IPC}$_f^1(I,\vea,\veb)$.

We start with a weaker result.

\begin{lemma}\label{lemma:opt-sep}
	Given a separation oracle $O_L$ for the leader's problem, and an oracle $O_F$ for the follower's problem that always returns the optimal solution, \textbf{IPC}$_f^1(I,\vea,\veb)$ can be solved in polynomial oracle time.
\end{lemma}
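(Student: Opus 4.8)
\textbf{Proof proposal for Lemma~\ref{lemma:opt-sep}.}

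The plan is to solve \textbf{IPC}$_f^1(I,\vea,\veb)$ via the ellipsoid method, since it is a linear program (minimize $T$ subject to $\ve A\vex\le\vea$, $\vex\in[0,1]^n$, and the exponential family $\sum_j p_j(1-x_j)y_j\le T$ over all $\vey\in\Omega$) for which we need only exhibit a polynomial-time separation oracle. Given a candidate point $(\vex^0,T^0)$, I would first run the leader's separation oracle $O_L$ on $\vex^0$: if it reports a violated constraint among $\ve A\vex\le\vea$ or $\vex\in[0,1]^n$, return that hyperplane. Otherwise $\vex^0$ is feasible for the leader, and it remains to check the interdiction constraints. For this, feed the weights $p_j(1-x^0_j)$ (these are nonnegative since $\vex^0\in[0,1]^n$) into the follower's exact oracle $O_F$, which returns an optimal $\vey^0\in\Omega$ maximizing $\sum_j p_j(1-x^0_j)y_j$. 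If $\sum_j p_j(1-x^0_j)y^0_j>T^0$, then the constraint indexed by $\vey^0$ is violated and we return it; if $\sum_j p_j(1-x^0_j)y^0_j\le T^0$, then since $\vey^0$ is the maximizer, \emph{every} constraint in the family is satisfied, so $(\vex^0,T^0)$ is feasible. This is a valid polynomial-time separation oracle, so the ellipsoid method solves \textbf{IPC}$_f^1(I,\vea,\veb)$ in polynomial oracle time.

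A few technical points need care. First, to run the ellipsoid method I should bound the size of the feasible region and the encoding length of the coefficients: the variables $\vex$ are boxed in $[0,1]^n$, and $T$ can be bounded above by (say) $\sum_j p_j$ — which is attained by the trivial leader solution $\vex=\ve 0$ only if $\ve A\cdot\ve 0\le\vea$, but in any case $T$ never needs to exceed $\sum_j p_j$ at optimum when the LP is feasible; one also wants a lower bound, e.g. $T\ge 0$. Feasibility of \textbf{IPC}$_f^1$ itself should be argued or assumed (if $\{\vex:\ve A\vex\le\vea,\vex\in[0,1]^n\}$ is nonempty, which holds whenever the original \textbf{IPC} instance is well-posed). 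Second, to actually recover an optimal \emph{solution} $\vex$ (not just the optimal value), I would use the standard argument that the ellipsoid method, combined with the separation oracle, yields a vertex of the feasible polyhedron achieving the optimum — or alternatively run the sliding-objective version and then round to a vertex. Third, the weights $p_j(1-x^0_j)$ handed to $O_F$ may be rational with polynomially-bounded encoding length since $\vex^0$ is a point produced by the ellipsoid iterations; one should confirm $O_F$'s running time is polynomial in this encoding length, which is part of what "polynomial oracle time" means.

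The main obstacle I anticipate is not conceptual but bookkeeping: ensuring that all the quantities fed to the oracles and tracked by the ellipsoid method have polynomially-bounded bit-complexity, so that "polynomial oracle time" is genuinely achieved rather than merely "finitely many oracle calls." In particular, the iterates $\vex^0$ of the ellipsoid method have encoding length polynomial in the input and the number of iterations, and one must check that $O_F$, being an oracle for an integer program, accepts arbitrary rational profit vectors and still runs in time polynomial in their encoding — this is the implicit assumption in the statement, and I would make it explicit. Everything else (validity of separation, correctness of the reduction from the exponential constraint family to a single oracle call, extraction of an optimal vertex) is routine. This weaker lemma is the stepping stone; the full Theorem~\ref{theorem:general_1} will then need to handle the $\rho$-approximate follower oracle and the rounding of the fractional leader solution to an integral one with the claimed budget violation, but that is beyond the scope of this lemma.
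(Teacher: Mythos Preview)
Your proposal is correct and takes essentially the same approach as the paper: both solve \textbf{IPC}$_f^1$ via the ellipsoid method, using $O_L$ to separate the leader's constraints and $O_F$ (with profits $p_j(1-x^0_j)$) to separate the exponential family of interdiction constraints. The only cosmetic differences are that the paper frames it as a feasibility problem $\textbf{LP}_{\vex}(T_0)$ for a fixed target $T_0$ (implicitly binary-searching on $T_0$) rather than treating $T$ as a variable, and it calls $O_F$ before $O_L$ rather than after; your version is in fact more careful about the bit-complexity and boundedness issues that the paper leaves implicit.
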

\begin{proof}
 Note that \textbf{IPC}$_f^1(I,\vea,\veb)$ is equivalent to the following determine problem: given an arbitrary fixed target value $T_0$, does there exist $\vex \in [0,1]^{n}$ such that $\ve A\vex\le \vea$ and $\vep(\ve1-\vex)\vey\leq T_0$ for all $\vey\in\Omega$? The determine problem can be formulated as the following linear program:
    		\begin{align*}
    		\textbf{LP}_{\vex}(T_0):		& \ve A\vex \le \vea& \\
    		& \vex\in[0,1]^n&\\
    		& \sum_{j=1}^n p_j(1-x_j)y_j\leq T_0, \  \forall \vey\in\Omega:=\{\ve B\vey\leq\veb, \vey\in\{0,1\}^{n}\} &
    		\end{align*}
    	
  In the following we test the feasibility of	$\textbf{LP}_{\vex}(T_0)$ through ellipsoid method. We give a very brief description of ellipsoid method and the reader may refer to, e.g.,~\cite{grotschel2012geometric} for details. 
  
The ellipsoid method iteratively computes a sequence of ellipsoids $E_0,E_1,\cdots$. In each iteration, it implements a separation oracle to check whether the center of the current ellipsoid $E_k$ is feasible. If it is, then it terminates and returns this point; Otherwise, it finds out a violating constraint as a cut. Incorporating the cut output by the separation oracle, the ellipsoid method computes a new ellipsoid $E_{k+1}$ and guarantees that the volume of the new ellipsoid is smaller than $E_k$ by a significant factor. After a polynomial number of iterations, the ellipsoid method either returns a feasible solution, or asserts that there is no feasible solution and returns a polynomial number of cuts as a certificate.
  
Towards solving $\textbf{LP}_{\vex}(T_0)$, it suffices to design a {separation oracle} such that given any $\vex^{0}\in[0,1]^{n}$, it either asserts that  $\vex^{0}$ is feasible to $\textbf{LP}_{\vex}(T_0)$, or returns a constraint of $\textbf{LP}_{\vex}(T_0)$ that is violated by $\vex^{0}$. 
  
  We construct the separation oracle \textbf{SEP} as follows. Consider the following IP:
		\begin{align*}
		\textbf{IP}_{\vey}(\vex^{0}): \max \hspace{1mm} & \sum_{j=1}^n p_j(1-x^{0}_j)y_j&\\
		&\ve B\vey\leq\veb&\\
		& \vey\in\{0,1\}^{n}&
		\end{align*}
	Note that \textbf{IP}$_{\vey}(\vex^{0})$ is the follower's problem with item profits being $p_j(1-x^0_j)$ for $j = 1,\cdots,n$. Apply the oracle $O_F$ for the follower's problem, we get an optimal solution to $\textbf{IP}_{\vey}(\vex^{0})$ and denote it as $\vey^*$. 
If the optimal objective value of $\textbf{IP}_{\vey}(\vex^{0})$ is larger than $T_{0}$, then we know $\vex^0$ violates the constraint $\sum_{j=1}^n p_j(1-x_j)y_j^*\leq T_{0}$ and \textbf{SEP} returns this violating constraint. Otherwise, we apply the leader's separation oracle $O_L$ for $\vex=\vex^{0}$. If $\vex^0$ does not satisfy $\ve A\vex^0\le \vea$,  $O_L$ returns a violating constraint, and we let \textbf{SEP} return the same constraint. Otherwise, $\vex^0$ is feasible to $\textbf{LP}_{\vex}(T_{0})$. 
\end{proof}

Generalizing our technique for Lemma~\ref{lemma:opt-sep}, we obtain the following strengthened lemma.
\begin{lemma}\label{lemma:apx-sep}
Given a separation oracle $O_L$ for the leader's problem, and an oracle $O_F$ for the follower's problem that returns a $\rho$-approximation solution, a $\rho$-approximation solution to \textbf{IPC}$_f^1(I,\vea,\veb)$ can be found in polynomial oracle time.
\end{lemma}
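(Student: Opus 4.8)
The plan is to reuse the separation-based framework of Lemma~\ref{lemma:opt-sep} essentially unchanged, inserting the approximate oracle $O_F$ in place of the exact one and tracking how the multiplicative error propagates. Recall that solving $\textbf{IPC}_f^1(I,\vea,\veb)$ amounts to minimizing $T$ over $\vex\in[0,1]^n$ subject to $\ve A\vex\le\vea$ and the exponential family $\sum_{j=1}^n p_j(1-x_j)y_j\le T$ for all $\vey\in\Omega$, and that for a fixed target $T_0$ this reduces to testing feasibility of $\textbf{LP}_{\vex}(T_0)$ by the ellipsoid method. First I would build a modified separation routine $\textbf{SEP}'$: on input $\vex^0\in[0,1]^n$, invoke $O_F$ on $\textbf{IP}_{\vey}(\vex^0)$ to obtain $\vey'\in\Omega$ with $\sum_{j} p_j(1-x_j^0)y_j'\ge\frac{1}{\rho}\,\mathrm{OPT}_F(\vex^0)$, where $\mathrm{OPT}_F(\vex^0)$ is the exact follower optimum against $\vex^0$; if $\sum_{j} p_j(1-x_j^0)y_j'>T_0$, return the violated inequality $\sum_{j} p_j(1-x_j)y_j'\le T_0$; otherwise query $O_L$ at $\vex^0$ and, if it reports a violated leader constraint, return that; and if neither branch fires, declare $\vex^0$ \emph{accepted}.

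The analysis rests on two facts. (i) Every inequality emitted by $\textbf{SEP}'$ is a genuine constraint of $\textbf{LP}_{\vex}(T_0)$ violated at $\vex^0$: in the first branch $\vey'\in\Omega$, so $\sum_{j} p_j(1-x_j)y_j'\le T_0$ is one of the defining constraints and it is violated; in the second branch this is immediate from $O_L$. (ii) Whenever $\textbf{SEP}'$ accepts $\vex^0$, we have $\ve A\vex^0\le\vea$ and, since $O_F$ is a $\rho$-approximation, $\mathrm{OPT}_F(\vex^0)\le\rho\sum_{j} p_j(1-x_j^0)y_j'\le\rho T_0$, so $\vex^0$ is a fractional leader choice whose exact follower value is at most $\rho T_0$. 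I would then run the ellipsoid method exactly as in Lemma~\ref{lemma:opt-sep}, but driven by $\textbf{SEP}'$. By (i) every cut is valid for $\textbf{LP}_{\vex}(T_0)$, hence $\textbf{LP}_{\vex}(T_0)$ stays inside each ellipsoid and the volume-shrinking bound is unaffected; so after polynomially many iterations the method either accepts some $\vex^0$ --- a near-feasible point for target $T_0$ by (ii) --- or has shrunk the ellipsoid volume below the threshold, which, after the customary $2^{-\mathrm{poly}}$ perturbation of $\vea$ and the box handled as in the standard treatment~\cite{grotschel2012geometric}, certifies that $\textbf{LP}_{\vex}(T_0)$ is infeasible, i.e. $T^*_f>T_0$, where $T^*_f$ is the optimum of $\textbf{IPC}_f^1(I,\vea,\veb)$. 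Wrapping a binary search on $T_0$ around this (equivalently, the sliding-objective variant of the ellipsoid method over the pair $(\vex,T)$), and using that all input data, and hence $T^*_f$, have polynomially bounded encoding length, I would pin down a target $T_0\le T^*_f$ up to a $2^{-\mathrm{poly}}$ slack for which $\textbf{SEP}'$ accepts some $\vex^0$; this $\vex^0$ satisfies $\ve A\vex^0\le\vea$ and $\mathrm{OPT}_F(\vex^0)\le\rho T_0\le\rho T^*_f$ after rounding against the bit length, so $(\vex^0,\rho T_0)$ is a $\rho$-approximate solution to $\textbf{IPC}_f^1(I,\vea,\veb)$ computed in polynomial oracle time.

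The step I expect to be the main obstacle is arguing that the ellipsoid method still works although $\textbf{SEP}'$ is \emph{not} a bona fide separation oracle for any fixed convex body --- an accepted point need not lie in $\textbf{LP}_{\vex}(T_0)$, and the set of inputs that $\textbf{SEP}'$ accepts depends on the unspecified behaviour of the heuristic $O_F$ and need not even be convex. The resolution is that only the two properties above are needed: all emitted cuts are valid for the genuine polytope $\textbf{LP}_{\vex}(T_0)$, and ``acceptance'' certifies approximate feasibility. Consequently the invariant ``$\textbf{LP}_{\vex}(T_0)$ lies inside the current ellipsoid'' is preserved throughout, the volume bound forces acceptance within a polynomial number of steps whenever $\textbf{LP}_{\vex}(T_0)$ has non-negligible volume, and the conversion of the residual $2^{-\mathrm{poly}}$ additive slack in $T_0$ into an exact factor-$\rho$ guarantee is the usual rationality argument; everything else is a routine adaptation of the proof of Lemma~\ref{lemma:opt-sep}.
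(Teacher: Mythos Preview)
Your proposal is correct and follows essentially the same approach as the paper: run the ellipsoid method with an approximate separation oracle built from $O_F$ and $O_L$, observe that every cut it emits is a genuine valid constraint of $\textbf{LP}_{\vex}(T_0)$ while every accepted point has true follower value at most $\rho T_0$, and wrap this in a binary search on $T_0$. Your explicit discussion of why the ellipsoid method still terminates correctly even though $\textbf{SEP}'$ is not a separation oracle for any fixed convex body is in fact more careful than the paper's own presentation, which leaves that point implicit.
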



\begin{proof}
Again, for any target value $T_0$, we consider the following:
    		\begin{align*}
    		\textbf{LP}_{\vex}(T_0):		& \ve A\vex \le \vea& \\
    		& \vex\in[0,1]^n&\\
    		& \sum_{i=1}^n p_j(1-x_j)y_j\leq T_0, \  \forall \vey\in\Omega:=\{\ve B\vey\leq\veb, \vey\in\{0,1\}^{n}\} &
    		\end{align*}
We first present an ellipsoid method based algorithm $Al$ and then show its property.
$Al$ works as follows. Consider the following IP:
		\begin{align*}
		\textbf{IP}_{\vey}(\vex^{0}): \max \hspace{1mm} & \sum_{j=1}^n p_j(1-x^{0}_j)y_j&\\
		&\ve B\vey\leq\veb&\\
		& \vey\in\{0,1\}^{n}&
		\end{align*}
	Apply the oracle for the follower's problem, we get a $\rho$-approximation solution to $\textbf{IP}_{\vey}(\vex^{0})$ and denote it as $\vey^0$. If $\sum_{j=1}^n p_j(1-x^{0}_j)y_j^0\le  T_0$ and $\vex^0$ satisfies that $\ve A\vex^0\le \vea$, $Al$ stops and returns $\vex=\vex^0$. Otherwise, either $\sum_{j=1}^n p_j(1-x^{0}_j)y_j^0> T_0$, then $Al$ returns the violating constraint $\sum_{j=1}^n p_j(1-x_j)y_j^0\le T_0$ to $\textbf{LP}_{\vex}(T_0)$ and continues; or $\vex^0$ does not satisfy that $\ve A\vex^0\le \vea$, then $Al$ returns the same violating constraint that is returned by $O_L$ and continues.
	
	Now we analyze $Al$. We know the ellipsoid method based algorithm $Al$ either returns some $\vex^0$ at some iteration and stops, or after a polynomial number of iterations it ends up with an empty volume ellipsoid and terminates without finding any feasible solution.
	
	Let $\vey^*$ be an optimal solution to $\textbf{IP}_{\vey}(\vex^{0})$ and let $T^*(\vex^0)=\sum_{j=1}^n p_j(1-x^{0}_j)y_j^*$. We have the following observations.
	
	\begin{observation}
	If $T^*(\vex^0)>  \rho T_0$, then $Al$ must return a violating constraint $\sum_{j=1}^n p_j(1-x_j)y_j^0\le T_0$. 
	\end{observation}
	The observation follows as $\vey^0$ is a $\rho$-approximation solution, and thus $\sum_{j=1}^n p_j(1-x^{0}_j)y_j^0\ge  T^*(x^0)/\rho > T_0$.
	\begin{observation}
	If $T^*(\vex^0)\le T_0$ and $A\vex^0\le \vea$, then $Al$ must return $\vex=\vex^0$.
	\end{observation}
	The observation follows as the feasible solution $\vey^0$ cannot have an objective value larger than that of the optimal solution.
	
	If, however, $T^*(\vex^0)\in (T_0,\rho T_0]$, $Al$ either returns some $\vex^0$ or terminates without finding any solution. Consequently, if $Al$ returns some $\vex=\vex^0$, then $T^*(\vex^0)\le  \rho T_0$; if $Al$ terminates without finding any solution, then $Al$ has found a polynomial number of violating constraints $\sum_{j=1}^n p_j(1-x_j)y_j^k\le T_0$ for $k=0,1,\cdots,m$, together with a polynomial number of constraints from $\ve A\vex\le \vea$, such that there does not exist $\vex\in[0,1]^n$ satisfying these constraints simultaneously. 
That is, $Al$ returns a polynomial number of constraints a certificate that $\textbf{LP}_{\vex}(T_0)$ does not admit any feasible solution. Hence, the following is true.
	
	\begin{claim}
	$Al$ either returns some solution $\vex^0$ which guarantees that $\sum_{j=1}^n p_j(1-x^{0}_j)y_j\le \rho T_0$ for any $\vey\in\Omega$; or it asserts that there does not exist any feasible solution to $\textbf{LP}_{\vex}(T_0)$, and hence the optimal objective value of  \textbf{IPC}$_f^1(I,\vea,\veb)$ is larger than $T_0$.
	\end{claim}
Hence, with binary search, we obtain a $\rho$-approximation solution for \textbf{IPC}$_f^1(I,\vea,\veb)$.	\end{proof}
	
Finally, we are able to obtain a bicriteria-approximation solution to \textbf{IPC}$^1(I,\vea,\veb)$ by rounding the $\rho$-approximation solution to \textbf{IPC}$_f^1(I,\vea,\veb)$, thus finalizing the proof of Theorem \ref{theorem:general_1}.  

\begin{proof}[Proof of Theorem \ref{theorem:general_1}]
Denote by $T^*_f$ and $T^*$ the optimal objective values of \textbf{IPC}$_f^1(I,\vea,\veb)$ and \textbf{IPC}$^1(I,\vea,\veb)$, respectively. Then it follows that $T^*_f \le T^*$.
Let $\vex$ be a  $\rho$-approximation solution to \textbf{IPC}$_f^1(I,\vea,\veb)$. Let $\alpha\in(0,1)$ be any fixed parameter. We round up $x_{j}$ to 1 if $x_{j}\geq \alpha$, otherwise, round down $x_{j}$ to 0. Denote by $\hat{\vex}$ this integral solution, we have: (i) $\hat{\vex}$ violates the leader's budget by a factor of $\frac{1}{\alpha}$ since $A( \alpha \hat{\vex} )\le A\vex \le \vea$; (ii) $\sum\limits_{j=1}^{n}p_{j}(1-\hat{x}_{j})y_{j}=\sum\limits_{j:x_{j}<\alpha}p_{j}y_{j}+\sum\limits_{j:x_{j}\geq\alpha}0\leq\frac{1}{1-\alpha}\sum\limits_{j=1}^{n}p_{j}(1-x_{j})y_{j}\leq\frac{\rho}{1-\alpha}T^*_f\leq \frac{\rho}{1-\alpha}T^*$. Thus $\hat{\vex}$ is a $(\frac{\rho}{1-\alpha},\frac{1}{\alpha})$-bicriteria approximation solution of \textbf{IPC}$(I,\vea,\veb)$. Theorem \ref{theorem:general_1} is proved. \end{proof}

\bibliographystyle{plain}
\bibliography{kip}

\end{document}